\newcounter{partcount}
\theoremstyle{definition}
\newtheorem{remark}{Remark}
\theoremstyle{plain}
\newtheorem{lemma}{Lemma}
\newtheorem{proposition}{Proposition}
\newtheorem{theorem}{Theorem}
\newtheorem{assumption}{Assumption}
\newcommand{\ka}{d}
\renewcommand{\ss}{\mathbb{X}}
\newcommand{\sss}{\mathcal{X}}
\newcommand{\yy}{\mathbb{Y}}
\newcommand{\yyy}{\mathcal{Y}}
\newcommand{\iid}{i.i.d.}
\newcommand{\cvarphi}{\overline{\varphi}}
\newcommand{\primeParentSet}[4]{\overline{\mathcal{A}}^{(#1,#2)}_{#3}}
\newcommand{\parentSet}[4]{\mathcal{A}^{(#1,#2)}_{#3}}
\newcommand{\tailProdSet}[4]{\widetilde{\mathcal{A}}^{(#3,#2)}_#4}
\newcommand{\clsnSetn}[3]{\mathcal{D}^{#1}_{#3}(#2)}
\newcommand{\ciSet}[3]{\left\{(u,v)\in[#3]^2:\left.\xi^{u}_{#1}\independent \xi^{v}_{#1}\mids #2\right.\right\}}
\newcommand{\ciSetGen}[4]{\left\{(u,v)\in[#4]^2:\left.\xi^{u}_{#1}\independent \xi^{v}_{#2}\mids #3\right.\right\}}
\newcommand{\ipair}[1]{\mathcal{Q}_{#1}}
\newcommand{\mrbfX}[4]{X^{(#2,#3)}_{#1}}
\newcommand{\mrbfF}[3]{\F^{(#2,#3)}_{#1}}
\newcommand{\aMatrices}[2]{(A_{k})_{k\in [#2]}}
\newcommand{\aMatricesSh}[2]{\bba}
\newcommand{\radixMatrices}[2]{\mathbb{A}^{(#1,#2)}_{\mathrm{radix}}}
\newcommand{\radixSamplePartition}[3]{\mathcal{I}^{(#1,#2,#3)}_{\mathrm{radix}}}
\newcommand{\radixSamplePartitionElement}[4]{\mathcal{I}^{(#1,#2,#3)}_{#4}}
\newcommand{\mradixMatrices}[2]{\mathbb{A}^{(#1,#2)}_{\mathrm{mixed}}}
\newcommand{\mradixSamplePartition}[3]{\mathcal{I}^{(#1,#2,#3)}_{\mathrm{mixed}}}
\newcommand{\scale}[2]{\sqrt{\frac{#1-#2}{N} + \frac{1}{\np{#1}{#2}}}}
\newcommand{\scaleb}[2]{\left(\frac{#1-#2}{N} + \frac{1}{\np{#1}{#2}}\right)^{\frac{1}{2}}}
\newcommand{\mrbfVar}[3]{\sigma^2_{\mathrm{M},#1}(#2,#3)}
\newcommand{\mrbfVarHat}[3]{\hat{\sigma}^2_{\mathrm{M},#1}(#2,#3)}
\newcommand{\frbfVar}[3]{\sigma^2_{\mathrm{R},#1}(#2,#3)}
\newcommand{\frbfVarHat}[3]{\hat{\sigma}^2_{\mathrm{R},#1}(#2,#3)}
\newcommand{\np}[2]{N_{#1,#2}}
\newcommand{\partitionMap}[1]{\mathcal{J}\left(#1\right)}
\newcommand{\partitionMapNoPar}[1]{\mathcal{J}#1}
\newcommand{\bba}{\mathbb{A}}
\newcommand{\paths}[2]{\mathcal{P}_{#1}}
\newcommand{\pathsFrom}[2]{\mathcal{P}_{#2}^{(#1)}}
\newcommand{\upperPartFrom}[6]{\mathcal{U}_{#1}(#3,#4,#5,#6)}
\newcommand{\upperPartFromNoArg}[1]{\mathcal{U}_{#1}}
\newcommand{\lowerPartFrom}[3]{\mathcal{L}_{#3}(#1,#2)}
\newcommand{\lowerPartFromNoArg}[1]{\mathcal{L}_{#1}}
\newcommand{\specCollisionPairs}[4]{D_{#3}^{(#1,#2)}}
\newcommand{\specCollisionPairsFrom}[6]{D_{#1}(#3,#4,#5,#6)}
\newcommand{\specCollisionPairsFromNoArg}[1]{D_{#1}}
\newcommand{\specNonCollisionPairsFrom}[4]{P_{#1}^{(#3,#4)}}
\newcommand{\collisionStartSet}[3]{\primeParentSet{#1}{#2}{#3}{}\setminus\primeParentSet{#1-1}{#2}{#3}{4}}
\newcommand{\collisionStartSetSh}[3]{\mathcal{R}^{(#1,#2)}_{#3}}
\newcommand{\mids}{\,\middle|\,}
\newcommand{\iidsim}{\stackrel{\mathrm{i.i.d.}}{\thicksim}}
\newcommand{\bxi}[4]{\xi^{#1}_{#2}} %\newcommand{\bxi}[4]{\xi^{(#1,#3,#4)}_{#2}} % Augmented resampler interal random variables.
\newcommand{\indist}[1]{\xrightarrow[#1]{{\mathrm{d}}}}
\newcommand{\indistsh}[1]{\xrightarrow[\phantom{iii}]{{\mathrm{d}}}}
\newcommand{\almostsurely}[2]{\xrightarrow[#1]{\mathrm{a.s.}}}
\newcommand{\almostsurelyOneArg}[1]{\xrightarrow[#1]{\mathrm{a.s.}}}
\newcommand{\almostsurelyOneArgSh}[1]{\xrightarrow[\phantom{iii}]{\mathrm{a.s.}}}
\newcommand{\inprob}[1]{\xrightarrow[#1]{\P}}
\newcommand{\surely}[1]{\xrightarrow[#1]{}}
\newcommand{\normal}[2]{\mathcal{N}(#1,#2)}	
\renewcommand{\P}{\mathbb{P}}
\newcommand{\graph}[2]{\mathcal{G}_{#1}}
\newcommand{\vertexset}[2]{\mathcal{V}_{#1}}
\newcommand{\edgeset}[2]{\mathcal{E}_{#1}}
\newcommand{\E}{\mathbb{E}}
\newcommand{\Y}{\mathsf{Y}}
\newcommand{\X}{\mathsf{X}}
\newcommand{\calF}{\mathcal{F}}
\newcommand{\calG}{\mathcal{G}}
\newcommand{\calI}{\mathcal{I}}
\newcommand{\F}{\mathcal{F}}
\newcommand{\G}{\mathcal{G}}
\newcommand{\real}{\mathbb{R}}
\newcommand{\integers}{\mathbb{Z}}
\newcommand{\N}{\mathbb{N}}
\newcommand{\boundMeas}[1]{\mathscr{B}_{\mathrm{b}}(#1)}
\newcommand{\pmeasure}[1]{\mathscr{P}(#1)}
\newcommand{\measure}[1]{\mathscr{M}(#1)}
\newcommand{\ind}[1]{\mathbb{I}_{#1}}
\newcommand{\defeq}{:=}
\newcommand{\id}{Id}
\newcommand{\abs}[1]{\left| #1 \right|}
\newcommand{\card}[1]{\left| #1 \right|}
\newcommand{\norm}[1]{\left\|#1\right\|}
\newcommand{\floor}[1]{\left\lfloor#1\right\rfloor}
\newcommand{\ceil}[1]{\left\lceil#1\right\rceil}
\newcommand{\ones}{\mathbf{1}}
\newcommand{\osc}[1]{\mathrm{osc}\left(#1\right)}
\newcommand{\infnorm}[1]{\left\|#1\right\|_{\infty}}
\renewcommand{\iff}{\Leftrightarrow}
\newcommand{\simiid}{\stackrel{\mathrm{i.i.d.}}{\sim}}
\newcommand{\pa}[1]{\mathrm{pa}(#1)}
\newcommand\independent{\protect\mathpalette{\protect\independenT}{\perp}}
\def\independenT#1#2{\mathrel{\rlap{$#1#2$}\mkern2mu{#1#2}}}
\newcommand{\lemmaref}[1]{{Lemma \ref{#1}}}
\newcommand{\propref}[1]{{Proposition \ref{#1}}}
\newcommand{\eqnref}[1]{(\ref{#1})}
\newcommand{\secref}[1]{{Section \ref{#1}}}
\newcommand{\assref}[1]{{Assumption \ref{#1}}}
\newcommand{\algrefmy}[1]{{Algorithm \ref{#1}}}
\newcommand{\remref}[1]{{Remark \ref{#1}}}
\newcommand{\theref}[1]{{Theorem \ref{#1}}}
\newcommand{\figref}[1]{{Figure \ref{#1}}}
\numberwithin{equation}{partcount}
\numberwithin{section}{partcount}
\numberwithin{figure}{partcount}
\numberwithin{page}{partcount}
\renewcommand{\theequation}{\arabic{equation}}
\renewcommand{\thesection}{\arabic{section}}
\renewcommand{\thepage}{\arabic{page}}
\begin{document}

\begin{frontmatter}

% "Title of the paper"
\title{Butterfly resampling: asymptotics for particle filters with constrained interactions}
\runtitle{Butterfly resampling}

% indicate corresponding author with \corref{}
% \author{\fnms{John} \snm{Smith}\corref{}\ead[label=e1]{smith@foo.com}\thanksref{t1}}
\thankstext{u3}{Supported by the EPSRC through First Grant EP/KO23330/1 and SuSTaIn.}
% \address{line 1\\ line 2\\ printead{e1}}
% \affiliation{Some University}
\begin{aug}
\author{\fnms{Kari} \snm{Heine}\thanksref{u3}\ead[label=e3]{kari.heine@bristol.ac.uk}},
\author{\fnms{Nick} \snm{Whiteley}\thanksref{u3}\ead[label=e4]{nick.whiteley@bristol.ac.uk}},
\author{\fnms{A.~Taylan} \snm{Cemgil}\thanksref{u1}\ead[label=e1]{taylan.cemgil@boun.edu.tr}}
\and
\author{\fnms{Hakan} \snm{G\"ulda{\c s}}\thanksref{u1}\ead[label=e2]{hakan.guldas@boun.edu.tr}}

%\address{\printead{e1}}

%\and
%\author{\fnms{???} \snm{???}\ead[label=e2]{???}}
%\address{\printead{e2}}
\affiliation{University of Bristol\thanksmark{u3} and Bo{\u g}azi{\c c}i University\thanksmark{u1}}

\address{Department of Mathematics\\
University of Bristol\\
University Walk \\
Bristol \\
BS8 1TW \\
\printead{e3}\\
\phantom{E-mail:\ }\printead*{e4}}

\address{Department of Computer Engineering\\
Bo{\u g}azi{\c c}i University\\
34342 Bebek\\
Istanbul\\
\printead{e1}\\
\phantom{E-mail:\ }\printead*{e2}}

\runauthor{K.~Heine et al.}
\end{aug}

\begin{abstract}
We generalize the elementary mechanism of sampling with replacement $N$ times from a weighted population of size $N$,
by introducing auxiliary variables and constraints on conditional independence characterised by modular congruence relations. Motivated by considerations of parallelism, a convergence study reveals how sparsity of the mechanism's conditional independence graph is related to fluctuation properties of particle filters which use it for resampling, in some cases exhibiting exotic scaling behaviour. The proofs involve detailed combinatorial analysis of conditional independence graphs.
\end{abstract}

\begin{keyword}[class=MSC]
\kwd[Primary ]{60F05} %Central limit and other weak theorems
\kwd{60F99} %None of the above, but in this section (limit theorems)
\kwd[; secondary ]{60G35} % Signal detection and Filtering
\end{keyword}

\begin{keyword}
\kwd{Central limit theorem}
\kwd{Sequential Monte Carlo}
\kwd{filtering}
\end{keyword}

\end{frontmatter}

%!TEX root = ButterflySampling.tex
\section{Introduction}
\label{sec:intro}

Let $\ss$ and $\yy$ be Polish state-spaces with Borel $\sigma$-algebras $\sss$ and $\yyy$. Let $\pi_0$ be a probability measure on $\sss$ and let $f:\ss\times\sss\rightarrow[0,1]$ and $g:\ss\times\yyy\rightarrow[0,1]$ be probability kernels. A hidden Markov model is a bi-variate process $(X,Y)$ where the signal process $X=(X_n)_{n\in\N}$ is a Markov chain with initial distribution $\pi_0$ and transition kernel $f$, and the observations $Y=(Y_n)_{n\in\N}$ are conditionally independent given $X$, with the conditional distribution of $Y_n$ given $X$ being $g(X_n,\cdot)$.

Suppose that for each $x\in\ss$, $g(x,\cdot)$ admits a strictly positive density $g(x,y)$ w.r.t. a $\sigma$-finite measure. Fix a $\yy$-valued sequence $(y_n)_{n\in\N}$ and define the operators $(\Phi_n)_{n\geq1}$  acting on probability measures,
\begin{equation}\Phi_n(\mu)(A) := \frac{\int_\ss g(x,y_{n-1})f(x,A)\mu(dx)}{\int_\ss g(x,y_{n-1})\mu(dx)},\quad A\in\sss.\label{intro:pred_filt1}
\end{equation}
Consider $\pi_n:=\Phi_n(\pi_{n-1})$, $n\geq1$. If one replaces $(y_n)_{n\in\N}$ in \eqref{intro:pred_filt1} with the random variables $(Y_n)_{n\in\N}$ then $\pi_n$ is a version of the regular conditional distribution of $X_n$ given $Y_0,\ldots,Y_{n-1}$. Particle filters \citep{gordon1993novel} approximate $(\pi_n)_{n\in\mathbb{N}}$ by sampling  $(\zeta_0^i)_{i=1}^N\iidsim\pi_0$, and for $n\geq1$,
\begin{equation}
(\hat{\zeta}^i_{n-1})_{i=1}^N \simiid  \dfrac{\sum_i g(\zeta_{n-1}^i,y_{n-1})\delta_{\zeta_{n-1}^i}}{\sum_i g(\zeta_{n-1}^i,y_{n-1})},\quad \zeta_n^i\sim f(\hat{\zeta}^i_{n-1},\cdot),\;i=1,\ldots,N,\label{eq:pf_intro}
\end{equation}
so in effect $(\zeta_n^i)_{i=1}^N \iidsim \Phi_n(\pi^N_{n-1})$, where $\pi_{n-1}^N:=N^{-1}\sum_i\delta_{\zeta_{n-1}^i}$.
%via the following procedure, where $N\geq1$  is fixed population size:
%\begin{align*}
%\text{initially, }\quad&\text{ sample }(\zeta_0^i)_{i=1}^N \;\iidsim\; \pi_0, &\text{set } \pi_0^N:=N^{-1}\sum_i\delta_{\zeta_0^i},\\
%\text{for } n\geq1,\quad &\text{ sample } (\zeta_n^i)_{i=1}^N   \; \iidsim\; \Phi_n(\pi^N_{n-1}), &\text{set }%\pi_n^N:=N^{-1}\sum_i\delta_{\zeta_n^i}.
%\end{align*}
%Particle filters \citep{gordon1993novel} approximate $(\pi_n)_{n\in\mathbb{N}}$ by sampling  $(\zeta_0^i)_{i=1}^N\iidsim\pi_0$, and for $n\geq1$, $(\zeta_n^i)_{i=1}^N \iidsim \Phi_n(\pi^N_{n-1})$, where %$\pi_{n-1}^N:=N^{-1}\sum_i\delta_{\zeta_{n-1}^i}$.
This remarkably simple mechanism has found a huge number of applications. Under mild assumptions -- it suffices that for each $n$, $g(x,y_n)$ is bounded in $x$ -- a law of large numbers and central limit theorem hold \citep{del1999central,smc:the:C04,smc:the:K05,smc:the:DM08}; for $\real$-valued, bounded functions $\varphi$,
\begin{equation}
\pi_n^N (\varphi)\almostsurelyOneArg{N\rightarrow\infty}\pi_n(\varphi),\quad\sqrt{N} \left(\pi_n^N (\varphi)-\pi_n(\varphi)\right) \indist{N\rightarrow\infty} \normal{0}{\sigma_n^2(\varphi)},\label{intro:CLT}
\end{equation}
where for a measure $\mu$, $\mu(\varphi):=\int\varphi(x)\mu(dx)$. The asymptotic fluctuations of the particle approximation error are thus of order $1/\sqrt{N}$, as they would be if $(\zeta_n^i)_{i=1}^N\iidsim \pi_n$,  and it can be shown that $\sigma_n^2(\varphi)$ is never less than the asymptotic variance which would arise from such \iid~samples.

\subsection{Conditional independence and convergence}
The conditional independence and sampling with replacement, or \emph{resampling}, in \eqref{eq:pf_intro} leads to the $\sqrt{N}$ scaling in \eqref{intro:CLT}. This dependence structure also influences how particle filters are implemented and resampling hinders their parallelization \cite{lee2010utility}. Our contribution is to lay rigorous foundations for the design of algorithms better suited to modern computing architectures. We provide insight into consequences for convergence of imposing constraints on the conditional independence structure of a particle filter as a proxy for its communication pattern -- an important factor in efficiency of parallel and distributed algorithms \citep{Bertsekas:1997}. As a taster: for some new algorithms we establish results of the general form
\begin{equation*}
s(N,r) \left(\pi_n^N (\varphi)-\pi_n(\varphi)\right) \indist{N\rightarrow\infty} \normal{0}{\sigma_n^2(\varphi,r)},
\end{equation*}
where $s(N,r)$ is some increasing function of $N$ possibly other than $\sqrt{N}$, and $r$ is a parameter related to the sparsity of the algorithm's conditional independence graph. We shall investigate  the relationship between $r$, $s(N,r)$ and $\sigma_n^2(\varphi,r)$.

\subsection{Outline}
In \secref{sec:algorithms and main results} we introduce a new \emph{augmented resampling} algorithm, which generalizes the i.i.d.~sampling part of \eqref{eq:pf_intro}. We construct two instances of this algorithm, which we call \emph{butterfly resampling}, since their conditional independence graphs have the butterfly pattern well known from the Cooley-Tukey fast Fourier transform, but which is also a standard network topology in parallel computing \citep{savage:1998}. The butterfly structure stems from equivalence classes of conditionally \iid ~samples in our algorithms, characterized by \emph{modular congruence relations}, i.e. equivalence relations expressed in terms of modular arithmetic. In turn this demands that we develop some non-standard tools for studying convergence.
\begin{itemize}
    \item For the first butterfly algorithm, $s(N,r)=\sqrt{N/\log_r N}$. This exotic scaling is the price to pay for the number of incoming edges per vertex in its conditional independence graph being $r$ and the total number of edges being  $r N \log_r N$, versus respectively $N$ and $N^2$ for a standard particle filter.
    \item To achieve a more even balance between fluctuations and interaction constraints, we devise a second butterfly algorithm for which $s(N,r)=\sqrt{N}$, with an asymptotic variance upper bounded by $(2-r^{-1})\sigma_n^2(\varphi)$ where $\sigma_n^2(\varphi)$ is as in \eqref{intro:CLT}. For this algorithm some vertices have $r$ incoming edges, no vertex has greater than $N/r$ incoming edges and the total number of edges is $rN + N^2/r$.
\end{itemize}
Proofs and supporting results are in \secref{sec:convergence analysis part 1} onwards, prefaced by a guide for the reader to aid navigation of our analysis. Two key ingredients that are not usually encountered in theoretical accounts of particle filters are:
\begin{itemize}
\item we establish error bounds for certain sub-populations of the particle system, subsequently put to use in establishing limit theorems,
\item we conduct a detailed combinatorial analysis of conditional independence graphs, overcoming the biggest technical challenge in analysis of the second moment properties of butterfly sampling, which differ from those of standard particle filters.
\end{itemize}
The more technical results and most proofs are in the \ref{suppA}.

\subsection{Notation and conventions}

%We use $\realMTX{m}{n}$ to denote the set of real matrices with $m$ rows and $n$ columns. For all $A \in \realMTX{m}{n}$, $A_{i,j}$ is denotes the element on the $i$th row and $j$th column. For any $A \in \realMTX{m}{n}$ and $B \in \realMTX{p}{q}$ we let $ A \otimes B$ denote the Kronecker product.

For all $x,y\in \real$, such that $y\neq 0$, we define $\floor{x} \defeq \max\{i \in \integers : i \leq x\}$, $\ceil{x} \defeq \min\{i \in \integers : i \geq x\}$ and $x \bmod y \defeq x - y\floor{x/y}$. For all $n\in\N$, we write $[n] \defeq \{1,\ldots,n\}$.Whenever a summation symbol $\Sigma$ appears without the summation set made explicit, the summation set is taken to be
$[N]$, for example we write $\Sigma_i$ for $\Sigma_{i=1}^N$. Also ~$\sum_{(i_0,\ldots,i_k)}$ is short for $\sum_{i_0}\cdots\sum_{i_k}$.

For a sequence $(M_k)_{k=1}^m$ of square matrices $\prod_{k=1}^m M_k \defeq M_1\cdots M_m$. Also the shorthand notations $M_{p:q} \defeq \prod_{k=p}^q M_k$, where $p\leq q$, and $M_{p:q} \defeq \prod_{k=0}^{p-q} M_{p-k}$, where $p\geq q$, will occasionally be used.
The symbol $\otimes$ denotes: Kronecker product for matrices, direct product for measures, and tensor product for functions. The interpretation will always be clear from the context.
For $n\in\N$, $I_n$ denotes the $n\times n$ identity matrix and $\ones_{1/n}$ denotes the $n\times n$ matrix which has $1/n$ as every entry. The notation $\id$ will be used for identity mappings in various contexts.

%For a sequence $z_0,z_1,\ldots$  we write $z_{p,q}:=(z_p,\ldots,z_q)$.

We denote by $\measure{\ss}$, $\pmeasure{\ss}$ and $\boundMeas{\ss}$ respectively the collections measures, probability measures and of $\real$-valued, measurable and bounded functions on $(\ss,\sss)$. For $\mu\in\measure{\ss}$, $\varphi\in\boundMeas{\ss}$, $A\in\sss$ and an integral kernel $K:\ss\times\sss\rightarrow\real_+$ we write $K(\varphi)(x):=\int K(x,dx^\prime)\varphi(x^\prime)$, $(\mu K)(A):=\int K(x,A)\mu(dx)$. For $\varphi\in\boundMeas{\ss}$, define $\infnorm{\varphi}:=\sup_{x\in\ss}\left|\varphi(x)\right|$ and $\osc{\varphi}:=\sup_{x,y\in\ss}\left|\varphi(x)-\varphi(y)\right|$. We assume an underlying probability space $(\Omega,\F,\P)$ on which all the random variables we encounter are defined. Convergence in probability under $\P$ is denoted by $\inprob{}$. For random variables $X,Y,Z$ we write $X\independent Y \;|\;Z$ to mean $X$ and $Y$ are conditionally independent given $Z$.

%!TEX root = ButterflySampling.tex
\section{Algorithms and main results}
\label{sec:algorithms and main results}

\subsection{Basics of particle filtering}

Since we consider a fixed observation sequence $(y_n)_{n\in\mathbb{T}}$, we shall write $g_n(x):=g(x,y_n)$. The following mild regularity condition is assumed to hold throughout this paper.
\begin{assumption}\label{ass:g_n_bounded_and_positive}
For each $n\in\N$, $\sup_x g_n(x)<\infty$ and $g_n(x)>0$, $\forall x$.
\end{assumption}
Algorithm \ref{alg:pf} is a basic particle filter. There are a number of ways to perform the \textsc{resample} operation. The multinomial method is:
\begin{equation}\label{eq:multinomial_resampling}
(\hat{\zeta}^i_n)_{i\in[N]} \simiid  \dfrac{\sum_i g_n(\zeta_n^i)\delta_{\zeta_n^i}}{\sum_i g_n(\zeta_n^i)},
\end{equation}
and in that case Algorithm \ref{alg:pf} is known as the Bootstrap Particle Filter (BPF).
\begin{algorithm}[!ht]
\caption{Particle filter}\label{alg:pf}
\begin{algorithmic}
%\label{alg:bf}
\State
\For{$i=1,\ldots,N$}
	\State sample $\zeta_0^i \sim \pi_0$
\EndFor
\State set $(\hat{\zeta}^i_0)_{i\in[N]} \leftarrow \textsc{resample}\left((\zeta_0^i)_{i\in[N]},g_0\right)$
\State
\For{$n=1,2\ldots$}
	\For{$i=1,\ldots,N$}
		\State sample $\zeta_n^i \sim f(\hat{\zeta}_{n-1}^i,\cdot)$
	\EndFor
\State set $(\hat{\zeta}^i_n)_{i\in[N]} \leftarrow \textsc{resample}\left((\zeta_n^i)_{i\in[N]},g_n\right)$
\EndFor
\end{algorithmic}
\end{algorithm}

The following formulae are well defined and finite for $\varphi\in\boundMeas{\ss}$,
\begin{equation}\label{eq:asymp_var_bpf}
\arraycolsep=1.4pt
\begin{array}{rll}
\sigma_0^2(\varphi)&:=\pi_0((\varphi-\pi_0(\varphi))^2),&\\[.1cm]
\sigma_n^2(\varphi)&:=\hat{\sigma}_{n-1}^2(f(\varphi))+\hat{\pi}_{n-1}(f((\varphi-f(\varphi))^2)),&~n\geq 1,\\[.1cm]
\hat{\sigma}_n^2(\varphi)&:=\hat{\pi}_n((\varphi-\hat{\pi}_n(\varphi))^2)+\pi_n(g_n)^{-2}\sigma_n^2(g_n(\varphi-\hat{\pi}_n(\varphi))),&~n\geq0,
\end{array}
\end{equation}
where $\hat{\pi}_n(\varphi)=\pi_n(g_n\varphi)/\pi_n(g_n)$. Considering the empirical measures $\pi_n^N=N^{-1}\sum_i \delta_{\zeta_n^i}$ and $\hat{\pi}_n^N=N^{-1}\sum_i \delta_{\hat{\zeta}_n^i}$,
a direct application of e.g. the results of \cite{smc:the:C04} (assuming for the convergence in distribution that the quantities in \eqref{eq:asymp_var_bpf} are strictly positive) gives:
\begin{theorem}\label{thm:bootstrap}
For any $n\geq0$ and $\varphi\in\boundMeas{\ss}$, the BPF has the properties that
\begin{align*}
\pi_n^N(\varphi)-\pi_n(\varphi)\almostsurelyOneArg{N\to\infty}0,\quad&\sqrt{N}\left(\pi_n^N(\varphi)-\pi_n(\varphi)\right)\indist{N\rightarrow\infty}\normal{0}{\sigma_n^2(\varphi)},\\
\hat{\pi}_n^N(\varphi)-\hat{\pi}_n(\varphi)\almostsurelyOneArg{N\to\infty}0,\quad&\sqrt{N}\left(\hat{\pi}_n^N(\varphi)-\hat{\pi}_n(\varphi)\right)\indist{N\rightarrow\infty}\normal{0}{\hat{\sigma}_n^2(\varphi)}.
\end{align*}
\end{theorem}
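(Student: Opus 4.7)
The plan is to proceed by induction on $n$, alternately handling the mutation step $\hat\pi_{n-1}^N \mapsto \pi_n^N$ (via applying the kernel $f$) and the multinomial selection step $\pi_n^N \mapsto \hat\pi_n^N$ (via applying weights $g_n$ and sampling i.i.d. from the weighted empirical measure). For each step I would prove both the a.s.\ convergence and the CLT, so that the two statements of the theorem propagate jointly through time. The key observation is that, conditional on the $\sigma$-algebra $\calG_{n-1}^N$ generated by $(\hat\zeta_{n-1}^i)_{i\in[N]}$, the particles $(\zeta_n^i)_{i\in[N]}$ are i.i.d.\ from $f(\hat\zeta_{n-1}^{\cdot},\cdot)$-mixtures, and conditional on $\calF_n^N:=\sigma((\zeta_n^i)_{i\in[N]})$, the resampled particles $(\hat\zeta_n^i)_{i\in[N]}$ are i.i.d.\ from the weighted empirical measure in \eqref{eq:multinomial_resampling}. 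This conditional-i.i.d.\ structure is what makes the BPF amenable to classical arguments.

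For the base case $n=0$, the samples $(\zeta_0^i)\iidsim\pi_0$ give the LLN and CLT for $\pi_0^N(\varphi)$ immediately by Kolmogorov's SLLN and Lindeberg-Lévy. The selection step then requires a conditional CLT: writing
\begin{equation*}
\sqrt N (\hat\pi_0^N(\varphi)-\hat\pi_0(\varphi))
=\sqrt N(\hat\pi_0^N(\varphi)-\Psi_0(\pi_0^N)(\varphi))+\sqrt N(\Psi_0(\pi_0^N)(\varphi)-\hat\pi_0(\varphi)),
\end{equation*}
where $\Psi_0(\mu)(\varphi):=\mu(g_0\varphi)/\mu(g_0)$. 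The first term is handled by a Lindeberg-type CLT for conditionally i.i.d.\ triangular arrays (using the a.s.\ convergence of the conditional variance), and the second by the delta method applied to $\mu\mapsto\Psi_0(\mu)(\varphi)$ evaluated at $\pi_0$. For the inductive step, the same decomposition is used at time $n$, with the CLT contribution from mutation arising from the conditional variance $\hat\pi_{n-1}(f((\varphi-f(\varphi))^2))$ and the contribution propagated through $f$ from the inductive hypothesis for $\hat\pi_{n-1}^N(f\varphi)$; summing these (and noting independence between the two sources of randomness given the past) yields the recursion in \eqref{eq:asymp_var_bpf}. The a.s.\ convergence propagates via boundedness of $g_n$ under \assref{ass:g_n_bounded_and_positive} and continuous-mapping applied to the ratio defining $\hat\pi_n^N$.

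The main obstacle, and the reason this counts as a ``direct application'' rather than a triviality, is the conditional CLT step: one needs to verify the Lindeberg condition uniformly in the random conditioning variables, and to convert conditional convergence in distribution into unconditional convergence in distribution. The standard device is to show that for every bounded continuous test function $h$, $\E[h(\sqrt N(\hat\pi_n^N(\varphi)-\Psi_n(\pi_n^N)(\varphi)))\mid \calF_n^N]\to h$ evaluated at a Gaussian with variance $\pi_n(g_n)^{-2}\pi_n(g_n^2(\varphi-\hat\pi_n(\varphi))^2)$, in probability; then assemble with the ``propagated'' Gaussian term using the fact that the two are asymptotically independent conditionally on $\calF_n^N$ because the resampling noise is generated from fresh uniforms. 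These steps are precisely carried out in \cite{smc:the:C04}, so rather than reproducing them in full I would simply invoke that reference after verifying that \assref{ass:g_n_bounded_and_positive} implies the integrability hypotheses required there (namely $\varphi,g_n\varphi,g_n\in\boundMeas{\ss}$ for each $n$), and deduce both conclusions of the theorem from the joint induction.
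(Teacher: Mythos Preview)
Your proposal is correct and matches the paper's approach: the paper does not give a proof of this theorem at all, stating only that it follows as ``a direct application of e.g.\ the results of \cite{smc:the:C04}'' under the positivity assumption on the asymptotic variances. Your outline is precisely the inductive argument of \cite{smc:the:C04} that the paper is citing, and your final paragraph explicitly reduces to invoking that reference after checking that \assref{ass:g_n_bounded_and_positive} supplies the required boundedness hypotheses---which is exactly what the paper does.
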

This result will serve as a point of reference against which to compare convergence properties of our new algorithms. Various refinements and extensions of \theref{thm:bootstrap} exist \cite{del1999central,smc:the:K05,smc:the:DM08}, but to emphasize the novel aspects of our comparisons we eschew some technical generalities, many of our results can be generalized to larger function classes and settings beyond HMM's, and the structure of our algorithms can also be generalized without difficulty so as to incorporate other proposal and resampling schemes.

\subsection{Considerations of parallelism and the motivation for our approach}\label{sub:parallelism}
 It is standard practice in computer science to reason about parallelism by introducing a graphical computation/communication model which captures some essence of a practical architecture \cite[Ch. 7]{savage:1998}, \citep{leighton1992}. We adopt this philosophy. It is not the purpose of this paper to discuss implementation-specific details of programming etc.

% !TEX root = bfwithtikz.tex

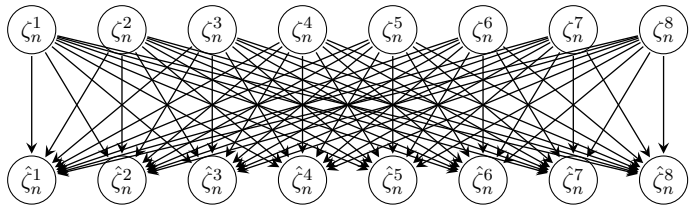
\begin{figure}[t!]
\begin{center}
\tikzset{
    vertex/.style = {
    	draw,
	    circle,
        fill      = white,
        outer sep = 2pt,
        inner sep = 2pt,
    }
}

\tikzset{
    mainvertex/.style = {
    	draw,
		circle,
        fill = black,
        outer sep = 2pt,
        inner sep = 2pt,
    }
}

\tikzset{
    demovertex/.style = {
    	draw,
		circle,
        fill = gray,
        outer sep = 2pt,
        inner sep = 2pt,
    }
}

\tikzset{
    basicvertex/.style = {
    	draw,
		circle,
		minimum size = .8cm
    }
}

\tikzset{
    demoedge/.style = {
		-stealthnew,
		shorten <=.15cm,
		shorten >=.15cm,
		arrowhead=2mm,
		line width=2pt
		}
}

\tikzset{
    basicedge/.style = {
		-stealthnew,
		color = black,
		shorten <=.35cm,
		shorten >=.35cm,
		arrowhead=1.5mm,
		line width=.5pt
		}
}

\begin{tikzpicture}[scale=0.8]
\tikzstyle{every node}=[scale=.8,font=\normalsize]

\def \vstep {-2.5cm}
\def \hstep {1.5cm}
\def \r {8}
\def \m {3} % margin in angles, depends on the radius
\def \N {8}
\def \X {4}
\def \Y {2}

% Draw vertices
\foreach \i in {0,...,1}
{
	\foreach \j in {1,...,8}
	{
  		\node[basicvertex] (\i\j) at (\j*\hstep,\i*\vstep) {};
	}
}	

% In vertex labels
\foreach \j in {1,...,8}
{
	\node[circle] at (\j*\hstep,0*\vstep) {$\zeta^{\j}_{n}$};
}

% Out vertex labels
\foreach \j in {1,...,8}
{
	\node[circle] at (\j*\hstep,\vstep) {$\hat{\zeta}^{\j}_{n}$};
}

% Draw edges
\foreach \i in {0}
{
	\foreach \j in {1,...,8}
	{	
		\foreach \k in {0,...,7}
		{
			\draw[basicedge]
			let
				\n1 = {int(mod((\j-1),\r^(\i))+\k*\r^(\i)+\r^(\i+1)*int(floor((\j-1)/\r^(\i+1))))+1},
				\n2 = {int(\i+1)}
			in
				(\j*\hstep,\i*\vstep) -- (\n1*\hstep,\n2*\vstep);
		}
	}
}	

\end{tikzpicture}
\end{center}
\caption{Conditional independence structure of multinomial resampling. $(\hat{\zeta}_n^i)_{i=1}^N$ are conditionally i.i.d.~draws from the distribution proportional to $\sum_i g_n(\zeta_n^i)\delta_{\zeta_n ^i}$.}
\label{fig:example of multinomial with zetas}
\end{figure}

Key to efficiency is an algorithm's \emph{communication pattern} -- the structure via which computational elements exchange information \cite{Bertsekas:1997}. The bottleneck in this regard for particle filters is the resampling operation, and its conditional independence graph, henceforth ``graph'' for brevity, provides a convenient and very simple model for its communication pattern if we associate each vertex in the graph with a separate processing unit and each edge with a communication link. Figure \ref{fig:example of multinomial with zetas} shows the graph for multinomial resampling \eqref{eq:multinomial_resampling}; one can think of each $\zeta_n^i$ and its weight $g_n(\zeta_n^i)$ as being stored locally at the $i$th vertex in the top row, and the $i$th vertex in the bottom row being tasked with sampling $\hat{\zeta}_n^i$. To achieve full parallelism, one would need $O(N^2)$ separate communication paths, ideally a separate physical connection corresponding to each edge in the graph. In practice, communication will be achieved through shared memory or a common data bus, inevitably leading to extensive memory traffic and delays as processors synchronize.

Our interest therefore turns to algorithms with more sparse graphs and -- again as is standard in parallel computing \citep[Ch. 3]{leighton1992} -- we can quantitatively summarize sparsity in terms of the total number of edges in the graph and the number of incoming edges per vertex, respectively $N^2$ and $N$ for multinomial resampling. Our aim is to explore the mathematical connections between these quantities and convergence properties as per \theref{thm:bootstrap}. Moreover, the graphs for the butterfly algorithms we devise match the structure of butterfly networks -- well known communication topologies in parallel computing \cite[Ch. 7]{savage:1998}.

\subsection{Literature}
There is a small but growing literature on theoretical analysis of particle algorithms with parallelism. The algorithms of \cite{Verge_island_particle} involve resampling at two hierarchical levels, and are presented with a study of asymptotic bias and variance. A recent preprint \cite{Verge_island_particle_conv} gives a central limit theorem. Some authors of the present paper \cite{whiteley_lee_heine_2014,lee_whiteley_2014} have studied the non-asymptotic stability properties of an ``$\alpha$SMC'' algorithm in which interaction between particles occurs adaptively, so as to keep the effective sample size above a given threshold. Despite some superficial similarities, the butterfly algorithms we devise are distinct from $\alpha$SMC in a number of ways, they do not involve any adaptation, their butterfly structure is entirely original and our study is focused on asymptotics. Some comments on stability are given in \secref{sec:discussion}, \remref{rem:stability}. Various issues of computational efficiency for standard algorithms are addressed by e.g. \citep{Murray_resampling_2014} and references therein.

% -----------------------------------
% -----------------------------------
%	 AUGMENTED RESAMPLING
% -----------------------------------
% -----------------------------------
\subsection{Augmented resampling}
\label{sec:augmented resampling}

We now introduce a new and general procedure called augmented resampling, which involves the following parameters:
\begin{itemize}
  \item $N$, the population size, as in Algorithm \ref{alg:pf}
  \item $m$, a positive integer
  \item $\aMatrices{N}{m}$, a sequence of non-negative matrices, each of size $N\times N$
\end{itemize}
The main idea is that we can use the matrices $\aMatrices{N}{m}$ to impose constraints on conditional independence of the random variables $\{\xi^{i}_k: i\in[N],~0\leq k \leq m\}$ in Algorithm \ref{alg:augmented_resampling}, the sampling steps of which are well-defined if $g$ is a member of $\boundMeas{\ss}$ and is strictly positive.
\begin{algorithm}[!ht]
\caption{Augmented resampling}% $(\xi_{\text{out}}^i)_{i\in[N]}=\text{resample}\left((\xi_{\text{in}}^i)_{i\in[N]},g\right)$}
\label{alg:augmented_resampling}
\begin{algorithmic}
\label{alg:bf}
\Procedure{$(\xi_{\mathrm{out}}^i)_{i\in[N]}=\text{resample}\left((\xi_{\mathrm{in}}^i)_{i\in[N]},g\right)$}{}
\For{$i=1,\ldots,N$}
	\State $\xi^i_0 \leftarrow \xi_{\text{in}}^{i}$
	\State $V_{0}^{i} \leftarrow g(\xi_{0}^{i})$
\EndFor
%\State
\For{$k=1,\ldots,m$}
	\For{$i=1,\ldots,N$}
		\State set $V^i_k \leftarrow \sum_{j}A^{ij}_{k}V^j_{k-1}$
		\State sample $\xi^i_k\sim (V^i_k)^{-1}\sum_{j}A^{ij}_{k}V^j_{k-1}\delta_{\xi^j_{k-1}}$
	\EndFor
\EndFor
%\State
\For{$i=1,\ldots,N$}
	\State $\xi_{\text{out}}^{i} \leftarrow \xi_{m}^{i}$
\EndFor
\EndProcedure
\end{algorithmic}
\end{algorithm}

% --------------------------------------------------------------------
% --------------------------------------------------------------------
% SOME PICTURES ARE INCLUDED HERE JUST TO GET AN IDEA. MOVE TO
% APPROPRIATE PLACE IF NEEDED. IN ORDER TO MAKE COMPILATION FASTER THE
% THREE LINES CAN BE COMMENTED.
% --------------------------------------------------------------------
% --------------------------------------------------------------------
%\input{./Graphs/mixedradix.tex}
%\input{./Graphs/algorithmicFixedMixed.tex}

As a special case, consider $m=1$ and let $A_1=\mathbf{1}_{1/N}$.  Algorithm \ref{alg:augmented_resampling}  then delivers, by inspection,
\[
\xi_{\text{out}}^i = \xi_1^i \sim \dfrac{1/N\sum_j  V_0^j \delta_{\xi_0^j}}{1/N\sum_j V_0^j }
= \dfrac{\sum_j g(\xi_{\text{in}}^j) \delta_{\xi_{\text{in}}^j}}{ \sum_j g(\xi_{\text{in}}^j) },\quad i\in[N],
\]
thus augmented resampling generalizes the multinomial resampling scheme \eqref{eq:multinomial_resampling}. With $m\geq1$ it turns out that a fruitful approach is to consider certain $m$-fold factorizations of $\mathbf{1}_{1/N}$ embodied by the following assumption.
\begin{assumption}\label{ass:A_k}
For all $k\in[m]$, $A_k$ is a doubly-stochastic matrix and $\prod_{k=1}^m A_k=\mathbf{1}_{1/N}$.
\end{assumption}
Under this assumption, we can establish some simple but fundamental lack-of-bias and moment properties of augmented resampling. The proof of the following proposition is in \secref{subsec:martingale}.
% ------------------------------------------------------------
\begin{proposition}\label{prop:intro_to_aug_resampling}
Fix $N\geq1$, and consider Algorithm \ref{alg:augmented_resampling} with  $g\in\boundMeas{\ss}$ such that $g(x)>0$ for all $x\in\ss$ . Fix $m\geq1$ and suppose that $\aMatrices{N}{m}$ satisfy \assref{ass:A_k}. Then for any $\varphi\in\boundMeas{\ss}$,
\begin{equation}
\E\left[\left.\frac{1}{N}\sum_i \varphi(\xi_{\mathrm{out}}^i)\right|\left(\xi_{\mathrm{in}}^i\right)_{i\in[N]}\right]=\dfrac{\sum_i g(\xi_{\mathrm{in}}^i)\varphi(\xi_{\mathrm{in}}^i)}{\sum_i g(\xi_{\mathrm{in}}^i)},\label{eq:aug_res_lack_of_bias}
\end{equation}
and for any $p\geq1$ there exists a finite constant $b_p$, depending only on $p$, such that no matter what the distribution of $\left(\xi_{\mathrm{in}}^i\right)_{i\in[N]}$ is,
\begin{eqnarray}
&&\E\Bigg[\Bigg|\Bigg(\frac{1}{N}\sum_i g(\xi_{\mathrm{in}}^i)\Bigg)\Bigg(\frac{1}{N}\sum_i \varphi(\xi_{\mathrm{out}}^i)\Bigg)-\frac{1}{N}\sum_i g(\xi_{\mathrm{in}}^i)\varphi(\xi_{\mathrm{in}}^i)\Bigg|^{p}\Bigg] \nonumber\\
&&\qquad \leq b_p \bigg(\frac{m}{N}\bigg)^{\frac{p}{2}} \infnorm{g}^p\osc{\varphi}^p. \label{eq:aug_res_L_p_bound}
\end{eqnarray}
\end{proposition}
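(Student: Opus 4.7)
My strategy is to identify a martingale whose terminal minus initial value equals exactly the quantity on the left of the $L^p$ bound, and then exploit conditional independence across $i$ of the sampling in Algorithm \ref{alg:augmented_resampling}. Let $\F_k := \sigma(\xi_{\text{in}}^i, \xi_j^i : i \in [N],\, 0 \leq j \leq k)$ and define
\begin{equation*}
M_k := \frac{1}{N}\sum_i V_k^i \varphi(\xi_k^i), \qquad 0 \leq k \leq m.
\end{equation*}
Given $\F_{k-1}$, the algorithm samples each $\xi_k^i$ independently across $i$ from $(V_k^i)^{-1}\sum_j A_k^{ij} V_{k-1}^j \delta_{\xi_{k-1}^j}$, so $\E[V_k^i\varphi(\xi_k^i)\mid\F_{k-1}] = \sum_j A_k^{ij} V_{k-1}^j \varphi(\xi_{k-1}^j)$. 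Summing over $i$ and using column-sum $\sum_i A_k^{ij}=1$ (part of double-stochasticity) yields $\E[M_k\mid\F_{k-1}] = M_{k-1}$, so $(M_k)_{k=0}^m$ is an $(\F_k)$-martingale.

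For the lack-of-bias claim, I would invoke \assref{ass:A_k} to get $V_m = (\prod_{k=1}^m A_k) V_0 = \ones_{1/N}V_0$, so $V_m^i = N^{-1}\sum_j g(\xi_{\text{in}}^j)$ is independent of $i$. Hence $M_m$ factors as $\big(N^{-1}\sum_j g(\xi_{\text{in}}^j)\big)\big(N^{-1}\sum_i \varphi(\xi_{\text{out}}^i)\big)$ while $M_0 = N^{-1}\sum_i g(\xi_{\text{in}}^i)\varphi(\xi_{\text{in}}^i)$. Taking conditional expectations given $\xi_{\text{in}}$ in the martingale identity and dividing through by the strictly positive quantity $N^{-1}\sum_j g(\xi_{\text{in}}^j)$ (using \assref{ass:g_n_bounded_and_positive}) delivers the lack-of-bias formula.

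For the moment bound, the quantity inside the absolute value is precisely $M_m - M_0$. Write the increment as $M_k - M_{k-1} = N^{-1}\sum_i D_k^i$ with $D_k^i := V_k^i\varphi(\xi_k^i) - \sum_j A_k^{ij} V_{k-1}^j\varphi(\xi_{k-1}^j)$. Given $\F_{k-1}$, the variables $(D_k^i)_{i\in[N]}$ are independent, mean zero, and almost surely bounded by $V_k^i\osc{\varphi} \leq \infnorm{g}\osc{\varphi}$, the latter because row-stochasticity of each $A_k$ keeps $V_k^i$ a convex combination of $V_0^j = g(\xi_{\text{in}}^j) \leq \infnorm{g}$. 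A conditional Marcinkiewicz--Zygmund (or Rosenthal) inequality then gives $\E[|M_k - M_{k-1}|^p\mid\F_{k-1}] \leq C_p N^{-p/2}\infnorm{g}^p\osc{\varphi}^p$. Applying Burkholder--Davis--Gundy to $(M_k)$ followed by Minkowski in $L^{p/2}$ (for $p \geq 2$; the case $p<2$ reduces to $p=2$ via Jensen),
\begin{equation*}
\|M_m - M_0\|_p^2 \leq B_p \|[M]_m\|_{p/2} \leq B_p \sum_{k=1}^m \|M_k - M_{k-1}\|_p^2 \leq B_p' \frac{m\,\infnorm{g}^2\osc{\varphi}^2}{N},
\end{equation*}
yielding the announced $(m/N)^{p/2}$ scaling with $b_p = (B_p')^{p/2}$.

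The essential step is identifying the martingale $(M_k)$ and exploiting \assref{ass:A_k} to collapse $V_m^i$ to a constant across $i$; once this is done, both parts follow from standard martingale/moment tools. A minor care point is matching the matrix-product ordering in Algorithm \ref{alg:augmented_resampling} (which produces $V_m = A_m\cdots A_1 V_0$) with the product convention adopted in \assref{ass:A_k}.
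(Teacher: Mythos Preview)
Your argument is correct. The paper proves the proposition via its \propref{prop:generalized martingale decomposition}, specialised to $d=1$, $\calI=\{\{u\}:u\in[N]\}$, $J=\id$: this produces a single martingale difference sequence $\big(X_\varrho^{(N,m)}\big)_{\varrho\in[Nm]}$ indexed particle-by-particle within each stage, with each term bounded by $(Nm)^{-1/2}\infnorm{g}\osc{\varphi}$ (after the scaling $S_{N,m,1}=\sqrt{N/m}$), and then applies Burkholder--Davis--Gundy directly to the full $Nm$-step sum. Your route is the same martingale idea at a coarser granularity: you work with the $m$-step martingale $(M_k)_{k=0}^m$ and handle the within-stage sum over $i$ by a separate conditional Marcinkiewicz--Zygmund bound. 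The paper's finer filtration absorbs that inner step into the same BDG application.

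What each approach buys: the paper's route is a specialisation of machinery (\propref{prop:generalized martingale decomposition}) that it needs anyway for the block-wise decompositions underlying the later CLTs, so it gets \propref{prop:intro_to_aug_resampling} essentially for free once that framework is in place. Your two-layer argument is more self-contained and arguably more transparent for this proposition in isolation, since it separates the ``across-stage'' martingale structure from the ``within-stage'' conditional independence. The care point you flag about the ordering of the matrix product versus the recursion $V_m=A_m\cdots A_1 V_0$ is real; the paper handles it via \lemmaref{lem:facts_about_Vs}, and in the concrete butterfly cases the matrices commute so the issue is moot.
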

It is of course implicit in the notation here that $m$ and the matrices $\aMatrices{N}{m}$ may depend on $N$. An immediate consequence of \eqref{eq:aug_res_L_p_bound} is that if, for example, $m$ is some non-decreasing function of $N$,  $\aMatrices{N}{m}$ satisfy \assref{ass:A_k} for every $N$, and $\sum_{N=1}^\infty(m/N)^{p/2}<\infty$ for some $p\geq1$, then
$$\Bigg(\frac{1}{N}\sum_i g(\xi_{\mathrm{in}}^i)\Bigg)\Bigg(\frac{1}{N}\sum_i \varphi(\xi_{\mathrm{out}}^i)-\frac{\sum_i g(\xi_{\mathrm{in}}^i)\varphi(\xi_{\mathrm{in}}^i)}{\sum_i g(\xi_{\mathrm{in}}^i)}\Bigg) \;\almostsurelyOneArg{N\rightarrow\infty}\; 0,$$
without requiring any convergence of $N^{-1} \sum_i g(\xi_{\mathrm{in}}^i)$ or $N^{-1} \sum_i g(\xi_{\mathrm{in}}^i)\varphi(\xi_{\mathrm{in}}^i)$. However even if these quantities do converge, without further assumption there is no guarantee of a corresponding central limit theorem and more structure is needed to establish non-trivial limits for the moments in \eqref{eq:aug_res_L_p_bound} when suitably rescaled. We next introduce parameterised families of the matrices $\aMatrices{N}{m}$ which give rise to this structure and which are pursuant to the aims described in \secref{sub:parallelism}.

%\input{./Graphs/multinomialN8.tex}
%\label{fig:example of multinomial}

%\subsection{Sparse nonnegative matrix factorization\label{sub:Kronecker-Factorization} ATC}
%
%\begin{center}
%\color{red} WHAT IS THE SITUATION WITH THIS SECTION?
%\end{center}
%
%A particular choice of A's that we will study in this paper arises
%by considering certain Kronecker Factorizations
%\begin{itemize}
%\item definition of A matrices involving kronecker product
%\item {\footnotesize{fix $r,m\in\N$ and set $N:=r^{m}$, write
%$\otimes$ for the Kronecker product }}{\footnotesize \par}
%\item {\footnotesize{define the size $N\times N$ matrices
%\[
%A_{k}:=\underbrace{I_{m}\otimes\cdots\otimes I_{m}}_{m-k\text{ times}}\otimes\mathbf{1}_{1/m}\otimes\underbrace{I_{m}\otimes\cdots\otimes I_{m}}_{k-1\text{ times}},\quad k=1,...,m
%\]
%}}{\footnotesize \par}
%\end{itemize}

% -------------------------------------------------------------
% -------------------------------------------------------------
%  RADIX R BUTTERFLY RESAMPLING
% -------------------------------------------------------------
% -------------------------------------------------------------
\subsection{Radix-\texorpdfstring{$r$}{r} resampling algorithm}
\label{sec:fixed radix main results}

For each $r\geq 2$ and $m\geq 1$, consider the family of matrices
\begin{equation}\label{eq:def fixed radix matrices}
\radixMatrices{r}{m}:=(A_k)_{k\in[m]},\quad A_k = I_{r^{m-k}} \otimes  \ones_{1/r} \otimes I_{r^{k-1}}, \quad k\in[m].
\end{equation}
We shall refer to  \algrefmy{alg:augmented_resampling}  applied with the matrices in \eqref{eq:def fixed radix matrices} and $N=r^m$ as the \emph{radix-\texorpdfstring{$r$}{r} butterfly resampling algorithm}. Examples of the matrices in  \eqref{eq:def fixed radix matrices} are shown in \figref{fig:radix matrices}.

\begin{figure}
\begin{tabular}{ccc}
$A_1$&$A_2$&$A_3$ \\[-2mm]
\mbox{
\adjustbox{scale=.8}{
\begin{minipage}{.32\textwidth}
\begin{equation*}
\!\!\!\!\!\!\left(
\arraycolsep=4pt
\begin{array}{cccc|cccc}
\frac{1}{2} & \frac{1}{2} & \cdot & \cdot & \cdot & \cdot & \cdot & \cdot \rule[-1.55mm]{0pt}{4.7mm} \\
\frac{1}{2} & \frac{1}{2} & \cdot & \cdot & \cdot & \cdot & \cdot & \cdot \rule[-1.55mm]{0pt}{4.7mm} \\
\cdot & \cdot & \frac{1}{2} & \frac{1}{2} & \cdot & \cdot & \cdot & \cdot \rule[-1.55mm]{0pt}{4.7mm} \\
\cdot & \cdot & \frac{1}{2} & \frac{1}{2} & \cdot & \cdot & \cdot & \cdot \rule[-1.55mm]{0pt}{4.7mm} \\ \hline
\cdot & \cdot & \cdot & \cdot & \frac{1}{2} & \frac{1}{2} & \cdot & \cdot \rule[-1.55mm]{0pt}{4.7mm} \\
\cdot & \cdot & \cdot & \cdot & \frac{1}{2} & \frac{1}{2} & \cdot & \cdot \rule[-1.55mm]{0pt}{4.7mm} \\
\cdot & \cdot & \cdot & \cdot & \cdot & \cdot & \frac{1}{2} & \frac{1}{2} \rule[-1.55mm]{0pt}{4.7mm} \\
\cdot & \cdot & \cdot & \cdot & \cdot & \cdot & \frac{1}{2} & \frac{1}{2} \rule[-1.55mm]{0pt}{4.7mm} \\
\end{array}
\right)
\end{equation*}
\end{minipage}}} &
\mbox{
\adjustbox{scale=.8}{
\begin{minipage}{.32\textwidth}
\begin{equation*}
\!\!\!\!\!\!\left(
\arraycolsep=4pt
\begin{array}{cccc|cccc}
 \frac{1}{2}& \cdot & \frac{1}{2}& \cdot & \cdot & \cdot & \cdot & \cdot  \rule[-1.55mm]{0pt}{4.7mm} \\
 \cdot & \frac{1}{2}& \cdot & \frac{1}{2}& \cdot & \cdot & \cdot & \cdot  \rule[-1.55mm]{0pt}{4.7mm} \\
 \frac{1}{2}& \cdot & \frac{1}{2}& \cdot & \cdot & \cdot & \cdot & \cdot  \rule[-1.55mm]{0pt}{4.7mm} \\
 \cdot & \frac{1}{2}& \cdot & \frac{1}{2}& \cdot & \cdot & \cdot & \cdot  \rule[-1.55mm]{0pt}{4.7mm} \\ \hline
 \cdot & \cdot & \cdot & \cdot & \frac{1}{2}& \cdot & \frac{1}{2}& \cdot  \rule[-1.55mm]{0pt}{4.7mm} \\
 \cdot & \cdot & \cdot & \cdot & \cdot & \frac{1}{2}& \cdot & \frac{1}{2} \rule[-1.55mm]{0pt}{4.7mm} \\
 \cdot & \cdot & \cdot & \cdot & \frac{1}{2}& \cdot & \frac{1}{2}& \cdot  \rule[-1.55mm]{0pt}{4.7mm} \\
 \cdot & \cdot & \cdot & \cdot & \cdot & \frac{1}{2}& \cdot & \frac{1}{2} \rule[-1.55mm]{0pt}{4.7mm} \\
\end{array}
\right)
\end{equation*}
\end{minipage}}} &
\mbox{
\adjustbox{scale=.8}{
\begin{minipage}{.32\textwidth}
\begin{equation*}
\!\!\!\!\!\!\left(
\arraycolsep=4pt
\begin{array}{cccc|cccc}
 \frac{1}{2}& \cdot & \cdot & \cdot & \frac{1}{2}& \cdot & \cdot & \cdot  \rule[-1.55mm]{0pt}{4.7mm} \\
 \cdot & \frac{1}{2}& \cdot & \cdot & \cdot & \frac{1}{2}& \cdot & \cdot  \rule[-1.55mm]{0pt}{4.7mm} \\
 \cdot & \cdot & \frac{1}{2}& \cdot & \cdot & \cdot & \frac{1}{2}& \cdot  \rule[-1.55mm]{0pt}{4.7mm} \\
 \cdot & \cdot & \cdot & \frac{1}{2}& \cdot & \cdot & \cdot & \frac{1}{2} \rule[-1.55mm]{0pt}{4.7mm} \\ \hline
 \frac{1}{2}& \cdot & \cdot & \cdot & \frac{1}{2}& \cdot & \cdot & \cdot  \rule[-1.55mm]{0pt}{4.7mm} \\
 \cdot & \frac{1}{2}& \cdot & \cdot & \cdot & \frac{1}{2}& \cdot & \cdot  \rule[-1.55mm]{0pt}{4.7mm} \\
 \cdot & \cdot & \frac{1}{2}& \cdot & \cdot & \cdot & \frac{1}{2}& \cdot  \rule[-1.55mm]{0pt}{4.7mm} \\
 \cdot & \cdot & \cdot & \frac{1}{2}& \cdot & \cdot & \cdot & \frac{1}{2} \rule[-1.55mm]{0pt}{4.7mm} \\
\end{array}
\right)
\end{equation*}
\end{minipage}}}
\end{tabular}
\caption{The matrices $\radixMatrices{2}{3}=(A_1,A_2,A_3)$ for the radix-$2$ algorithm.}
\label{fig:radix matrices}
\end{figure}
The algebraic structure of \eqref{eq:def fixed radix matrices} dictates the conditional independence structure of butterfly resampling. As a step towards illustrating this connection we now derive a modular congruence characterization of the non-zero matrix entries. For each $k\in[m]$ and $r\geq2$ introduce the following congruence relation on $[N]$,
\begin{equation*}
i \stackrel{(k,r)}{\sim} j \quad \Longleftrightarrow \quad\begin{cases} \floor{\dfrac{i-1}{r^k}} = \floor{\dfrac{j-1}{r^k}}, & \\
%& \\
\quad \quad \quad\text{and} \phantom{\Big|}& \\
%& \\
(i-1)\bmod r^{k-1}=(j-1)\bmod r^{k-1}. &
\end{cases}
\end{equation*}

\begin{lemma} \label{lem:modular congruence realtion fixed}
The matrices in \eqref{eq:def fixed radix matrices} satisfy Assumption \ref{ass:A_k}. Moreover they are symmetric, have entries which are either $1/r$ or zero, and the non-zero entries are characterized by:
$$
A_{k}^{ij}>0 \quad\Longleftrightarrow \quad i \stackrel{(k,r)}{\sim} j.
$$
\end{lemma}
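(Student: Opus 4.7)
The approach is to decode all four claims from the base-$r$ digit representation of the indices. Set $N=r^m$ and, for each fixed $k\in[m]$ and each $i\in[N]$, decompose uniquely
$$i-1=a_i\,r^{k}+b_i\,r^{k-1}+c_i,\qquad a_i\in\{0,\ldots,r^{m-k}-1\},\ b_i\in\{0,\ldots,r-1\},\ c_i\in\{0,\ldots,r^{k-1}-1\}.$$
Then $a_i=\lfloor(i-1)/r^{k}\rfloor$ and $c_i=(i-1)\bmod r^{k-1}$, so the relation $i\stackrel{(k,r)}{\sim}j$ is exactly $a_i=a_j$ and $c_i=c_j$, with $b_i,b_j$ unconstrained.

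From the definition of the Kronecker product, under this digit indexing
$$A_k^{ij}\;=\;(I_{r^{m-k}})_{a_i,a_j}\,(\mathbf{1}_{1/r})_{b_i,b_j}\,(I_{r^{k-1}})_{c_i,c_j},$$
which vanishes unless $a_i=a_j$ and $c_i=c_j$, and in that case equals $1\cdot(1/r)\cdot 1=1/r$. This at once gives that every entry of $A_k$ is $0$ or $1/r$ and that the non-zero pattern is characterised by $\stackrel{(k,r)}{\sim}$. Symmetry is immediate because $I$ and $\mathbf{1}_{1/r}$ are symmetric and the Kronecker product preserves symmetry. Double stochasticity follows from the elementary fact that if $A$ and $B$ each have all row and column sums equal to $1$ then so does $A\otimes B$ (since the $(i,j)$-th row sum of $A\otimes B$ factorises as the product of the $i$-th row sum of $A$ and the $j$-th row sum of $B$); equivalently one can read it off the congruence picture, since every row of $A_k$ has exactly $r$ non-zero entries, indexed by the free middle digit, each equal to $1/r$.

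For the product identity $\prod_{k=1}^{m}A_k=\mathbf{1}_{1/N}$ the cleanest route is iterated application of the mixed-product rule $(A\otimes B)(C\otimes D)=AC\otimes BD$. After rewriting each $A_k$ in a three-factor form sharing compatible block sizes, an induction on $k$ gives $A_1 A_2\cdots A_k = I_{r^{m-k}}\otimes \mathbf{1}_{1/r}^{\otimes k}$; taking $k=m$ yields $\mathbf{1}_{1/r}^{\otimes m}$, i.e.\ the $N\times N$ matrix with every entry equal to $r^{-m}=1/N$. A more conceptual way to see the same thing is to note that $A_k$ acts on the tensor space $(\mathbb{R}^r)^{\otimes m}$ as the averaging operator $\mathbf{1}_{1/r}$ on the $k$-th factor and as the identity on every other factor; operators acting on distinct tensor factors commute in composition, and taken together they average over every factor, which is precisely $\mathbf{1}_{1/N}$.

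The only substantive work is bookkeeping: aligning the Kronecker indexing with the base-$r$ digit decomposition. Once this correspondence is in place, symmetry, the entry values, the congruence characterisation and the product identity all reduce to direct computation.
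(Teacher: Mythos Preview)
Your proof is correct and follows essentially the same route as the paper: both arguments read off the entry structure from the element-wise Kronecker formula (your base-$r$ digit decomposition $i-1=a_i r^k+b_i r^{k-1}+c_i$ is exactly the paper's two applications of \eqref{eq:elementwise Kronecker formula}), and both prove $\prod_{k=1}^m A_k=\mathbf{1}_{1/N}$ by the same mixed-product induction $A_1\cdots A_k=I_{r^{m-k}}\otimes\mathbf{1}_{1/r^k}$. The only cosmetic difference is double stochasticity: you obtain it directly from the digit picture (each row has $r$ entries $1/r$, indexed by the free middle digit), whereas the paper takes a slightly longer detour through idempotence $A_kA_k=A_k$ and the identity $A_k^{ii}=\sum_j (A_k^{ij})^2$ to count the non-zero entries per row; your argument is more economical here, though the paper's route has the side benefit of recording idempotence, which is reused later in \assref{ass:A_k extra}.
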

Since the matrices in \eqref{eq:def fixed radix matrices} are a key and novel ingredient in our algorithms, we present the proof of the lemma before discussing its interpretation.
\begin{proof}
First we recall the \emph{mixed product property} of Kronecker product, that is, for any matrices $A$,$B$,$C$ and $D$, such that the products $AC$ and $BD$ are defined, one has (see, e.g.~\cite{horn_et_johnson})
\begin{equation}\label{eq:mixed product property}
(A \otimes B)(C\otimes D)=(AC)\otimes (BD).
\end{equation}
Also we note that for any two square matrices $A$ of size $M$ and $B$ of size $N$, the Kronecker product has the element-wise formula:
\begin{equation}\label{eq:elementwise Kronecker formula}
(A\otimes B)^{ij}=A^{\floor{\frac{i-1}{N}}+1,\floor{\frac{j-1}{N}}+1} B^{((i-1)\bmod N)+1,((j-1)\bmod N)+1},
\end{equation}
where $i,j\in[MN]$. From the element-wise formula we see immediately that $A\otimes B$ is symmetric if $A$ and $B$ are symmetric. Hence, by \eqnref{eq:def fixed radix matrices}, $A_k$ is symmetric for all $k\in[m]$. By applying \eqnref{eq:mixed product property} twice to the definition of $A_k$, one has $A_kA_k=A_k$, i.e.~$A_k$ is idempotent. By the associativity of the Kronecker product and two applications of the element-wise formula, we also have for the matrices in \eqref{eq:def fixed radix matrices} the expression
\begin{align}
A_k^{ij} &= I_{r^{m-k}}^{\floor{\frac{i-1}{r^k}}+1,\floor{\frac{j-1}{r^k}}+1}\ones_{1/r}^{\left(\floor{\frac{i-1}{r^{k-1}}}\bmod r\right)+1, \left(\floor{\frac{j-1}{r^{k-1}}}\bmod r\right)+1} \nonumber\\
&\times I_{r^{k-1}}^{\left((i-1)\bmod r^{k-1}\right)+1, \left((j-1)\bmod r^{k-1}\right)+1},	\label{eq:elementwise formula for radix}
\end{align}
where we have also used the fact that $\lfloor\lfloor(i-1)/r^{k-1}\rfloor / r\rfloor = \lfloor(i-1)/r^{k}\rfloor$, and $\lfloor\lfloor(j-1)/r^{k-1}\rfloor / r\rfloor = \lfloor(j-1)/r^{k}\rfloor$.
%where we have also used the fact that $\lfloor\lfloor(i-1)/c^{k-1}\rfloor / r^{k-1}c^{2-k}\rfloor = \lfloor(i-1)/r^{k-1}c\rfloor$, and $\lfloor\lfloor(j-1)/c^{k-1}\rfloor / r^{k-1}c^{2-k}\rfloor = \lfloor(j-1)/r^{k-1}c\rfloor$.
From this we see immediately that $A_{k}^{ij}\in\{0,1/r\}$.

By the idempotence, symmetry and the facts that by \eqnref{eq:elementwise formula for radix}, $A_{k}^{ii}=1/r$ and $A_{k}^{ij}\in\{0,1/r\}$ one has
\begin{equation*}
\frac{1}{r} = A_k^{ii} = (A_kA_k)^{ii} = (A_k^TA_k)^{ii} = \sum_{j\in[r^m]} (A^{ij}_k)^2 = \frac{p}{r^2} \iff p = r,
\end{equation*}
where $p$ is the number of non-zero elements on the $i$th column of $A_k$. Hence the double stochasticity of \assref{ass:A_k} follows by symmetry.

To prove the remaining part of \assref{ass:A_k}, we assume that for some $k>1$, $\prod_{q=1}^{k-1}A_{q} = I_{r^{m-k+1}}\otimes \ones_{1/r^{k-1}}$. By \eqnref{eq:def fixed radix matrices}, this clearly holds for $k=2$. Then by the associativity and the mixed product property \eqnref{eq:mixed product property}
\begin{eqnarray*}
\textstyle\prod_{q=1}^{k}A_{q}
&=& \textstyle\big(\prod_{q=1}^{k-1}A_{q}\big)A_{k}\\
&=& \big(I_{r^{m-k+1}}\otimes \ones_{1/r^{k-1}}\big)\big(I_{r^{m-k}} \otimes  \ones_{1/r} \otimes I_{r^{k-1}}\big)\\
&=& \big(I_{r^{m-k+1}}(I_{r^{m-k}} \otimes  \ones_{1/r})\big) \otimes \big(\ones_{1/r^{k-1}}I_{r^{k-1}} \big)\\
&=& (I_{r^{m-k}} \otimes  \ones_{1/r}) \otimes \ones_{1/r^{k-1}}\\
&=& I_{r^{m-k}} \otimes \ones_{1/r^{k}},
\end{eqnarray*}
i.e.~$\prod_{q=1}^{k}A_{q} = I_{r^{m-k}} \otimes \ones_{1/r^{k}}$ for all $k\in[m]$, from which the remaining part of \assref{ass:A_k} follows by substituting $k=m$.

Finally, the required equivalence then holds by \eqnref{eq:elementwise formula for radix}, because $\ones_{1/r}$ has all entries strictly positive and $I_{r^{m-k}}$ and $I_{r^{k-1}}$ are identity matrices.
\end{proof}

Using \lemmaref{lem:modular congruence realtion fixed}, we have by inspection of \algrefmy{alg:augmented_resampling} that for radix-$r$ resampling with any $i\in [N]$ and $k\in[m]$,
\begin{equation}\label{eq:but_cond_id}
\xi_k^i \sim \frac{\sum_j A_k^{ij}V_{k-1}^j \delta_{\xi_{k-1}^j} }{ \sum_j A_k^{ij}V_{k-1}^j } = \frac{\sum_{\{j:  i\stackrel{(k,r)}{\sim}j \}} V_{k-1}^j \delta_{\xi_{k-1}^j} }{ \sum_{\{j:  i\stackrel{(k,r)}{\sim}j \}} V_{k-1}^j},
\end{equation}
and the following conditional independence holds:
\begin{equation}\label{eq:but_cond_ind}
i\stackrel{(k,r)}{\sim} j \quad\Longrightarrow \quad\xi_k^i \independent \xi_k^j \,\big|\, (\xi_{k-1}^u,V_{k-1}^u; u \stackrel{(k,r)}{\sim} i).
\end{equation}
These kind of considerations underly much of our convergence study. As illustrated in Figure \ref{fig:example of fixed and mixed radix} (a), for radix-$r$ resampling, the parameter $r$, which is equal to $|\{j:  i\stackrel{(k,r)}{\sim}j \}|$ for all $i\in[N],k\in[m]$, is the number of incoming edges for the vertices corresponding to the random variables $\{ \xi^i_k ; i\in[N] , k\in[m]\}$. Recalling that here $N=r^m$, the total number of edges in the graph is then $rN \log_r N$.

% !TEX root = bfwithtikz.tex

{
%\fontsize{8pt}{9.6}
%\selectfont

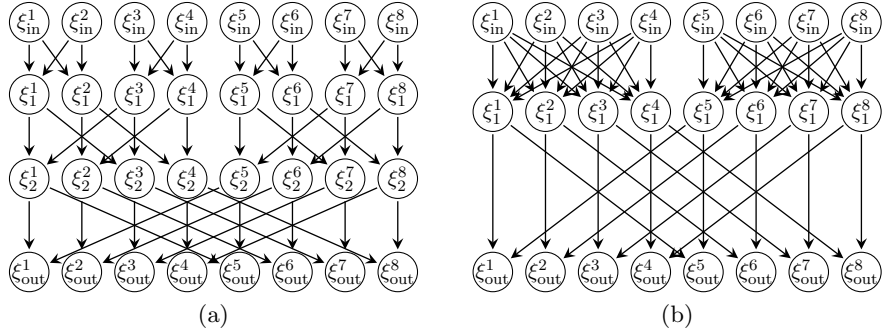
\begin{figure}[t!]
\begin{center}
\tikzset{
    vertex/.style = {
    	draw,
	    circle,
        fill      = white,
        outer sep = 2pt,
        inner sep = 2pt,
    }
}

\tikzset{
    mainvertex/.style = {
    	draw,
		circle,
        fill = black,
        outer sep = 2pt,
        inner sep = 2pt,
    }
}

\tikzset{
    demovertex/.style = {
    	draw,
		circle,
        fill = gray,
        outer sep = 2pt,
        inner sep = 2pt,
    }
}

\tikzset{
    basicvertex/.style = {
    	draw,
		circle,
		minimum size = .7cm
    }
}

\tikzset{
    basicinnervertex/.style = {
    	draw,
		circle,
		minimum size = .6cm
    }
}

\tikzset{
    demoedge/.style = {
		-stealthnew,
		shorten <=.15cm, 
		shorten >=.15cm,
		arrowhead=2mm,
		line width=2pt    
		}
}

\tikzset{
    basicedge/.style = {
		-stealthnew,
		color = black,
		shorten <=.3cm, 
		shorten >=.3cm,
		arrowhead=1.5mm,
		line width=.5pt    
		}
}

\begin{tabular}{c@{}c}
\mbox{
\begin{minipage}{0.47\textwidth}
\begin{center}
%\hspace*{-.5cm}
\mbox{
\begin{adjustbox}{trim={1mm} {0mm} {1mm} {0mm},clip}
\begin{tikzpicture}[scale=.7]
\tikzstyle{every node}=[scale=.75,font=\normalsize]
\def \vstep {-1.5cm}
\def \hstep {1cm}
\def \r {2}
\def \m {3} % margin in angles, depends on the radius
\def \N {8}
\def \X {4}
\def \Y {2}
\def \vspaces{{1,2,3,4}}

% Draw vertices
\foreach \i in {0,...,3}
{
	\foreach \j in {1,...,8}
	{
		\pgfmathparse{pow(\vspaces[\i],1.3)/1.6};
  	\node[basicvertex] (\i\j) at (\j*\hstep,\pgfmathresult*\vstep) {};
	}
}	

%\pgfmathparse{sqrt({(0.5*{\paperwidth})^2}+{pow((0.5*{\paperwidth})^2)})};
%\draw[style=dashed master lines] (0.5*\paperwidth,\paperheight)--(0.5*\paperwidth,\paperheight-0.5*\pgfmathresult) ; 

% In vertex labels
\foreach \j in {1,...,8}
{
	\pgfmathparse{pow(\vspaces[0],1.3)/1.6};
	\node[circle] at (\j*\hstep,\pgfmathresult*\vstep) {$\xi^{\j}_{\mathrm{in}}$};
}

% Out vertex labels
\foreach \j in {1,...,8}
{
	\pgfmathparse{pow(\vspaces[\m],1.3)/1.6};
	\node[circle] at (\j*\hstep,\pgfmathresult*\vstep) {$\xi^{\j}_{\mathrm{out}}$};
}

% Internal labels
\foreach \i in {1,...,2}
{
	\foreach \j in {1,...,8}
	{
		\pgfmathparse{pow(\vspaces[\i],1.3)/1.6};
  	\node[circle] at (\j*\hstep,\pgfmathresult*\vstep) {$\xi^{\j}_{\i}$};
	}
}	

% Draw edges
\foreach \i in {0,...,2}
{
	\foreach \j in {1,...,8}
	{	
		\foreach \k in {0,...,1}
		{
			\pgfmathsetmacro{\result}{pow(\vspaces[\i],1.3)/1.6}
			\pgfmathsetmacro{\resulttwo}{pow(\vspaces[\i+1],1.3)/1.6}
			\draw[basicedge]
			let 
				\n1 = {int(mod((\j-1),\r^(\i))+\k*\r^(\i)+\r^(\i+1)*int(floor((\j-1)/\r^(\i+1))))+1}, 
				\n2 = {\resulttwo}%{int(\i+1)} 
			in 							
				(\j*\hstep,\result*\vstep) -- (\n1*\hstep,\n2*\vstep);
		}
	}
}	

\end{tikzpicture}
\end{adjustbox}
}
\end{center}
\end{minipage}
}
& 
\mbox{
\begin{minipage}{.47\textwidth}
%\hspace*{-.5cm}
\begin{center}
\mbox{
\begin{adjustbox}{trim={1mm} {0mm} {1mm} {0mm},clip}
\begin{tikzpicture}[scale=0.7]
\tikzstyle{every node}=[scale=.75,font=\normalsize]

\def \vstep {-1.7cm}
\def \bigvstep {-2.3732cm}
\def \hstep {1cm}
\def \r {4}
\def \m {3} % margin in angles, depends on the radius
\def \N {8}
\def \X {4}
\def \Y {2}

% Draw vertices
\foreach \i in {0,...,1}
{
	\foreach \j in {1,...,8}
	{
  		\node[basicvertex] (\i\j) at (\j*\hstep,\i*\vstep) {};
	}
}
\foreach \i in {2}
{
	\foreach \j in {1,...,8}
	{
		\node[basicvertex] (\i\j) at (\j*\hstep,\i*\bigvstep) {};
	}
}

% In vertex labels
\foreach \j in {1,...,8}
{
	\node[circle] at (\j*\hstep,0*\vstep) {$\xi^{\j}_{\mathrm{in}}$};
}

% Out vertex labels
\foreach \j in {1,...,8}
{
	\node[circle] at (\j*\hstep,2*\bigvstep) {$\xi^{\j}_{\mathrm{out}}$};
}

% Internal labels
\foreach \i in {1}
{
	\foreach \j in {1,...,8}
	{
  		\node[circle] at (\j*\hstep,\i*\vstep) {$\xi^{\j}_{\i}$};
	}
}	

% Draw edges
\foreach \i in {0}
{
	\foreach \j in {1,...,8}
	{	
		\foreach \k in {0,...,3}
		{
			\draw[basicedge]
			let 
				\n1 = {mod((\j-1),\r^(\i))+\k*\r^(\i)+\r^(\i+1)*int(floor((\j-1)/\r^(\i+1)))+1}, 
				\n2 = {int(\i+1)} 
			in 
				(\j*\hstep,\i*\vstep) -- (\n1*\hstep,\n2*\vstep);
		}
	}
}	

\def \r {2}
\foreach \i in {2}
{
	\foreach \j in {1,...,8}
	{	
		\foreach \k in {0,...,1}
		{
			\draw[basicedge]
			let 
				\n1 = {mod((\j-1),\r^(\i))+\k*\r^(\i)+\r^(\i+1)*int(floor((\j-1)/\r^(\i+1)))+1}, 
				\n2 = {2} 
			in 
				(\j*\hstep,\vstep) -- (\n1*\hstep,\n2*\bigvstep);
		}
	}
}	

\end{tikzpicture}
\end{adjustbox}
}
\end{center}
\end{minipage}
}\\
(a) & (b)
\end{tabular}
\end{center}
\caption{The conditional independence structure of (a) the radix-$r$ algorithm with $r=2$, $m = 3$ and $N=8$ and (b) the mixed radix-$r$ algorithm with $r=2$, $c=4$ and $N=8$.}
\label{fig:example of fixed and mixed radix}
\end{figure}

}
% !TEX root = bfwithtikz.tex

% --------------------------------------------------------------------
% --------------------------------------------------------------------
%
% --------------------------------------------------------------------
% --------------------------------------------------------------------
\begin{figure}[t!]
\begin{center}
\tikzset{
    vertex/.style = {
    	draw,
	    circle,
        fill      = white,
        outer sep = 1.5pt,
        inner sep = 1.5pt,
    }
}

\tikzset{
    mainvertex/.style = {
    	draw,
		circle,
        fill = black,
        outer sep = 2pt,
        inner sep = 2pt,
    }
}

\tikzset{
    demovertex/.style = {
    	draw,
		circle,
        fill = gray,
        outer sep = 2pt,
        inner sep = 2pt,
    }
}

\mbox{
\begin{adjustbox}{trim={.0\width} {.08\height} {0.0\width} {.25\height},clip}
\begin{tikzpicture}[scale=0.8]

\def \vstep {-.55cm}
\def \vstepf {-2.8cm}
\def \vstepm {-2.8cm}
\def \hstep {.45cm}
\def \graphHoffset {0cm}
\def \graphVoffset {3cm}
\def \fmHoffset {9cm}
\def \r {2}
\def \N {8}
\def \X {4}
\def \Y {2}
\def\kms{{2,4,8,16}}
\def\kmsM{{2,4,6,8}}
\edef\voffsetA {0}
\edef\voffsetB {0}
\edef\hoffsetB {0}
\edef\hoffsetBB {0}
\edef\mlab{0}
\def\maxH{16}
\def\levels{{0,1,1.585,2}}
\edef \rm {2}
\edef\lev{0}

\edef\tmp{0}
	
%\node[rectangle,align=left,minimum size = 2cm] at (4.05*\hstep,1cm) {Title};
\node[circle,align=left] at (2.25*\hstep + 4.75*\hstep+1.5*\hstep,1cm) {\makebox[4cm][l]{Radix-$2$ butterfly. $N=2^m$, $m=1,2,3,4$.}};
	
\foreach \m in {0,...,3}
{
	\X = \kms[\m]
	\pgfmathsetmacro\stages{int(\m+1)}
	\pgfmathsetmacro\X{\kms[\m]}	
	\pgfmathparse{\voffsetA+\m}
	\xdef\voffsetA{\pgfmathresult}
%	\pgfmathparse{\voffsetA*\vstep-\m*.75cm}
%	\xdef\voffsetB{\pgfmathresult}	
	\pgfmathparse{0*\vstepf}
	\xdef\voffsetB{\pgfmathresult}	

	\pgfmathparse{\hoffsetB+pow(2,\m)*\hstep + 1.25*\hstep}
	\xdef\hoffsetB{\pgfmathresult}	
	
%	\pgfmathparse{(\maxH/2-\X/2)*\hstep}
%	\xdef\hoffset{\pgfmathresult}

%	% Row separator
%	\draw[-]	(0cm,.75cm+\voffsetB+0*\m*\vstep) -- (\maxH,.75cm+\voffsetB+0*\m*\vstep);
	
	% Draw vertices
	\foreach \i in {0,...,\stages}
	{
		\foreach \j in {1,...,\X}
		{
				
%	  		\node[vertex] (\i\j) at (\j*\hstep+\m*\graphHoffset+\X*\hstep,\i*\vstep+\m*\graphVoffset+\stages*\vstep) {};
	  		\node[vertex] (\i\j) at (\j*\hstep+\hoffsetB,\i*\vstep+\voffsetB) {};
			
		}
	}	

%	\pgfmathparse{0.005*\vstep+\voffsetB}
%	\xdef\mlab{\pgfmathresult}
	
	\pgfmathparse{pow(2,\m+1)}
	\xdef\tmp{\pgfmathresult}
	\node[circle,align=left] at (1pt*\hoffsetB+4.75*\hstep+1.5*\hstep,.5cm) {\makebox[4cm][l]{$N=\pgfmathprintnumber{\tmp}$:}};
	
%	\node[circle] at (\hoffsetBB,.5cm) {$m=\pgfmathprintnumber{\stages}$:};
%	\node[circle] at (-.5cm,0.5*\vstep+\voffsetB+0.5*\m*\vstep) {$m=\pgfmathprintnumber{\stages}$:};
	
	\pgfmathsetmacro\stagestwo{\m}
	\pgfmathsetmacro\Xtwo{\X-1}
	
	% Draw edges
	\foreach \i in {0,...,\stagestwo}
	{
		\foreach \j in {1,...,\X}
		{	
			\foreach \k in {0,...,1}
			{
				\draw[-stealthnew,shorten <=.1cm, shorten >=.1cm,arrowhead=1mm]
				let
					\n1 = {int(mod((\j-1),\r^(\i))+\k*\r^(\i)+\r^(\i+1)*int(floor((\j-1)/\r^(\i+1))))+1},
					\n2 = {int(\i+1)}
				in
					(\j*\hstep+\hoffsetB,\i*\vstep+\voffsetB) --
					(\n1*\hstep+\hoffsetB,\n2*\vstep+\voffsetB);
			}
		}
	}	
	
}
% --------------------------------------------------------------------
% --------------------------------------------------------------------
%  MIXED RADIX
% --------------------------------------------------------------------
% --------------------------------------------------------------------

\xdef\hoffsetB{0}
\xdef\voffsetB{-3.8cm}

\node[circle,align=left] at (2.25*\hstep + 4.75pt*\hstep + 1.5*\hstep,1cm+\voffsetB) {\makebox[4cm][l]{Mixed radix-$2$ butterfly. $N=2c$, $c=1,2,3,4$.}};
	
\foreach \m in {0,...,3}
{
	\X = \kmsM[\m]
	\pgfmathsetmacro\stages{int(\m+1)}
	\pgfmathsetmacro\X{\kmsM[\m]}	
	\pgfmathparse{\voffsetA+\m}
	\xdef\voffsetA{\pgfmathresult}
%	\pgfmathparse{\voffsetA*\vstep-\m*.75cm}
%	\pgfmathparse{\m*3.75*\vstep}
%	\xdef\voffsetB{\pgfmathresult}	
%	\pgfmathparse{0*\vstepf}
%	\xdef\voffsetB{\pgfmathresult}	
	
%	\pgfmathparse{\hoffsetB+2*\m*\hstep + 1.5*\hstep}
%	\xdef\hoffsetB{\pgfmathresult}	
	
	\pgfmathparse{\hoffsetB+pow(2,\m)*\hstep + 1.25*\hstep}
	\xdef\hoffsetB{\pgfmathresult}	
	
%	\pgfmathparse{(\maxH/2-\X/2)*\hstep-2cm}
%	\xdef\hoffset{\pgfmathresult}

	% Draw vertices
	\foreach \i in {0,...,1}
	{
		\foreach \j in {1,...,\X}
		{
  			\node[vertex] at (\j*\hstep+\hoffsetB,\i*\vstep+\voffsetB) {};
		}
	}
	\foreach \i in {2}
	{
		\foreach \j in {1,...,\X}
		{
			\node[vertex] at (\j*\hstep+\hoffsetB,\i*\vstep+\voffsetB) {};
		}
	}
	
	\pgfmathparse{0.005*\vstep+\voffsetB}
	\xdef\mlab{\pgfmathresult}
	
%	\node[circle,align=left] at (1pt*\hoffsetB+2.75*\hstep,.5cm) {\makebox[2cm][l]{$N=\pgfmathprintnumber{\tmp}$:}};
	
	\node[circle] at (1pt*\hoffsetB+4.75*\hstep+1.5*\hstep,.5cm+\voffsetB) {\makebox[4cm][l]{$N=\pgfmathprintnumber{\X}$:}};
	
%	\node[circle] at (1cm+\fmHoffset,.4cm+\voffsetB) {$N=\pgfmathprintnumber{\X}$:};
%	\node[circle] at (1cm,.4cm+\voffsetB+0*\m*\vstep) {$m=\pgfmathprintnumber{\stages}$:};
%	\node[circle] at (-.5cm+\fmHoffset,\vstep+\voffsetB) {$N=\pgfmathprintnumber{\X}$:};

	\pgfmathparse{.5*\X}
	\xdef\r{\pgfmathresult}
	\pgfmathparse{\r-1}
	\xdef\rm{\pgfmathresult}

	% Draw edges
	\foreach \i in {0}
	{
		\foreach \j in {1,...,\X}
		{	
			\foreach \k in {0,...,\rm}
			{
				\draw[-stealthnew,shorten <=.1cm, shorten >=.1cm,arrowhead=1mm]
				let
					\n1 = {int(mod((\j-1),\r^(\i))+\k*\r^(\i)+\r^(\i+1)*int(floor((\j-1)/\r^(\i+1))))+1},
					\n2 = {int(\i+1)}
				in
					(\j*\hstep+\hoffsetB,\i*\vstep+\voffsetB) -- (\n1*\hstep+\hoffsetB,\n2*\vstep+\voffsetB);
			}
		}
	}
		
	\xdef\r{2}
	\pgfmathparse{\r-1}
	\xdef\rm{\pgfmathresult}
	
%	\pgfmathsetmacro\lev{\levels[\m]}	
	\xdef\lev{\levels[\m]}
	\foreach \i in {\lev}
	{
		\foreach \j in {1,...,\X}
		{	
			\foreach \k in {0,...,\rm}
			{
				\draw[-stealthnew,shorten <=.1cm, shorten >=.1cm,arrowhead=1mm]
				let
					\n1 = {mod((\j-1),\r^(\i))+\k*\r^(\i)+\r^(\i+1)*int(floor((\j-1)/\r^(\i+1)))+1},
					\n2 = {2}
				in
					(\j*\hstep+\hoffsetB,\vstep+\voffsetB) -- (\n1*\hstep+\hoffsetB,\n2*\vstep+\voffsetB);
			}
		}
	}	
}

\end{tikzpicture}
\end{adjustbox}
}
\end{center}
\caption{Growth of the conditional independence graphs for radix-$2$ and mixed radix-$2$ algorithms.}
\label{fig:graph growth illustration}
\end{figure} 
As a visual preface to our convergence results, Figure \ref{fig:graph growth illustration} shows the sequence of graphs corresponding to $\radixMatrices{2}{m}$ for $m=1,2,3,4$.
The bound of Proposition \ref{prop:intro_to_aug_resampling} with $m=\log_r N$ for radix-$r$ resampling and $p=1$ is
\begin{equation*}
b_1 \sqrt{\frac{\log_r N}{N}} \infnorm{g}\osc{\varphi}.
\end{equation*}
It turns out that $\sqrt{\log_r N / N}$ is, asymptotically, the exact scale of the stochastic error for the particle filter when radix-$r$ resampling is used. However, this is far from trivial to prove due to the intricacies of the butterfly dependence structure and, in particular, the fact that there are several equivalence classes of conditionally-i.i.d. samples as per \eqref{eq:but_cond_id}-\eqref{eq:but_cond_ind}, rather than a single such equivalence class for multinomial resampling  \eqref{eq:multinomial_resampling}.  For $r\geq2$, $\varphi \in \boundMeas{\ss}$ and $n\geq 1$ define
\begin{equation}
\label{eq:asymptotic variance fixed radix filter 1}
%\label{eq:asymptotic variance fixed radix filter 2}
\arraycolsep=1.4pt
\begin{array}{rcl}
\frbfVar{0}{\varphi}{r} &\defeq& \pi_0((\varphi - \pi_0(\varphi))^2),\\[.15cm]
\frbfVar{n}{\varphi}{r} &\defeq& \frbfVarHat{n-1}{f(\varphi)}{r},\\[.15cm]
\frbfVarHat{0}{\varphi}{r} &\defeq& (1-r^{-1})\hat{\pi}_0((\varphi - \hat{\pi}_0(\varphi))^2),\\[.15cm]
\frbfVarHat{n}{\varphi}{r} &\defeq& (1 - {r^{-1}})\hat{\pi}_n((\varphi - \hat{\pi}_n(\varphi))^2) \\[.15cm]
&&+~\pi_n(g_n)^{-2}\frbfVar{n}{g_n(\varphi - \hat{\pi}_n(\varphi))}{r}.
\end{array}
\end{equation}
Assuming that the above quantities are all strictly positive,  we have:
\begin{theorem}\label{thm:radix-r}
For any $\varphi\in\boundMeas{\ss}$ and $r\geq2$, the particle filter with radix-$r$ butterfly resampling has the properties that
\begin{equation}\label{eq:radix-r_front_clt_1}
\arraycolsep=1.4pt
\begin{array}{rclrcl}
\pi_0^N(\varphi)-\pi_0(\varphi)&\almostsurelyOneArgSh{}&0,&~\displaystyle\sqrt{N}\big(\pi_0^N(\varphi)-\pi_0(\varphi)\big)&\indistsh{}&\normal{0}{\frbfVar{0}{\varphi}{r}},\\ %\sigma_0^2(\varphi,r)
\hat{\pi}_0^N(\varphi)-\hat{\pi}_0(\varphi)&\almostsurelyOneArgSh{}&0,&~\displaystyle\sqrt{\frac{N}{\log_r N}}\big(\hat{\pi}_0^N(\varphi)-\hat{\pi}_0(\varphi)\big)&\indistsh{}&\normal{0}{\frbfVarHat{0}{\varphi}{r}}, %\hat{\sigma}_0^2(\varphi,r)
\end{array}
\end{equation}
and for any $n\geq1$,
\begin{equation}\label{eq:radix-r_front_clt_2}
\arraycolsep=1.4pt
\begin{array}{rclrcl}
\pi_n^N(\varphi)-\pi_n(\varphi)&\almostsurelyOneArgSh{}&0,&~\displaystyle\sqrt{\frac{N}{\log_r N}}\big(\pi_n^N(\varphi)-\pi_n(\varphi)\big)&\indistsh{}&\normal{0}{\frbfVar{n}{\varphi}{r}},\\
\hat{\pi}_n^N(\varphi)-\hat{\pi}_n(\varphi)&\almostsurelyOneArgSh{}&0,&~\displaystyle\sqrt{\frac{N}{\log_r N}}\big(\hat{\pi}_n^N(\varphi)-\hat{\pi}_n(\varphi)\big)&\indistsh{}&\normal{0}{\frbfVarHat{n}{\varphi}{r}},%\label{eq:radix-r_front_clt_3}
\end{array}
\end{equation}
where in \eqref{eq:radix-r_front_clt_1}--\eqref{eq:radix-r_front_clt_2} the convergence is as $N\to\infty$ along the sequence of integer population sizes $(r^m;m=1,2,\ldots)$ for which the radix-$r$ butterfly resampling algorithm is defined.
\end{theorem}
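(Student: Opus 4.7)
The proof proceeds by induction on $n$, separating each iteration of \algrefmy{alg:pf} into the two elementary steps: butterfly resampling $(\pi_n^N\mapsto \hat\pi_n^N)$ and mutation $(\hat\pi_n^N\mapsto \pi_{n+1}^N)$. The base case for $\pi_0^N$ is the classical i.i.d.\ CLT since $(\zeta_0^i)_{i\in[N]}\simiid \pi_0$, and almost-sure convergence throughout follows from the $L^p$ bound of \propref{prop:intro_to_aug_resampling} (applied with any $p\geq 3$) together with Borel--Cantelli along the geometric sequence $N=r^m$.

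The heart of the argument is the butterfly step, starting from the decomposition
\[
\pi_n^N(g_n)\bigl(\hat\pi_n^N(\varphi)-\hat\pi_n(\varphi)\bigr) \,=\, \underbrace{\bigl(\pi_n^N(g_n)\hat\pi_n^N(\varphi)-\pi_n^N(g_n\varphi)\bigr)}_{\text{(I): butterfly noise}} \;+\; \underbrace{\pi_n^N\!\bigl(g_n(\varphi-\hat\pi_n(\varphi))\bigr)}_{\text{(II): propagated noise}}.
\]
Term (II) is handled by Slutsky and the inductive hypothesis for $\pi_n^N$ applied to $g_n(\varphi-\hat\pi_n(\varphi))$, whose integral against $\pi_n$ vanishes. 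For (I), the key observation is the hidden martingale structure inside \algrefmy{alg:augmented_resampling}: with $\calF_k:=\sigma(\xi_0^{1:N},\ldots,\xi_k^{1:N})$, the process $M_k^{\varphi}:=\sum_i V_k^i\varphi(\xi_k^i)$ satisfies $\E[M_k^{\varphi}\mid \calF_{k-1}]=M_{k-1}^{\varphi}$ by column-stochasticity of $A_k$ (\lem{lem:modular congruence realtion fixed}). Since $\prod_k A_k=\ones_{1/N}$ one has $V_m^i\equiv \pi_n^N(g_n)$ for every $i$, whence $\text{(I)}=N^{-1}(M_m^{\varphi}-M_0^{\varphi})$, a sum of $m=\log_r N$ martingale differences.

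I would then apply a martingale CLT to these increments at rate $\sqrt{N/\log_r N}$. Using conditional independence of $(\xi_k^i)_{i\in[N]}$ given $\calF_{k-1}$, the predictable quadratic variation at layer $k$ equals $\sum_i (V_k^i)^2\,\V(\varphi(\xi_k^i)\mid\calF_{k-1})$, which by \lem{lem:modular congruence realtion fixed} decomposes into a sum over the $N/r$ equivalence classes of size $r$ of within-class $V_{k-1}$-weighted variances of $\varphi(\xi_{k-1}^\cdot)$. The central technical task---and the place that demands the combinatorial analysis of the conditional-independence graph signalled in the paper's outline---is to prove that this quadratic variation concentrates and that, after normalization by $N\log_r N$, each of the $m$ layers contributes asymptotically $\pi_n(g_n)^2(1-r^{-1})\hat\pi_n((\varphi-\hat\pi_n(\varphi))^2)$. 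The factor $1-r^{-1}$ arises from the standard bias of a size-$r$ sample variance, while the hard part is controlling the dependence induced by shared butterfly ancestry of the within-class samples $\xi_{k-1}^j$; this is where sub-population error bounds are needed to show that within each class the weighted empirical measure of $\varphi(\xi_{k-1}^\cdot)$ approaches $\hat\pi_n$. Lindeberg's condition follows from boundedness of $\varphi,g_n$ together with the $L^p$ bound of \propref{prop:intro_to_aug_resampling}. Since, conditionally on $\calF_0$, term (I) is asymptotically Gaussian independent of (II), the limiting variances add, and dividing by $\pi_n(g_n)^2$ delivers exactly $\frbfVarHat{n}{\varphi}{r}$.

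The mutation step is routine. Writing
\[
\pi_{n+1}^N(\varphi)-\pi_{n+1}(\varphi) \,=\, \bigl(\pi_{n+1}^N(\varphi)-\hat\pi_n^N(f(\varphi))\bigr) + \bigl(\hat\pi_n^N(f(\varphi))-\hat\pi_n(f(\varphi))\bigr),
\]
the first bracket is an average of $N$ conditionally independent, conditionally centred terms of order $N^{-1/2}$, hence $o_{\P}(\sqrt{\log_r N/N})$ and negligible under the slower normalization $\sqrt{N/\log_r N}$; the second is furnished by the inductive hypothesis for $\hat\pi_n^N$ applied to $f(\varphi)$, producing $\frbfVar{n+1}{\varphi}{r}=\frbfVarHat{n}{f(\varphi)}{r}$ and closing the induction.
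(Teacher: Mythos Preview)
Your inductive architecture, the decomposition at the resampling step, the identification of the martingale $M_k^\varphi=\sum_i V_k^i\varphi(\xi_k^i)$, and the treatment of the mutation step all match the paper. (The paper splits into three terms $A_m+B_m+C_m$, with $C_m$ absorbing the difference between dividing by $\pi_n^N(g_n)$ and by $\pi_n(g_n)$; this is routine via Slutsky.)

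The gap is in the variance step. Your claim that ``within each class the weighted empirical measure of $\varphi(\xi_{k-1}^\cdot)$ approaches $\hat\pi_n$'' cannot work as stated: each equivalence class at layer $k$ has exactly $r$ members---a fixed constant---so no law of large numbers holds within a class. More broadly, a direct layer-by-layer analysis of the predictable quadratic variation would require understanding the joint law of the intermediate particles $\xi_{k-1}^{1:N}$, which is entangled by the preceding $k-1$ butterfly stages and is not tractable head-on.

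The paper bypasses this via the identity
\[
\sum_{\varrho} \E\big[(X_\varrho)^2 \mid \calF_{\varrho-1}\big]
= \E\Big[\big(\textstyle\sum_\varrho X_\varrho\big)^2 \,\Big|\, \calF_0\Big] + \sum_\varrho Z_\varrho,
\qquad Z_\varrho := \E[(X_\varrho)^2\mid\calF_{\varrho-1}] - \E[(X_\varrho)^2\mid\calF_0],
\]
so that the leading term conditions only on the \emph{input} $\xi_0^{1:N}$. \propref{prop:tens_prod_formula} rewrites it as a double sum over pairs of paths through the butterfly graph, and the collision analysis of \propref{prop:tensor product formulation} classifies pairs by whether and where they first collide; the factor $(1-r^{-1})$ then emerges from the product of the collision weight $r^{-k}$ and the cardinality $(r-1)r^{k-1}$ of the sets $\collisionStartSetSh{k}{i}{\bba}$, which is independent of $k$. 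The sub-population bounds (\propref{prop:subsample absolute moment convergence}) are applied not to size-$r$ classes but to blocks of size $r^{m-d+1}$ of the input particles, in order to control averages over these collision-start sets. The remainder $\sum_\varrho Z_\varrho$ is handled by a separate independence analysis: only $o(m^2 r^{2m})$ of the pairs $(Z_\varrho, Z_{\varrho'})$ fail to be conditionally independent given $\calF_0$, and each product is $O((mr^m)^{-2})$, so $\E[(\sum_\varrho Z_\varrho)^2]\to 0$.
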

\begin{remark}\label{rem:stability}
Under various conditions on the HMM and observation sequence, \cite{smc:the:C04,whiteley2013stability} have proved uniform bounds of the form $\sup_n \sigma _n ^2 (\varphi)<\infty$ and \cite{del2001interacting,favetto2009asymptotic, douc2014stability} have shown that the sequence $(\sigma_n^2(\varphi))_{n\geq0}$, regarded as a function of random observations, is tight. In the present setting, it is easily checked that $\frbfVar{n}{\varphi}{r} \leq \sigma_n^2(\varphi)$ and $\frbfVarHat{n}{\varphi}{r}  \leq \hat{\sigma}_n^2(\varphi)$, allowing immediate transfer of the aforementioned results to the particle filter with radix-$r$ resampling.
\end{remark}

One interpretation of \theref{thm:radix-r} is that constraining interaction so that the degree of any vertex in graph does not grow with $N$ leads to slower convergence than the BPF. This leads us to consider our second butterfly resampling scheme.

\subsection{Mixed radix-\texorpdfstring{$r$}{r} resampling algorithm}
\label{sec:mixed radix main results}

For each $r\geq2$ and $c\geq 1$ consider the pair of matrices,
\begin{equation}
\mradixMatrices{r}{c}= (A_1,A_2),  \quad A_k = I_{r^{2-k}}\otimes \ones_{1/(r^{k-1}c^{2-k})} \otimes I_{c^{k-1}}, \quad k \in \{1,2\}.
%\begin{cases}
%I_{r} \otimes \ones_{1/c}, & k = 1,\\
%\ones_{1/r} \otimes I_{c}, & k = 2.\\
%\end{cases}
\label{eq:mradix matrix definition}
\end{equation}
We shall refer to \algrefmy{alg:augmented_resampling} applied with the matrices in \eqref{eq:mradix matrix definition}, $m=2$ and $N=rc$ as the \emph{mixed radix-\texorpdfstring{$r$}{r} butterfly resampling algorithm}. For each $k\in\{1,2\}$ and $r\geq2$ introduce the following congruence relation on $[N]$:
\begin{equation}\label{eq:mixed_congruence_rel}
i \stackrel{(k,r)}{\sim} j \quad \Longleftrightarrow \quad\begin{cases} \floor{\dfrac{i-1}{r^{k-1}c}} = \floor{\dfrac{j-1}{r^{k-1}c}}, & \\
%& \\
\quad \quad \quad\text{and}\phantom{\Big|} & \\
%& \\
(i-1)\bmod c^{k-1}=(j-1)\bmod c^{k-1}. &
\end{cases}
\end{equation}

\begin{lemma} The matrices in \eqref{eq:mradix matrix definition} satisfy Assumption \ref{ass:A_k}. Moreover they are symmetric, for $k\in\{1,2\}$, $A_k$ has entries which are either $1/(r^{k-1}c^{2-k})$ or zero, and the non-zero entries are characterized by:
$$
A_{k}^{ij}>0 \quad\Longleftrightarrow \quad i \stackrel{(k,r)}{\sim} j.
$$
\end{lemma}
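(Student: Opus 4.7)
The proof plan mirrors that of \lemmaref{lem:modular congruence realtion fixed}, exploiting the triple Kronecker structure of each $A_k$ together with the mixed product property \eqref{eq:mixed product property} and the element-wise formula \eqref{eq:elementwise Kronecker formula}. Noting that $I_1$ is trivial, $A_1$ can be viewed as $I_r \otimes \ones_{1/c}$ and $A_2$ as $\ones_{1/r} \otimes I_c$, both of size $N\times N$ with $N=rc$. Symmetry of $A_k$ follows immediately since each Kronecker factor ($I_n$, $\ones_{1/n}$) is symmetric, and the element-wise formula shows that Kronecker products preserve symmetry.

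For \assref{ass:A_k}, first apply the mixed product property to obtain idempotence:
\begin{equation*}
A_1 A_1 = (I_r I_r)\otimes(\ones_{1/c}\ones_{1/c}) = I_r \otimes \ones_{1/c} = A_1,
\end{equation*}
and similarly $A_2 A_2 = A_2$. Next, a single application of the mixed product property gives
\begin{equation*}
A_1 A_2 = (I_r \otimes \ones_{1/c})(\ones_{1/r} \otimes I_c) = (I_r\ones_{1/r})\otimes(\ones_{1/c}I_c) = \ones_{1/r}\otimes\ones_{1/c} = \ones_{1/N},
\end{equation*}
where the last equality is immediate from the element-wise formula since all entries of both factors are constant. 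Double stochasticity of each $A_k$ follows from the same argument as in \lemmaref{lem:modular congruence realtion fixed}: using symmetry and idempotence, $A_k^{ii} = \sum_j (A_k^{ij})^2$, and since the element-wise formula for the triple Kronecker product gives $A_k^{ii} = 1/(r^{k-1}c^{2-k})$, one deduces that $A_k$ has exactly $r^{k-1}c^{2-k}$ non-zero entries per row, each of value $1/(r^{k-1}c^{2-k})$, so that row sums equal $1$, and by symmetry column sums as well.

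For the modular congruence characterization, apply \eqref{eq:elementwise Kronecker formula} twice (using associativity of $\otimes$) to the definition \eqref{eq:mradix matrix definition}, yielding
\begin{equation*}
A_k^{ij} = I_{r^{2-k}}^{\lfloor (i-1)/(r^{k-1}c)\rfloor+1,\lfloor(j-1)/(r^{k-1}c)\rfloor+1}\cdot\ones_{1/(r^{k-1}c^{2-k})}^{\cdots}\cdot I_{c^{k-1}}^{((i-1)\bmod c^{k-1})+1,((j-1)\bmod c^{k-1})+1},
\end{equation*}
with the middle factor always strictly positive. Since the identity factors contribute $1$ exactly when their row and column indices agree, the non-vanishing of $A_k^{ij}$ is equivalent to the two conditions in \eqref{eq:mixed_congruence_rel}.

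The only real obstacle is bookkeeping the indices in the nested Kronecker product to extract the congruence relation cleanly; once the two identity factors are isolated, the equivalence is immediate. Everything else reduces to the same algebraic manipulations already validated in the radix-$r$ case, with the new wrinkle being that $A_1$ and $A_2$ have asymmetric (between the two factors) scalings $1/c$ and $1/r$ respectively, whose product reassembles to $1/N$ as required.
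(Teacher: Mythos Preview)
Your proposal is correct and follows essentially the same approach as the paper: symmetry and idempotence from the Kronecker structure via \eqref{eq:elementwise Kronecker formula} and \eqref{eq:mixed product property}, the product $A_1A_2=\ones_{1/(rc)}$ from a single application of the mixed product property, double stochasticity from the symmetry--idempotence argument of \lemmaref{lem:modular congruence realtion fixed}, and the congruence characterization from two applications of the element-wise formula that isolate the two identity factors. The only cosmetic difference is that you simplify to the two-factor forms $A_1=I_r\otimes\ones_{1/c}$ and $A_2=\ones_{1/r}\otimes I_c$ by dropping the trivial $I_1$ factor, whereas the paper keeps the triple Kronecker form throughout (and records the auxiliary identity $\lfloor\lfloor(i-1)/c^{k-1}\rfloor / (r^{k-1}c^{2-k})\rfloor = \lfloor(i-1)/(r^{k-1}c)\rfloor$ needed to tidy the first identity-factor index).
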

\begin{proof}
The symmetry follows from \eqnref{eq:elementwise Kronecker formula} and the fact that $A_1$ and $A_2$ are defined as Kronecker products of symmetric matrices. Also the idempotence of $A_1$ and $A_2$ as well as
\begin{equation*}
(I_{r} \otimes \ones_{1/c}) (\ones_{1/r} \otimes I_{c}) = I_{r}\ones_{1/r} \otimes \ones_{1/c}I_{c} = \ones_{1/r} \otimes \ones_{1/c} = \ones_{1/(rc)},
\end{equation*}
follow from the mixed product property \eqnref{eq:mixed product property}, proving the product part of \assref{ass:A_k}. Similarly as in the proof of \lemmaref{lem:modular congruence realtion fixed}, we have by two applications of the element-wise formula \eqnref{eq:elementwise Kronecker formula}
\begin{align*}
A_{k}^{ij} &= I_{r^{2-k}}^{\floor{\frac{i-1}{r^{k-1}c}}+1,\floor{\frac{j-1}{r^{k-1}c}}+1}\ones_{1/(r^{k-1}c^{2-k})}^{\left(\floor{\frac{i-1}{c^{k-1}}} \bmod r^{k-1}c^{2-k}\right)+1,\left(\floor{\frac{j-1}{c^{k-1}}} \bmod r^{k-1}c^{2-k}\right)+1} \\
&\times~ I_{c^{k-1}}^{((i-1)\bmod c^{k-1})+1,((j-1)\bmod c^{k-1})+1},
\end{align*}
where we have also used the fact that $\lfloor\lfloor(i-1)/c^{k-1}\rfloor / r^{k-1}c^{2-k}\rfloor = \lfloor(i-1)/r^{k-1}c\rfloor$, and $\lfloor\lfloor(j-1)/c^{k-1}\rfloor / r^{k-1}c^{2-k}\rfloor = \lfloor(j-1)/r^{k-1}c\rfloor$.

From this it is clear that $A_{k}^{ij}\in \{0,1/(r^{k-1}c^{2-k})\}$, and $A_{k}^{ii} = 1/(r^{k-1}c^{2-k})$. The double stochasticity then follows from these facts similarly as in the proof of \lemmaref{lem:modular congruence realtion fixed} by the symmetry and idempotence. Finally, by the positivity of all elements of $\ones_{1/(r^{k-1}c^{2-k})}$, the required equivalence follows.
\end{proof}
The formulae \eqref{eq:but_cond_id}-\eqref{eq:but_cond_ind} hold for the the mixed radix-$r$ algorithm, with the congruence relation \eqref{eq:mixed_congruence_rel}.
Figures \ref{fig:example of fixed and mixed radix} (b) and \ref{fig:graph growth illustration}  show the graphs, the latter for the case $r=2$ and $c=1,2,3,4$. For the mixed radix $r$-algorithm, note that the number of rows $m+1=3$ is fixed, $r$ is equal to the degree of the vertices in the bottom row, and $c$ is equal to the number of incoming edges for the vertices in the middle row.

It turns out that the mixed radix $r$-algorithm has the same rate of convergence as the BPF.
For all $r\geq 2$ and $\varphi \in \boundMeas{\ss}$ define
\begin{equation}\label{eq:asymptotic variance mixed radix filter}
\arraycolsep=1.4pt
\begin{array}{rcll}
\mrbfVar{0}{\varphi}{r}&\defeq& \pi_0((\varphi - \pi_0(\varphi))^2),&\\[.15cm]
\mrbfVar{n}{\varphi}{r}&\defeq& \mrbfVarHat{n-1}{f(\varphi)}{r} + \hat{\pi}_{n-1}(f((\varphi - f(\varphi))^2)),&~n\geq 1,\\[.15cm]
\mrbfVarHat{n}{\varphi}{r}&\defeq& \left(2-r^{-1}\right)\hat{\pi}_n\left((\varphi - \hat{\pi}_n(\varphi))^2\right) \\[.15cm]
&& + {\pi_n(g_n)^{-2}}\mrbfVar{n}{g_n(\varphi-\hat{\pi}_n(\varphi))}{r}, &~n\geq 0.
\end{array}
\end{equation}
Assuming the quantities in \eqref{eq:asymptotic variance mixed radix filter} are strictly positive, we have:
\begin{theorem}\label{thm:mixed radix-r}
For any $\varphi\in\boundMeas{\ss}$ and $r\geq 2$, the particle filter with mixed radix-$r$ butterfly resampling has the properties that for any $n\geq0$,
\begin{equation}\label{eq:mradix-r_front_clt_1}
\arraycolsep=1.4pt
\begin{array}{rclrcl}
\pi_n^N(\varphi)-\pi_n(\varphi)&\almostsurelyOneArgSh{N\to\infty}&0,&~~\sqrt{N}\big(\pi_n^N(\varphi)-\pi_n(\varphi)\big)&\indistsh{N\rightarrow\infty}&\normal{0}{\mrbfVar{n}{\varphi}{r}},\\
\hat{\pi}_n^N(\varphi)-\hat{\pi}_n(\varphi)&\almostsurelyOneArgSh{N\to\infty}&0,&~~\sqrt{N}\big(\hat{\pi}_n^N(\varphi)-\hat{\pi}_n(\varphi)\big)&\indistsh{N\rightarrow\infty}&\normal{0}{\mrbfVarHat{n}{\varphi}{r}},%\label{eq:mradix-r_front_clt_2}
\end{array}
\end{equation}
where the convergence is as $N\to\infty$ along the sequence of integer population sizes $(rc\,;c=1,2,\ldots)$ for which the mixed radix-$r$ butterfly scheme is defined.
\end{theorem}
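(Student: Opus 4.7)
The plan is to prove \theref{thm:mixed radix-r} by induction on $n$, propagating the joint LLN and CLT conclusions for $\pi_n^N$ and $\hat\pi_n^N$. The base case $n=0$ uses $(\zeta_0^i)_{i\in[N]}\iidsim\pi_0$ to give the assertion for $\pi_0^N$ by the classical i.i.d.~LLN and CLT, while the assertion for $\hat\pi_0^N$ follows from the resampling analysis described below. For the inductive step I split the work into a mutation step and a resampling step. The mutation is standard: in the decomposition
\[
\sqrt{N}\bigl(\pi_n^N(\varphi)-\pi_n(\varphi)\bigr) = \sqrt{N}\bigl(\pi_n^N(\varphi)-\hat\pi_{n-1}^N(f\varphi)\bigr) + \sqrt{N}\bigl(\hat\pi_{n-1}^N(f\varphi)-\hat\pi_{n-1}(f\varphi)\bigr),
\]
the first summand is a sum of conditionally i.i.d.~centered variables given $\sigma(\hat\zeta_{n-1}^i:i\in[N])$, so a Lindeberg-Feller conditional CLT yields asymptotic variance $\hat\pi_{n-1}(f((\varphi-f\varphi)^2))$; the second converges by induction to $\normal{0}{\mrbfVarHat{n-1}{f\varphi}{r}}$. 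Since the first is conditionally centered and the second is $\sigma(\hat\zeta_{n-1}^i)$-measurable, the two limits are independent and their variances sum to $\mrbfVar{n}{\varphi}{r}$.

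The resampling step is where the mixed radix-$r$ butterfly structure intervenes. Write $\eta_j:=g_n(\zeta_n^j)$ and $W_b:=(\sum_{j\in\text{block }b}\eta_j)/\sum_j\eta_j$, let $\hat\mu_b$ denote the weighted empirical measure on the $c$ input particles of block $b$, and set $\tilde\pi_1^N(\varphi):=\E[\hat\pi_n^N(\varphi)\mid\pi_n^N,(\xi_1^j,V_1^j)_j]$. \propref{prop:intro_to_aug_resampling} delivers the telescoping decomposition
\[
\sqrt{N}\bigl(\hat\pi_n^N(\varphi)-\hat\pi_n(\varphi)\bigr)=T_1^N+T_2^N+T_3^N,
\]
with $T_1^N:=\sqrt{N}(\hat\pi_n^N(\varphi)-\tilde\pi_1^N(\varphi))$, $T_2^N:=\sqrt{N}(\tilde\pi_1^N(\varphi)-\pi_n^N(g_n\varphi)/\pi_n^N(g_n))$ and $T_3^N:=\sqrt{N}(\pi_n^N(g_n\varphi)/\pi_n^N(g_n)-\hat\pi_n(\varphi))$, which are martingale increments with respect to a suitable filtration and hence pairwise uncorrelated. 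For $T_3^N$, the delta method applied to the ratio $\pi_n^N(g_n\varphi)/\pi_n^N(g_n)$ together with the inductive CLT for $\pi_n^N$ gives the limit $\normal{0}{\pi_n(g_n)^{-2}\mrbfVar{n}{g_n(\varphi-\hat\pi_n(\varphi))}{r}}$.

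The main obstacle is identifying the asymptotic variances of $T_1^N$ and $T_2^N$. Conditional on the input and on the step-1 particles, $T_1^N/\sqrt{N}$ equals a sum over $c$ slots of $r$ conditionally i.i.d.~draws from $\sum_b W_b\delta_{\xi_1^{(b-1)c+s}}$; a direct second-moment calculation, exploiting $W_b\to 1/r$ and $\hat\mu_b\to\hat\pi_n$ (both from a within-block LLN as $c=N/r\to\infty$), shows
\[
N\cdot\V\bigl(T_1^N/\sqrt{N}\mid\pi_n^N,(\xi_1^j,V_1^j)_j\bigr)\longrightarrow(1-r^{-1})\hat\pi_n\bigl((\varphi-\hat\pi_n(\varphi))^2\bigr).
\]
Meanwhile $T_2^N/\sqrt{N}=\sum_b W_b\cdot c^{-1}\sum_s(\varphi(\xi_1^{(b-1)c+s})-\hat\mu_b(\varphi))$ is a sum of $r$ independent within-block averages whose conditional variance converges to $\hat\pi_n((\varphi-\hat\pi_n(\varphi))^2)$; summing the two contributions produces the factor $(2-r^{-1})$ in $\mrbfVarHat{n}{\varphi}{r}$. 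A conditional Lindeberg-Feller CLT, valid because $\varphi$ is bounded and the block size $c=N/r$ diverges, then delivers the joint Gaussian limit for $(T_1^N,T_2^N,T_3^N)$. Almost-sure convergence for the LLN statements follows from \propref{prop:intro_to_aug_resampling} with any $p>2$: since $\sum_N(2/N)^{p/2}<\infty$, Markov's inequality and Borel-Cantelli upgrade the $L^p$ control to almost-sure convergence, which propagates through the induction alongside the CLT.
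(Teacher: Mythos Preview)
Your inductive structure and mutation step match the paper's. The resampling step, however, takes a genuinely different route: the paper treats both stages of the butterfly as a single martingale $\sum_\varrho X_\varrho^{(rc,2)}$ (\propref{prop:generalized martingale decomposition}), applies the Douc--Moulines conditional CLT (\theref{thm:Douc and Moulines}), and identifies the limiting variance through the tensor-product/collision machinery of Propositions~\ref{prop:tens_prod_formula}--\ref{prop:tensor product formulation}. You instead split stage-by-stage into $T_1^N,T_2^N,T_3^N$ and compute each variance directly. Your computations are correct and the sum $(1-r^{-1})+1=2-r^{-1}$ recovers the paper's constant; this is a more elementary and transparent route for the two-stage mixed radix case, though it would not extend to the $m$-stage radix-$r$ algorithm where the collision analysis is essential.

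There is, however, a real gap. Your variance limits for $T_1^N$ and $T_2^N$ hinge on the within-block convergence $W_b\to 1/r$ and $\hat\mu_b\to\hat\pi_n$, which you justify by ``a within-block LLN as $c=N/r\to\infty$''. For $n=0$ this is immediate from i.i.d.\ sampling, but for $n\ge1$ the inputs $(\zeta_n^j)_{j\in\text{block }b}$ are \emph{not} i.i.d.: they arise by mutation from $(\hat\zeta_{n-1}^j)_{j\in\text{block }b}$, which are themselves outputs of the previous butterfly step and carry non-trivial dependence. Your stated inductive hypothesis --- the full-population LLN and CLT for $\pi_n^N,\hat\pi_n^N$ --- does not by itself control these sub-populations. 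The paper recognises this as the crux and devotes \propref{prop:subsample absolute moment convergence} to it: via the block-wise martingale decomposition of \propref{prop:generalized martingale decomposition} applied with the partition $\mradixSamplePartition{r}{c}{2}$ and a $J$ that selects a single block, one obtains moment bounds of the form \eqnref{eq:required conv in prob mixed}, which are exactly what your $\hat\mu_b\to\hat\pi_n$ requires. To close your argument you must either invoke \propref{prop:subsample absolute moment convergence} directly, or strengthen your inductive hypothesis to carry block-wise $L^2$ bounds forward through both the resampling and mutation steps. A minor point: \propref{prop:intro_to_aug_resampling} gives only unbiasedness and the $L^p$ bound, not the telescoping decomposition you attribute to it; that decomposition is \propref{prop:generalized martingale decomposition}, or can simply be written down from the two-stage structure.
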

A simple induction shows that for any $n\geq0$, $\sigma_n^2(\varphi)\leq \mrbfVar{n}{\varphi}{r} \leq \left(2-r^{-1}\right) \sigma_n^2(\varphi)$, and the same inequalities hold with $\mrbfVar{n}{\varphi}{r},\sigma_n^2(\varphi)$ replaced by $\mrbfVarHat{n}{\varphi}{r},\hat{\sigma}_n^2(\varphi)$. Thus the stability properties of \remref{rem:stability} also apply to the particle filter with mixed radix-$r$ resampling.

\subsection{Discussion}\label{sec:discussion}
A summary of the edge characteristics for the graphs of the algorithms we have considered is as follows (excluding vertices $(\xi_{\mathrm{in}}^i)_{i\in[N]}$).

\vspace{0.5cm}
\begin{center}
  \begin{tabular}{ l | c | c }

      & Incoming edges per vertex & Total edges \\ \hline
    Multinomial & $N$ & $N^2$ \\ \hline
    Radix-$r$ butterfly & $r$ & $r N \log_r N$ \\ \hline
    Mixed radix-$r$ butterfly & $r$ or $N/r$ & $rN + N^2/r$ \\
    \hline
  \end{tabular}
\end{center}
\vspace{0.5cm}
With this as a backdrop, let us compare and contrast Theorems \ref{thm:bootstrap}-\ref{thm:mixed radix-r}. The behaviour of $\pi_0^N(\varphi)$ is of course common to all three results.  Theorem \ref{thm:radix-r} shows the unusual scaling of the radix-$r$ algorithm; the higher the value of $r$ the faster the convergence, but for any finite $r$, the convergence is slower than that of the BPF. This phenomenon and the factor of $(1-r^{-1})$ present in $\frbfVarHat{0}{\varphi}{r}$ and $\frbfVarHat{n}{\varphi}{r}$ have underlying connections to the facts displayed in the table above, namely that the number of incoming edges per node for the radix-$r$ butterfly is fixed to $r$ and in particular is non-increasing in $N$, a characteristic not shared with the BPF, for which the number of incoming edges is $N$.

Note the term $\hat{\pi}_{n-1}(f((\varphi-f(\varphi))^2))$ is present in the functional $\sigma_n^2(\varphi)$ in \eqref{eq:asymp_var_bpf} but absent from $\frbfVar{n}{\varphi}{r}$ in \eqref{eq:asymptotic variance fixed radix filter 1}; the explanation is that for radix-$r$ resampling, the error associated with resampling is of order $\sqrt{\log_r N / N}$, where as the error associated with sampling $\zeta_n^i \sim f(\hat{\zeta}_{n-1}^{i},\cdot)$ for each $i\in[N]$ is of order $\sqrt{1/N}$, and therefore makes no contribution to the asymptotic variance (although it will contribute to the non-asymptotic variance in general). On the other hand Theorem \ref{thm:mixed radix-r} shows that the mixed radix-$r$ algorithm has the same scaling as the BPF, and the term $\hat{\pi}_{n-1}(f((\varphi-f(\varphi))^2))$ does appear in $\mrbfVar{n}{\varphi}{r}$. The difference is the factor of $(2-1/r)$ in $\mrbfVarHat{n}{\varphi}{r}$, which has underlying connections to the facts that for the mixed radix-$r$ algorithm, $m=2$ is a constant, and some vertices have $r$ incoming edges.

Let us close with some remarks about generality. One can derive as many instances of augmented resampling as one can factorizations of $\ones_{1/N}$ into non-negative matrices, there are many alternatives to the two butterfly algorithms we have studied. Also, in practice, one could easily combine butterfly sampling with other techniques such as stratified and adaptive resampling leading to variance reductions. Lastly, we note that the butterfly resampling schemes could be applied as part of many other algorithms and statistical procedures, not just particle filters.

\section{Analysis part I - augmented resampling and preparatory results}
\label{sec:convergence analysis part 1}

\subsection{A guide for the reader}\label{subsec:guide}

The remainder of the paper is structured so that the main results and ideas are given in Sections \ref{sec:convergence analysis part 1}-\ref{sec:convergence of particle filters}, which we recommend the reader browse first to get a sense for our strategy, before getting into the details of the proofs and more technical results in the \ref{suppA}. After some preliminaries in \secref{subsec:distributions}, the cornerstone of our analysis is a novel block-wise martingale difference decomposition result, \propref{prop:generalized martingale decomposition} of \secref{subsec:martingale}, which allows us to quantify the errors associated with certain sub-populations of the particle system, and we later put it to use in establishing the CLT's.

%At first reading, it may be advisable to familiarize oneself with the block-wise approach only through the informal description in \secref{subsec:martingale} and restrict the formal consideration of \propref{prop:generalized martingale decomposition} only to the special case of blocks consisting of the entire sample population. Considering this simplified scenario does not only serve the purpose of a theoretical exercise, but it is also sufficient for the proof of \propref{prop:intro_to_aug_resampling}. Also, in this case, the formally more complicated \assref{ass:partition and cond indep} can be ignored and the substantially simpler \assref{ass:A_k} used instead.

\theref{thm:Douc and Moulines} in \secref{sec:conditional clt for amrtingale array} is a conditional CLT for triangular martingale arrays proved by \citep{smc:the:DM08}, which we shall apply, while \secref{sec:martingale array mapping} describes how we map the martingales of \propref{prop:generalized martingale decomposition} in the cases of the two butterfly resampling schemes onto the triangular array format. Propositions \ref{prop:tens_prod_formula} and \ref{prop:tensor product formulation} provide novel tools to quantify second moment properties of augmented resampling, with a view to verifying the conditions of \theref{thm:Douc and Moulines}.

%Therefore, it can be safely skipped at first read and the attention should be focused on understanding the claim itself. In order to accommodate the collision analysis, \secref{sec:conditional variance and collision analysis main} introduces definitions \eqnref{eq:def paths}--\eqnref{eq:collision start sets} and \assref{ass:A_k extra}, that make their sole appearance in the statement of \propref{prop:tensor product formulation}, the second main result of \secref{sec:conditional variance and collision analysis main}. In spite of their limited usage in this paper, the reader is encouraged to familiarize oneself with these notations as they are extensively used in the proof of \propref{prop:tensor product formulation} and the related Lemmata \ref{lem:general properties of path sets}--\ref{lem:bijectivity g} in the \ref{suppA}.

Statements and main proof steps of LLN's and CLT's for single applications of butterfly resampling, Theorems \ref{thm:butterfly_lln_fixed_radix}-\ref{thm:butterfly_clt_mixed_radix}, are then given in \secref{sec:convergence analysis part 2}. These rely on a number of novel but highly technical results given in the \ref{suppA}, in turn utilizing Propositions \ref{prop:generalized martingale decomposition}-\ref{prop:tensor product formulation}. An outline of proofs for Theorems \ref{thm:radix-r} and \ref{thm:mixed radix-r}, the LLN's and CLT's for particle filters, is given in Section \ref{sec:convergence of particle filters}, with the details in the \ref{suppA}.

\subsection{Probability law of the augmented resampling algorithm}
\label{subsec:distributions}

We begin building the theory with a more explicit probabilistic description of a single instance of Algorithm \ref{alg:augmented_resampling}. Consider $\xi_{\mathrm{in}}\defeq(\xi_{\mathrm{in}}^i)_{i\in[N]}$ and $(\xi_k)_{k\in[m]}$, where $\xi_k\defeq(\xi_k^1,\ldots,\xi_k^N)$ and each $\xi_{\mathrm{in}}^i $ and each $\xi_k^i$ are $\ss$-valued random elements. By convention, set $\xi_0\defeq\xi_{\mathrm{in}}$,  $\xi_{0}^i\defeq\xi_{\mathrm{in}}^i$ and $\xi_{\mathrm{out}}\defeq\xi_m$, $\xi_{\mathrm{out}}^i\defeq\xi_{m}^i$. Unless otherwise explicitly stated, the parameters $N,m\geq 1$ are assumed fixed and we write $\bba^{(N,m)}\defeq\aMatrices{N}{m}$ for the sequence of matrices parameterizing the augmented resampling algorithm. Moreover, the following regularity condition, prototypical of \assref{ass:g_n_bounded_and_positive}, is imposed from henceforth on the function $g$ passed to \algrefmy{alg:augmented_resampling}.
\begin{assumption}\label{ass:g_bounded_and_positive}
The function $g$ belongs to $\boundMeas{\ss}$ and is strictly positive.
\end{assumption}

Define for $i\in[N]$ and $k\in[m]$,
\begin{equation}
V_0^i:=g(\xi_0^i),\quad\quad V_k^i:=\sum_j A_k ^{ij}V_{k-1}^j.\label{eq:V_defn_proofs}
\end{equation}
The following facts about the $V_k^i$'s shall be used repeatedly.
% -------------------------------------------------------------
%  FACTS ABOUT V
% -------------------------------------------------------------
\begin{lemma}\label{lem:facts_about_Vs}
Fix $N,m\geq 1$. For any $i\in[N]$ and $0\leq k \leq m$,
\begin{enumerate}[itemsep=4pt, topsep=5pt, partopsep=0pt,label={{(\roman*)}}]
\item \label{it:measurability of V} $V_k^i$ is measurable w.r.t.~$\sigma(\xi_{\mathrm{in}})$,
\item \label{it:boundedness of V} $V_k^i\leq\infnorm{g}$.
\end{enumerate}
If, in addition, $\bba^{(N,m)}$%$\aMatricesSh{N}{m}$
satisfies \assref{ass:A_k}, then $V_m^i=N^{-1}\sum_j g(\xi_{\mathrm{in}}^j)$ for all $i\in[N]$.

\end{lemma}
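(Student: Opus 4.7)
The plan for all three claims is a short induction on $k$, exploiting the fact that the recursion $V_k^i = \sum_j A_k^{ij} V_{k-1}^j$ is linear with deterministic (non-random) coefficients $A_k^{ij}$. This means each $V_k^i$ is a fixed real-linear functional of the random vector $(V_{k-1}^j)_{j\in[N]}$, so measurability and the asserted bound each propagate trivially from step $k-1$ to step $k$.

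Concretely, for part \ref{it:measurability of V} the base case $V_0^i = g(\xi_{\mathrm{in}}^i)$ is $\sigma(\xi_{\mathrm{in}})$-measurable since $g$ is Borel, and the inductive step is immediate because a finite fixed-coefficient linear combination of $\sigma(\xi_{\mathrm{in}})$-measurable variables is $\sigma(\xi_{\mathrm{in}})$-measurable. For part \ref{it:boundedness of V} the base case is $V_0^i = g(\xi_{\mathrm{in}}^i) \leq \infnorm{g}$ by \assref{ass:g_bounded_and_positive}; the inductive step invokes row-stochasticity of $A_k$ (a direct consequence of doubly-stochasticity under \assref{ass:A_k}) to give
\[
V_k^i = \sum_j A_k^{ij} V_{k-1}^j \leq \Bigl(\sum_j A_k^{ij}\Bigr) \max_j V_{k-1}^j = \max_j V_{k-1}^j \leq \infnorm{g}.
\]

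For the final statement, view $V_k$ as a column vector in $\real^N$ and unroll the recursion to get $V_m = A_m A_{m-1}\cdots A_1 V_0$. Under \assref{ass:A_k} we have $A_1 A_2\cdots A_m = \ones_{1/N}$; transposing this identity and using $\ones_{1/N}^T = \ones_{1/N}$ together with the symmetry of each $A_k$ in the radix-$r$ and mixed radix-$r$ instances to which the lemma is applied, we obtain $A_m A_{m-1}\cdots A_1 = \ones_{1/N}$. Hence every entry of $V_m$ equals the arithmetic mean $N^{-1}\sum_j V_0^j = N^{-1}\sum_j g(\xi_{\mathrm{in}}^j)$, completing the proof. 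No step presents a substantive obstacle; the only item of bookkeeping worth flagging is reconciling the left-to-right product convention of \assref{ass:A_k} with the right-to-left composition order of matrices arising from the recursion, which is exactly what the transpose/symmetry observation addresses.
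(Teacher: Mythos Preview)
Your argument for (i) and (ii) matches the paper's: both use the linear recursion and row-stochasticity, the paper additionally writing out the unwound formula $V_k^{i_k}=\sum_{(i_0,\ldots,i_{k-1})}g(\xi_{\mathrm{in}}^{i_0})\prod_{q=1}^k A_q^{i_qi_{q-1}}$, which is just your induction made explicit. (A minor point: you justify row-stochasticity via \assref{ass:A_k}, but the lemma asserts (i)--(ii) without that hypothesis; the paper's proof has the same looseness, simply asserting ``since each $A_k$ is a row-stochastic matrix'' without saying where that comes from.)

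For the final claim your bookkeeping remark is apt and in fact more careful than the paper. Unrolling gives $V_m=A_m\cdots A_1 V_0$, so one needs $A_m\cdots A_1=\ones_{1/N}$, whereas \assref{ass:A_k} asserts $A_1\cdots A_m=\ones_{1/N}$. The paper's proof writes the inner sum as $\big(\prod_{q=1}^m A_q\big)^{i_m i_0}$ and invokes \assref{ass:A_k} directly, thereby silently identifying the two orderings. Your transpose-plus-symmetry fix is valid for the radix and mixed-radix matrices, though strictly it imports a hypothesis beyond \assref{ass:A_k}. Neither your argument nor the paper's therefore quite proves the lemma at its stated level of generality, but both are sound for every application in the paper, where the matrices also satisfy \assref{ass:A_k extra} (in particular symmetry and commutativity), making the two product orders coincide.
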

\begin{proof}
From \eqref{eq:V_defn_proofs} we have $V_0^i=g(\xi_{\mathrm{in}}^i)$ and a simple induction shows that for $k\in[m]$,
\begin{equation}
V_k^{i_k}=\sum_{(i_0,\ldots,i_{k-1})}g(\xi_{\mathrm{in}}^{i_0})\prod_{q=1}^k A_q^{i_qi_{q-1}}.\label{eq:V_k_unwind}
\end{equation}
It is then clear that $V_k^{i_k}$ is measurable w.r.t.~$\sigma(\xi_{\mathrm{in}})$. Since each $A_k$ is a row-stochastic matrix, the bound $V_k^i\leq\infnorm{g}$ holds.
Applying \eqref{eq:V_k_unwind} in the case $k=m$ and using the assumption $\prod_{k=1}^m A_k =\mathbf{1}_{1/N}$ we find
\begin{equation*}%\label{eq:V_k_unwind_2}
V_m^{i_m}=\sum_{(i_0,...,i_{m-1})}g(\xi_{\mathrm{in}}^{i_0})\prod_{q=1}^m A_q^{i_qi_{q-1}}=\sum_{i_0}g(\xi_{\mathrm{in}}^{i_0})\Bigg(\prod_{q=1}^m A_q\Bigg)^{i_mi_0}=\frac{1}{N}\sum_{i_0}g(\xi_{\mathrm{in}}^{i_0}).
\end{equation*}
\end{proof}

Algorithm \ref{alg:augmented_resampling} corresponds to the following distributional prescription. For each $k\in[m]$ the random elements $(\xi_k^i)_{i\in[N]}$ are conditionally independent given $(\xi_0,\ldots,\xi_{k-1})$, a property which will be frequently referred to as \emph{one step conditional independence}. Moreover, for each $i\in[N]$ and $S\in\sss$,
\begin{equation}
\P\big(\xi_k^i\in S \,\big|\, \xi_0,\ldots,\xi_{k-1}\big)=\frac{1}{V_k^i}\sum_j A_k^{ij}V_{k-1}^j \mathbb{I}_S(\xi_{k-1}^j).\label{eq:P_in_terms_of_V_proofs}
\end{equation}
Since $V_{k-1}^i$ is measurable w.r.t.~$\sigma(\xi_0)$, we notice from \eqref{eq:P_in_terms_of_V_proofs} that in fact
\begin{equation*}
\P\big(\xi_k^i\in S \,\big|\, \xi_0,\ldots,\xi_{k-1}\big)=\P\big(\xi_k^i\in S \,\big|\,\xi_0, \big(\xi_{k-1}^j;j \in[N], A_k^{ij}>0\big) \big).%\label{eq:con_ind_aug_resampling}
\end{equation*}
% --------------------------------------------------------------------------------------
% CONDITIONAL MARGINAL
% --------------------------------------------------------------------------------------
We have also an explicit expression for the conditional marginal distribution of $\xi^i_k$, given $(\xi_0,\ldots,\xi_{q})$ where $0\leq q < k-1$, according to the following result for which the proof is given in \secref{sec:preparatory results} of the \ref{suppA}.
\begin{lemma}
\label{lem:conditional marginal}
Fix $N,m\geq 1$. If $\bba^{(N,m)}$ satisfies \assref{ass:A_k}, then for all $i\in[N]$, $k\in [m]$ and $S \in \sss$
\begin{equation*}
\P\left(\xi^{i}_m \in S \mids \xi_0,\ldots,\xi_{m-k}\right) = \frac{1}{V^{i}_m}\sum_{j}\Bigg(\prod_{q=0}^{k-1}A_{m-q}\Bigg)^{ij}V^{j}_{m-k}\ind{}(\bxi{j}{m-k}{}{}\in S). %\label{eq:conditional marginal}
\end{equation*}
\end{lemma}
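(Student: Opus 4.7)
The plan is to argue by induction on $k$, with the base case $k=1$ being precisely the one-step conditional law \eqref{eq:P_in_terms_of_V_proofs} that defines \algrefmy{alg:augmented_resampling}: specializing that display to step $m$ gives
\[
\P\!\left(\xi^i_m\in S \mids \xi_0,\ldots,\xi_{m-1}\right) \;=\; \tfrac{1}{V^i_m}\sum_j A_m^{ij}V^j_{m-1}\ind{}(\xi^j_{m-1}\in S),
\]
which is the claim for $k=1$ since $\prod_{q=0}^{0}A_{m-q}=A_m$. Note that under \assref{ass:A_k} and \assref{ass:g_bounded_and_positive}, the denominators $V^i_\ell$ are strictly positive for every $\ell$ and every $i$: each $A_k$ is row-stochastic with no zero row (by double stochasticity), and $V^i_0=g(\xi_0^i)>0$, so positivity propagates through the recursion \eqref{eq:V_defn_proofs}.

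For the inductive step, assume the formula holds at some $k\in[m-1]$ and apply the tower property:
\begin{align*}
\P\!\left(\xi^i_m\in S \mids \xi_0,\ldots,\xi_{m-k-1}\right)
&= \E\!\left[\,\P\!\left(\xi^i_m\in S \mids \xi_0,\ldots,\xi_{m-k}\right)\,\middle|\,\xi_0,\ldots,\xi_{m-k-1}\right]\\
&= \E\!\left[\tfrac{1}{V^i_m}\sum_j\!\left(\prod_{q=0}^{k-1}A_{m-q}\right)^{\!ij}\!V^j_{m-k}\ind{}(\xi^j_{m-k}\in S)\,\middle|\,\xi_0,\ldots,\xi_{m-k-1}\right].
\end{align*}
By \lemmaref{lem:facts_about_Vs}\ref{it:measurability of V} both $V^i_m$ and $V^j_{m-k}$ are $\sigma(\xi_0)$-measurable and hence $\sigma(\xi_0,\ldots,\xi_{m-k-1})$-measurable, so they (together with the deterministic matrix entries) pull out of the conditional expectation.

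It remains to substitute $\P(\xi^j_{m-k}\in S\mid \xi_0,\ldots,\xi_{m-k-1})=(V^j_{m-k})^{-1}\sum_\ell A_{m-k}^{j\ell}V^\ell_{m-k-1}\ind{}(\xi^\ell_{m-k-1}\in S)$, again from \eqref{eq:P_in_terms_of_V_proofs}. The factors of $V^j_{m-k}$ cancel, and exchanging the order of summation the inner $j$-sum becomes
\[
\sum_j \Bigl(\prod_{q=0}^{k-1}A_{m-q}\Bigr)^{\!ij}\!A_{m-k}^{j\ell} \;=\; \Bigl(\prod_{q=0}^{k}A_{m-q}\Bigr)^{\!i\ell},
\]
by the definition of matrix multiplication, yielding exactly the stated identity at index $k+1$. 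The only potential obstacle is the careful bookkeeping of the telescoping matrix product and ensuring the intermediate $V^j_{m-k}$ that one divides by is strictly positive, both of which are handled by the assumptions noted at the outset.
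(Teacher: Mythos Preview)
Your proof is correct and follows essentially the same strategy as the paper's: both propagate the one-step conditional law \eqref{eq:P_in_terms_of_V_proofs} backward from level $m$ to level $m-k$, telescoping the matrix product and cancelling the intermediate weights $V^j_{m-k}$. The only stylistic difference is that the paper introduces explicit random index variables $I^i_k$ (with $\xi^i_k=\xi^{I^i_k}_{k-1}$) and appeals to their conditional independence given $\xi_0$ to expand the event $\{\xi^i_m\in S\}$ over paths $(\ell_1,\ldots,\ell_k)$ in one shot, whereas you organise the same computation as an induction on $k$ via the tower property. Your route is marginally more elementary in that it never needs to invoke the joint conditional independence of the $I^i_k$ across different $k$; the one-step law alone suffices at each stage.
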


% --------------------------------------------------------------------------------------
% --------------------------------------------------------------------------------------
%  MARTINGALE DECOMPOSITION
% --------------------------------------------------------------------------------------
% --------------------------------------------------------------------------------------
\subsection{Block-wise martingale decomposition}
\label{subsec:martingale}

Given $N\geq 1$ and a partition $\calI$ of $[N]$, $\calI = \{\calI_{u}\subset [N]:u\in [\card{\calI}]\}$, let $\partitionMap{\calI}$ be the set of all functions $J:[\card{\calI}]\rightarrow [N]$ such that for each $u\in [\card{\calI}]$, $J(u)$ is some member of $\mathcal{I}_u$.

This section addresses martingale decomposition of error terms of the form
\begin{equation}\label{eq:block-wise error term}
\Bigg(\frac{1}{N}\sum_{i}g(\xi_{\mathrm{in}}^i)\Bigg)\Bigg(\frac{1}{\card{\calI}}\sum_{i=1}^{\card{\calI}}\varphi(\xi_{\mathrm{out}}^{J(i)})\Bigg)-\frac{1}{N}\sum_{i}g(\xi_{\mathrm{in}}^i)\varphi(\xi_{\mathrm{in}}^i).
\end{equation}
Note that in the special case $\calI=\{\{u\};u\in[N]\}$, we have $\big|\partitionMap{\calI}\big|=1$, the unique member of $\partitionMap{\calI}$ is $J=Id$ and \eqref{eq:block-wise error term} reduces to the quantity in \eqnref{eq:aug_res_L_p_bound}. We shall use the generality of \eqref{eq:block-wise error term} beyond this special case to help prove our CLT's. Loosely speaking, we shall be concerned with partitions $\calI$ such that for any $(i,j)\in \calI_u \times \calI_v$ and some $d\in [m]$,
\begin{equation}\label{eq:Assmp_4_informal}
\arraycolsep=0.25cm
\begin{array}{lcl}
u = v & \Rightarrow & \P(\xi^{i}_{\mathrm{out}}\in\cdot |\xi_{0},\ldots,\xi_{m-\ka} )=\P(\xi^{j}_{\mathrm{out}}\in\cdot |\xi_{0},\ldots,\xi_{m-\ka} ),\\
u \neq v & \Rightarrow & \xi^{i}_{\mathrm{out}} \independent \xi^{j}_{\mathrm{out}} \,\big|\, \xi_{0},\ldots,\xi_{m-\ka}.
\end{array}
\end{equation}
 Whether or not \eqref{eq:Assmp_4_informal} holds obviously depends on the choice of matrices $\bba^{(N,m)}$, a matter which we shall formalize in \assref{ass:partition and cond indep} below.

 Let us now proceed with the precise details. We shall make multiple uses of the objects which we define next and this flexibility is accommodated by our notation, which is a little intricate, but provides just what we need.

For $m\geq 1$, define the index mappings $p_N:[Nm] \to [N]$ and $s_N:[Nm]\to[m]$, for each $\varrho\in[Nm]$ as
\begin{equation*}%\label{eq:coordinate map}
p_N(\varrho) \defeq ((\varrho-1)\bmod N) + 1,\qquad
s_N(\varrho) \defeq \ceil{\frac{\varrho}{N}}.
\end{equation*}
%to establish a bijective mapping between the indices of a sequence of length $Nm$, and the two dimensional %indices of the elements of an $m$ by $N$ array.
%Also we define $\calI^{(N,m,\ka)} \defeq (\calI^{(N,m,\ka)}_{n})_{n\in [\np{m}{k}]}$, where $\calI^{(N,m,\ka)}_{n}\subset [N]$, as a $\np{m}{k} \in [N]$ element partition of $[N]$, and for all such partitions we define $\partitionMap{\calI^{(N,m,\ka)}}$ as the set of choice functions $J:[\np{m}{k}]\to[N]$ such that for all $n\in[\np{m}{k}]$, $J(n) \in \calI^{(N,m,\ka)}_{n}$.

Now for given $\ka\in[m]$, a partition $\calI$ of $[N]$ and $J \in \partitionMap{\calI}$,
%we let $\npi{\calI}{\ka}=\big|\calI\big|$ and
we define
%recursively the non-decreasing sequence of
the $\sigma$-algebras %$\big(\calF^{(N,m,\ka,J)}_{\varrho}; 0\leq \varrho\leq(m-\ka)N+\np{m}{\ka}\big)$ as
$\big(\calF^{(N,m)}_{\varrho}\big)_{0\leq \varrho\leq(m-\ka)N+\card{\calI}}$ as
\begin{equation}
%\calF_0^{(N,m,\ka,J)} \defeq \sigma(\xi_{\mathrm{in}}),~
\calF^{(N,m)}_\varrho =
\begin{cases}
\displaystyle\sigma(\xi_{\mathrm{in}}),&\varrho = 0, \\
\displaystyle\calF^{(N,m)}_{\varrho-1} \vee \sigma\big(\bxi{p_{N}(\varrho)}{s_{N}(\varrho)}{}{}\big),&0<\varrho \leq N^{\ast},\\
\displaystyle\calF^{(N,m)}_{\varrho-1} \vee \sigma\big(\bxi{J(p_{N}(\varrho))}{m}{}{}\big),&\varrho > N^{\ast},\\
\end{cases}\label{eq:piecewise sigma fields}
\end{equation}
where $N^{\ast} \defeq (m-\ka)N$.

For $\varphi\in\boundMeas{\ss}$, let
\begin{equation}
\cvarphi_N(x):=\varphi(x)-\frac{\sum_i g(\xi_0^i)\varphi(\xi_0^i)}{\sum_i g(\xi_0^i)},\label{eq:varphi_bar_defn}
\end{equation}
%for any $\varphi\in\boundMeas{\ss}$, %and by writing $\cvarphi_{N,q}^{i} = \cvarphi_N(\xi^{i}_{q})$ for brevity,
%define the sequence $\big(X_\varrho^{(N,m,\ka,J,\varphi)}\big)_{\varrho \in [(m-\ka)N+\np{m}{\ka}]}$,
and by writing $\cvarphi_{N,q}^{i} = \cvarphi_N(\xi^{i}_{q})$ for brevity, for all $i\in[N]$ and $0\leq q \leq m$, define the sequence $\big(X_\varrho^{(N,m)}\big)_{\varrho \in [(m-\ka)N+\card{\calI}]}$,
%such that for all $\varrho \in [(m-\ka)N+\card{\calI}]$

 %such that for all $\varrho \in [(m-\ka)N+\np{m}{\ka}]$
%\begin{equation}\label{eq:explicit X 1}
%X_q^{(N,m,k,J,\varphi)} \defeq S^{(N,m,k)}\dfrac{V_{\ell}^{i}}{N}\Bigg(\cvarphi_{N}(\bxi{i}{\ell}{}{}) - \dfrac{1}{V^{i}_{\ell}}\displaystyle\sum_{j}A_{\ell}^{ij}V_{\ell-1}^{j}\cvarphi_{N}(\bxi{j}{\ell-1}{}{})\Bigg)
%\end{equation}
%where $i = p_{N}(q)$ and $\ell = s_{N}(q)$, and for all $(m-k)N<q\leq (m-k)N+\np{m}{k}$ define
%\begin{equation}\label{eq:explicit X 2}
%X_q^{(N,m,k,J,\varphi)} \defeq S^{(N,m,k)}\dfrac{V^{i}_m}{\np{m}{k}}\Bigg(\cvarphi_{N}(\bxi{i}{m}{}{})
% - \dfrac{1}{V^{i}_{m}} \displaystyle\sum_{j} \Bigg(\prod_{q=0}^{k-1}A_{m-q}\Bigg)^{ij}V_{m-k}^{j}\cvarphi_{N}(\bxi{j}{m-k}{}{})\Bigg)
%\end{equation}
%\begin{equation}\label{eq:explicit X}
%\arraycolsep=1.4pt
%\begin{array}{rl}
\begin{align}
&X_\varrho^{(N,m)} \defeq \nonumber\\[.2cm]
&\begin{cases}
\dfrac{S_{N,m,\ka}V_{q}^{i}}{N}\Bigg(\cvarphi_{N,q}^{i} - \dfrac{1}{V^{i}_{q}}\displaystyle\sum_{j}A_{q}^{ij}V_{q-1}^{j}\cvarphi_{N,q-1}^{j}\Bigg),&\!\!\!\!\varrho \leq N^{\ast},
\\
\dfrac{S_{N,m,\ka}V^{i}_m}{\card{\calI}}\Bigg(\cvarphi_{N,m}^{i}
 - \dfrac{1}{V^{i}_{m}} \displaystyle\sum_{j} \Bigg(\prod_{p=0}^{\ka-1}A_{m-p}\Bigg)^{ij}V_{m-\ka}^{j}\cvarphi_{N,m-\ka}^{j}\Bigg),&\!\!\!\!\varrho > N^{\ast},
\end{cases}\label{eq:explicit X}
\end{align}
%\end{array}
%\end{equation}
where $q = s_{N}(\varrho)$, $i = p_{N}(\varrho)$ for all $0<\varrho\leq N^{\ast}$ and $i = J(p_{N}(\varrho))$ for all $N^{\ast}<\varrho\leq N^{\ast}+\card{\calI}$. The scaling factor $S_{N,m,\ka}$ is
\begin{equation}\label{eq:def scale factor}
S_{N,m,\ka}\defeq\left(\dfrac{m-\ka}{N} + \dfrac{1}{\card{\calI}}\right)^{-1/2}.
\end{equation}
We stress that $\calF^{(N,m)}_{\varrho}$ depends on $d,J$; $X_\varrho^{(N,m)}$ depends on $d,|\calI|,J,\varphi$; and $S_{N,m,\ka}$ depends on $|\calI|$; but these dependencies are suppressed from the notation.
%Note that $\big(\calF^{(N,m)}_{\varrho}\big)_{0\leq \varrho\leq(m-\ka)N+\card{\calI}}$, $\big(X_\varrho^{(N,m)}\big)_{\varrho\in [(m-\ka)N+\card{\calI}]}$ and $S_{N,m,\ka}$ depend on the partition $\calI$, but this dependence is suppressed in the notation.
%
%Note that $\big(\calF^{(N,m)}_{\varrho}\big)_{0\leq \varrho\leq(m-\ka)N+\card{\calI}}$ depends on the partition $\calI$ through function $J \in \partitionMap{\calI}$ as well as the parameter $d \in [m]$, in addition to these dependencies, the sequence $\big(X_\varrho^{(N,m)}\big)_{\varrho\in [(m-\ka)N+\card{\calI}]}$ is also dependent on the test function $\varphi \in \boundMeas{\ss}$.

% --------------------------------------------------------------------------------------
% PARTITION ASSUMPTION
% --------------------------------------------------------------------------------------
The following assumption, which we shall invoke in \propref{prop:generalized martingale decomposition}, demands some specific relationships between the matrices $\bba^{(N,m)}$, the partition $\calI$ and the parameter $d$.
\begin{assumption}\label{ass:partition and cond indep}
For given $N,m\geq 1$, $\ka\in [m]$, $\bba^{(N,m)}$, and $\calI =\{\calI_{u}\subset [N]:u\in [\card{\calI}]\}$, the sequence of matrices $\bba^{(N,m)}$ satisfies \assref{ass:A_k} and the triple $(\bba^{(N,m)},\calI,d)$ has the following properties:
\begin{enumerate}[itemsep=4pt, topsep=5pt, partopsep=0pt,label={{(\roman*)}}]
	\item \label{it:is partition} $\calI$ is a partition of $[N]$ such that for all $u \in [\card{\calI}]$, $\big|\calI_{u}\big| = N/\card{\calI}\geq \ka$.
	\item \label{it:equivalence classes} For all $u \in [\card{\calI}]$, $j_1,j_2 \in \calI_{u}$ and $i\in [N]$,
	\begin{equation*}
\Bigg(\prod_{q=0}^{\ka-1}A_{m-q}\Bigg)^{j_1i} = \Bigg(\prod_{q=0}^{\ka-1}A_{m-q}\Bigg)^{j_2i}.
\end{equation*}
	\item \label{it:cond indep} For all $u,v \in [\card{\calI}]$ such that $u\neq v$, and $(i,j) \in \calI_{u}\times \calI_{v}$,  $\xi^{i}_{\mathrm{out}} \independent \xi^j_{\mathrm{out}} \big| \xi_0,\ldots,\xi_{m-\ka}$.
\end{enumerate}
\end{assumption}
\begin{remark}
The condition \ref{it:is partition} means that $\calI$ partitions $[N]$ into sets of equal sizes. By \lemmaref{lem:conditional marginal},  \ref{it:equivalence classes} ensures that the random variables $\xi^{i}_{\mathrm{out}}$ and $\xi^{j}_{\mathrm{out}}$, where $i$ and $j$ belong to the same element of the partition $\calI$, have conditionally identical distributions given $\xi_{0},\ldots,\xi_{m-\ka}$. Together with \ref{it:cond indep} this formalizes \eqref{eq:Assmp_4_informal}.
\end{remark}
\begin{remark}\label{rem:generality of additional assumptions}
\assref{ass:partition and cond indep} reduces to exactly \assref{ass:A_k} in the case that $\ka=1$ and $\calI=\{\{u\};u\in[N]\}$. To see this, note that then: $\card{\calI} = N$, so \ref{it:is partition} is satisfied;  $\calI_{u}=\{u\}$, so \ref{it:equivalence classes} is satisfied; and \ref{it:cond indep} is satisfied due to the one step conditional independence property of augmented resampling, stated above \eqref{eq:P_in_terms_of_V_proofs}.
% \assref{ass:partition and cond indep} is a generalization of \assref{ass:A_k}, consider the case $\ka=1$. For all $u\in [N]$, we can take $\calI^{(N,m,\ka)}_{u}$ to be the singleton $\{u\}$, in which case we have $\np{m}{\ka} = N$ and \assref{ass:partition and cond indep} is trivially satisfied for any sequence $\bba^{(N,m)}$. In other words, for this particular choice of $\samplePartition{N}{m}{\ka}$, \assref{ass:partition and cond indep} does not introduce any additional restrictions on the matrices $\bba^{(N,m)}$.
\end{remark}

We can now present the martingale decomposition. The proof is given \secref{sec:preparatory results} of the \ref{suppA}.
% --------------------------------------------------------------------------------------
% GENERALIZED MARTINGALE DECOMPOSITION
% --------------------------------------------------------------------------------------
\begin{proposition}\label{prop:generalized martingale decomposition}
If for some $N,m\geq 1$ and $\ka\in[m]$, $(\bba^{(N,m)},\calI,d)$ satisfies \assref{ass:partition and cond indep}, then for all $\varphi \in \boundMeas{\ss}$, $J \in \partitionMap{\calI}$ and $\varrho \in [(m-\ka)N+\card{\calI}]$,  the following hold:
\begin{enumerate}[itemsep=4pt, topsep=5pt, partopsep=0pt,label={{(\roman*)}}]
\item \label{eq:measurability} $X_{\varrho}^{(N,m)}$ is measurable w.r.t.~$\calF^{(N,m)}_\varrho$,
\item \label{eq:zero expectations} $\E\Big[X_{\varrho}^{(N,m)}\Big| \calF^{(N,m)}_{\varrho-1}\Big] = 0$,
\item $X_{\varrho}^{(N,m)}$ is bounded by
\begin{equation}\label{eq:boundedness of X}
\abs{X_\varrho^{(N,m)}} \leq
\begin{cases}
{S_{N,m,\ka}}{N^{-1}}\norm{g}_\infty\osc{\varphi},&\varrho\leq(m-\ka)N,\\
{S_{N,m,\ka}}{{\card{\calI}}^{-1}}\norm{g}_\infty\osc{\varphi},&\varrho>(m-\ka)N,
\end{cases}
\end{equation}
\item and we have the decomposition
\begin{align}
&\frac{1}{S_{N,m,\ka}}\sum_{\varrho=1}^{(m-\ka)N+\card{\calI}} X_\varrho^{(N,m)} \nonumber\\
&= \frac{1}{\card{\calI}}\sum_{i_m=1}^{\card{\calI}} V^{J(i_m)}_m\cvarphi_{N}(\bxi{J(i_m)}{m}{}{}) \label{it:martingale representation} \\
&= \Bigg(\frac{1}{N}\sum_{i}g(\xi_{\mathrm{in}}^i)\Bigg)\Bigg(\frac{1}{\card{\calI}}\sum_{i=1}^{\card{\calI}}\varphi(\xi_{\mathrm{out}}^{J(i)})\Bigg)-\frac{1}{N}\sum_{i}g(\xi_{\mathrm{in}}^i)\varphi(\xi_{\mathrm{in}}^i).\label{it:another martingale representation}
\end{align}
\end{enumerate}
%where $\big(\calF^{(N,m)}_{\varrho}\big)_{0\leq \varrho\leq(m-\ka)N+\card{\calI}}$ and $\big(X_\varrho^{(N,m)}\big)_{\varrho\in [(m-\ka)N+\card{\calI}]}$ are as defined in \eqnref{eq:piecewise sigma fields} and \eqnref{eq:explicit X}, respectively.
\end{proposition}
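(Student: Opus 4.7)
\textbf{Proof proposal for Proposition \ref{prop:generalized martingale decomposition}.} The plan is to verify the four claims in order, the first three by direct inspection combined with \lemmaref{lem:facts_about_Vs} and \lemmaref{lem:conditional marginal}, and the final one by a telescoping argument across the two regimes $\varrho \leq N^\ast$ and $\varrho > N^\ast$.

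For \ref{eq:measurability}, I would observe that by \lemmaref{lem:facts_about_Vs}\ref{it:measurability of V} each $V_q^i$ is $\sigma(\xi_0)$-measurable, so in the first regime ($\varrho\leq N^\ast$) the increment $X_\varrho^{(N,m)}$ only involves $\xi_0$ together with $\bxi{i}{q}{}{}$ and the $\bxi{j}{q-1}{}{}$'s appearing in the weighted sum, all of which have been added to the filtration by step $\varrho$ because $\sigma(\xi_0, \xi_1^1,\ldots,\xi_1^N, \xi_2^1,\ldots,\xi_q^i) \subset \calF_\varrho^{(N,m)}$. In the second regime, the required measurability of $\bxi{J(p_N(\varrho))}{m}{}{}$ is built into the filtration by \eqref{eq:piecewise sigma fields}, while the conditional-marginal terms $\bxi{j}{m-d}{}{}$ are in $\calF_{N^\ast}^{(N,m)} \subset \calF_\varrho^{(N,m)}$.

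For \ref{eq:zero expectations}, the martingale property in the first regime follows from the one-step conditional independence stated around \eqref{eq:P_in_terms_of_V_proofs}: conditioning $\bxi{i}{q}{}{}$ on $\calF_{\varrho-1}^{(N,m)}$ reduces by this conditional independence to conditioning on $\sigma(\xi_0,\ldots,\xi_{q-1})$, and then the explicit formula \eqref{eq:P_in_terms_of_V_proofs} produces exactly the weighted sum being subtracted. In the second regime I would use \assref{ass:partition and cond indep}\ref{it:cond indep}, which states that the $\bxi{J(p_N(\varrho))}{m}{}{}$'s are conditionally independent across distinct blocks given $\xi_0,\ldots,\xi_{m-d}$, to reduce to conditioning on $\sigma(\xi_0,\ldots,\xi_{m-d})$; then \lemmaref{lem:conditional marginal} supplies the conditional distribution of $\bxi{i}{m}{}{}$ and yields precisely the subtracted term. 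Claim \ref{eq:boundedness of X} is then immediate from $V^i_q \leq \norm{g}_\infty$ (\lemmaref{lem:facts_about_Vs}\ref{it:boundedness of V}) together with the fact that the expression in parentheses is the difference of a value of $\cvarphi_N$ and a convex combination of values of $\cvarphi_N$ (the weights sum to one by the definition \eqref{eq:V_defn_proofs} of $V^i_q$ in the first regime and, in the second regime, because $\prod_{p=0}^{d-1} A_{m-p}$ is row-stochastic), hence bounded by $\osc{\varphi}=\osc{\cvarphi_N}$.

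The main work is the decomposition. I would split the sum $\sum_\varrho X_\varrho^{(N,m)}/S_{N,m,d}$ at $\varrho=N^\ast$. For each $q\in[m-d]$, summing the first-regime increments over $i\in[N]$ and using double stochasticity of $A_q$ (so $\sum_i A_q^{ij}=1$) yields $\tfrac{1}{N}\sum_i V^i_q \cvarphi_{N,q}^i - \tfrac{1}{N}\sum_j V^j_{q-1}\cvarphi_{N,q-1}^j$, which telescopes over $q$ to $\tfrac{1}{N}\sum_i V^i_{m-d}\cvarphi_{N,m-d}^i - \tfrac{1}{N}\sum_i g(\xi_0^i)\cvarphi_N(\xi_0^i)$; the second of these terms vanishes by the very definition \eqref{eq:varphi_bar_defn} of $\cvarphi_N$. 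Summing the second-regime increments over $i_m\in[|\calI|]$, the first piece gives $\tfrac{1}{|\calI|}\sum_{i_m} V^{J(i_m)}_m \cvarphi_{N,m}^{J(i_m)}$, while the second piece may be rewritten, using \assref{ass:partition and cond indep}\ref{it:equivalence classes} to equate $(\prod_{p}A_{m-p})^{J(i_m)j}$ with a block-average, combined with the equal block sizes $N/|\calI|$ from \ref{it:is partition} and column-stochasticity of the product $\prod_{p=0}^{d-1}A_{m-p}$, as $\tfrac{1}{N}\sum_j V^j_{m-d}\cvarphi_{N,m-d}^j$; this precisely cancels the surviving telescoping term. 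What remains is \eqref{it:martingale representation}. To deduce \eqref{it:another martingale representation} I would invoke \lemmaref{lem:facts_about_Vs}, under which $V^i_m$ is the common value $N^{-1}\sum_j g(\xi_0^j)$ independent of $i$, pull this factor out, and expand $\cvarphi_N$ according to \eqref{eq:varphi_bar_defn}. The most delicate step is the block-averaging cancellation in the second regime, where both parts of \assref{ass:partition and cond indep} (equal block sizes and the identification of rows within a block) are needed in concert.
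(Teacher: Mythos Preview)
Your proposal is correct and follows essentially the same route as the paper: the paper also establishes \eqref{it:martingale representation} via a telescoping decomposition that uses double stochasticity for the $q\leq m-d$ part and \assref{ass:partition and cond indep}\ref{it:is partition}--\ref{it:equivalence classes} for the final block-averaged jump, then derives \eqref{it:another martingale representation} from \lemmaref{lem:facts_about_Vs}, and checks \ref{eq:measurability}--\eqref{eq:boundedness of X} using \eqref{eq:P_in_terms_of_V_proofs}, \lemmaref{lem:conditional marginal} with \assref{ass:partition and cond indep}\ref{it:cond indep}, and \lemmaref{lem:facts_about_Vs}\ref{it:boundedness of V}. The only cosmetic differences are the order of presentation (the paper does the decomposition first) and that you make explicit the vanishing of $\tfrac{1}{N}\sum_i g(\xi_0^i)\cvarphi_N(\xi_0^i)$, which the paper leaves implicit in its telescoping identity.
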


We can now prove \propref{prop:intro_to_aug_resampling}.
\begin{proof}[Proof of \propref{prop:intro_to_aug_resampling}]
Let us choose $\ka=1$, $\calI=\{\{u\};u\in[N]\}$
%$\calI_{u} \defeq \{u\}$ for all $u\in[N]$,
and $J = \id$. In this case, $\card{\calI} = N$, $(m-\ka)N+\card{\calI} = Nm$, $S_{N,m,\ka} = \sqrt{N/m}$. \assref{ass:partition and cond indep} is satisfied for any $\aMatrices{N}{m}$ satisfying \assref{ass:A_k}  -- see Remark \ref{rem:generality of additional assumptions}. Therefore we can apply \propref{prop:generalized martingale decomposition}. The lack-of-bias property \eqref{eq:aug_res_lack_of_bias} follows immediately from \propref{prop:generalized martingale decomposition}\ref{eq:zero expectations}, \eqnref{it:another martingale representation} and the tower property of conditional expectation. For the moment bound \eqref{eq:aug_res_L_p_bound}, we apply the Burkholder-Davis-Gundy inequality and \eqnref{eq:boundedness of X} to obtain
\begin{align*}
&\E\left[\left|\sum_{\varrho\in[Nm]} X_\varrho^{(N,m)}\right|^p\right]\leq  b_p \E\left[\left|\sqrt{\sum_{\varrho\in[Nm]}\left(X_\varrho^{(N,m)}\right)^2}\right|^p\right]
\leq  b_p \infnorm{g}^p\osc{\varphi}^p.
\end{align*}
\end{proof}

%The block-wise martingale difference decomposition of \propref{prop:generalized martingale decomposition} is a key element of the theory needed to prove \theref{thm:radix-r} and \theref{thm:mixed radix-r}. These theorems essentially involve the error of approximating the sequences of intractable measures, $(\pi_{n})_{n\geq 0}$ and $(\hat{\pi}_{n})_{n\geq 0}$, with empirical measures based on the entire sample rather than subsets of the sample. For this reason, our main interest will be in the situation described in the proof of \propref{prop:intro_to_aug_resampling}, i.e.~the case

\noindent{\textbf{Warning:}} Throughout the remainder of Sections \ref{sec:convergence analysis part 1}-\ref{sec:convergence of particle filters}, whenever the sequences $\big(\calF^{(N,m)}_{\varrho}\big)_{0\leq \varrho\leq(m-\ka)N+\card{\calI}}$ and $\big(X_\varrho^{(N,m)}\big)_{\varrho\in [(m-\ka)N+\card{\calI}]}$ appear, they are taken to be as in \eqref{eq:piecewise sigma fields} and \eqref{eq:explicit X} with specifically $\ka=1$, $\calI=\{\{u\}:u\in[N]\}$ and $J = \id$.

%The ultimate goal
%Our ultimate goal in applying the augmented resampling algorithm is to produce as accurate approximations of intractable distributions as possible. Discarding part of the random variables and using only a block of random variables to approximate the measure is of course not the right means to achieve this. Therefore our main interest lies in the sequence $\big(X_q^{(N,m,k,J,\varphi)}\big)_{q \in [(m-k)N + \np{m}{k}]}$ where $k=1$ and $J = \id$, i.e.~the martingale differences for the entire sample. For this reason
%we introduce the shorter notations
%\begin{equation}\label{eq:short X and F}
%X^{(N,m,\varphi)}_\varrho \defeq X^{(N,m,1,\id,\varphi)}_\varrho\qquad \F^{(N,m)}_\varrho \defeq \calF^{(N,m,1,\id)}_\varrho,
%\end{equation}
%that will be frequently used throughout the remainder of the paper.

% dhpq
% ---------------------------------------
% ---------------------------------------
%  CONDITIONAL CLT FOR MARTINGALE ARRAYS
% ---------------------------------------
% ---------------------------------------
\subsection{Conditional CLT for martingale array}
\label{sec:conditional clt for amrtingale array}

In light of \propref{prop:generalized martingale decomposition}, for each $N$ and $m$, $\big(X_\varrho^{(N,m)}\big)_{\varrho\in[Nm]}$ is clearly a martingale difference sequence w.r.t.~$\big(\F_\varrho^{(N,m)}\big)_{0\leq \varrho \leq Nm}$. Our strategy is to study its behaviour using the following result, which is a special case of \citep[Theorem A.3]{smc:the:DM08}.

Let $(\ell_n)_{n\geq1}$ be a sequence of positive integer constants. Let $( U_{n,\varrho})_{\varrho \in [\ell_n]}$ be a triangular array of random variables and let $( \G_{n,\varrho} )_{0\leq \varrho \leq \ell_n}$ be a triangular array of sub-$\sigma$-algebras of the $\sigma$-algebra $\F$ of the underlying probability space, such that for each $n$ and $\varrho\in[\ell_n]$, $U_{n,\varrho}$ is $\G_{n,\varrho}$-measurable and $\G_{n,\varrho-1}\subseteq \G_{n,\varrho}$.
\begin{theorem}\label{thm:Douc and Moulines}
Assume that  $\E \left[ \left.  U_{n,\varrho}^2 \right| \G_{n,\varrho-1} \right]<\infty$ for any $n$ and $\varrho\in[\ell_n]$, and
\begin{align}
\E \left[ \left.  U_{n,\varrho} \right | \G_{n,\varrho-1}  \right]&=0,\label{eq:douc_cond_mart}\\[.35cm]
\sum_{\varrho\in[\ell_n]} \E \left[ \left.  U_{n,\varrho}^2 \mathbb{I}\{|U_{n,\varrho}|\geq \epsilon\} \right| \G_{n,\varrho-1} \right]&\inprob{n\rightarrow\infty} 0,\quad~\,\text{for any} \quad\epsilon>0, \label{eq:douc_cond_neg}\\
\sum_{\varrho\in[\ell_n]} \E \left[ \left.  U_{n,\varrho}^2 \right| \G_{n,\varrho-1} \right] &\inprob{n\rightarrow\infty} \sigma^2,\quad\text{for some}~\sigma^2>0. \label{eq:douc_cond_var}
\end{align}
Then, for any real $u$,
\begin{equation*}
\E\Bigg[\exp \Bigg( i u\sum_{\varrho\in[\ell_n]} U_{n,\varrho} \Bigg) \Bigg| \G_{n,0} \Bigg]\inprob{n\rightarrow\infty} \exp\left(-(u^2/2)\sigma^2\right).
\end{equation*}
\end{theorem}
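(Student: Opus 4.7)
The plan is to adapt McLeish's characteristic-function technique to the conditional setting. Write $S_n=\sum_{\varrho\in[\ell_n]}U_{n,\varrho}$, fix $u\in\real$, and introduce the complex-valued process
\[
M_{n,\varrho}(u)=\prod_{k=1}^{\varrho}(1+iuU_{n,k}),\qquad \varrho\in[\ell_n].
\]
By the martingale-difference assumption \eqref{eq:douc_cond_mart}, $(M_{n,\varrho}(u))_{\varrho}$ is a $(\G_{n,\varrho})$-martingale with deterministic initial value $1$, so $\E[M_{n,\ell_n}(u)\mid \G_{n,0}]=1$. The key identity is the factorisation $M_{n,\ell_n}(u)=T_n(u)\,e^{iuS_n}$ with $T_n(u)\defeq\prod_{\varrho\in[\ell_n]}(1+iuU_{n,\varrho})e^{-iuU_{n,\varrho}}$.

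The first step would be to show that $T_n(u)\inprob{n\to\infty}e^{u^2\sigma^2/2}$. A Taylor expansion gives $\log\bigl((1+z)e^{-z}\bigr)=-z^2/2+\rho(z)$ with $|\rho(z)|\le C|z|^3$ for $|z|\le 1/2$. The Lindeberg condition \eqref{eq:douc_cond_neg} forces $\max_{\varrho}|U_{n,\varrho}|\inprob{n\to\infty}0$, so on an event of probability tending to one,
\[
\log T_n(u)=-\frac{u^2}{2}\sum_{\varrho}U_{n,\varrho}^2+\sum_{\varrho}\rho(iuU_{n,\varrho}),
\]
where the cubic remainder is dominated by $C|u|^3\bigl(\max_\varrho|U_{n,\varrho}|\bigr)\sum_\varrho U_{n,\varrho}^2$. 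A standard martingale-difference computation (truncate $U_{n,\varrho}$ at a small level, apply conditional Chebyshev to the bounded part, and dominate the tail by the Lindeberg term) combines \eqref{eq:douc_cond_neg} with \eqref{eq:douc_cond_var} to give $\sum_\varrho U_{n,\varrho}^2\inprob{n\to\infty}\sigma^2$, which controls both the leading term and the remainder.

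Combining the two ingredients via $\E[M_{n,\ell_n}(u)\mid \G_{n,0}]=1$,
\[
1=\E\bigl[T_n(u)e^{iuS_n}\bigm|\G_{n,0}\bigr]=e^{u^2\sigma^2/2}\E\bigl[e^{iuS_n}\bigm|\G_{n,0}\bigr]+\E\bigl[(T_n(u)-e^{u^2\sigma^2/2})e^{iuS_n}\bigm|\G_{n,0}\bigr],
\]
so if the second term converges to $0$ in probability, rearranging yields the conclusion. Since $|e^{iuS_n}|=1$, it is enough to prove $\E[|T_n(u)-e^{u^2\sigma^2/2}|\mid \G_{n,0}]\inprob{n\to\infty}0$, and for this it suffices to establish the unconditional convergence $\E[|T_n(u)-e^{u^2\sigma^2/2}|]\to 0$, because $L^1$ convergence of a random variable implies $L^1$---and hence in-probability---convergence of its conditional expectation given any sub-$\sigma$-algebra.

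The main obstacle is therefore to upgrade the in-probability convergence of $T_n(u)$ to $L^1$ convergence, i.e.\ to prove uniform integrability of the family $\{T_n(u)\}_n$. I would handle this by a truncation argument: set $U^{\tau}_{n,\varrho}\defeq U_{n,\varrho}\mathbb{I}\{|U_{n,\varrho}|\le\tau\}-\E[U_{n,\varrho}\mathbb{I}\{|U_{n,\varrho}|\le\tau\}\mid \G_{n,\varrho-1}]$, so that the truncated array is again a martingale difference, and let $T^{\tau}_n(u)$ denote the associated product. Then
\[
|T^{\tau}_n(u)|^2=\prod_\varrho\bigl(1+u^2(U^{\tau}_{n,\varrho})^2\bigr)\le\exp\!\Bigl(u^2\sum_\varrho (U^{\tau}_{n,\varrho})^2\Bigr),
\]
and since the bounded summands together with \eqref{eq:douc_cond_var} control the exponent uniformly in $n$, the truncated version is bounded in $L^p$ for any $p\ge 1$. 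The Lindeberg condition \eqref{eq:douc_cond_neg} bounds $\P(T^{\tau}_n(u)\neq T_n(u))$ uniformly in $n$ for each fixed $\tau$, and letting $\tau\to\infty$ then delivers the uniform integrability of $\{T_n(u)\}_n$ required to close the argument.
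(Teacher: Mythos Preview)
The paper does not give its own proof of this result: it is quoted as a special case of \cite[Theorem~A.3]{smc:the:DM08} and used as a black box. So there is nothing in the paper to compare against; below I assess your proposal on its own merits.

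Your McLeish-style scheme is the natural approach and the logical skeleton is correct, modulo a sign slip in the Taylor expansion: for $z=iuU$ one has $-z^2/2=+u^2U^2/2$, so $\log T_n(u)\approx +\tfrac{u^2}{2}\sum_\varrho U_{n,\varrho}^2$, consistent with your stated target $e^{u^2\sigma^2/2}$ but not with the displayed formula. The genuine gap is in the uniform-integrability step. From $|T^\tau_n(u)|^2\le\exp\bigl(u^2\sum_\varrho(U^\tau_{n,\varrho})^2\bigr)$ and bounded increments you cannot conclude $L^p$ boundedness: the hypotheses give only convergence \emph{in probability} of $\sum_\varrho\E[U_{n,\varrho}^2\mid\G_{n,\varrho-1}]$ to $\sigma^2$, no almost-sure or moment bound on the exponent, and since $\ell_n\to\infty$ the crude estimate $\sum_\varrho(U^\tau_{n,\varrho})^2\le4\tau^2\ell_n$ is useless. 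What is missing is a stopping-time localisation: set $\rho_n=\inf\{k:\sum_{\varrho\le k}\E[U_{n,\varrho}^2\mid\G_{n,\varrho-1}]>K\}$, work with the array stopped at $\rho_n$ so that the predictable quadratic variation is almost surely bounded (this is what actually delivers $L^2$ control of the stopped product), and then use \eqref{eq:douc_cond_var} to show $\P(\rho_n<\ell_n)\to0$. A secondary issue is that with your recentered truncation one has $T^\tau_n\ne T_n$ even on $\{\max_\varrho|U_{n,\varrho}|\le\tau\}$, because the centering correction $-\E[U_{n,\varrho}\mathbb{I}\{|U_{n,\varrho}|>\tau\}\mid\G_{n,\varrho-1}]$ is generically nonzero; this is controllable via the Lindeberg condition, but it is not the statement you wrote.
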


% ---------------------------------------
% ---------------------------------------
%  CONDITIONAL CLT FOR MARTINGALE ARRAYS
% ---------------------------------------
% ---------------------------------------
\subsection{Triangular martingale array representation of butterfly resampling algorithms}
\label{sec:martingale array mapping}

In order to apply \theref{thm:Douc and Moulines} we need to map the martingales of \secref{subsec:martingale} onto the format of \theref{thm:Douc and Moulines}. This is done in a different way for each of the two butterfly resampling algorithms.

For the radix-$r$ algorithm, we have a fixed positive integer $r\geq 2$ and $N=r^m$ with $m\geq 1$. For the variables in \theref{thm:Douc and Moulines} we take $n=m$, $\ell_n=N m=r^m m$, and $U_{n,\varrho}=X_\varrho^{(r^m,m)}$ for all $\varrho \in [mr^m]$ and $\G_{n,\varrho}=\F_\varrho^{(r^m,m)}$ for $0\leq \varrho \leq m r^m$. In simple terms, the $m$th row of the array involves the random variables in an instance of the butterfly resampling scheme with population size $N=r^m$.

%Note that from definitions \eqnref{eq:short X and F} and \eqref{eq:piecewise sigma fields}, we have $\F_{\varrho-1}^{(r^m,m)}\subseteq \F_{\varrho}^{(r^m,m)}$; by \eqnref{eq:short X and F} and \propref{prop:generalized martingale decomposition}, $X_\varrho^{(r^m,m,\varphi)}$ is $\F_\varrho^{(r^m,m)}$-measurable for all $\varrho \in [mr^m]$; and \eqref{eq:boundedness of X} implies $\E \big[ \big| X_\varrho^{(r^m,m,\varphi)}\big|^2 \big| \F_{\varrho-1}^{(r^m,m)} \big]<\infty$.

For the mixed radix-$r$ algorithm, we have a fixed positive integer $r\geq 2$ and the population size $N$ is taken to be an integer multiple of $r$, i.e.~$N=rc$ where $c\geq 1$. $m=2$ is a constant. For the variables in \theref{thm:Douc and Moulines} we take $n=c$, $\ell_n=2N=2rc$, $U_{n,\varrho}=\mrbfX{\varrho}{rc}{2}{\varphi}$ for all $\varrho\in [2rc]$, and $\G_{n,\varrho}=\mrbfF{\varrho}{rc}{2}$ for $0\leq \varrho \leq 2rc$.

For each of the butterfly algorithms, it is then easily checked that:  $\G_{n,\varrho-1}\subseteq \G_{n,\varrho}$, using \eqref{eq:piecewise sigma fields};   $U_{n,\varrho}$ is $\G_{n,\varrho}$-measurable, using \propref{prop:generalized martingale decomposition}; and finally $\E \left[ \left.  U_{n,\varrho}^2 \right| \G_{n,\varrho-1} \right]<\infty$ using \eqref{eq:boundedness of X}.

Our aim is to verify the remaining conditions of \theref{thm:Douc and Moulines}, the most challenging is \eqnref{eq:douc_cond_var}, and our next step is to develop some tools which help.

% --------------------------------------------
% --------------------------------------------
%  Conditional variance and collision analysis
% --------------------------------------------
% --------------------------------------------
\subsection{Conditional variance and collision analysis}
\label{sec:conditional variance and collision analysis main}	

%Note that if $\Phi = \cvarphi_{N}^{\otimes 2}$ where $\varphi \in \boundMeas{\ss}$, then by \eqnref{it:martingale representation}
%\begin{align}
%\left(\frac{1}{N}\sum_{i}V_{m}^{i}\delta_{\xi_{m}^{i}}\right) \otimes \left(\frac{1}{N}\sum_{i}V_{m}^{i}\delta_{\xi_{m}^{i}}\right) \left(\Phi\right)
%&= \left(\frac{1}{N}\sum_{i}V_{m}^{i}\cvarphi_{N}(\xi_{m}^{i})\right)^2 \nonumber \\
%&= \frac{m}{N}\Bigg(\sum_{\varrho\in[Nm]} X_\varrho^{(N,m,\varphi)}\Bigg)^2.\label{eq:connection between martingale and tensor product}
%\end{align}
We shall use the following proposition to establish the connection between the conditional second moment of the martingale of \propref{prop:generalized martingale decomposition} and  the conditional independence structure of the augmented resampling algorithm through the matrices $\aMatrices{N}{m}$. The proof of the proposition is given in \secref{sec:proofs for collision analysis} of \ref{suppA}, and is partly inspired by \citep{smc:the:CdMG11}.
% ---------------------------------------
%  WEIGHTED $V$-STATISTICS
% ---------------------------------------
\newcommand{\Gam}[2]{\Gamma_{#1,#2}}
\newcommand{\cPhi}{\overline{\Phi}}
\newcommand{\aprod}[4]{A_{#1,#2}^{(#3,#4)}}
\newcommand{\ccomp}[4]{\mathcal{C}_{{#3}_{#1:#2},{#4}_{#1:#2}}}
\begin{proposition}\label{prop:tens_prod_formula}
For any $N\geq2$, $m\geq1$, $\varphi \in \boundMeas{\ss}$ and for any sequence of row stochastic matrices $\aMatrices{N}{m}$
%\begin{equation}
%\begin{array}{ll}
\begin{align}
\label{eq:tensor product formula}
&\quad\frac{m}{N}\E\Bigg[\Bigg(\sum_{\varrho\in[Nm]} X_\varrho^{(N,m)}\Bigg)^2\,\Bigg|\,\F_{0}^{(N,m)}\Bigg]\\
%&\E\Bigg[\Bigg(\dfrac{1}{N}\displaystyle\sum_{i}V_{m}^{i}\delta_{\xi_{m}^{i}}\Bigg) \otimes \Bigg(\dfrac{1}{N}\displaystyle\sum_{i}V_{m}^{i}\delta_{\xi_{m}^{i}}\Bigg) (\Phi)\,\Bigg|\,\F_{0}^{(N,m)}\Bigg]\\
&
=\displaystyle\sum_{\left(i_{0},j_{0},\ldots,i_{m},j_{m}\right)}\left(\dfrac{1}{N^{2}}\prod_{k=0}^{m-1}A_{k+1}^{i_{k+1}i_{k}}A_{k+1}^{j_{k+1}j_{k}}\right)g(\xi_{0}^{i_{0}})g(\xi_{0}^{j_{0}})\ccomp{1}{m}{i}{j}(\Phi)\big(\xi_{0}^{i_{0}},\xi_{0}^{j_{0}}\big)\nonumber
\end{align}
%\end{array}
%\label{eq:tensor product formula}
%\end{equation}
where $\Phi = \cvarphi_{N}^{\otimes 2}$, %$\cvarphi_{N}$ is as defined in \eqnref{eq:varphi_bar_defn},
$\ccomp{1}{m}{i}{j} \defeq \mathcal{C}_{\mathbb{I}[i_{1}=j_{1}]}\cdots \mathcal{C}_{\mathbb{I}[i_{m}=j_{m}]}$,
%$i_{1:m} = (i_1,\ldots,i_m)$, $j_{1:m} = (j_1,\ldots,j_m)$,
% and for all $p\geq 0$ and $\ell_{0:p}, \ell'_{0:p} \in [N]^{p+1}$
%$p\geq 0$, $ \in [N]^{p+1}$ and $({\ell'}_{0},\ldots,{\ell'}_{p}) \in [N]^{p+1}$ we write
%\begin{equation*}
%\ccomp{0}{p}{\ell}{\ell'} \defeq \mathcal{C}_{\mathbb{I}[\ell_{0}=\ell'_{0}]}\cdots \mathcal{C}_{\mathbb{I}[\ell_{p}=\ell'_{p}]},
%\end{equation*}
%\begin{equation*}
%\ccomp{1}{m}{i}{j} \defeq \mathcal{C}_{\mathbb{I}[i_{1}=j_{1}]}\cdots \mathcal{C}_{\mathbb{I}[i_{m}=j_{m}]},
%\end{equation*}
%where %$\ell_{0:p}=(\ell_{0},\ldots,\ell_{p})$ and $\ell'_{0:p}=(\ell'_{0},\ldots,\ell'_{p})$,
and $\mathcal{C}_{0}$ and $\mathcal{C}_{1}$ act on functions $\boundMeas{\ss^2}\to\boundMeas{\ss^2}$ to the right as $\mathcal{C}_{0}:=Id$ and $\left(\mathcal{C}_{1}\Phi\right)(x,x^{\prime}):=\Phi(x,x)$.
\end{proposition}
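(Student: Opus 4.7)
The plan is to combine the martingale representation of \propref{prop:generalized martingale decomposition}—specialised to $\ka=1$, $\calI=\{\{u\};u\in[N]\}$ and $J=\id$, which is admissible under \assref{ass:A_k} by \remref{rem:generality of additional assumptions} (assumed in force throughout)—with a backward recursion for bivariate conditional moments. Under this specialisation $|\calI|=N$, and \eqref{it:martingale representation} together with the centering identity $\sum_i g(\xi_0^i)\cvarphi_N(\xi_0^i)=0$ (immediate from \eqref{eq:varphi_bar_defn}) gives
\begin{equation*}
\sum_{\varrho\in[Nm]}X_\varrho^{(N,m)}=\frac{S_{N,m,1}}{N}\sum_{i_m}V_m^{i_m}\cvarphi_N(\xi_m^{i_m}).
\end{equation*}
Squaring, conditioning on $\F_0^{(N,m)}$, and pulling out the $\F_0^{(N,m)}$-measurable weights $V_m^{i_m}V_m^{j_m}$ (by \lemmaref{lem:facts_about_Vs}\ref{it:measurability of V}) reduces the task to computing $\E[\Phi(\xi_m^{i_m},\xi_m^{j_m})\,|\,\F_0^{(N,m)}]$ for each pair $(i_m,j_m)$. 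Using $S_{N,m,1}^2=N/m$, the prefactor $\frac{m}{N}\cdot\frac{S_{N,m,1}^2}{N^2}$ collapses to $\frac{1}{N^2}$, matching the outer scaling in \eqref{eq:tensor product formula}.

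The heart of the argument is a one-step backward recursion: for any $\Psi\in\boundMeas{\ss^2}$, $a,b\in[N]$ and $k\in[m]$,
\begin{equation*}
V_k^aV_k^b\,\E\!\left[\Psi(\xi_k^a,\xi_k^b)\,\big|\,\xi_0,\ldots,\xi_{k-1}\right]=\sum_{a',b'}A_k^{aa'}A_k^{bb'}V_{k-1}^{a'}V_{k-1}^{b'}\bigl(\mathcal{C}_{\mathbb{I}[a=b]}\Psi\bigr)(\xi_{k-1}^{a'},\xi_{k-1}^{b'}).
\end{equation*}
For $a\neq b$ this is immediate from the one-step conditional independence and the explicit form \eqref{eq:P_in_terms_of_V_proofs} (with $\mathcal{C}_0=\id$ vacuous), so the RHS factorises into the product of two one-step formulas. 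For $a=b$, $\mathcal{C}_1\Psi(x,x')=\Psi(x,x)$ collapses the integrand to the diagonal, and an auxiliary sum over $b'$ using $\sum_{b'}A_k^{ab'}V_{k-1}^{b'}=V_k^a$ reconstitutes the stated form.

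Iterating this recursion from level $m$ down to $0$, starting with $\Psi=\Phi$ and $(a,b)=(i_m,j_m)$, and invoking the tower property at each step, introduces at stage $k$ a fresh pair $(i_{k-1},j_{k-1})$, a factor $A_k^{i_ki_{k-1}}A_k^{j_kj_{k-1}}$, and prepends the operator $\mathcal{C}_{\mathbb{I}[i_k=j_k]}$ to the running function. After $m$ iterations the composed operator is $\mathcal{C}_{\mathbb{I}[i_1=j_1]}\cdots\mathcal{C}_{\mathbb{I}[i_m=j_m]}\Phi=\ccomp{1}{m}{i}{j}\Phi$, evaluated at $(\xi_0^{i_0},\xi_0^{j_0})$, with initial weights $V_0^{i_0}V_0^{j_0}=g(\xi_0^{i_0})g(\xi_0^{j_0})$; summing over $(i_m,j_m)$ yields \eqref{eq:tensor product formula}. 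The main obstacle I foresee is the $a=b$ branch of the recursion: one must introduce an auxiliary summation over $b'$ that is absent from the LHS and show via row-stochasticity that it reconstitutes $V_k^a$ without spurious cross-terms, while also tracking the operator order so that iteration produces the prescribed composition $\mathcal{C}_{\mathbb{I}[i_1=j_1]}\cdots\mathcal{C}_{\mathbb{I}[i_m=j_m]}$ rather than its reverse.
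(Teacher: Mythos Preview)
Your proposal is correct and follows essentially the same route as the paper's proof: both reduce the left-hand side to $\big(\frac{1}{N}\sum_i V_m^i\cvarphi_N(\xi_m^i)\big)^2$ via \eqref{it:martingale representation}, then run a backward induction from level $m$ to $0$ using the one-step identity you state, splitting on the diagonal case $a=b$ and, there, reconstituting $V_k^a$ via an auxiliary sum. The paper merely packages your recursion in measure-theoretic notation (writing $\Gamma_{k,i}=\sum_j A_k^{ij}V_{k-1}^j\delta_{\xi_{k-1}^j}$ so that your identity reads $\E[(V_k^{i_k}\delta_{\xi_k^{i_k}})\otimes(V_k^{j_k}\delta_{\xi_k^{j_k}})(\Phi)\mid\xi_0,\ldots,\xi_{k-1}]=(\Gamma_{k,i_k}\otimes\Gamma_{k,j_k})(\mathcal{C}_{\mathbb{I}[i_k=j_k]}\Phi)$), but the computation is identical and the operator order tracks exactly as you anticipate.
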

%In order to connect \propref{prop:tens_prod_formula} with the approach referred to as collision analysis we observe that,
When operating on the function $\varphi^{\otimes 2}$, the composite operator $\ccomp{1}{m}{i}{j}$ satisfies
\begin{equation}\label{eq:collision operator}
\ccomp{1}{m}{i}{j}(\varphi^{\otimes 2})(x,x') =
\begin{cases}
\varphi(x)\varphi(x), &\text{if  $i_k = j_k$ for some $k\in [m]$},\\
\varphi(x)\varphi(x'), &\text{otherwise.}
\end{cases}
\end{equation}
To determine which of the cases in \eqnref{eq:collision operator} is true, is equivalent to asking whether the sequences $(i_0,\ldots,i_m)$ and $(j_0,\ldots,j_m)$ have a common element $i_k = j_k$ for some $k \in [m]$, i.e.~if these sequences \emph{collide}. Consequently, formulating more tractable expressions for the r.h.s.~of \eqnref{eq:tensor product formula} boils down to finding the sets of pairs $(i_0,\ldots,i_m),(j_0,\ldots,j_m)$ for which the term $\prod_{k=0}^{m-1}A_{k+1}^{i_{k+1}i_{k}}A_{k+1}^{j_{k+1}j_{k}}$ is non-zero, and identifying their collisions. We term this \emph{collision analysis}. In order to state a resulting expression for the r.h.s.~of \eqnref{eq:tensor product formula}, we need to introduce the following notations.

For all $i\in[N]$ and $k\in [m]$,
\begin{eqnarray}
\paths{\bba}{} &\defeq& \Big\{(j_0,\ldots,j_m) \in [N]^{m+1} : \textstyle\prod_{k=0}^{m-1}A_{k+1}^{j_{k+1}j_{k}} \neq 0\Big\},\label{eq:def paths}\\
\pathsFrom{i}{\bba} &\defeq& \Big\{(j_0,\ldots,j_m)\in\paths{\bba}{2}:j_0 = i\Big\},\label{eq:paths from}\\
\parentSet{k}{i}{\bba}{4} &\defeq& \Big\{j\in[N]: A_k^{ij} \neq 0\Big\}, \label{eq:parent columns}\\
\primeParentSet{k}{i}{\bba}{m} &\defeq& \Big\{j\in [N] : \left(\textstyle{\prod_{q=0}^{k-1}} A_{k-q}\right)^{ij} > 0\Big\},  \label{eq:prime parents}\\
\collisionStartSetSh{k}{i}{\bba} &\defeq& \collisionStartSet{k}{i}{\bba},\quad\text{where}\quad  \primeParentSet{0}{i}{\bba}{m} \defeq \{i\}.\label{eq:collision start sets}
\end{eqnarray}
To interpret these sets, consider a directed graph $\graph{\bba}{} \defeq (\vertexset{\bba}{},\edgeset{\bba}{})$ with vertices and edges defined by
\begin{eqnarray}
\vertexset{\bba}{} &\defeq& \{\xi_{k}^i: 0\leq k \leq m, i \in [N]\},\label{eq:def vertex set}\\
\edgeset{\bba}{} &\defeq& \{(\xi_{k-1}^j,\xi_k^{i}): A_k^{ij}\neq 0, k\in[m], i,j\in[N]\},\label{eq:def edge set}
\end{eqnarray}
respectively. Suppose that the graph is arranged in the form of an array where $\xi^i_k$ is the vertex on the $k$th row and $i$th column, as shown in \figref{fig:set illustration}. In this case, $\paths{\bba}{}$ denotes the set of all paths in the graph starting from the top row and ending at the bottom row, and $\pathsFrom{i}{\bba}$ is this set restricted to those paths starting from $\xi^{i}_0$. Sets $\parentSet{k}{i}{\bba}{}$ determine the column indices of the parents of $\xi^i_k$
and sets $\primeParentSet{k}{i}{\bba}{m}$ determine the column indices of those vertices on the first row from which there exists a path to the vertex $\xi^i_k$. An illustration of these definitions is given in \figref{fig:set illustration}.

\begin{figure}
\mbox{
\begin{tabular}{@{}c@{}ccc@{}}
\mbox{
\begin{minipage}{.03\textwidth}
\begin{flushright}
\input{Simple0.tex} 
\end{flushright}
\end{minipage} 
}
& 
\mbox{
\begin{minipage}{.26\textwidth}
\begin{center}
% !TEX root = bfwithtikz.tex

% --------------------------------------------------------------------
% --------------------------------------------------------------------
% 
% --------------------------------------------------------------------
% --------------------------------------------------------------------

\tikzset{
    vertex/.style = {
    	draw,
	    circle,
        fill      = white,
        outer sep = 2pt,
        inner sep = 2pt,
    }
}

\tikzset{
    mainvertex/.style = {
    	draw,
		circle,
        fill = black,
        outer sep = 2pt,
        inner sep = 2pt,
    }
}

\tikzset{
    demovertex/.style = {
    	draw,
		circle,
        fill = gray,
        outer sep = 2pt,
        inner sep = 2pt,
    }
}

\mbox{
\trimbox{5mm 1mm 4mm 1mm}{
\begin{tikzpicture}

\def \vstep {-1cm}
\def \hstep {1cm}
\def \r {2}
\def \m {3} % margin in angles, depends on the radius
\def \N {8}
\def \X {4}
\def \Y {2}

% Draw vertices
\foreach \i in {0,...,2}
{
	\foreach \j in {1,...,4}
	{
  		\node[vertex] (\i\j) at (\j*\hstep,\i*\vstep) {};
	}
}	

\foreach \j in {1,...,4}
{
	\node[circle] at (\j*\hstep,0*\vstep+.5cm) {{\small$\j$}};
}

%\foreach \i in {0,...,2}
%{
%	\node[circle,minimum size=.75cm] at (\hstep -.5cm,\i*\vstep) {{\small$\i$}};
%}	

\node[mainvertex] at (4*\hstep,2*\vstep) {$$};
\node[demovertex] at (2*\hstep,1*\vstep) {$$};
\node[demovertex] at (4*\hstep,1*\vstep) {$$};
%\node[draw,circle, minimum size=.5cm] at (2*\hstep,1*\vstep) {$$};
%\node[draw,circle, minimum size=.5cm] at (4*\hstep,1*\vstep) {$$};

% Draw edges
\foreach \i in {0,...,1}
{
	\foreach \j in {1,...,4}
	{	
		\foreach \k in {0,...,1}
		{
			\draw[-stealthnew,shorten <=.15cm, shorten >=.15cm,arrowhead=1.5mm]
			let 
				\n1 = {int(mod((\j-1),\r^(\i))+\k*\r^(\i)+\r^(\i+1)*int(floor((\j-1)/\r^(\i+1))))+1}, 
				\n2 = {int(\i+1)} 
			in 
				(\j*\hstep,\i*\vstep) -- (\n1*\hstep,\n2*\vstep);
		}
	}
}	

\draw[-stealthnew,shorten <=.15cm, shorten >=.15cm,arrowhead=3mm,line width=2pt] (2*\hstep,1*\vstep) -- (4*\hstep,2*\vstep);
\draw[-stealthnew,shorten <=.15cm, shorten >=.15cm,arrowhead=3mm,line width=2pt] (4*\hstep,1*\vstep) -- (4*\hstep,2*\vstep);

\end{tikzpicture}
}} 
\end{center}
\end{minipage} 
}
& 
\mbox{
\begin{minipage}{.26\textwidth}
\begin{center}
% !TEX root = bfwithtikz.tex

% --------------------------------------------------------------------
% --------------------------------------------------------------------
% 
% --------------------------------------------------------------------
% --------------------------------------------------------------------

\tikzset{
    vertex/.style = {
    	draw,
	    circle,
        fill      = white,
        outer sep = 2pt,
        inner sep = 2pt,
    }
}

\tikzset{
    mainvertex/.style = {
    	draw,
		circle,
        fill = black,
        outer sep = 2pt,
        inner sep = 2pt,
    }
}

\tikzset{
    demovertex/.style = {
    	draw,
		circle,
        fill = gray,
        outer sep = 2pt,
        inner sep = 2pt,
    }
}

\mbox{
\trimbox{5mm 1mm 4mm 1mm}{
\begin{tikzpicture}

\def \vstep {-1cm}
\def \hstep {1cm}
\def \r {2}
\def \m {3} % margin in angles, depends on the radius
\def \N {8}
\def \X {4}
\def \Y {2}

% Draw vertices
\foreach \i in {0,...,2}
{
	\foreach \j in {1,...,4}
	{
  		\node[vertex] (\i\j) at (\j*\hstep,\i*\vstep) {};
	}
}	

\foreach \j in {1,...,4}
{
	\node[circle] at (\j*\hstep,0*\vstep+.5cm) {{\small$\j$}};
}

%\foreach \i in {0,...,2}
%{
%	\node[circle,minimum size=.75cm] at (\hstep -.5cm,\i*\vstep) {};
%}	

%\node[draw,circle, minimum size=.5cm] at (1*\hstep,0*\vstep) {$$};
%\node[draw,circle, minimum size=.5cm] at (2*\hstep,0*\vstep) {$$};
%\node[draw,circle, minimum size=.5cm] at (3*\hstep,0*\vstep) {$$};
%\node[draw,circle, minimum size=.5cm] at (4*\hstep,0*\vstep) {$$};
\node[demovertex] at (1*\hstep,0*\vstep) {$$};
\node[demovertex] at (2*\hstep,0*\vstep) {$$};
\node[demovertex] at (3*\hstep,0*\vstep) {$$};
\node[demovertex] at (4*\hstep,0*\vstep) {$$};
\node[mainvertex] at (4*\hstep,2*\vstep) {$$};

% Draw edges
\foreach \i in {0,...,1}
{
	\foreach \j in {1,...,4}
	{	
		\foreach \k in {0,...,1}
		{
			\draw[-stealthnew,shorten <=.15cm, shorten >=.15cm,arrowhead=1.5mm]
			let 
				\n1 = {int(mod((\j-1),\r^(\i))+\k*\r^(\i)+\r^(\i+1)*int(floor((\j-1)/\r^(\i+1))))+1}, 
				\n2 = {int(\i+1)} 
			in 
				(\j*\hstep,\i*\vstep) -- (\n1*\hstep,\n2*\vstep);
		}
	}
}	

\draw[-stealthnew,shorten <=.15cm, shorten >=.15cm,arrowhead=3mm,line width=2pt] (\hstep,0*\vstep) -- (2*\hstep,1*\vstep);
\draw[-stealthnew,shorten <=.15cm, shorten >=.15cm,arrowhead=3mm,line width=2pt] (2*\hstep,0*\vstep) -- (2*\hstep,1*\vstep);
\draw[-stealthnew,shorten <=.15cm, shorten >=.15cm,arrowhead=3mm,line width=2pt] (3*\hstep,0*\vstep) -- (4*\hstep,1*\vstep);
\draw[-stealthnew,shorten <=.15cm, shorten >=.15cm,arrowhead=3mm,line width=2pt] (4*\hstep,0*\vstep) -- (4*\hstep,1*\vstep);
\draw[-stealthnew,shorten <=.15cm, shorten >=.15cm,arrowhead=3mm,line width=2pt] (2*\hstep,\vstep) -- (4*\hstep,2*\vstep);
\draw[-stealthnew,shorten <=.15cm, shorten >=.15cm,arrowhead=3mm,line width=2pt] (4*\hstep,\vstep) -- (4*\hstep,2*\vstep);

\end{tikzpicture}
}} 
\end{center}
\end{minipage} 
}
& 
\mbox{
\begin{minipage}{.28\textwidth}
\begin{center}
% !TEX root = bfwithtikz.tex

% --------------------------------------------------------------------
% --------------------------------------------------------------------
% 
% --------------------------------------------------------------------
% --------------------------------------------------------------------

\tikzset{
    vertex/.style = {
    	draw,
	    circle,
        fill      = white,
        outer sep = 2pt,
        inner sep = 2pt,
    }
}

\tikzset{
    mainvertex/.style = {
    	draw,
		circle,
        fill = black,
        outer sep = 2pt,
        inner sep = 2pt,
    }
}

\tikzset{
    demovertex/.style = {
    	draw,
		circle,
        fill = gray,
        outer sep = 2pt,
        inner sep = 2pt,
    }
}

\tikzset{
    demoedge/.style = {
    -stealthnew,
    shorten <=.15cm, 
    shorten >=.15cm,
    arrowhead=1.5mm,
    line width=1.2pt,
    }
}

\mbox{
\trimbox{5mm 1mm 4mm 1mm}{
\begin{tikzpicture}

\def \vstep {-.667cm}
\def \hstep {.5cm}
\def \r {2}
\def \m {3} % margin in angles, depends on the radius
\def \N {8}
\def \X {4}
\def \Y {2}

% Draw vertices
\foreach \i in {0,...,3}
{
	\foreach \j in {1,...,8}
	{
  		\node[vertex] (\i\j) at (\j*\hstep,\i*\vstep) {};
	}
}	

\foreach \j in {1,...,8}
{
	\node[circle] at (\j*\hstep,0*\vstep+.5cm) {{\small$\j$}};
}

%\foreach \i in {0,...,2}
%{
%	\node[circle,minimum size=.75cm] at (\hstep -.5cm,\i*\vstep) {};
%}	

\node[demovertex] at (5*\hstep,3*\vstep) {$$};

\node[demovertex] at (5*\hstep,2*\vstep) {$$};
\node[demovertex] at (1*\hstep,2*\vstep) {$$};

%\draw[-stealthnew,shorten <=.15cm,shorten >=.15cm,arrowhead=3mm,line width=2pt] (5*\hstep,3*\vstep) -- (5*\hstep,2*\vstep);

\draw[demoedge] (5*\hstep,2*\vstep) -- (5*\hstep,3*\vstep) ;
\draw[demoedge] (1*\hstep,2*\vstep) -- (5*\hstep,3*\vstep) ;

\draw[demoedge] (5*\hstep,1*\vstep) -- (5*\hstep,2*\vstep) ;
\draw[demoedge] (7*\hstep,1*\vstep) -- (5*\hstep,2*\vstep) ;

\node[demovertex] at (1*\hstep,1*\vstep) {$$};
\node[demovertex] at (3*\hstep,1*\vstep) {$$};

\draw[demoedge] (1*\hstep,1*\vstep) -- (1*\hstep,2*\vstep) ;
\draw[demoedge] (3*\hstep,1*\vstep) -- (1*\hstep,2*\vstep) ;

\node[demovertex] at (5*\hstep,1*\vstep) {$$};
\node[demovertex] at (7*\hstep,1*\vstep) {$$};

\node[demovertex] at (1*\hstep,0*\vstep) {$$};
\node[demovertex] at (2*\hstep,0*\vstep) {$$};
\node[demovertex] at (3*\hstep,0*\vstep) {$$};
\node[demovertex] at (4*\hstep,0*\vstep) {$$};

\draw[demoedge] (1*\hstep,0*\vstep) -- (1*\hstep,1*\vstep) ;
\draw[demoedge] (2*\hstep,0*\vstep) -- (1*\hstep,1*\vstep) ;
\draw[demoedge] (3*\hstep,0*\vstep) -- (3*\hstep,1*\vstep) ;
\draw[demoedge] (4*\hstep,0*\vstep) -- (3*\hstep,1*\vstep) ;

\node[demovertex] at (6*\hstep,0*\vstep) {$$};
\node[demovertex] at (7*\hstep,0*\vstep) {$$};
\node[demovertex] at (8*\hstep,0*\vstep) {$$};

\draw[demoedge] (6*\hstep,0*\vstep) -- (5*\hstep,1*\vstep) ;
\draw[demoedge] (7*\hstep,0*\vstep) -- (7*\hstep,1*\vstep) ;
\draw[demoedge] (8*\hstep,0*\vstep) -- (7*\hstep,1*\vstep) ;

%\node[demovertex] at (3*\hstep,2*\vstep) {$$};
%\node[mainvertex] at (3*\hstep,1*\vstep) {$$};

% Draw edges
\foreach \i in {0,...,2}
{
	\foreach \j in {1,...,8}
	{	
		\foreach \k in {0,...,1}
		{
			\draw[-stealthnew,shorten <=.15cm, shorten >=.15cm,arrowhead=1mm]
			let 
				\n1 = {int(mod((\j-1),\r^(\i))+\k*\r^(\i)+\r^(\i+1)*int(floor((\j-1)/\r^(\i+1))))+1}, 
				\n2 = {int(\i+1)} 
			in 
				(\j*\hstep,\i*\vstep) -- (\n1*\hstep,\n2*\vstep);
		}
	}
}	

\node[draw,rectangle,minimum width=.45cm,minimum height= .4cm,rounded corners=3pt] at (6*\hstep,0*\vstep) {$$};
\node[draw,rectangle,minimum width=.95cm,minimum height= .4cm,rounded corners=3pt] at (7.5*\hstep,0*\vstep) {$$};
\node[draw,rectangle,minimum width=1.95cm,minimum height=.4cm,rounded corners=3pt] at (2.5*\hstep,0*\vstep) {$$};

%\draw[-stealthnew,shorten <=.15cm, shorten >=.15cm,arrowhead=3mm,line width=2pt] (3*\hstep,\vstep) -- (\hstep,2*\vstep);
%\draw[-stealthnew,shorten <=.15cm, shorten >=.15cm,arrowhead=3mm,line width=2pt] (3*\hstep,\vstep) -- (3*\hstep,2*\vstep);

\end{tikzpicture}
}} 
\end{center}
\end{minipage} 
} \\
\\
&$\parentSet{2}{4}{\bba}{2} = \{2,4\}$ & $\primeParentSet{2}{4}{\bba}{2} = \{1,2,3,4\}$ & 
\begin{minipage}[t]{.23\textwidth} 
$\collisionStartSetSh{1}{5}{\bba} = \{6\}$

$\collisionStartSetSh{2}{5}{\bba} = \{7,8\}$

$\collisionStartSetSh{3}{5}{\bba} = \{1,2,3,4\}$
\vspace{.2cm}
\end{minipage}\\
&(a) & (b) & (c) 
\end{tabular}
}
\caption{(a) The column indices of the parents (gray) of $\xi^{4}_{2}$ (black) constitute the set $\parentSet{2}{4}{\bba}{}$.
%The vertices whose column indices constitute the set $\parentSet{2}{4}{\bba}{}$ are highlighted in gray and the associated vertex $\xi^{4}_{2}$ is highlighted in black. 
(b) The column indices of the ancestors (gray) of $\xi^{4}_{2}$ (black) on the first row constitute the set $\primeParentSet{2}{4}{\bba}{}$.
%(b) The vertices whose column indices constitute the set $\primeParentSet{2}{4}{\bba}{}$ are highlighted in gray and the associated vertex $\xi^{4}_{2}$ is highlighted in black. 
(c) The vertices whose column indices constitute the sets $\collisionStartSetSh{1}{5}{\bba}$, $\collisionStartSetSh{2}{5}{\bba}$ and $\collisionStartSetSh{3}{5}{\bba}$ are highlighted by rectangles.}
\label{fig:set illustration}
\end{figure}

The following assumption shall be invoked in \propref{prop:tensor product formulation}. It serves to impose some structure which is common to the matrices which define radix-$r$ and the mixed radix-$r$ butterfly resampling algorithms. For fixed $N,m\geq 1$ and for any sequence $\bba = \aMatrices{}{m}$, %of $N\times N$ matrices
we write $\bba_{p:q} \defeq (A_{k})_{k=p}^q$ where $0<p\leq q\leq m$.
\begin{assumption}\label{ass:A_k extra}\label{ass:cardinality invariance}
For given $N,m\geq 1$, the matrices $\bba = \bba^{(N,m)}$ satisfy \assref{ass:A_k}, and, in addition, one has for all $p,q \in [m]$ and $i,j\in [N]$
\begin{enumerate}[itemsep=0pt, topsep=5pt, partopsep=0pt,label={{(\roman*)}}]
\item \label{it:symmetry} Symmetry: $A_p^{ij} = A_p^{ji}$,\phantom{$\big|\parentSet{p}{i}{\bba}{}\big|=\big|\parentSet{p}{j}{\bba}{}\big|$}
\item \label{it:commutativity} Commutativity: $A_qA_p = A_pA_q$,\phantom{$\big|\parentSet{p}{i}{\bba}{}\big|=\big|\parentSet{p}{j}{\bba}{}\big|$}
\item \label{it:idempotence} Idempotence: $A_pA_p = A_p$,\phantom{$\big|\parentSet{p}{i}{\bba}{}\big|=\big|\parentSet{p}{j}{\bba}{}\big|$}
\item \label{it:cardinality invariance} Equal number of non-zero elements: $\big|\parentSet{p}{i}{\bba}{}\big|=\big|\parentSet{p}{j}{\bba}{}\big|$,
%\end{enumerate}
%and also
%\begin{enumerate}[itemsep=0pt, topsep=5pt, partopsep=0pt,label={{(\arabic*)}}]
%\setcounter{enumi}{4}
\item \label{it:unique paths} For all $i_p,j_p\in[N]$ and  $((i_p,\ldots,i_q),(j_p,\ldots,j_q))\in \pathsFrom{i_p}{\bba_{p+1:q}}{}\times \pathsFrom{j_p}{\bba_{p+1:q}}{}$, where $0\leq p < q \leq m$ and $(i_p,i_q) = (j_p,j_q)$, one has $(i_p,\ldots,i_q) = (j_p,\ldots,j_q)$.
\end{enumerate}
\end{assumption}
\begin{remark}
The conditions \ref{it:symmetry}--\ref{it:idempotence} are standard matrix properties. Condition \ref{it:cardinality invariance} states that each row in each element of $\bba$ has the same number of non-zero elements. Property \ref{it:unique paths} states that given any two vertices of the graph $\graph{\bba}{}$ with column indices $i_p$ and $i_q$, there exists at most one directed path between those  vertices. This condition is closely related to the existence of unique paths between any vertices in an undirected tree graph (see, e.g.~\cite{lauritzen96}).
\end{remark}
We are then ready to state the second main result on the conditional second moment, whose proof if given in \secref{sec:proofs for collision analysis} of the \ref{suppA}.
%
% APPLICATION OF THE TENSOR PRODUCT FORMULA
%
\begin{proposition}
\label{prop:tensor product formulation}
%\label{lem:main cardinality result}
Fix $N,m\geq 1$ and $\varphi\in \boundMeas{\ss}$. If $\bba = \bba^{(N,m)}$ satisfies \assref{ass:A_k extra}, then
%\begin{equation*}%\label{eq:tensor product decomposition}
%\arraycolsep=1.4pt
%\begin{array}{rcl}
\begin{align*}
&\dfrac{m}{N}\E\Bigg[\Bigg(\displaystyle\sum_{\varrho \in [Nm]} X^{(N,m)}_\varrho\Bigg)^2 \Bigg| \calF^{(N,m)}_0\Bigg]
= \dfrac{1}{N^2}\displaystyle\sum_{i} g^{2}(\bxi{i}{0}{}{})\cvarphi^2_{N}(\bxi{i}{0}{}{})\\
&\rule{.3\textwidth}{0pt}+ \dfrac{1}{N^4}\displaystyle\sum_{i}\sum_{j\neq i}g(\bxi{i}{0}{}{})\cvarphi^2_{N}(\bxi{i}{0}{}{})g(\bxi{j}{0}{}{})\specCollisionPairs{i}{j}{\bba}{4}\\
&\rule{.3\textwidth}{0pt}+ \dfrac{1}{N^4}\displaystyle\sum_{i}\sum_{j\neq i}g(\bxi{i}{0}{}{})\cvarphi_{N}(\bxi{i}{0}{}{})g(\bxi{j}{0}{}{})\cvarphi_{N}(\bxi{j}{0}{}{})\specNonCollisionPairsFrom{\bba}{m}{i}{j},
\end{align*}
%\end{array}
%\end{equation*}
where %$\cvarphi_{N}$ is as defined in \eqnref{eq:varphi_bar_defn} and
for all $i,j\in [N]$ such that $i\neq j$
\begin{align*}
\specCollisionPairs{i}{j}{\bba}{}
&=
\sum_{k=1}^m \Big|\lowerPartFrom{k}{u_{0}}{\bba}\Big|^2\ind{}\Big(j \in \collisionStartSetSh{k}{i}{\bba}\Big)\Big|\primeParentSet{k}{i}{\bba}{}\Big|, \\%\label{eq:collision cardinality sum} \\
\specNonCollisionPairsFrom{\bba}{m}{i}{j} &= \sum_{k=1}^m \Big(N^2 - \Big|\lowerPartFrom{k}{u_{0}}{\bba}\Big|^2\Big|{\primeParentSet{k}{i}{}{}}\Big|\Big)\ind{}\Big(j \in \collisionStartSetSh{k}{i}{\bba}\Big),%\label{eq:non collision cardinality sum}
\end{align*}
where $u_{0} \in [N]$, and for all $0\leq k < m$ and $i\in[N]$, $\lowerPartFrom{k}{i}{\bba} \defeq \pathsFrom{i}{\bba_{k+1:m}}$ and $\lowerPartFrom{m}{i}{\bba} \defeq \{i\}$.% \label{eq:def lowerpartfrom}
\end{proposition}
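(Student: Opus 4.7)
I would begin from the tensor product identity of \propref{prop:tens_prod_formula} and carry out a combinatorial analysis of paths in $\graph{\bba}{}$ exploiting \assref{ass:A_k extra}. As in the proof of \lemmaref{lem:modular congruence realtion fixed}, symmetry, idempotence and double stochasticity force every $A_k$ to have all non-zero entries equal to $1/n_k$, where $n_k \defeq |\parentSet{k}{i}{\bba}{}|$ is constant in $i$ by the cardinality invariance hypothesis. Combined with $\prod_{k=1}^m A_k = \ones_{1/N}$ and the tree (unique-path) property, this gives $\prod_k n_k = N$ and, crucially, that there is \emph{exactly one} path in $\paths{\bba}{}$ joining each ordered pair $(i_0,i_m)\in[N]^2$. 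Hence every non-zero product $\prod_{k=0}^{m-1} A_{k+1}^{i_{k+1}i_k}$ equals $1/N$, and the right-hand side of \propref{prop:tens_prod_formula} reduces to $N^{-4}$ times a sum over $\paths{\bba}{}\times\paths{\bba}{}$ weighted by $g(\xi_0^{i_0})g(\xi_0^{j_0})\ccomp{1}{m}{i}{j}(\Phi)(\xi_0^{i_0},\xi_0^{j_0})$, with $\Phi=\cvarphi_N^{\otimes 2}$.

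\textbf{Case decomposition.} I would partition this sum into three disjoint classes: (i) $i_0=j_0$; (ii) $i_0\neq j_0$ with at least one collision $i_k=j_k$, $k\in[m]$; (iii) $i_0\neq j_0$ with no such collision. In class (i), row stochasticity makes the inner double sum factorise into $|\pathsFrom{i_0}{\bba}|^2 = N^2$, producing precisely the first summand $N^{-2}\sum_i g^2(\xi_0^i)\cvarphi_N^2(\xi_0^i)$. In classes (ii) and (iii), \eqref{eq:collision operator} replaces $\ccomp{1}{m}{i}{j}(\Phi)(\xi_0^i,\xi_0^j)$ by $\cvarphi_N^2(\xi_0^i)$ and $\cvarphi_N(\xi_0^i)\cvarphi_N(\xi_0^j)$ respectively, so it remains, for each $(i,j)$ with $i\neq j$, to count colliding and non-colliding ordered path pairs starting from columns $i$ and $j$.

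\textbf{Nested equivalence structure.} Commutativity together with symmetry and idempotence implies that each product $A_{k'}\cdots A_{k+1}$ ($0\leq k<k'\leq m$) is symmetric, doubly stochastic and idempotent, so its support defines an equivalence relation on $[N]$; by an argument analogous to that in \lemmaref{lem:modular congruence realtion fixed} (using cardinality invariance), all its classes have equal size. In particular, $\{\primeParentSet{k}{i}{\bba}{}:i\in[N]\}$ is a partition of $[N]$ into equal-size blocks, and these partitions are nested as $k$ grows. Consequently $\{\collisionStartSetSh{k}{i}{\bba}:k\in[m]\}$ partitions $[N]\setminus\{i\}$, and for every $j\neq i$ there is a \emph{unique} $k$ with $j\in\collisionStartSetSh{k}{i}{\bba}$; this justifies the single-term form of the sums defining $\specCollisionPairs{i}{j}{\bba}{}$ and $\specNonCollisionPairsFrom{\bba}{m}{i}{j}$.

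\textbf{Counting colliding pairs.} The crux of the proof is the following uniqueness lemma, which I would establish from the tree property alone: if $i_0$ and $j_0$ lie in the same level-$k$ block and both unique paths $\xi_0^{i_0}\to\xi_{k'}^{\ell}$ and $\xi_0^{j_0}\to\xi_{k'}^{\ell}$ exist for some $k'\geq k$ and $\ell\in[N]$, then they pass through the same level-$k$ vertex, i.e.\ $i_k=j_k$. Otherwise the two distinct level-$k$ vertices $\xi_k^{i_k}$, $\xi_k^{j_k}$ would (both being in the common level-$k$ block, hence both reachable from $\xi_0^{i_0}$) yield two distinct paths $\xi_0^{i_0}\to\xi_{k'}^{\ell}$, violating \assref{ass:A_k extra}. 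It follows that for $(i,j)$ with $j\in\collisionStartSetSh{k}{i}{\bba}$ every colliding path pair meets at some vertex at level exactly $k$ (collisions below $k$ are impossible because the level-$(k-1)$ blocks of $i$ and $j$ are disjoint). Such pairs are parametrised by the meeting vertex $\xi_k^{\ell}$ with $\ell\in\primeParentSet{k}{i}{\bba}{}$ (contributing $|\primeParentSet{k}{i}{\bba}{}|$ choices) together with two downstream continuations, each an element of $\lowerPartFrom{k}{\ell}{\bba}$ of common size $|\lowerPartFrom{k}{u_0}{\bba}|$ by cardinality invariance; this gives exactly $\specCollisionPairs{i}{j}{\bba}{}$ colliding pairs. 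The non-colliding count is the complement within $|\pathsFrom{i}{\bba}|\cdot|\pathsFrom{j}{\bba}|=N^2$ and so equals $\specNonCollisionPairsFrom{\bba}{m}{i}{j}$. Assembling the three cases with the weight $1/N^4$ per ordered path pair yields the claimed identity. The main obstacle is the uniqueness lemma above: without the tree property, colliding pairs could meet only at levels strictly above the first-allowed level $k$, which would destroy the clean product form $|\primeParentSet{k}{i}{\bba}{}|\cdot|\lowerPartFrom{k}{u_0}{\bba}|^2$; the remaining steps are essentially combinatorial bookkeeping.
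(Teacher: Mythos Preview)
Your proposal is correct and follows essentially the same route as the paper's proof: starting from \propref{prop:tens_prod_formula}, using unique paths to reduce each non-zero product $\prod_k A_{k+1}^{i_{k+1}i_k}$ to $1/N$, splitting $\paths{\bba}{}^2$ into the three cases $i_0=j_0$, $i_0\neq j_0$ with a collision, and $i_0\neq j_0$ with no collision, and then counting colliding pairs via the meeting vertex at the unique level $k$ with $j\in\collisionStartSetSh{k}{i}{\bba}$ together with two independent downstream continuations. The paper packages the combinatorics into separate lemmas (the ``collision now or never'' property in \lemmaref{lem:general properties of path sets}\ref{it:collision now or never}, the disjoint decomposition of \lemmaref{lem:partition}, and the bijections of Lemmas~\ref{eq:bijectivity h}--\ref{lem:bijectivity g}) whereas you argue more directly via your ``uniqueness lemma,'' but the content is the same.
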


It is now apparent that in order to study the asymptotic behaviour of the conditional second moment,  one needs to study the quantities $\specCollisionPairs{i}{j}{\bba}{4}$ and
$\specNonCollisionPairsFrom{\bba}{m}{i}{j}$. This involves detailed combinatorial analysis, specific to each of the two butterfly resampling schemes.

%!TEX root = ButterflySampling.tex
\section{Analysis part II - LLN and CLT for butterfly resampling algorithms}
%\section{Convergence analysis part 2 - Radix-\texorpdfstring{$r$}{r} algorithm}
\label{sec:convergence analysis part 2}

The next step towards proving the LLN and CLT for particle filters deploying the butterfly resampling, is to prove the corresponding results for a single application of butterfly resampling.

%In this section, we will apply \theref{thm:Douc and Moulines} to prove LLN and CLT for the two specific instances of the augmented resampling algorithm defined in Sections \ref{sec:fixed radix main results} and \ref{sec:mixed radix main results}, namely the radix-$r$ resampling algorithm and the mixed radix-$r$ algorithm.

% --------------------------------------------------------------------
% --------------------------------------------------------------------
%  LLN AND CLT FOR FIXED RADIX
% --------------------------------------------------------------------
% --------------------------------------------------------------------
\subsection{Radix-\texorpdfstring{$r$}{r} algorithm}
\label{sec:lln and clt fixed radix}

%We now present a LLN and CLT for the radix-$r$ resampling algorithm.
Throughout \secref{sec:lln and clt fixed radix}, $r\geq2$ is a fixed integer and for each $m\geq 1$ we assume $\bba^{(r^m,m)} = \radixMatrices{r}{m}$ as defined in \eqnref{eq:def fixed radix matrices}.

\begin{theorem}\label{thm:butterfly_lln_fixed_radix}
No matter what the distribution of the input random variables $(\xi_{\mathrm{in}}^i)_{i\in[r^m]}$ is, for any $\varphi\in\boundMeas{\ss}$,
\begin{equation*}
\sqrt{\frac{m}{r^m}}\sum_{\varrho\in[r^m m]}X_\varrho^{(r^m,m)}\almostsurely{m\rightarrow\infty}{\P} 0.
\end{equation*}
\end{theorem}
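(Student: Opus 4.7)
The plan is to recognize that this claimed convergence is an essentially direct corollary of the tools already assembled: the block-wise martingale decomposition (\propref{prop:generalized martingale decomposition}) together with the $L^p$ bound of \propref{prop:intro_to_aug_resampling}, with a standard Borel--Cantelli step to upgrade convergence in moments to almost sure convergence.

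First I would check that the relevant assumptions hold. With $N = r^m$, $\ka = 1$, $\calI = \{\{u\}:u\in [N]\}$ and $J = \mathrm{Id}$, \remref{rem:generality of additional assumptions} reduces \assref{ass:partition and cond indep} to \assref{ass:A_k}, and \lemmaref{lem:modular congruence realtion fixed} guarantees the matrices $\bba^{(r^m,m)}=\radixMatrices{r}{m}$ satisfy \assref{ass:A_k}. Note also that $|\calI| = N$ forces $S_{N,m,1} = \sqrt{N/m} = \sqrt{r^m/m}$, so the factor $\sqrt{m/r^m}$ in the theorem is exactly $1/S_{N,m,1}$.

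Second, I would apply \eqref{it:another martingale representation} of \propref{prop:generalized martingale decomposition} to rewrite the sum as a familiar error term:
\begin{equation*}
\sqrt{\frac{m}{r^m}}\sum_{\varrho\in[r^m m]}X_\varrho^{(r^m,m)} = \Bigg(\frac{1}{r^m}\sum_i g(\xi_{\mathrm{in}}^i)\Bigg)\Bigg(\frac{1}{r^m}\sum_i \varphi(\xi_{\mathrm{out}}^i)\Bigg) - \frac{1}{r^m}\sum_i g(\xi_{\mathrm{in}}^i)\varphi(\xi_{\mathrm{in}}^i).
\end{equation*}
The $L^p$ bound \eqref{eq:aug_res_L_p_bound} of \propref{prop:intro_to_aug_resampling}, applied with $p=2$, then gives
\begin{equation*}
\E\Bigg[\Bigg(\sqrt{\frac{m}{r^m}}\sum_{\varrho\in[r^m m]}X_\varrho^{(r^m,m)}\Bigg)^{\!2}\,\Bigg] \leq b_2\,\frac{m}{r^m}\,\|g\|_\infty^2\,\osc{\varphi}^2.
\end{equation*}

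Third, since $r\geq 2$ implies $\sum_{m=1}^\infty m/r^m < \infty$, summing the above bound over $m$ and applying Tonelli shows $\sum_m \big(\sqrt{m/r^m}\sum_\varrho X_\varrho^{(r^m,m)}\big)^2 < \infty$ almost surely, whence the summand tends to $0$ almost surely. This gives the claimed a.s.\ convergence, uniformly in the law of $(\xi_{\mathrm{in}}^i)_{i\in[r^m]}$ since the bound in \propref{prop:intro_to_aug_resampling} holds for every input distribution. There is no real technical obstacle here; the heavy lifting (the martingale decomposition and the BDG estimate) has already been done, and the remaining step is a one-line Borel--Cantelli / Tonelli argument leveraging the geometric decay of $m/r^m$.
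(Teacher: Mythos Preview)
Your proposal is correct and follows essentially the same route as the paper: verify \assref{ass:A_k} via \lemmaref{lem:modular congruence realtion fixed}, invoke the $L^p$ bound from \propref{prop:intro_to_aug_resampling}, and finish with Borel--Cantelli using $\sum_m m/r^m<\infty$. The only cosmetic difference is that you detour through the representation \eqref{it:another martingale representation} before applying \eqref{eq:aug_res_L_p_bound}, whereas the paper appeals to the moment bound on $\sum_\varrho X_\varrho^{(r^m,m)}$ directly (this bound is established in the proof of \propref{prop:intro_to_aug_resampling} via BDG); the content is identical.
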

\begin{proof}
By \lemmaref{lem:modular congruence realtion fixed}, $\radixMatrices{r}{m}$ satisfies \assref{ass:A_k} and hence we can apply \propref{prop:intro_to_aug_resampling} to give, for any $p\geq1$,
\begin{equation*}
\E\left[\left|\sqrt{\frac{m}{r^m}}\sum_{\varrho\in[r^m m]}X_\varrho^{(r^m,m)}\right|^{p}\right]\leq b_p \left(\frac{m}{r^m}\right)^{p/2} \|g\|_{\infty}^p\text{osc}(\varphi)^p,
\end{equation*}
and the claim then follows from the Borel-Cantelli lemma.
\end{proof}

We note that the following result has as a hypothesis a bound on errors associated with certain subsets of the input random variables $(\xi_{\mathrm{in}}^i)_{i\in[N]}$, which is unusual compared to similar results for multinomial resampling, e.g. \citep{smc:the:C04}.
%It is the following central limit theorem, where the requirement for the additional regularity of convergence, i.e.~the block-wise convergence, is manifested. In order to state the result, it suffices to consider the blocks defined in \eqnref{eq:fixed radix blocks}. These blocks can be equivalently expressed in a slightly simpler form as
%\begin{equation}\label{eq:simple fixed radix block}
%\radixBlocks{r}{m}{\ka}{q} = \{i+(q-1)r^{m-\ka+1}:i \in [r^{m-\ka+1}]\},
%\end{equation}
%where $\ka\in[m]$ and $q\in [r^{\ka-1}]$.
%For the radix-$r$ algorithm, it suffices to consider the blocks that are the images of functions $\radixFun{r}{m}{\ka}{q}:[r^{m-\ka+1}]\to [r^m]$ that we define for all $r\geq2$, $m\geq 1$, $\ka\in[m]$, $q \in [r^{\ka-1}]$ and $u \in [r^{m-\ka+1}]$ as
%\begin{equation}\label{eq:simple fixed radix block}
%\radixFun{r}{m}{\ka}{q}(u) = u+(q-1)r^{m-\ka+1}.
%\end{equation}
%where $\ka\in[m]$ and $q\in [r^{\ka-1}]$.
% --------------------------------------------------
%  CLT for butterfly resampling
% --------------------------------------------------
\begin{theorem}\label{thm:butterfly_clt_fixed_radix}
%Fix $r\geq 2$ and for all $m\geq 1$ let $\bba^{(r^m,m)} = \radixMatrices{r}{m}$.
If for all $\varphi\in\boundMeas{\ss}$
there exists $b(\varphi)\in\real$ such that for some $\mu\in\pmeasure{\ss}$, and for all $m\geq 1$, $\ka \in [m]$ and $q\in [r^{\ka-1}]$
%\begin{equation}\label{eq:required conv in prob}
%\E\left[\abs{\frac{1}{r^{m-\ka+1}}\sum_{j\in \radixBlocks{r}{m}{\ka}{q}} \varphi(\xi^j_\mathrm{in}) - \mu(\varphi)}^{2}\right]^{\frac{1}{2}}
%\leq b(\varphi)\sqrt{\frac{m-\ka}{r^m} + \frac{1}{r^{m-\ka+1}}},
%\end{equation}
\begin{equation}\label{eq:required conv in prob}
\E\left[\abs{\frac{1}{r^{m-\ka+1}}\sum_{i\in [r^{m-\ka+1}]} \varphi(\xi^{J(i)}_\mathrm{in}) - \mu(\varphi)}^{2}\right]^{\frac{1}{2}}
\leq b(\varphi)\sqrt{\frac{m-\ka}{r^m} + \frac{1}{r^{m-\ka+1}}},
\end{equation}
where $J(i) \defeq i+(q-1)r^{m-\ka+1}$ for all $i \in [r^{m-\ka+1}]$, then for any $\varphi\in\boundMeas{\ss}$ and any $u\in\real$,
\begin{equation*}
\E\Bigg[\exp\Bigg(iu\sum_{\varrho\in[r^m m]}X_\varrho^{(r^m,m)}\Bigg)\Bigg|\F_0^{(r^m,m)}\Bigg]\inprob{m\rightarrow\infty}\exp\left(-(u^2/2)\sigma^2(\varphi)\right),
\end{equation*}
where
\begin{equation*}
\sigma^2(\varphi)=(1-r^{-1})\mu\left(g\left(\varphi-\frac{\mu(g\varphi)}{\mu(g)}\right)^2\right)\mu(g).
\end{equation*}
\end{theorem}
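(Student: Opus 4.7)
The strategy is to apply the conditional CLT for martingale arrays, Theorem~\ref{thm:Douc and Moulines}, under the triangular-array identification prescribed in Section~\ref{sec:martingale array mapping}: take $n=m$, $\ell_n=r^m m$, $U_{n,\varrho}=X_\varrho^{(r^m,m)}$ and $\G_{n,\varrho}=\F_\varrho^{(r^m,m)}$. The measurability, filtration inclusion, martingale-difference property~\eqref{eq:douc_cond_mart}, and square integrability of the $U_{n,\varrho}$ are furnished directly by Proposition~\ref{prop:generalized martingale decomposition}, whose hypotheses hold because Lemma~\ref{lem:modular congruence realtion fixed} shows $\radixMatrices{r}{m}$ satisfies Assumption~\ref{ass:A_k} (hence Assumption~\ref{ass:partition and cond indep} in the degenerate case $\ka=1$, $\calI=\{\{u\}:u\in[r^m]\}$, $J=\mathrm{id}$; cf.\ Remark~\ref{rem:generality of additional assumptions}). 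The Lindeberg condition~\eqref{eq:douc_cond_neg} is then routine: with $S_{r^m,m,1}=\sqrt{r^m/m}$, Proposition~\ref{prop:generalized martingale decomposition}(iii) gives $|X_\varrho^{(r^m,m)}|\leq \|g\|_\infty\osc{\varphi}/\sqrt{m r^m}$, so for any fixed $\epsilon>0$ the truncation indicators vanish identically for all $m$ large enough.

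The heart of the proof is the conditional-variance condition~\eqref{eq:douc_cond_var}. I would write $A_m\defeq\sum_\varrho\E[(X_\varrho^{(r^m,m)})^2|\F_{\varrho-1}^{(r^m,m)}]$ and $B_m\defeq\E[A_m|\F_0^{(r^m,m)}]$, and analyse $B_m$ first and then the residual $A_m-B_m$. By martingale orthogonality $B_m=\E[(\sum_\varrho X_\varrho^{(r^m,m)})^2|\F_0^{(r^m,m)}]$, so $(m/r^m)B_m$ is given \emph{explicitly} by Proposition~\ref{prop:tensor product formulation}. That proposition is applicable because the radix-$r$ matrices also satisfy Assumption~\ref{ass:A_k extra}: symmetry, commutativity and idempotence are inherited from the Kronecker factors $I$ and $\ones_{1/r}$ via the mixed product property; row-uniformity is in Lemma~\ref{lem:modular congruence realtion fixed}; and uniqueness of paths follows from the nested structure of the equivalence classes $\stackrel{(k,r)}{\sim}$ across levels.

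The decisive step is then to show $B_m\to\sigma^2(\varphi)$ in $L^2$. Elementary counting in the radix-$r$ butterfly yields $|\lowerPartFrom{k}{i}{\bba}|=r^{m-k}$, $|\primeParentSet{k}{i}{\bba}{}|=r^k$ and $|\collisionStartSetSh{k}{i}{\bba}|=r^{k-1}(r-1)$, and crucially the ancestor block $\primeParentSet{k}{i}{\bba}{}$ is a consecutive block of size $r^k$ of precisely the form appearing in hypothesis~\eqref{eq:required conv in prob} (with $\ka=m-k+1$). Substituting these combinatorial values into the two sums of Proposition~\ref{prop:tensor product formulation} and invoking~\eqref{eq:required conv in prob} with $\varphi$ replaced by $g$, $g\varphi$ and $g\varphi^2$ replaces each block average by its $\mu$-limit in $L^2$. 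After summation over $k=1,\ldots,m$ the $(1/m)$-factor turns the level sums into averages; the non-collision cross term vanishes because the block averages of $g\cvarphi_N$ converge to $\mu(g\varphi)-\mu(g)\cdot\mu(g\varphi)/\mu(g)=0$ by the centring built into $\cvarphi_N$; and the remaining contributions combine into $(1-r^{-1})\mu(g(\varphi-\mu(g\varphi)/\mu(g))^2)\mu(g)$, with the factor $1-r^{-1}$ arising from the ratio $|\collisionStartSetSh{k}{i}{\bba}|/|\primeParentSet{k}{i}{\bba}{}|=1-r^{-1}$. I expect this combinatorial-analytic step to be the main technical obstacle.

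Finally, $A_m-B_m\to 0$ in probability follows by standard telescoping along the filtration $\sigma(\xi_0,\ldots,\xi_q)$, $q=0,\ldots,m-1$: each increment $\E[A_m|\xi_0,\ldots,\xi_q]-\E[A_m|\xi_0,\ldots,\xi_{q-1}]$ is a martingale difference, and the pointwise bound $|X_\varrho^{(r^m,m)}|^2\leq \|g\|_\infty^2\osc{\varphi}^2/(m r^m)$ from Proposition~\ref{prop:generalized martingale decomposition}(iii) gives $\E[(A_m-B_m)^2]=O(m^{-1})$ after orthogonality-based $L^2$ estimates. Combining the three pieces verifies all hypotheses of Theorem~\ref{thm:Douc and Moulines} and yields the stated convergence of the conditional characteristic function.
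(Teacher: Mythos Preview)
Your overall strategy matches the paper's: apply Theorem~\ref{thm:Douc and Moulines} via the array of Section~\ref{sec:martingale array mapping}, dispose of the Lindeberg condition using the pointwise bound in Proposition~\ref{prop:generalized martingale decomposition}(iii), and split the conditional-variance sum as $B_m=\E[(\sum_\varrho X_\varrho)^2\mid\F_0]$ plus a remainder $A_m-B_m=\sum_\varrho Z_\varrho$; this is exactly the paper's decomposition~\eqref{eq:2nd moment part}--\eqref{eq:def Z}. Your sketch for $B_m\to\sigma^2(\varphi)$ is also in line with the paper's Proposition~\ref{prop:var_conv_fixed_radix}: the combinatorial counts you state are correct, and the hypothesis~\eqref{eq:required conv in prob} is indeed applied to block averages over the sets $\collisionStartSetSh{k}{i}{\bba}$, which for radix-$r$ decompose into $r-1$ consecutive blocks of size $r^{k-1}$ of precisely the form appearing in the hypothesis (this is the content of Lemma~\ref{eq:convergence of weigted subsample integrals} in the supplement).

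The genuine gap is in your treatment of $A_m-B_m$. Telescoping along $\G_q=\sigma(\xi_0,\ldots,\xi_q)$ does produce orthogonal increments $D_q=\E[A_m\mid\G_q]-\E[A_m\mid\G_{q-1}]$, but the only bound available from Proposition~\ref{prop:generalized martingale decomposition}(iii) is $|D_q|\leq 2(m-q)\|g\|_\infty^2\osc{\varphi}^2/m$, which yields $\sum_q\E[D_q^2]=O(m)$, not $O(m^{-1})$; even the trivial bound $|A_m-B_m|\leq 2\|g\|_\infty^2\osc{\varphi}^2$ is sharper and still does not give convergence. The $Z_\varrho$ are \emph{not} martingale differences with respect to $(\F_\varrho)$, and no generic orthogonality argument closes this. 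The paper instead expands $\E[(\sum_\varrho Z_\varrho)^2]$ into diagonal and off-diagonal parts and shows that most cross terms $\E[Z_\varrho Z_{\varrho'}]$ vanish by proving (Lemma~\ref{lem:conditional independence in graph}) that vertices lying in distinct sub-butterflies $\vertexset{\bba}{}(k,w)$ are conditionally independent given $\F_0$, and then counting (Lemma~\ref{lem:independent pair count fixed}) that the number $a_m$ of remaining, possibly dependent, pairs satisfies $a_m/(m^2r^{2m})\to0$. This butterfly-specific conditional-independence combinatorics is the missing ingredient in your argument and is precisely Proposition~\ref{prop:var_vanish_fixed_radix}.
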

% -----------
\begin{proof}
In order to apply \theref{thm:Douc and Moulines}, by the discussion in \secref{sec:martingale array mapping}, we need to verify conditions \eqref{eq:douc_cond_mart}-\eqref{eq:douc_cond_var}. Condition \eqref{eq:douc_cond_mart} holds immediately by \propref{prop:generalized martingale decomposition}\ref{eq:zero expectations}. To check \eqref{eq:douc_cond_neg}, we have by \eqref{eq:boundedness of X} that
\begin{eqnarray*}
&&\sum_{\varrho\in[r^m m]} \E \bigg[ \Big(X_\varrho^{(r^m,m)}\Big)^2 \mathbb{I}\Big\{\Big|X_\varrho^{(r^m,m)}\Big|\geq \epsilon\Big\} \,\Big|\, \F_{\varrho-1}^{(r^m,m)} \bigg]\nonumber\\
&&\qquad\leq \infnorm{g}^2\osc{\varphi}^2 \ind{}\bigg(\frac{\infnorm{g}\osc{\varphi}}{\sqrt{m r^m}} \geq \epsilon\bigg) \surely{m\rightarrow\infty} 0.
\end{eqnarray*}
It remains to verify \eqref{eq:douc_cond_var}, i.e.,
\begin{equation}
\sum_{\varrho\in[r^m m]} \E \left[ \left.  \left(X_\varrho^{(r^m,m)}\right)^2 \right| \F_{\varrho-1}^{(r^m,m)} \right] \inprob{m\rightarrow\infty} \sigma^2(\varphi)\text{, for some}~\sigma^2(\varphi)>0.\label{eq:clt_fixed_radix_cond_var_conv}
\end{equation}
To do this, we first use \propref{prop:generalized martingale decomposition}\ref{eq:zero expectations} and the tower property of conditional expectations to obtain the decomposition:
%\begin{align}
%\sum_{\varrho\in[r^m m]} \E \Bigg[  \Bigg(X_\varrho^{(r^m,m,\varphi)}\Bigg)^2 \Bigg| \F_{\varrho-1}^{(r^m,m)} \Bigg]
%&= \E \Bigg[ \Bigg(\sum_{\varrho\in[r^m m]} X_\varrho^{(r^m,m,\varphi)}\Bigg)^2\Bigg|\F_0^{(r^m,m)}\Bigg]\label{eq:2nd moment part}\\
%&+\sum_{\varrho\in[r^m m]}Z_\varrho^{(r^m,m,\varphi)},\nonumber
%\end{align}
%where
%\begin{equation}\label{eq:def Z}
%Z_\varrho^{(r^m,m,\varphi)}\defeq\E \left[ \left.  \left(X_\varrho^{(r^m,m,\varphi)}\right)^2 \right| \F_{\varrho-1}^{(r^m,m)} \right]- \E \left[ \left.\left(X_\varrho^{(r^m,m,\varphi)}\right)^2\right|\F_0^{(r^m,m)}\right].
%\end{equation}
\begin{align}
&\sum_{\varrho\in[r^m m]} \E \Bigg[  \Bigg(X_\varrho^{(r^m,m)}\Bigg)^2 \Bigg| \F_{\varrho-1}^{(r^m,m)} \Bigg] \nonumber\\
&= \E \Bigg[ \Bigg(\sum_{\varrho\in[r^m m]} X_\varrho^{(r^m,m)}\Bigg)^2\Bigg|\F_0^{(r^m,m)}\Bigg]+\sum_{\varrho\in[r^m m]}Z_\varrho^{(r^m,m)},\label{eq:2nd moment part}
\end{align}
where
\begin{equation}\label{eq:def Z}
Z_\varrho^{(r^m,m)}\defeq\E \bigg[ \bigg(X_\varrho^{(r^m,m)}\bigg)^2 \bigg| \F_{\varrho-1}^{(r^m,m)} \bigg]- \E \bigg[ \bigg(X_\varrho^{(r^m,m)}\bigg)^2\bigg|\F_0^{(r^m,m)}\bigg].
\end{equation}
By \propref{prop:fixed radix satisfies assumptions} in \secref{sec:conditional independece structure fixed radix} of the \ref{suppA}, $\radixMatrices{r}{m}$ satisfies \assref{ass:A_k extra} and hence Propositions \ref{prop:tens_prod_formula}-\ref{prop:tensor product formulation} together with the hypothesis \eqnref{eq:required conv in prob} can be used to establish Proposition \ref{prop:var_conv_fixed_radix} in \secref{sub:collision_analysis_fixed_radix} of the \ref{suppA}, from which it follows that
% uses the hypothesis \eqnref{eq:required conv in prob} and Propositions \ref{prop:tens_prod_formula}-\ref{prop:tensor product formulation} to show that
  \[
  \E \Bigg[ \Bigg(\sum_{\varrho\in[r^m m]} X_\varrho^{(r^m,m)}\Bigg)^2\Bigg|\F_0^{(r^m,m)}\Bigg] \inprob{m\rightarrow\infty}\sigma^2(\varphi).
  \]
   Proposition \ref{prop:var_vanish_fixed_radix}, in \secref{sub:independence_analysis_fixed_radix} of the \ref{suppA}, shows that $Z_\varrho^{(r^m,m)}\inprob{m\rightarrow\infty} 0$. This establishes \eqnref{eq:clt_fixed_radix_cond_var_conv} and the proof of the theorem is complete.
%   and \ref{prop:var_vanish_fixed_radix} given in Sections \ref{sub:collision_analysis_fixed_radix} and \ref{}, respectively, in the \ref{suppA}.
% \propref{prop:var_conv_fixed_radix}, given in \mbox{\secref{sub:collision_analysis_fixed_radix}} of the \ref{suppA} and \propref{prop:var_vanish_fixed_radix}, given in \secref{sub:independence_analysis_fixed_radix} of the \ref{suppA}.
%N Moreover, Propositions \ref{prop:tens_prod_formula} and \ref{prop:tensor product formulation} are needed in the proof of \mbox{\propref{prop:var_conv_fixed_radix}}.
\end{proof}

% --------------------------------------------------------------------------
% --------------------------------------------------------------------------
%  LLN AND CLT MIXED RADIX
% --------------------------------------------------------------------------
% --------------------------------------------------------------------------
\subsection{Mixed radix-\texorpdfstring{$r$}{r} algorithm}
\label{sec:lln and clt mixed radix}

For the mixed radix-$r$ algorithm, we fix, throughout \secref{sec:lln and clt mixed radix}, $m=2$ and $r\geq 2$, and for all $c\geq 1$ we assume $\bba^{(rc,2)} = \mradixMatrices{r}{c}$ as defined in \eqnref{eq:mradix matrix definition}. Analogous to \theref{thm:butterfly_lln_fixed_radix}, we have by \propref{prop:intro_to_aug_resampling}:

\begin{theorem}\label{thm:butterfly_lln_mixed_radix}
No matter what the distribution of the input random variables $(\xi_{\mathrm{in}}^i)_{i\in[rc]}$ is, for any $\varphi\in\boundMeas{\ss}$,
\begin{equation*}
\sqrt{\frac{2}{rc}}\sum_{\varrho\in[2rc]}X_\varrho^{(rc,2)}\almostsurely{c\rightarrow\infty}{\P} 0.
\end{equation*}
\end{theorem}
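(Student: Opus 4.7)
The proof will run completely parallel to that of \theref{thm:butterfly_lln_fixed_radix}, since the only ingredients required are a moment bound from \propref{prop:intro_to_aug_resampling} and the Borel--Cantelli lemma. First I will invoke the lemma proved immediately after \eqref{eq:mradix matrix definition} to verify that for every $c\geq 1$, the pair $\mradixMatrices{r}{c}$ satisfies \assref{ass:A_k}. This hypothesis is precisely what \propref{prop:intro_to_aug_resampling} requires, so that proposition applies with $N = rc$ and $m = 2$.

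Next, by \propref{prop:generalized martingale decomposition}\ref{it:martingale representation} in the default instantiation $\ka=1$, $\calI=\{\{u\}:u\in[N]\}$ and $J=\id$ (as highlighted in the ``\textbf{Warning}'' paragraph following the proof of \propref{prop:intro_to_aug_resampling}), the rescaled sum $\sqrt{2/(rc)}\sum_{\varrho\in[2rc]}X_\varrho^{(rc,2)}$ coincides with the centred expression whose $L^p$-norm is controlled by \eqref{eq:aug_res_L_p_bound}. Specialising that bound to $N=rc$, $m=2$ therefore gives, for every $p\geq 1$,
$$\E\Bigg[\Bigg|\sqrt{\frac{2}{rc}}\sum_{\varrho\in[2rc]}X_\varrho^{(rc,2)}\Bigg|^{p}\Bigg]\leq b_p\left(\frac{2}{rc}\right)^{p/2}\infnorm{g}^p\osc{\varphi}^p,$$
uniformly in the distribution of $(\xi^i_{\mathrm{in}})_{i\in[rc]}$.

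Finally, choosing any $p>2$ (for instance $p=3$) makes the right-hand side summable in $c$. Markov's inequality together with the Borel--Cantelli lemma then yield the claimed almost sure convergence along the sequence $c = 1,2,\ldots$.

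I do not anticipate any genuine obstacle to carrying out this plan: the structural lemma that $\mradixMatrices{r}{c}$ satisfies \assref{ass:A_k} is already established, and the intricate combinatorial machinery of \secref{sec:conditional variance and collision analysis main} -- which is essential for the CLT -- is not needed for the LLN. The only conceptual remark worth highlighting is that because $m=2$ is fixed while $N=rc\to\infty$, the resulting moment bound decays at the parametric rate $(rc)^{-p/2}$, which is strictly better than the $(m/r^m)^{p/2}$ rate exploited in \theref{thm:butterfly_lln_fixed_radix}; Borel--Cantelli therefore applies without any additional argument.
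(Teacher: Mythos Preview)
Your proposal is correct and follows exactly the paper's approach: the paper's proof simply reads ``Similar to the proof of \theref{thm:butterfly_lln_fixed_radix}'', i.e.\ verify \assref{ass:A_k} for $\mradixMatrices{r}{c}$, apply \propref{prop:intro_to_aug_resampling} with $N=rc$, $m=2$, and conclude via Borel--Cantelli. Your remark that one needs $p>2$ here (rather than any $p\geq1$ as in the radix-$r$ case) is a correct refinement, since $(2/(rc))^{p/2}$ is summable in $c$ only for $p>2$.
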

\begin{proof}
Similar to the proof of \theref{thm:butterfly_lln_fixed_radix}.
\end{proof}
Similarly as in the case of the radix-$r$ algorithm, a hypothesis on the errors associated with certain sub-populations of the input random variables plays a role in the CLT for the mixed radix-$r$ algorithm. %In this case the blocks of interest are the images of functions $\mradixFun{r}{c}{\ka}{q}:[cr^{2-\ka}]\to[rc]$ that we define for all $r\geq 2$, $c\geq 1$, $\ka \in \{1,2\}$, $q \in [r^{\ka-1}]$ and $u\in[cr^{2-\ka}]$ as
%\begin{equation}\label{eq:simple mixed radix block}
%\mradixFun{r}{c}{\ka}{q}(u) = u + (q-1)cr^{2-\ka}.
%\end{equation}
%for all $\ka \in \{1,2\}$ and $q \in [r^{\ka-1}]$.
%Similarly as in the case of the radix-$r$ algorithm, the block-wise convergence plays a role in proving the CLT for the mixed radix-$r$ algorithm. In this case the sample blocks of interest defined in \eqnref{eq:mixed radix blocks} can be equivalently expressed in  a slightly simpler form as
%\begin{equation}\label{eq:simple mixed radix block}
%\mradixBlocks{r}{c}{\ka}{q} = \bigg\{i + (q-1)\frac{rc}{r^{\ka-1}}:i \in \bigg[\frac{rc}{r^{\ka-1}}\bigg]\bigg\},
%\end{equation}
%for all $\ka \in \{1,2\}$ and $q \in [r^{\ka-1}]$.
\begin{theorem}\label{thm:butterfly_clt_mixed_radix}
%Fix $r\geq 2$ and for all $c\geq 1$ let $\bba^{(rc,2)} = \mradixMatrices{r}{c}$.
If for all $\varphi\in\boundMeas{\ss}$
there exists $b(\varphi)\in\real$ such that for some $\mu\in\pmeasure{\ss}$, and for all $c\geq 1$, $\ka \in \{1,2\}$ and $q \in [r^{\ka-1}]$
%\begin{equation}\label{eq:required conv in prob mixed}
%\E\left[\abs{\frac{r^{\ka-1}}{rc}\sum_{j\in \mradixBlocks{r}{c}{\ka}{q}}\varphi(\xi^j_0) - \mu(\varphi)}^{2}\right]^{\frac{1}{2}}
%\leq b(\varphi)\sqrt{\frac{2-\ka}{rc} + \frac{r^{\ka-1}}{rc}}.
%\end{equation}
\begin{equation}\label{eq:required conv in prob mixed}
\E\left[\abs{\frac{r^{\ka-1}}{rc}\sum_{i\in[cr^{2-\ka}]}\varphi(\xi^{J(i)}_0) - \mu(\varphi)}^{2}\right]^{\frac{1}{2}}
\leq b(\varphi)\sqrt{\frac{2-\ka}{rc} + \frac{r^{\ka-1}}{rc}}.
\end{equation}
where $J(i) = i + (q-1)cr^{2-\ka}$ for all $i \in [cr^{2-\ka}]$,
%then the mixed radix-$r$ butterfly algorithm \myemph{***is this what we call it??***} obeys the following conditional CLT:
then for any $\varphi\in\boundMeas{\ss}$ and any $u\in\real$,
\begin{equation*}
\E\left[\left.\exp\left(iu\sum_{\varrho\in[2rc]}\mrbfX{\varrho}{rc}{2}{\varphi}\right)\right|\mrbfF{0}{rc}{2}\right]\inprob{c\to\infty}\exp\left(-(u^2/2)\sigma^2(\varphi)\right),
\end{equation*}
where
\begin{equation*}
\sigma^2(\varphi)=\Bigg(1-\frac{1}{2r}\Bigg)\mu\Bigg(g\bigg(\varphi-\frac{\mu(g\varphi)}{\mu(g)}\bigg)^2\Bigg)\mu(g).
\end{equation*}
\end{theorem}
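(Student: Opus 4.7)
The plan is to mirror the proof of Theorem \ref{thm:butterfly_clt_fixed_radix} using the martingale array mapping from \secref{sec:martingale array mapping} for the mixed radix algorithm ($n=c$, $\ell_n=2rc$, $U_{n,\varrho}=\mrbfX{\varrho}{rc}{2}{\varphi}$, $\G_{n,\varrho}=\mrbfF{\varrho}{rc}{2}$), and invoke \theref{thm:Douc and Moulines}. The preliminary requirements ($\G_{n,\varrho-1}\subseteq \G_{n,\varrho}$, measurability, finite conditional second moments) follow from \eqref{eq:piecewise sigma fields}, \propref{prop:generalized martingale decomposition}\ref{eq:measurability}, and the bound \eqref{eq:boundedness of X}, exactly as noted in \secref{sec:martingale array mapping}. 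Condition \eqref{eq:douc_cond_mart} is \propref{prop:generalized martingale decomposition}\ref{eq:zero expectations}, and the Lindeberg-type condition \eqref{eq:douc_cond_neg} follows from \eqref{eq:boundedness of X}, since
\begin{equation*}
\sum_{\varrho\in[2rc]}\E\!\left[\bigl(\mrbfX{\varrho}{rc}{2}{\varphi}\bigr)^2\mathbb{I}\bigl\{|\mrbfX{\varrho}{rc}{2}{\varphi}|\geq\epsilon\bigr\}\,\Big|\,\mrbfF{\varrho-1}{rc}{2}\right] \leq 2\|g\|_\infty^2\,\osc{\varphi}^2\,\mathbb{I}\!\left(\tfrac{\|g\|_\infty\osc{\varphi}}{\sqrt{2rc}}\geq \epsilon\right),
\end{equation*}
which vanishes deterministically as $c\to\infty$.

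The substance of the proof is verification of the variance condition \eqref{eq:douc_cond_var}. As in \eqref{eq:2nd moment part}, I would decompose
\begin{equation*}
\sum_{\varrho\in[2rc]}\E\!\left[\bigl(\mrbfX{\varrho}{rc}{2}{\varphi}\bigr)^2\,\Big|\,\mrbfF{\varrho-1}{rc}{2}\right] = \E\!\left[\Bigl(\sum_{\varrho\in[2rc]}\mrbfX{\varrho}{rc}{2}{\varphi}\Bigr)^2\,\Big|\,\mrbfF{0}{rc}{2}\right] + \sum_{\varrho\in[2rc]}\mrbfZ{\varrho}{rc,2},
\end{equation*}
where $\mrbfZ{\varrho}{rc,2}$ is the analogue of \eqref{eq:def Z}. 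To handle the first term, I would first check that the family $\mradixMatrices{r}{c}$ defined in \eqref{eq:mradix matrix definition} satisfies \assref{ass:A_k extra}: symmetry, commutativity and idempotence follow by the mixed product property \eqref{eq:mixed product property} (mirroring the computations in the proof of \lemmaref{lem:modular congruence realtion fixed}); property \ref{it:cardinality invariance} is immediate from the entry characterisation in \eqref{eq:mixed_congruence_rel}; and property \ref{it:unique paths} holds because the set $\{j:A_1^{ij}>0\}\cap\{j:A_2^{kj}>0\}$ is a singleton for each $(i,k)$, owing to the orthogonal block structure of the two matrices. Once this is in hand, \propref{prop:tens_prod_formula} and \propref{prop:tensor product formulation} apply and give an explicit expression for the first term in terms of $\specCollisionPairs{i}{j}{\bba}{}$ and $\specNonCollisionPairsFrom{\bba}{2}{i}{j}$.

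The main obstacle is the combinatorial analysis of these quantities for $\mradixMatrices{r}{c}$, analogous to what Propositions \ref{prop:var_conv_fixed_radix} and \ref{prop:var_vanish_fixed_radix} accomplish in the fixed radix case. Because here $m=2$, the sum over $k\in\{1,2\}$ in the formulae of \propref{prop:tensor product formulation} has only two terms: one corresponds to the "large block" level of size $c$, which contributes an asymptotically vanishing self-interaction term (the $1/c$-scale cost of resampling within large blocks), and the other corresponds to the level of block size $r$, which contributes the residual $1/(2r)$ term responsible for the constant factor $(1-1/(2r))$ in $\sigma^2(\varphi)$. Using the hypothesis \eqref{eq:required conv in prob mixed} with $d\in\{1,2\}$ to replace the block sums $(rc)^{-1}\sum_{i\in\calI_u}\varphi(\xi_0^i)$ by $\mu(\varphi)$ up to negligible $L^2$-error, and combining with the counts of $|\lowerPartFrom{k}{u_0}{\bba}|$ and $|\primeParentSet{k}{i}{\bba}{}|$ which are $c$ and $r$ respectively at the two levels, the first term converges in probability to the claimed $\sigma^2(\varphi)$. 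Finally, the remainder $\sum_\varrho \mrbfZ{\varrho}{rc,2}\inprob{} 0$ is established by an argument parallel to \propref{prop:var_vanish_fixed_radix}: expanding the conditional second moment at $\mrbfF{\varrho-1}{rc}{2}$ against the one at $\mrbfF{0}{rc}{2}$ and using \propref{prop:generalized martingale decomposition}\ref{eq:measurability}-\ref{eq:zero expectations} together with the bound \eqref{eq:boundedness of X} to control the fluctuation, now greatly simplified by the fact that $m=2$ is constant. Assembling these pieces and invoking \theref{thm:Douc and Moulines} yields the conditional characteristic function limit.
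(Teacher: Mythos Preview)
Your proposal is correct and follows essentially the same route as the paper: verify the hypotheses of \theref{thm:Douc and Moulines} via the mapping in \secref{sec:martingale array mapping}, decompose the conditional second moment as in \eqref{eq:2nd moment part}, invoke the fact that $\mradixMatrices{r}{c}$ satisfies \assref{ass:A_k extra} (the paper's \propref{prop:mixed radix satisfies assumptions}) so that Propositions \ref{prop:tens_prod_formula}--\ref{prop:tensor product formulation} apply, and then carry out the collision analysis (the paper's \propref{prop:var_conv_mixed_radix}) and the independence-pair counting for the remainder (the paper's \propref{prop:var_vanish_mixed_radix}). Your heuristic that one of the two levels $k\in\{1,2\}$ contributes a vanishing term is not quite right---in the actual computation both $k=1$ and $k=2$ contribute nontrivially to the limit, combining to give the factor $(1-1/(2r))$---but this does not affect the soundness of the strategy, only the bookkeeping when you work out the details.
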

\begin{proof}

The proof is similar to that of \theref{thm:butterfly_clt_fixed_radix}, with the exceptions that we use \propref{prop:mixed radix satisfies assumptions} in \secref{sec:conditional independece structure mixed radix} of the \ref{suppA} instead of \propref{prop:fixed radix satisfies assumptions}, \propref{prop:var_conv_mixed_radix} of \secref{sec:convergence of the conditional variance mixed} in the \ref{suppA} instead of \propref{prop:var_conv_fixed_radix}, and \mbox{\propref{prop:var_vanish_mixed_radix}} of \secref{sec:independence analysis mixed} in the \ref{suppA} instead of \propref{prop:var_vanish_fixed_radix}. Also, the hypothesis \eqnref{eq:required conv in prob mixed} as well as Propositions \ref{prop:tens_prod_formula} and \ref{prop:tensor product formulation} are needed in the proof of \propref{prop:var_conv_mixed_radix}.
\end{proof}

%!TEX root = ButterflySampling.tex
\section{Analysis part III - particle filters}
\label{sec:convergence of particle filters}

% --------------------------------------------------------------------
% --------------------------------------------------------------------
%  CONVERGENCE OF ABSOLUTE MOMENTS
% --------------------------------------------------------------------
% --------------------------------------------------------------------
%\subsection{Bound for the absolute moments of error}
%\label{sec:convergence of absolute moments in filter}
\newcommand{\amC}{b_n(\varphi,p)}
\newcommand{\amChat}{\hat{b}_n(\varphi,p)}

Finally, we address the proofs of the two main results of the paper, Theorems \ref{thm:radix-r} and \ref{thm:mixed radix-r}. To extend the results of \secref{sec:convergence analysis part 2} to the particle filter we need to ensure that the hypotheses of Theorems \ref{thm:butterfly_clt_fixed_radix} and \ref{thm:butterfly_clt_mixed_radix} are valid and that their validity is preserved throughout the filtering sequence. The next result, when applied with appropriate $\bba^{(N,m)}$, $\calI$, $d$ and $J$, allows us to do this, by quantifying the errors associated with certain sub-populations of the particle system.

%Keeping in mind the ultimate goal of proving the condition \eqnref{eq:douc_cond_var} of \theref{thm:Douc and Moulines}, we see that by \propref{prop:tensor product formulation}, the task has boiled down to analysing the asymptotics of \eqnref{eq:tensor product decomposition}. In order to do this, the nested sums on the r.h.s.~of \eqnref{eq:tensor product decomposition} will be decomposed in to more manageable partial sums each representing a specific sub-population of the entire sample. It then remains to analyse the asymptotics of these partial sums in order to establish the asymptotics of \eqnref{eq:tensor product decomposition}.

% DON'T DELETE !!!
%In this section we state the convergence in mean of order $p$ for a particle filter deploying augmented resampling not only for the entire sample population of size $N$ but certain sub-populations. Although the convergence of sub-populations may be of interest as such, the main motivation for the following analysis is that sub-population convergence turns out to play in important role in our chosen strategy of proving the CLT for the augmented resmpling particle filter.

\begin{proposition}\label{prop:subsample absolute moment convergence}
%If the matrices $(A_n)_{n \in [m]}$ and sets $\calI^{(N,m,k)} = (\calI^{(N,m,k)}_{n})_{n\in [\np{m}{k}]}$ satisfy \assref{ass:A_k} and \assref{ass:partition and cond indep} for some $k\in[m]$.
Fix $N,m \geq 1$, $\ka\in[m]$ and a partition $\calI$, and let $(\zeta_n^i,\hat{\zeta}_n^i)_{n\geq 0,i\in[r^m]}$ be the random variables associated with the augmented resampling particle filter deploying matrices $\bba^{(N,m)}$. If the triple $(\bba^{(N,m)},\calI,d)$ satisfies \assref{ass:partition and cond indep}, then for all $n\geq 0$, $p>1$, $\varphi \in \boundMeas{\ss}$ there exist $\amC,\amChat \in \real{}$, depending only on $n$, $p$ and $\varphi$, such that for all $J \in \partitionMap{\calI}$
\begin{equation*}
\E\left[\abs{\frac{1}{\card{\calI}}\sum_{i\in[\card{\calI}]} \varphi(\zeta^{J(i)}_n) - \pi_n(\varphi)}^p\right]^{\frac{1}{p}}
\leq \amC\sqrt{\frac{m-\ka}{N} + \frac{1}{\card{\calI}}},
\end{equation*}
and
\begin{equation*}
\E\left[\abs{\frac{1}{\card{\calI}}\sum_{i\in[\card{\calI}]} \varphi(\hat{\zeta}^{J(i)}_n) - \hat{\pi}_n(\varphi)}^p\right]^{\frac{1}{p}}
\leq \amChat\sqrt{\frac{m-\ka}{N} + \frac{1}{\card{\calI}}}.
\end{equation*}
\end{proposition}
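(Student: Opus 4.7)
The plan is to prove the statement by induction on $n$, uniformly over every triple $(\bba^{(N,m)},\calI,d)$ satisfying \assref{ass:partition and cond indep}. It is important to allow $(\calI,d)$ to vary freely during the induction because the resampling update at time $n$ must invoke the bound at time $n$ for the \emph{trivial} partition $\calI_0 \defeq \{\{u\}:u\in[N]\}$ with $d=1$ -- by \remref{rem:generality of additional assumptions} this is always a valid triple whenever $\bba^{(N,m)}$ satisfies \assref{ass:A_k}, which is part of the hypothesis. The central tool at each resampling step is \propref{prop:generalized martingale decomposition} combined with the Burkholder-Davis-Gundy (BDG) inequality.

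\textbf{Base case and mutation step.} For $n=0$, $(\zeta_0^i)_{i\in[N]}$ are i.i.d.\ from $\pi_0$, so Marcinkiewicz-Zygmund gives an $L^p$ bound $c_p\osc{\varphi}/\sqrt{|\calI|} \leq c_p\osc{\varphi}\,S_{N,m,d}^{-1}$, uniformly in $J\in\partitionMap{\calI}$. For $n\geq 1$, using $\pi_n=\hat{\pi}_{n-1}f$, I decompose
\begin{equation*}
\frac{1}{|\calI|}\sum_i \varphi(\zeta_n^{J(i)}) - \pi_n(\varphi)
\;=\; \frac{1}{|\calI|}\sum_i\bigl[\varphi(\zeta_n^{J(i)})-f(\varphi)(\hat{\zeta}_{n-1}^{J(i)})\bigr]
+\Bigl[\frac{1}{|\calI|}\sum_i f(\varphi)(\hat{\zeta}_{n-1}^{J(i)}) - \hat{\pi}_{n-1}(f(\varphi))\Bigr].
\end{equation*}
Conditional on $\sigma(\hat{\zeta}_{n-1})$, the first sum consists of $|\calI|$ centered, independent terms bounded by $\osc{\varphi}$, so BDG yields an $L^p$ bound of $c_p\osc{\varphi}/\sqrt{|\calI|}$. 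The second bracket is bounded by the inductive hypothesis applied to $\hat{\zeta}_{n-1}$ with $f(\varphi)\in\boundMeas{\ss}$, the same $\calI$, the same $d$, and the same $J$.

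\textbf{Resampling step.} For $\hat{\zeta}_n$, apply \propref{prop:generalized martingale decomposition} with $g=g_n$ on input $\zeta_n$. By item (iii), the piecewise bounds give a deterministic quadratic-variation bound
\begin{equation*}
\sum_\varrho \bigl(X_\varrho^{(N,m)}\bigr)^2 \;\leq\; S_{N,m,d}^2 \,\|g_n\|_\infty^2 \osc{\varphi}^2\Bigl[\tfrac{m-d}{N}+\tfrac{1}{|\calI|}\Bigr] \;=\; \|g_n\|_\infty^2\osc{\varphi}^2,
\end{equation*}
so BDG gives $\bigl\|\sum_\varrho X_\varrho^{(N,m)}\bigr\|_{L^p}\leq c_p\|g_n\|_\infty\osc{\varphi}$, and the identity (iv) yields $G_N\hat{H}-G_N^\varphi = O_{L^p}(S_{N,m,d}^{-1})$, where $G_N\defeq\tfrac{1}{N}\sum_i g_n(\zeta_n^i)$, $G_N^\varphi\defeq\tfrac{1}{N}\sum_i g_n(\zeta_n^i)\varphi(\zeta_n^i)$ and $\hat{H}\defeq\tfrac{1}{|\calI|}\sum_i\varphi(\hat{\zeta}_n^{J(i)})$. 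Writing
\begin{equation*}
G_N\bigl[\hat{H}-\hat{\pi}_n(\varphi)\bigr] = \bigl[G_N\hat{H}-G_N^\varphi\bigr] + \bigl[G_N^\varphi-\pi_n(g_n\varphi)\bigr] - \hat{\pi}_n(\varphi)\bigl[G_N-\pi_n(g_n)\bigr],
\end{equation*}
the last two brackets are $O_{L^p}(N^{-1/2})$ by the $\zeta_n$-bound just established with the trivial partition $\calI_0$ applied to $g_n\varphi$ and $g_n$. Hence $G_N[\hat{H}-\hat{\pi}_n(\varphi)] = O_{L^p}(S_{N,m,d}^{-1})$.

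\textbf{Removing the denominator and main obstacle.} To convert this into a bound on $\hat{H}-\hat{\pi}_n(\varphi)$, split over $E\defeq\{G_N\geq \pi_n(g_n)/2\}$ and $E^c$. On $E$, the factor $1/G_N\leq 2/\pi_n(g_n)$ delivers the required $L^p$ bound directly. On $E^c$ use $|\hat{H}-\hat{\pi}_n(\varphi)|\leq 2\|\varphi\|_\infty$ together with the Markov estimate
\begin{equation*}
\P(E^c) \;\leq\; (2/\pi_n(g_n))^{2p}\,\E\bigl[|G_N-\pi_n(g_n)|^{2p}\bigr] \;=\; O(N^{-p}),
\end{equation*}
which contributes only $O(N^{-1})$ to the $p$th-root of the moment, absorbed into $S_{N,m,d}^{-1}\geq N^{-1/2}$ (valid since $|\calI|=N/|\calI_u|\leq N$). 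The main technical delicacy is this self-normalization argument in tandem with the careful bookkeeping needed to verify that all resulting constants $b_n(\varphi,p),\hat{b}_n(\varphi,p)$ depend only on $n,p,\varphi$ (and on $\|g_k\|_\infty,\pi_k(g_k)^{-1}$ for $k\leq n$, which are finite by \assref{ass:g_n_bounded_and_positive}), and in particular are independent of $N$, $|\calI|$, $d$ and $J$; once the BDG bound above is in hand, this follows by straightforward induction.
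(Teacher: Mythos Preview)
Your approach is essentially the same as the paper's: induction on $n$, with Burkholder for initialization and mutation, and \propref{prop:generalized martingale decomposition} plus BDG for the resampling step, invoking the full-population $\zeta_n$ bound (trivial partition, $d=1$) to control the normalising constants. Two small points. First, the full-population bound with $(\calI_0,d=1)$ gives rate $\sqrt{m/N}$, not $N^{-1/2}$; this is harmless because \assref{ass:partition and cond indep}\ref{it:is partition} yields $N/|\calI|\geq d$, hence $m/N\leq(m-d)/N+1/|\calI|$, which the paper invokes explicitly. Second, at the resampling step the paper uses a slightly different three-term decomposition $A+B+C$ in which the self-normalised piece $B=G_N^\varphi/G_N-\pi_n(g_n\varphi)/\pi_n(g_n)$ is bounded directly via the deterministic inequality $|G_N^\varphi/G_N|\leq\|\varphi\|_\infty$, avoiding your event-splitting/Markov argument (and the attendant need for the $L^{2p}$ bound on $G_N$); both routes are correct, but the paper's is marginally cleaner.
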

The strategy of the proof is an induction, showing that if the first bound holds for the entire population given as input to the augmented resampling algorithm, then the second bound holds for certain blocks in the output of the resampling, including the entire population, and moreover that this bound is preserved in the mutation step of the particle filter. The proof is given in \secref{sec:subsample convergence supp} of the \ref{suppA}.
%\begin{remark}
%Essentially, the hypotheses of Theorems \ref{thm:butterfly_clt_fixed_radix} and \ref{thm:butterfly_clt_mixed_radix} are particular instances of \propref{prop:subsample absolute moment convergence} with specific choices of $(\bba^{(N,m)},\calI^{(N,m,\ka)})$ and the bounds for the mean of order $p$ are required to hold only for the specific  defined in the statements of the theorems.
%\end{remark}

%\subsection{LLN and CLT for particle filter deploying the butterfly resampling}
%\label{sec:fixed radix particle filter}

  The steps required to complete the proofs of Theorems \ref{thm:radix-r} and \ref{thm:mixed radix-r}, in outline, follow those of \cite{smc:the:C04}. An inductive argument is used to show that the LLN and CLT are preserved at each time step. Although some of the scaling in the CLT's is unusual, the proof techniques are standard and so the proofs are given in \secref{sec:subsample convergence supp} of the \ref{suppA}. %The main differences from the corresponding proofs of \cite{smc:the:C04} for the BPF are as follows. For the radix-$r$ algorithm, the contribution to the stochastic error from the resampling  is of order $\sqrt{\log_{r}N/N}$ and thus dominates the errors from the initial \iid{}~sampling and the mutation steps, which are of order $1/\sqrt{N}$. For the mixed radix-$r$ algorithm, the resampling error is of order $1/\sqrt{N}$, and the only difference to the case of the BPF is the factor $(2-1/r)$ in the asymptotic variance, which arises from the application of \theref{thm:butterfly_clt_mixed_radix} in the proofs.

\begin{supplement}[id=suppA]
  \sname{Supplement}
  \stitle{``Butterfly resampling: asymptotics for particle filters with constrained interactions''}
  \slink[url]{}
%  \sdatatype{.pdf}
%  \sdescription{}
\end{supplement}

%\bibliography{my}

% -------------------------------------------------------------
% -------------------------------------------------------------
% -------------------------------------------------------------
% -------------------------------------------------------------
\newpage

\addtocounter{partcount}{1}

\renewcommand{\theequation}{S.\arabic{equation}}
\renewcommand{\thesection}{\Alph{section}}
\renewcommand{\thepage}{~\arabic{page}~}

\setcounter{equation}{0}
\setcounter{figure}{0}
\setcounter{page}{1}
\setcounter{section}{0}
\begin{frontmatter}

% "Title of the paper"
\title{Supplementary material for ``Butterfly resampling: asymptotics for particle filters with constrained interactions''}
\runtitle{Butterfly resampling (suppl.)}

\begin{aug}
\author{\fnms{Kari} \snm{Heine}%\thanksref{u3}
%\ead[label=e3]{kari.heine@bristol.ac.uk}
},
\author{\fnms{Nick} \snm{Whiteley}%\thanksref{u3}
%\ead[label=e4]{nick.whiteley@bristol.ac.uk}
},
\author{\fnms{A.~Taylan} \snm{Cemgil}%\thanksref{u1}
%\ead[label=e1]{taylan.cemgil@boun.edu.tr}
}
\and
\author{\fnms{Hakan} \snm{G\"ulda{\c s}}%\thanksref{u1}
%\ead[label=e2]{hakan.guldas@boun.edu.tr}
}

%\address{\printead{e1}}

%\and
%\author{\fnms{???} \snm{???}\ead[label=e2]{???}}
%\address{\printead{e2}}
%\affiliation{University of Bristol\thanksmark{u3} and Bo{\u g}azi{\c c}i University\thanksmark{u1}}

\address{Department of Mathematics\\
University of Bristol\\
University Walk \\
Bristol \\
BS8 1TW \\
\printead{e3}\\
\phantom{E-mail:\ }\printead*{e4}}

\address{Department of Computer Engineering\\
Bo{\u g}azi{\c c}i University\\
34342 Bebek\\
Istanbul\\
\printead{e1}\\
\phantom{E-mail:\ }\printead*{e2}}

\runauthor{K.~Heine et al.}
\end{aug}

\end{frontmatter}

% !TEX root = suppmaterial.tex
\section{Proofs for Sections \ref{subsec:distributions} and \ref{subsec:martingale}}
\label{sec:preparatory results}

\begin{proof}[Proof of \lemmaref{lem:conditional marginal}]
By \assref{ass:A_k} and the definition of $\bxi{i}{k}{}{}$ in \algrefmy{alg:augmented_resampling} we can assume that for all $k\in [m]$ and $i \in [N]$
\begin{equation}\label{eq:rigorus formulation of xi}
\bxi{i}{k}{}{}= \bxi{I_k^i}{k-1}{}{},\text{ where } I_k^i\thicksim \frac{1}{V^i_k}\sum_j A_k^{ij}V_{k-1}^j\delta_{j},
\end{equation}
and the $I^i_k$ are independent given $\xi_0$. Let $\ell_0 = i$. By the law of total probability, conditional independence of $I_k^i$ and the one step conditional independence, for all $0\leq k \leq m-1$,
\begin{align}
& \P\left(\bxi{i}{m}{}{}\in S\mids\xi_0,\ldots,\xi_{m-k-1}\right) \nonumber\\
&= \sum_{(\ell_1,\ldots,\ell_{k})} \E\left[\Bigg(\prod_{q=0}^{k-1}\ind{}(I_{m-q}^{\ell_q}=\ell_{q+1})\Bigg) \ind{}(\bxi{\ell_{k}}{m-k}{}{} \in S)\mids \xi_0,\ldots,\xi_{m-k-1}\right]\nonumber\\
&= \sum_{\ell_k}\P\big(\bxi{\ell_{k}}{m-k}{}{} \in S\big| \xi_0,\ldots,\xi_{m-k-1}\big)\sum_{(\ell_1,\ldots,\ell_{k-1})} \prod_{q=0}^{k-1}\P\big(I_{m-q}^{\ell_q}=\ell_{q+1} \big| \xi_0\big).\label{eq:messy thing}
\end{align}
By \eqnref{eq:rigorus formulation of xi}, $\P\left(I_{m-q}^{\ell_q}=\ell_{q+1} \mids \xi_0\right) = {(V^{\ell_q}_{m-q})^{-1}} A_{m-q}^{\ell_q \ell_{q+1}}V^{\ell_{q+1}}_{m-q-1}$ yielding
\begin{align}
\sum_{(\ell_1,\ldots,\ell_{k-1})}\prod_{q=0}^{k-1}\P\left(I_{m-q}^{\ell_q}=\ell_{q+1} \mids \xi_0\right)
%&=& \sum_{(\ell_1,\ldots,\ell_{k-1})}\prod_{k=0}^{k-1}\frac{1}{V^{\ell_k}_{m-k}} A_{m-k}^{\ell_k \ell_{k+1}}V^{\ell_{k+1}}_{m-k-1}\nonumber \\
&= \frac{V^{\ell_{k}}_{m-k}}{V^{\ell_0}_m}\sum_{(\ell_1,\ldots,\ell_{k-1})}\Bigg(\prod_{q=0}^{k-1} A_{m-q}^{\ell_q \ell_{q+1}}\Bigg) \nonumber\\
&= \frac{V^{\ell_{k}}_{m-k}}{V^{\ell_0}_m}\Bigg(\prod_{q=0}^{k-1} A_{m-q}\Bigg)^{\ell_0\ell_k}.\label{eq:product}
\end{align}
Because by \eqnref{eq:rigorus formulation of xi} $$\P\big(\bxi{\ell_{k}}{m-k}{}{} \in S\,\big|\, \xi_0,\ldots,\xi_{m-k-1}\big) = {(V^{\ell_k}_{m-k})^{-1}}\sum_j A_{m-k}^{\ell_k j}V_{m-k-1}^{j}\ind{}(\xi^j_{m-k-1}\in S),$$ by substituting \eqnref{eq:product} into \eqnref{eq:messy thing} we have
\begin{align*}
&\P\big(\bxi{\ell_0}{m}{}{}\in S\,\big|\,\xi_0,\ldots,\xi_{m-k-1}\big) \\
%&=& \sum_{\ell_k}\left(\frac{1}{V^{\ell_k}_{m-k}}\sum_j A_{m-k}^{\ell_k j}V_{m-k-1}^{j}\ind{S}(\xi^j_{m-k-1})\right)\left( \frac{V^{\ell_{k}}_{m-k}}{V^{\ell_0}_m}\left(\prod_{k=0}^{k-1} A_{m-k}\right)^{\ell_0\ell_k}\right)\\
&= \frac{1}{V^{\ell_0}_m}\sum_jV_{m-k-1}^{j}\ind{}(\xi^j_{m-k-1}\in S)\sum_{\ell_k} \Bigg(\prod_{q=0}^{k-1} A_{m-q}\Bigg)^{\ell_0\ell_k}A_{m-k}^{\ell_k j}.%\\
%&=& \frac{1}{V^{i}_m}\sum_jV_{m-k-1}^{j}\ind{S}(\xi^j_{m-k-1})\left(\prod_{k=0}^{k} A_{m-k}\right)^{ij}.
\end{align*}
from which the claim follows by recalling that $\ell_0 = i$.
\end{proof}

% -------------------------------------------------------------------------------------
\begin{proof}[Proof of \propref{prop:generalized martingale decomposition}]

For brevity, let us write $J_{i} \defeq J(i)$ for all $i\in[\card{\calI}]$. Then, by defining
\begin{align*}
A &\defeq \frac{1}{\card{\calI}}\sum_{i_{m}\in[\card{\calI}]} V^{J_{i_m}}_m\cvarphi_{N}(\bxi{J_{i_m}}{m}{}{})
 - \frac{1}{N}\sum_{i_{m-\ka}} V_{m-\ka}^{i_{m-\ka}}\cvarphi_{N}(\bxi{i_{m-\ka}}{m-\ka}{}{}), \\
B &\defeq \sum_{q \in [m-\ka]} \Bigg(
 \frac{1}{N}\sum_{i_q} V_{q}^{i_q}\cvarphi_{N}(\bxi{i_q}{q}{}{})  -
 \frac{1}{N}\sum_{i_{q-1}} V_{q-1}^{i_{q-1}}\cvarphi_{N}(\bxi{i_{q-1}}{q-1}{}{}) \Bigg),
\end{align*}
we have the telescoping decomposition
\begin{equation}\label{eq:martingale telescope}
\frac{1}{\card{\calI}}\sum_{i_{m}\in[\card{\calI}]} V^{J_{i_m}}_m\cvarphi_{N}(\bxi{J_{i_m}}{m}{}{}) = A + B.
\end{equation}
By \assref{ass:A_k}, the matrices $\aMatrices{N}{m}$ are doubly stochastic and by \ref{it:is partition} and \ref{it:equivalence classes} of \assref{ass:partition and cond indep}, we have by writing $\cvarphi_{m}^i\defeq \cvarphi_{N}(\bxi{i}{m}{}{})$ for brevity
\begin{align}
A
&= \frac{1}{\card{\calI}}\sum_{i_{m}=1}^{\card{\calI}} V^{J_{i_m}}_m\cvarphi^{J_{i_m}}_{m}
 - \frac{1}{N}\sum_{j}\sum_{i_{m-\ka}}  \Bigg(\prod_{q=0}^{\ka-1}A_{m-q}\Bigg)^{j i_{m-\ka}}V_{m-\ka}^{i_{m-\ka}}\cvarphi^{{i_{m-\ka}}}_{m-\ka}\nonumber\\
&= \frac{1}{\card{\calI}}\sum_{i_{m}=1}^{\card{\calI}} V^{J_{i_m}}_m\cvarphi^{J_{i_m}}_{m} \nonumber \\
&\quad-~\frac{1}{N}\sum_{j=1}^{\card{\calI}}\frac{N}{\card{\calI}}%\ns{m}{\ka}
\sum_{i_{m-\ka}}  \Bigg(\prod_{q=0}^{\ka-1}A_{m-q}\Bigg)^{J_{j} i_{m-\ka}}V_{m-\ka}^{i_{m-\ka}}\cvarphi^{{i_{m-\ka}}}_{m-\ka}\nonumber\\
&= \frac{1}{\card{\calI}}\sum_{i_{m}=1}^{\card{\calI}} V^{J_{i_m}}_m\label{eq:A part}\\
&\quad\times~\Bigg(\cvarphi^{J_{i_m}}_{m} - \frac{1}{V^{J_{i_m}}_{m}} \sum_{i_{m-\ka}} \Bigg(\prod_{q=0}^{\ka-1}A_{m-q}\Bigg)^{J_{i_m} i_{m-\ka}}V_{m-\ka}^{i_{m-\ka}}\cvarphi^{{i_{m-\ka}}}_{m-\ka}\Bigg).\nonumber
\end{align}
Similarly for $B$ we have
\begin{align}
B
&= \sum_{q \in [m-\ka]} \Bigg(
 \frac{1}{N}\sum_{i_q} V_{q}^{i_q}\cvarphi_{N}(\bxi{i_q}{q}{}{})  -
 \frac{1}{N}\sum_{j}\sum_{i_{q-1}}A_q^{j i_{q-1}} V_{q-1}^{i_{q-1}}\cvarphi_{N}(\bxi{i_{q-1}}{q-1}{}{})\Bigg)\nonumber\\
&=
\sum_{q\in[{m-\ka}]}\sum_{i_q}\frac{V_q^{i_q}}{N} \Bigg(\overline{\varphi}_{N}(\xi_q^{i_q})-\frac{1}{V^{i_q}_q}\sum_{i_{q-1}} A_q^{i_q i_{q-1}} V^{i_{q-1}}_{q-1}\overline{\varphi}_{N}(\xi_{q-1}^{i_{q-1}}) \Bigg),\label{eq:B part}
\end{align}
and by combining \eqnref{eq:martingale telescope}, \eqnref{eq:A part} and \eqnref{eq:B part} we have established \eqnref{it:martingale representation}. Using \eqnref{eq:V_defn_proofs} and \lemmaref{lem:facts_about_Vs}, we can establish \eqnref{it:another martingale representation}:
\begin{align*}
&\Bigg(\frac{1}{N}\sum_{i}g(\xi_{\text{in}}^i)\Bigg)\Bigg(\frac{1}{\card{\calI}}\sum_{i\in[\card{\calI}]}\varphi(\xi_{\text{out}}^{J_{i}})\Bigg)-\frac{1}{N}\sum_{i}g(\xi_{\text{in}}^i)\varphi(\xi_{\text{in}}^i)\\
&\qquad=\frac{1}{\card{\calI}}\sum_{i\in[\card{\calI}]} V_m^{J_{i}} \varphi(\xi_m^{J_{i}})-\frac{1}{N}\sum_i V_0^i \varphi(\xi_0^i)\\
&\qquad=\frac{1}{\card{\calI}}\sum_{i\in[\card{\calI}]} V_m^{J_{i}} \Bigg(\varphi(\xi_m^{J_{i}})-\frac{\frac{1}{N}\sum_j V_0^j \varphi(\xi_0^j)}{\frac{1}{N}\sum_j V_0^j }\Bigg)\\
&\qquad=\frac{1}{\card{\calI}}\sum_{i\in[\card{\calI}]} V_m^{J_{i}}\overline{\varphi}_N(\xi_m^{J_{i}}).
\end{align*}
Using \eqnref{eq:varphi_bar_defn}, \eqnref{eq:piecewise sigma fields}, \eqnref{eq:explicit X} and \lemmaref{lem:facts_about_Vs}\ref{it:measurability of V} we find that \ref{eq:measurability} holds.
For $\varrho \leq (m-\ka)N$, \ref{eq:zero expectations} follows from \eqref{eq:P_in_terms_of_V_proofs} and the one step conditional independence.
For $(m-\ka)N < \varrho \leq (m-\ka)N + \card{\calI}$, \ref{eq:zero expectations} follows from \lemmaref{lem:conditional marginal} and \assref{ass:partition and cond indep}\ref{it:cond indep}. Finally \eqnref{eq:boundedness of X} holds by \lemmaref{lem:facts_about_Vs}\ref{it:boundedness of V}.
\end{proof}

% ---------------------------------------
% ---------------------------------------
%
% ---------------------------------------
% ---------------------------------------
\section{Proofs for \secref{sec:conditional variance and collision analysis main}}
\label{sec:proofs for collision analysis}

% ---------------------------------------
%  WEIGHTED $V$-STATISTICS
% ---------------------------------------
\renewcommand{\Gam}[2]{\Gamma_{{#1},{#2}}}
\renewcommand{\cPhi}{\overline{\Phi}}
\renewcommand{\aprod}[4]{\prod_{q=#1}^{#2} A_{q+1}^{#3_{q+1},#3_{q}}A_{q+1}^{#4_{q+1},#4_{q}}}
\renewcommand{\ccomp}[4]{\mathcal{C}_{{#3}_{#1:#2},{#4}_{#1:#2}}}

\begin{proof}[Proof of \propref{prop:tens_prod_formula}] %\myemph{***This obviously needs some re-formatting or notation adjustment so it fits***}
We will fix $N\geq2$ and $m\geq1$. % and use the shorthand notations $i_{p:q} \defeq (i_p,\ldots,i_q)$, $j_{p:q} \defeq (j_p,\ldots,j_q)$, for all	 $0\leq p,q\leq m$ throughout the proof.
For all $0 \leq p,q \leq m$ we define $i_{p:q}\defeq(i_{p},\ldots,i_{q})\in [N]^{q-p+1}$, $j_{p:q} \defeq(j_{p},\ldots,j_{q}) \in [N]^{q-p+1}$ and for all $i_{p:q},j_{p:q} \in [N]^{q-p+1}$ we define
\begin{equation*}
\ccomp{p}{q}{i}{j} \defeq \mathcal{C}_{\mathbb{I}[i_{p}=j_{p}]}\cdots \mathcal{C}_{\mathbb{I}[i_{q}=j_{q}]},
\end{equation*}
where $\mathcal{C}_0$ and $\mathcal{C}_{1}$ are as defined in the statement of the proposition.
%For all $p \geq 0$ and $\ell_{0:p}, \ell'_{0:p} \in [N]^{p+1}$ we define
%\begin{equation*}
%\aprod{p}{q+1}{i}{j} \defeq \prod_{k=p}^{q}A_{k+1}^{i_{k+1}i_{k}}A_{k+1}^{j_{k+1}j_{k}},
%\end{equation*}
%where $\ell_{0:p}=(\ell_{0},\ldots,\ell_{p})$ and $\ell'_{0:p}=(\ell'_{0},\ldots,\ell'_{p})$. Also,
For all $k\in[m]$ and $i\in [N]$ we define also the measure
\begin{equation*}
\Gam{k}{i} \defeq \sum_{j}A_k^{ij}V_{k-1}^{j}\delta_{\xi^{j}_{k-1}}.
\end{equation*}
%for all $0\leq p,q\leq m$
%\begin{equation*}
%\aprod{p}{q}{i}{j}\defeq \prod_{k=p}^{q}A_{k+1}^{i_{k+1}i_{k}}A_{k+1}^{j_{k+1}j_{k}}.
%\end{equation*}
To proceed, we will in fact prove a more general result:
\begin{align*}
%&\quad\frac{m}{N}\E\Bigg[\Bigg(\sum_{\varrho\in[Nm]} X_\varrho^{(N,m)}\Bigg)^2\,\Bigg|\,\F_{0}^{(N,m)}\Bigg]\\
&\E\Bigg[\Bigg(\dfrac{1}{N}\displaystyle\sum_{i}V_{m}^{i}\delta_{\xi_{m}^{i}}\Bigg) \otimes \Bigg(\dfrac{1}{N}\displaystyle\sum_{i}V_{m}^{i}\delta_{\xi_{m}^{i}}\Bigg) (\Phi)\,\Bigg|\,\F_{0}^{(N,m)}\Bigg]\\
&
=\displaystyle\sum_{\left(i_{0},j_{0},\ldots,i_{m},j_{m}\right)}\left(\dfrac{1}{N^{2}}\prod_{k=0}^{m-1}A_{k+1}^{i_{k+1}i_{k}}A_{k+1}^{j_{k+1}j_{k}}\right)g(\xi_{0}^{i_{0}})g(\xi_{0}^{j_{0}})\ccomp{1}{m}{i}{j}(\Phi)\big(\xi_{0}^{i_{0}},\xi_{0}^{j_{0}}\big),\nonumber
\end{align*}
where $\Phi\in \boundMeas{\ss^2}$. The claim then follows by noting that when $\Phi = \cvarphi_{N}^{\otimes 2}$, then by \eqnref{it:martingale representation}
\begin{align}
\frac{m}{N}\Bigg(\sum_{\varrho\in[Nm]} X_\varrho^{(N,m)}\Bigg)^2 &= \left(\frac{1}{N}\sum_{i}V_{m}^{i}\cvarphi_{N}(\xi_{m}^{i})\right)^2 \nonumber\\
&=\left(\frac{1}{N}\sum_{i}V_{m}^{i}\delta_{\xi_{m}^{i}}\right) \otimes \left(\frac{1}{N}\sum_{i}V_{m}^{i}\delta_{\xi_{m}^{i}}\right) \left(\Phi\right).\label{eq:tensor generalisation}
\end{align}

We first derive an expression for $\E\big[\big(V_{k}^{i_{k}}\delta_{\xi_{k}^{i_{k}}}\big)\otimes\big(V_{k}^{j_{k}}\delta_{\xi_{k}^{j_{k}}}\big)\big(\Phi\big)\big|\xi_{0},\ldots,\xi_{k-1}\big]$,
where $k\in[m]$ and $\Phi\in \boundMeas{\ss^2}$. In the case $i_{k}=j_{k}$, and by writing $\cPhi(x) = \Phi(x,x)$
\begin{align*}
\E\left[\left.\left(V_{k}^{i_{k}}\delta_{\xi_{k}^{i_{k}}}\right)\otimes\left(V_{k}^{j_{k}}\delta_{\xi_{k}^{j_{k}}}\right)\left(\Phi\right)\right|\xi_{0},\ldots,\xi_{k-1}\right]
&=\left(V_{k}^{i_{k}}\right)^{2}\frac{\Gam{k}{i_k}(\cPhi)}{\Gam{k}{i_k}(1)}\\
%&&\qquad=\left(V_{k}^{i_{k}}\right)^{2}\frac{\sum_{i_{k-1}}A_{k}^{i_{k}i_{k-1}}V_{k-1}^{i_{k-1}}\Phi(\xi_{k-1}^{i_{k-1}},\xi_{k-1}^{i_{k-1}})}{\sum_{i_{k-1}}A_{k}^{i_{k}i_{k-1}}V_{k-1}^{i_{k-1}}}\\
&=\Gam{k}{i_k}(1)\Gam{k}{i_k}(\cPhi)\\
%&&\qquad=\Bigg(\sum_{i_{k-1}}A_{k}^{i_{k}i_{k-1}}V_{k-1}^{i_{k-1}}\Bigg)\Bigg(\sum_{i_{k-1}}A_{k}^{i_{k}i_{k-1}}V_{k-1}^{i_{k-1}}\Phi(\xi_{k-1}^{i_{k-1}},\xi_{k-1}^{i_{k-1}})\Bigg)\\
&=(\Gam{k}{i_k})^{\otimes 2}\left(\mathcal{C}_{1}(\Phi)\right),
%&&\qquad=\Bigg(\sum_{i_{k-1}}A_{k}^{i_{k}i_{k-1}}V_{k-1}^{i_{k-1}}\delta_{\xi_{k-1}^{i_{k-1}}}\Bigg)^{\otimes2}\left(\mathcal{C}_{1}\Phi\right),
\end{align*}
and in the case $i_{k}\neq j_{k}$,
\begin{align*}
&\E\left[\left.\Big(V_{k}^{i_{k}}\delta_{\xi_{k}^{i_{k}}}\Big)\otimes\left(V_{k}^{j_{k}}\delta_{\xi_{k}^{j_{k}}}\right)\left(\Phi\right)\right|\xi_{0},\ldots,\xi_{k-1}\right]\\
&=\left(V_{k}^{i_{k}}\frac{\Gam{k}{i_k}}{\Gam{k}{i_k}(1)}\right)\otimes\left(V_{k}^{j_{k}}\frac{\Gam{k}{j_k}}{\Gam{k}{j_k}(1)}\right)\left(\Phi\right)\\
%&& \qquad =\left(V_{k}^{i_{k}}\frac{\sum_{i_{k-1}}A_{k}^{i_{k}i_{k-1}}V_{k-1}^{i_{k-1}}\delta_{\xi_{k-1}^{i_{k-1}}}}{\sum_{i_{k-1}}A_{k}^{i_{k}i_{k-1}}V_{k-1}^{i_{k-1}}}\right)\otimes\left(V_{k}^{j_{k}}\frac{\sum_{j_{k-1}}A_{k}^{j_{k}j_{k-1}}V_{k-1}^{j_{k-1}}\delta_{\xi_{k-1}^{j_{k-1}}}}{\sum_{j_{k-1}}A_{k}^{j_{k}j_{k-1}}V_{k-1}^{j_{k-1}}}\right)\left(\Phi\right)\\
&=(\Gam{k}{i_k} \otimes\Gam{k}{j_k})\left(\Phi\right),
%&=&  \Bigg(\sum_{i_{k-1}}A_{k}^{i_{k}i_{k-1}}V_{k-1}^{i_{k-1}}\delta_{\xi_{k-1}^{i_{k-1}}}\Bigg)\otimes\Bigg(\sum_{j_{k-1}}A_{k}^{j_{k}j_{k-1}}V_{k-1}^{j_{k-1}}\delta_{\xi_{k-1}^{j_{k-1}}}\Bigg)\left(\Phi\right),
\end{align*}
so in any case,
\begin{align}
&\E\left[\left.\left(V_{k}^{i_{k}}\delta_{\xi_{k}^{i_{k}}}\right)\otimes\left(V_{k}^{j_{k}}\delta_{\xi_{k}^{j_{k}}}\right)\left(\Phi\right)\right|\xi_{0},\ldots,\xi_{k-1}\right]\nonumber\\
&= (\Gam{k}{i} \otimes\Gam{k}{j})\left(\mathcal{C}_{\mathbb{I}[i_{k}=j_{k}]}\Phi\right).\label{eq:intermediate-1-1}
\end{align}
The proof now proceeds by a backward induction. Our first application of (\ref{eq:intermediate-1-1}) is with $k=m$ to initialize this induction, with the identity:
\begin{align*}
& \E\Bigg[\Bigg(\frac{1}{N}\sum_{i_{m}}V_{m}^{i_{m}}\delta_{\xi_{m}^{i_{m}}}\Bigg)\otimes\Bigg(\frac{1}{N}\sum_{j_{m}}V_{m}^{j_{m}}\delta_{\xi_{m}^{j_{m}}}\Bigg)\left(\Phi\right)\Bigg|\xi_{0},\ldots,\xi_{m-1}\Bigg]\\
&=\frac{1}{N^{2}}\sum_{(i_{m},j_{m})}\left(\Gam{m}{i_{m}} \otimes \Gam{m}{j_{m}}\right)\left(\mathcal{C}_{\mathbb{I}[i_{m}=j_{m}]}\Phi\right).
\end{align*}
%For brevity of notation, let us introduce the shorthand notation $\mathcal{C}_{\vee_{k=p}^q \ind{}(i_k=j_k)}\defeq \mathcal{C}_{\mathbb{I}[i_{q}=j_{q}]}\cdots \mathcal{C}_{\mathbb{I}[i_{p}=j_{p}]}$.
The inductive hypothesis is that at rank $k$, with $1\leq k \leq m$, the following holds:
\begin{align*}
& \E\Bigg[\Bigg(\frac{1}{N}\sum_{i_{m}}V_{m}^{i_{m}}\delta_{\xi_{m}^{i_{m}}}\Bigg)\otimes\Bigg(\frac{1}{N}\sum_{j_{m}}V_{m}^{j_{m}}\delta_{\xi_{m}^{j_{m}}}\Bigg)\left(\Phi\right)\Bigg|\xi_{0},\ldots,\xi_{k}\Bigg] \\
&=\frac{1}{N^{2}}\sum_{\left(i_{k},j_{k},\ldots,i_{m},j_{m}\right)}\Bigg(\Bigg(\aprod{k}{m-1}{i}{j}\Bigg)\\
&\quad\times~\left(V_{k}^{i_{k}}\delta_{\xi_{k}^{i_{k}}}\otimes V_{k}^{j_{k}}\delta_{\xi_{k}^{j_{k}}}\right)\left(\ccomp{k+1}{m}{i}{j}(\Phi)\right)\Bigg).
\end{align*}
By \eqnref{eq:intermediate-1-1} we have
\begin{align*}
&\E\left[\left.\left(V_{k}^{i_{k}}\delta_{\xi_{k}^{i_{k}}}\otimes V_{k}^{j_{k}}\delta_{\xi_{k}^{j_{k}}}\right)\left(\ccomp{k+1}{m}{i}{j}(\Phi)\right)\right|\xi_{0},\ldots,\xi_{k-1}\right] \nonumber \\
&= (\Gam{k}{i_k} \otimes\Gam{k}{j_k})\left(\ccomp{k}{m}{i}{j}(\Phi)\right),
\end{align*}
and therefore at rank $k-1$, applying the tower property of conditional expectation gives
\begin{align*}
&\E\Bigg[\Bigg(\frac{1}{N}\sum_{i_{m}}V_{m}^{i_{m}}\delta_{\xi_{m}^{i_{m}}}\Bigg)\otimes\Bigg(\frac{1}{N}\sum_{j_{m}}V_{m}^{j_{m}}\delta_{\xi_{m}^{j_{m}}}\Bigg)\left(\Phi\right)\Bigg|\xi_{0},\ldots,\xi_{k-1}\Bigg]\\
&=
\frac{1}{N^{2}}\sum_{(i_{k},j_{k},\ldots,i_{m},j_{m})}\Bigg(\Bigg(\aprod{k}{m-1}{i}{j}\Bigg)\\
&\quad \times ~\E\Big[\Big(V_{k}^{i_{k}}\delta_{\xi_{k}^{i_{k}}}\otimes V_{k}^{j_{k}}\delta_{\xi_{k}^{j_{k}}}\Big)\Big(\ccomp{k+1}{m}{i}{j}(\Phi)\Big)\Big|\xi_{0},\ldots,\xi_{k-1}\Big]\Bigg)\\
&=
\frac{1}{N^{2}}\sum_{(i_{k-1},j_{k-1},\ldots,i_{m},j_{m})}\Bigg(\Bigg(\aprod{k-1}{m-1}{i}{j}\Bigg)\\
&\quad \times ~ \left(V_{k-1}^{i_{k-1}}\delta_{\xi_{k-1}^{i_{k-1}}}\right)\otimes\left(V_{k-1}^{j_{k-1}}\delta_{\xi_{k-1}^{j_{k-1}}}\right)\Big(\ccomp{k}{m}{i}{j}(\Phi)\Big)\Bigg).
\end{align*}
That is, the hypothesis then also holds at rank $k-1$. Thus the induction is complete, and so we can conclude that for $k=1$,
\begin{align*}
& \E\Bigg[\Bigg(\frac{1}{N}\sum_{i_{m}}V_{m}^{i_{m}}\delta_{\xi_{m}^{i_{m}}}\Bigg)\otimes\Bigg(\frac{1}{N}\sum_{j_{m}}V_{m}^{j_{m}}\delta_{\xi_{m}^{j_{m}}}\Bigg)\left(\Phi\right)\Bigg|\xi_{0}\Bigg]\\
&=\frac{1}{N^{2}}\sum_{(i_{0},j_{0},\ldots,i_{m},j_{m})}\Bigg(\Bigg(\prod_{q=0}^{m-1}A_{q+1}^{i_{q+1}i_{q}}A_{q+1}^{j_{q+1}j_{q}}\Bigg)\\
&\quad \times~ \left(V_{0}^{i_{0}}\delta_{\xi_{0}^{i_{0}}}\right)\otimes\left(V_{0}^{j_{0}}\delta_{\xi_{0}^{j_{0}}}\right)\left(\ccomp{1}{m}{i}{j}(\Phi)\right)\Bigg)\\
&=\sum_{(i_{0},j_{0},\ldots,i_{m},j_{m})}\Bigg(\frac{1}{N^{2}}\Bigg(\prod_{q=0}^{m-1}A_{q+1}^{i_{q+1}i_{q}}A_{q+1}^{j_{q+1}j_{q}}\Bigg)\\
& \quad \times~ g(\xi_{0}^{i_{0}})g(\xi_{0}^{j_{0}})\left(\ccomp{1}{m}{i}{j}(\Phi)\right)\left(\xi_{0}^{i_{0}},\xi_{0}^{j_{0}}\right)\Bigg),
\end{align*}
as required.
\end{proof}

The proof of \propref{prop:tensor product formulation} consists of several technical results which we state and prove first while the actual proof of \propref{prop:tensor product formulation} is postponed to the end of this section. First we establish some key implications of \assref{ass:A_k extra} that will be found useful throughout the remainder of the work.

%
% TECHNICAL IMPLICATIONS OF MATRIC PROPERTIES
%
\begin{lemma}\label{lem:general properties of path sets}
If $\bba = \bba^{(N,m)}$ satisfies \assref{ass:A_k extra} for some $N,m\geq 1$, then for all $((i_0,\ldots,i_m),(j_0,\ldots,j_m))\in \paths{\bba}{}^2$, $i,j \in [N]$, and $k\in [m]$
\begin{enumerate}[itemsep=0pt, topsep=5pt, partopsep=0pt,label={{(\roman*)}}]
	\item \label{it:sym com property} $i \in \primeParentSet{k}{j}{\bba}{4}$ if and only if $j \in \primeParentSet{k}{i}{\bba}{4}$,
	\item \label{it:equal prime parent sets} If $j\in \primeParentSet{k}{i}{\bba}{4}$ then $\primeParentSet{k}{i}{\bba}{4} = \primeParentSet{k}{j}{\bba}{4}$, if $j\notin \primeParentSet{k}{i}{\bba}{4}$, then $\primeParentSet{k}{i}{\bba}{4} \cap \primeParentSet{k}{j}{\bba}{4} = \emptyset$,
	\item \label{it:positive diagonals} $A_k^{ii}>0$\vphantom{$\primeParentSet{k}{j}{\bba}{4}$},
	\item \label{it:subset property} If $q \leq k$ and $j\in\primeParentSet{k}{i}{\bba}{}$, then $\primeParentSet{q}{j}{\bba}{}\subset\primeParentSet{k}{i}{\bba}{}$,
	\item \label{it:collision now or never} If $j_0 \in \primeParentSet{k}{i_0}{\bba}{4}$, then either $i_k = j_k$ or for all $q\geq k$, $i_q \neq j_q$,
	\item \label{it:complement decomposition} $\bigcup_{k=1}^m \collisionStartSetSh{k}{i}{\bba} = [N] \setminus \{i\}$, and for all $k,k'\in[m]$ such that $k\neq k'$, $\collisionStartSetSh{k}{i}{\bba} \cap \collisionStartSetSh{k'}{i}{\bba} =\emptyset$.
\end{enumerate}
\end{lemma}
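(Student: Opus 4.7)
The plan is to dispatch the six claims largely in sequence, with (iii) and (i) being quick consequences of the matrix assumptions, (ii) and (iv) following from turning the positivity relation into an equivalence via idempotence, (v) being the substantive one that uses the unique-paths hypothesis, and (vi) being essentially the end-point condition $\prod_k A_k=\mathbf{1}_{1/N}$.

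First I would prove (i) and (iii). Part (i) is immediate: by commutativity (\assref{ass:A_k extra}\ref{it:commutativity}) and symmetry (\ref{it:symmetry}), the product $B_k\defeq A_k A_{k-1}\cdots A_1$ is itself symmetric, so $B_k^{ij}>0 \iff B_k^{ji}>0$, i.e.\ $j\in\primeParentSet{k}{i}{\bba}{}\iff i\in\primeParentSet{k}{j}{\bba}{}$. For (iii), idempotence and symmetry give $A_k^{ii}=\sum_j A_k^{ij}A_k^{ji}=\sum_j (A_k^{ij})^2$; if this were zero, row $i$ of $A_k$ would vanish, contradicting double-stochasticity. Next, (ii): using commutativity and idempotence together with the fact that the $A_k$ commute pairwise, one sees $B_k^2=B_k$, so $B_k$ is a symmetric idempotent with non-negative entries. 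Consequently the relation $j\sim i \iff B_k^{ij}>0$ is transitive (if $B_k^{ij}>0$ and $B_k^{jl}>0$ then $B_k^{il}=\sum_{j'}B_k^{ij'}B_k^{j'l}\geq B_k^{ij}B_k^{jl}>0$), symmetric by (i), and reflexive since by (iii) the diagonal $(i,i,\ldots,i)$ witnesses $B_k^{ii}>0$. The equivalence-class structure gives (ii). For (iv), the same commutativity/idempotence trick gives $B_k=B_k B_q$ when $q\le k$, so $B_k^{il}\geq B_k^{ij}B_q^{jl}$, which is positive whenever $j\in\primeParentSet{k}{i}{\bba}{}$ and $l\in\primeParentSet{q}{j}{\bba}{}$.

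The main difficulty is (v), and I would prove it by contradiction using the uniqueness-of-paths condition (\assref{ass:A_k extra}\ref{it:unique paths}). Suppose $i_k\neq j_k$ but $i_q=j_q$ for some $q>k$. Path 1 furnishes a path from $\xi_0^{i_0}$ to $\xi_k^{i_k}$, so $i_0\in\primeParentSet{k}{i_k}{\bba}{}$, and by (iii) also $i_k\in\primeParentSet{k}{i_k}{\bba}{}$; applying (ii) yields $\primeParentSet{k}{i_0}{\bba}{}=\primeParentSet{k}{i_k}{\bba}{}$. The hypothesis $j_0\in\primeParentSet{k}{i_0}{\bba}{}$ therefore supplies a path $\sigma$ from $\xi_0^{j_0}$ to $\xi_k^{i_k}$. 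Concatenating $\sigma$ with the subpath $(i_k,i_{k+1},\ldots,i_q)$ of path 1 gives a path from $\xi_0^{j_0}$ to $\xi_q^{i_q}$. But path 2 restricted to rows $0$ through $q$ is another such path, since $j_q=i_q$. By \ref{it:unique paths} these two paths must coincide; comparing their row-$k$ columns gives $i_k=j_k$, a contradiction.

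Finally, (vi) falls out of the structure of the $\primeParentSet{k}{i}{\bba}{}$. Monotonicity $\primeParentSet{k-1}{i}{\bba}{}\subset\primeParentSet{k}{i}{\bba}{}$ follows because $A_k^{ii}>0$ by (iii), so any path from $\xi_0^j$ to $\xi_{k-1}^i$ extends via the diagonal to a path reaching $\xi_k^i$. Assumption \ref{ass:A_k} gives $B_m=\prod_{k=1}^m A_k=\mathbf{1}_{1/N}$, whose entries are all strictly positive, so $\primeParentSet{m}{i}{\bba}{}=[N]$. The sets $\collisionStartSetSh{k}{i}{\bba}=\primeParentSet{k}{i}{\bba}{}\setminus\primeParentSet{k-1}{i}{\bba}{}$ are therefore pairwise disjoint with telescoping union equal to $\primeParentSet{m}{i}{\bba}{}\setminus\primeParentSet{0}{i}{\bba}{}=[N]\setminus\{i\}$, proving (vi).
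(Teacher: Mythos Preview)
Your proof is correct and follows essentially the same route as the paper's: symmetry/commutativity for (i), idempotence plus symmetry for (iii), the equivalence-relation argument via $B_k^2=B_k$ for (ii), a contradiction via unique paths for (v), and the telescoping chain $\primeParentSet{0}{i}{\bba}{}\subset\cdots\subset\primeParentSet{m}{i}{\bba}{}=[N]$ for (vi). The only cosmetic differences are that your treatment of (iv) is cleaner---using the algebraic identity $B_k B_q = B_k$ directly rather than the paper's path-extension argument---and in (v) you build the two coincident-endpoint paths starting from $j_0$ rather than from $i_0$ as the paper does, but the logic is identical.
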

\begin{proof}
\ref{it:sym com property} follows from \eqnref{eq:prime parents} and parts \ref{it:symmetry} and \ref{it:commutativity} of \assref{ass:A_k extra}.

To check \ref{it:equal prime parent sets}, suppose that $j \in \primeParentSet{k}{i}{\bba}{}$, and there exists $u \in [N]$ such that $u\in \primeParentSet{k}{i}{\bba}{}$. Then, by parts \ref{it:commutativity}, \ref{it:idempotence}, \ref{it:symmetry} of \assref{ass:A_k extra} and \assref{ass:A_k}, $A_{k:1}^{ju} = \left(A_{k:1}A_{k:1}\right)^{ju} = \sum_\ell A_{k:1}^{j\ell}A_{k:1}^{u\ell}\geq A_{k:1}^{ji}A_{k:1}^{i u}> 0$ where the last inequality holds by assumption. Thus $u \in \primeParentSet{k}{j}{\bba}{}$ proving $\primeParentSet{k}{i}{\bba}{}\subset\primeParentSet{k}{j}{\bba}{}$. The converse inclusion follows from the symmetry of the arguments. For the case $j \notin \primeParentSet{k}{i}{\bba}{}$ we assume there exists $u \in \primeParentSet{k}{i}{\bba}{4}\cap \primeParentSet{k}{j}{\bba}{4}$, from which it follows that $0<A_{k:1}^{iu}A_{k:1}^{uj}\leq \sum_\ell A_{k:1}^{i\ell}A_{k:1}^{\ell j} = \left(A_{k:1}A_{k:1}\right)^{ij} = A_{k:1}^{ij}\iff j \in \primeParentSet{k}{i}{\bba}{}$ constituting a contradiction, which completes the proof of \ref{it:equal prime parent sets}.

To prove \ref{it:positive diagonals} we have by \assref{ass:A_k} and parts \ref{it:idempotence} and \ref{it:symmetry} of \assref{ass:A_k extra}, $A_k^{ii} = (A_kA_k)^{ii} = \sum_{\ell} A_k^{i\ell}A_k^{i\ell} > 0$.

To prove \ref{it:subset property}, suppose that for some $q\leq k$ and $j\in\primeParentSet{k}{i}{\bba}{}$ we take $u\in\primeParentSet{q}{j}{\bba}{}$. By \ref{it:positive diagonals}, there exists $(i_q,\ldots,i_k)\in\paths{\bba_{q+1:k}}{}$ such that $i_q = i_k = j$. Since $u\in\primeParentSet{q}{j}{\bba}{}$ there exists $(i_0,\ldots,i_k) \in \paths{\bba_{1:k}}{}$ such that $i_0=u$ and $i_k=j$ implying, by \ref{it:equal prime parent sets}, that $u\in\primeParentSet{k}{j}{\bba}{}=\primeParentSet{k}{i}{\bba}{}$, hence proving  \ref{it:subset property}.

To prove \ref{it:collision now or never}, we observe that by \ref{it:equal prime parent sets}, $j_0 \in \primeParentSet{k}{i_0}{\bba}{4}$ implies $\primeParentSet{k}{i_0}{\bba}{4} = \primeParentSet{k}{j_0}{\bba}{4}$. Because $(j_0,\ldots,j_m)\in\paths{\bba}{m}$, we have $j_0 \in \primeParentSet{k}{j_k}{\bba}{4}$ and by \ref{it:sym com property}, $j_k \in \primeParentSet{k}{j_0}{\bba}{4}=\primeParentSet{k}{i_0}{\bba}{4}$. Hence,  by \ref{it:sym com property}, $i_0 \in \primeParentSet{k}{j_k}{\bba}{4}$ and there exists $(i',j') \in \paths{\bba}{m}^2$, where $i' = (i'_0,\ldots,i'_m)$ and $j' = (j'_0,\ldots,j'_m)$, such that $i'_0 = j'_0 = i_0$, $i'_k = i_k$, and $j'_k = j_k$. Now suppose that $i_k \neq j_k$ and there exists $q \geq k$ such that $i_q = j_q$. By the existence of $(i',j')$, we can construct paths $i'' = (i'_0,\ldots,i'_k,i_{k+1},\ldots,i_m)$ and $j'' = (j'_0,\ldots,j'_k,j_{k+1},\ldots,j_m)$ for which we have $i''_0 = j''_0$, $i''_k \neq j''_k$ and $i''_q = j''_q$ contradicting \assref{ass:A_k extra}\ref{it:unique paths} which completes the proof of \ref{it:collision now or never}.

To prove \ref{it:complement decomposition}, we observe that by \ref{it:subset property} and \ref{it:positive diagonals}, $\primeParentSet{k-1}{i}{\bba}{}\subset\primeParentSet{k}{i}{\bba}{}$. Moreover, by definition $\primeParentSet{0}{i}{\bba}{} = \{i\}$ and by \assref{ass:A_k}, $\primeParentSet{m}{i}{\bba}{} = [N]$. Therefore it is a matter of elementary set operations to check that
\begin{equation*}%\label{eq:yet another union form}
\bigcup_{k=1}^m \collisionStartSetSh{k}{i}{\bba} = \bigcup_{k=1}^m \collisionStartSet{k}{i}{\bba} = \primeParentSet{m}{i}{\bba}{}\setminus \{i\} = [N] \setminus \{i\}.
\end{equation*}
Empty intersections follow straightforwardly by definition \eqnref{eq:collision start sets} and the fact that $\primeParentSet{k-1}{i}{\bba}{}\subset\primeParentSet{k}{i}{\bba}{}$.
\end{proof}

%It will become clear later in the proof of \propref{prop:tensor product formulation} that the quantity $\specCollisionPairs{i}{j}{\bba}{4}$ in the statement of \propref{prop:tensor product formulation} is the cardinality of the set of colliding pairs of paths that start from vertices $\xi_{\mathrm{in}}^{i}$ and $\xi^{j}_{\mathrm{in}}$. Conversely, the quantity
%$\specNonCollisionPairsFrom{\bba}{m}{i}{j}$ is the cardinality of the set of non-colliding pairs of paths that start from vertices $\xi_{\mathrm{in}}^{i}$ and $\xi^{j}_{\mathrm{in}}$. As a general strategy for deriving tractable expressions for these cardinalities, we will partition the sets into more manageable sets and construct bijections between the resulting partition elements and sets whose cardinalities are trivially known.

%Following the strategy outlined at the end of \secref{sec:conditional variance and collision analysis main},
We start proving \propref{prop:tensor product formulation} by writing $i_{0:m} = (i_0,\ldots,i_m)$ and $j_{0:m} = (j_0,\ldots,j_m)$ for brevity. Then, for any $N,m\geq 1$, the set $\paths{\bba}{m}^2$, where $\bba = \bba^{(N,m)}$, can be decomposed in three disjoint sets
\begin{equation}\label{eq:collision set definitions}
\arraycolsep=1.4pt
\begin{array}{rcl}
\clsnSetn{1}{m}{\bba} &\defeq& \big\{(i_{0:m},j_{0:m}) \in \paths{\bba}{m}^2: i_0 = j_0 \big\}, \rule[-3mm]{0pt}{4mm}\\
\clsnSetn{2}{m}{\bba} &\defeq& \textstyle{\bigcup_{k=1}^m} \left\{(i_{0:m},j_{0:m}) \in \paths{\bba}{m}^2: i_0 \neq j_0,~ i_{k}=j_{k} \right\}, \rule[-3mm]{0pt}{4mm}\\
\clsnSetn{3}{m}{\bba} &\defeq& \textstyle{\bigcap_{k=1}^m} \left\{(i_{0:m},j_{0:m}) \in \paths{\bba}{m}^2: i_0 \neq j_0,~ i_{k}\neq j_{k} \right\}.\rule[-3mm]{0pt}{4mm}
\end{array}
\end{equation}
Clearly, the sets $\clsnSetn{1}{m}{\bba}$, $\clsnSetn{2}{m}{\bba}$ and $\clsnSetn{3}{m}{\bba}$ form a partition of $\paths{\bba}{m}^2$.
% --------------------------------------------------------------------
%  PARTITION LEMMA
% --------------------------------------------------------------------
\begin{lemma}\label{lem:partition}
Fix $N,m\geq 1$ and $\bba = \bba^{(N,m)}$. The sets $\clsnSetn{2}{m}{\bba}$ and $\clsnSetn{3}{m}{\bba}$ admit the decompositions:
\begin{align}
\clsnSetn{2}{m}{\bba} &= \bigcup_{i\in[N]}\bigcup_{\substack{j\in[N]\\j\neq i}}\bigcup_{k\in[m]}\bigcup_{u\in[N]} \specCollisionPairsFrom{\bba}{m}{k}{u}{i}{j}. \label{eq:partition formulation for d2}\\
\clsnSetn{3}{m}{\bba} &= \bigcup_{i\in[N]}\bigcup_{\substack{j\in[N]\\j\neq i}} \bigg(\big(\pathsFrom{i}{\bba}\times\pathsFrom{j}{\bba}\big) \setminus \bigcup_{k\in[m]}\bigcup_{u\in[N]}\specCollisionPairsFrom{\bba}{m}{k}{u}{i}{j}\bigg),\label{eq:partition formulation for d3}
\end{align}
where for all $k\in[m]$ and $i,j,u \in [N]$
\begin{align*}%\label{eq:def speccollisionpairsfrom}
&\specCollisionPairsFrom{\bba}{m}{k}{u}{i}{j} \defeq\\
& \quad\displaystyle\bigcap_{q=0}^{k-1} \left\{((i'_0,\ldots,i'_m),(j'_0,\ldots,j'_m)) \in \pathsFrom{i}{\bba}\times \pathsFrom{j}{\bba}: i'_{k}=j'_{k}=u,~i'_{q}\neq j'_{q}\right\},  	
\end{align*}
and for all $(k,u,i,j) \neq (k',u',i',j')$, $\specCollisionPairsFrom{\bba}{m}{k}{u}{i}{j} \cap \specCollisionPairsFrom{\bba}{m}{k'}{u'}{i'}{j'} = \emptyset$.
\end{lemma}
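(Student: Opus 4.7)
My plan is to establish all three claims by a straightforward ``first collision time'' bookkeeping argument; no structural hypotheses on $\bba$ beyond the very definition of $\paths{\bba}{m}$ are needed. I would begin by observing that the three sets $\clsnSetn{1}{m}{\bba}$, $\clsnSetn{2}{m}{\bba}$ and $\clsnSetn{3}{m}{\bba}$ defined in \eqnref{eq:collision set definitions} form a partition of $\paths{\bba}{m}^2$: any pair $((i_0,\ldots,i_m),(j_0,\ldots,j_m))$ either satisfies $i_0=j_0$, placing it in $\clsnSetn{1}{m}{\bba}$, or satisfies $i_0\neq j_0$, in which case it belongs to $\clsnSetn{2}{m}{\bba}$ or $\clsnSetn{3}{m}{\bba}$ according to whether or not some later collision $i_k=j_k$ occurs.

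For the decomposition of $\clsnSetn{2}{m}{\bba}$ in \eqref{eq:partition formulation for d2}, I would assign to each $((i_0,\ldots,i_m),(j_0,\ldots,j_m))\in\clsnSetn{2}{m}{\bba}$ its first collision time $k^\star \defeq \min\{k\in[m]:i_k=j_k\}$ and the common value $u^\star \defeq i_{k^\star}$; both are well-defined since $i_0\neq j_0$ while some collision is guaranteed to occur. Minimality of $k^\star$ then gives $i_q\neq j_q$ for all $0\leq q < k^\star$, so the pair lies in $\specCollisionPairsFrom{\bba}{m}{k^\star}{u^\star}{i_0}{j_0}$. Conversely, any pair in $\specCollisionPairsFrom{\bba}{m}{k}{u}{i}{j}$ satisfies $i_0=i\neq j=j_0$ (forced by the $q=0$ clause $i'_0\neq j'_0$ in the intersection) together with $i_k=j_k=u$, and hence lies in $\clsnSetn{2}{m}{\bba}$.

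For \eqref{eq:partition formulation for d3}, I would decompose $\paths{\bba}{m}^2 = \bigcup_{(i,j)\in[N]^2}(\pathsFrom{i}{\bba}\times\pathsFrom{j}{\bba})$. The contribution from $i=j$ is precisely $\clsnSetn{1}{m}{\bba}$, while for each $i\neq j$ the pairs in $\pathsFrom{i}{\bba}\times\pathsFrom{j}{\bba}$ that lie in $\clsnSetn{2}{m}{\bba}$ are, by \eqref{eq:partition formulation for d2}, exactly those in $\bigcup_{k,u}\specCollisionPairsFrom{\bba}{m}{k}{u}{i}{j}$. Subtracting these from $\pathsFrom{i}{\bba}\times\pathsFrom{j}{\bba}$ and taking the union over $i\neq j$ therefore yields the remaining partition class $\clsnSetn{3}{m}{\bba}$.

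Finally, for the disjointness assertion I would argue that any pair in $\specCollisionPairsFrom{\bba}{m}{k}{u}{i}{j}$ uniquely determines the tuple $(k,u,i,j)$: the indices $i=i_0$ and $j=j_0$ can be read off the paths' starting points, while the defining conditions $i'_q\neq j'_q$ for $q<k$ and $i'_k=j'_k=u$ force $k$ to equal the first collision time $k^\star$ and then force $u=i_{k^\star}$. Hence distinct tuples must index disjoint sets. The only subtle point is verifying the $q=0$ clause when $k\geq 1$ to rule out $i=j$; beyond that the lemma is a purely combinatorial bookkeeping exercise and I anticipate no real obstacle.
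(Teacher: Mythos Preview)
Your proposal is correct and follows essentially the same approach as the paper's own proof: both use the first collision time $p=\min\{q\in[m]:i_q=j_q\}$ to place each element of $\clsnSetn{2}{m}{\bba}$ into a specific $\specCollisionPairsFrom{\bba}{m}{p}{i_p}{i_0}{j_0}$, derive \eqref{eq:partition formulation for d3} by complementation within $\bigcup_{i\neq j}\pathsFrom{i}{\bba}\times\pathsFrom{j}{\bba}$, and obtain disjointness from the fact that the tuple $(k,u,i,j)$ is recoverable from any element of $\specCollisionPairsFrom{\bba}{m}{k}{u}{i}{j}$. The paper phrases the disjointness step as a case analysis on which coordinate of $(k,u,i,j)$ differs, whereas you phrase it as unique determination; these are equivalent.
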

% PROOF ------------------------------------------------------
\begin{proof}
First we observe that for the sets in \eqnref{eq:partition formulation for d2} the inclusion $\supset$ is trivial by definition. Then take a pair $(i_0,\ldots,i_m)$ and $(j_0,\ldots,j_m)$ belonging to $\clsnSetn{2}{m}{\bba}$. Then there exists $p = \min(q \in [m]: i_q = j_q)$, and thus the pair also belongs to $\specCollisionPairsFrom{\bba}{}{p}{i_p}{i_0}{j_0}$ and therefore also the inclusion $\subset$ holds, establishing \eqnref{eq:partition formulation for d2}. By elementary set theory it follows by \eqnref{eq:collision set definitions} that $\clsnSetn{3}{m}{\bba} =\big( \bigcup_{i\in[N]}\bigcup_{j\in[N],j\neq i} \big(\pathsFrom{i}{\bba}\times\pathsFrom{j}{\bba}\big)\big)\setminus \clsnSetn{2}{m}{\bba}$ and since for all $k\in[m]$ and $u\in[N]$, $\specCollisionPairsFrom{\bba}{m}{k}{u}{i}{j} \subset \pathsFrom{i}{\bba}\times\pathsFrom{j}{\bba}$, \eqnref{eq:partition formulation for d3} can be checked by elementary set theory.

To prove that the sets $\specCollisionPairsFrom{\bba}{m}{k}{u}{i}{j}$ are disjoint, assume that
\begin{equation}\label{eq:non empty intersection}
((i_0,\ldots,i_m),(j_0,\ldots,j_m)) \in \specCollisionPairsFrom{\bba}{m}{k}{u}{i}{j} \cap \specCollisionPairsFrom{\bba}{m}{k'}{u'}{i'}{j'}.
\end{equation}
If $i\neq i'$ and \eqnref{eq:non empty intersection} was true, then $i = i_0 = i' \neq i$, and similarly for $j$ and $j'$. In the case $k \neq k'$, since we are not assuming anything about the values of $u$, $u'$, $i$, $i'$, $j$ and $j'$, we can assume without loss of generality that $k < k'$. Now if \eqnref{eq:non empty intersection} was true, then $i_{k} = j_{k}$ and $i_{k} \neq j_{k}$, which is a contradiction. Finally it suffices to consider the case $k=k'$ and $u\neq u'$. If \eqnref{eq:non empty intersection} was true, then one must have $u = i_k = j_k = u' \neq u$ which is a contradiction completing the proof.
\end{proof}

%Generally speaking, in order to establish explicit expressions for the cardinalities of given sets it suffices to introduce a bijection from the set of interest into another set whose cardinality is know. Motivated by this observation we have the following elementary result.
The cardinality of a set can be evaluated by constructing a bijection between the set in question and some other set with known cardinality. For this purpose, we have the following result. Note that throughout the remainder of this document, for given $N,m\geq 1$, $\bba = \bba^{(N,m)}$, $0\leq k \leq m$ and $u\in[N]$, we let $\lowerPartFrom{k}{u}{\bba}$ be as defined in the statement of \propref{prop:tensor product formulation}.
% --------------------------------------------------------------------
%  BIJECTIVITY OF H
% --------------------------------------------------------------------
\begin{lemma}\label{eq:bijectivity h}
%Suppose that $\bba = \aMatrices{N}{m}$.
Fix $N,m\geq 1$ and $\bba = \bba^{(N,m)}$. For all $i,j,u\in [N]$, such that $i\neq j$ and $k\in[m]$, define
\begin{align*}
&\upperPartFrom{\bba}{}{k}{u}{i}{j} \defeq \\
&~\bigcap_{q=0}^{k-1}\left\{((i'_0,\ldots,i'_k),(j'_0,\ldots,j'_k))\in \pathsFrom{i}{\bba_{1:k}}\times \pathsFrom{j}{\bba_{1:k}}:i'_k=j'_k=u,~i'_q\neq j'_q\right\}, %\label{eq:def upperpartfrom}	
\end{align*}
and let the mapping
\begin{equation*}
%\kappa_0: \specCollisionPairsFrom{\bba}{}{0}{u}{i}{j} &\to& \lowerPartFrom{0}{u}{\bba} \times \lowerPartFrom{0}{u}{\bba},\qquad \\
\kappa: \specCollisionPairsFrom{\bba}{}{k}{u}{i}{j} \to \upperPartFrom{\bba}{}{k}{u}{i}{j} \times \lowerPartFrom{k}{u}{\bba} \times \lowerPartFrom{k}{u}{\bba},
\end{equation*}
be defined as
\begin{equation*}
%\kappa_0:(\bar{i},\bar{j}) &\mapsto& \left((i_{k},\ldots,i_{m}),(j_{k},\ldots,j_{m})\right), \qquad\\
\kappa:(i_{0:m},j_{0:m}) \mapsto \left(((i_0,\ldots,i_{k}),(j_0,\ldots,j_{k})),(i_{k},\ldots,i_{m}),(j_{k},\ldots,j_{m})\right),
\end{equation*}
where $i_{0:m} \defeq (i_0,\ldots,i_m)$, $j_{0:m} \defeq (j_0,\ldots,j_m)$.
% and
%\begin{eqnarray*}
%\kappa^{(1)}(\bar{i},\bar{j}) \defeq ((i_0,\ldots,i_{k}),(j_0,\ldots,j_{k})),\qquad
%\kappa^{(2)}(\bar{i},\bar{j}) \defeq (i_{k},\ldots,i_{m}) ,\qquad
%\kappa^{(3)}(\bar{i},\bar{j}) \defeq (j_{k},\ldots,j_{m}). \nonumber
%\end{eqnarray*}
Then $\kappa$ is a bijection.
\end{lemma}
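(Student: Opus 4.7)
The plan is to verify the three standard requirements: (a) $\kappa$ is well-defined as a map into the claimed codomain; (b) $\kappa$ is injective; (c) $\kappa$ is surjective. All three are essentially bookkeeping, once one sets up a natural inverse by concatenation of path segments at the common index $k$.

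First, for well-definedness, take $((i_0,\ldots,i_m),(j_0,\ldots,j_m)) \in \specCollisionPairsFrom{\bba}{}{k}{u}{i}{j}$. By the definition of this set (via the definition of $\pathsFrom{i}{\bba}$ in \eqref{eq:paths from}) the product $\prod_{q=0}^{m-1} A_{q+1}^{i_{q+1} i_q}$ is non-zero, which splits as a product over $q \in \{0,\ldots,k-1\}$ and a product over $q \in \{k,\ldots,m-1\}$ so both factors are non-zero. Hence $(i_0,\ldots,i_k) \in \pathsFrom{i}{\bba_{1:k}}$ and $(i_k,\ldots,i_m) \in \pathsFrom{u}{\bba_{k+1:m}} = \lowerPartFrom{k}{u}{\bba}$; analogously for $j$. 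The inequality constraints $i_q \neq j_q$ for $q < k$ and $i_k = j_k = u$ transfer verbatim to the truncated prefixes, so the prefix pair lies in $\upperPartFrom{\bba}{}{k}{u}{i}{j}$.

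For injectivity, the original tuple $(i_0,\ldots,i_m,j_0,\ldots,j_m)$ is recovered from its image by concatenating the prefix with each of the two suffix paths at their common endpoint $u$, hence distinct preimages give distinct images. For surjectivity, given an element $\bigl(((i_0,\ldots,i_k),(j_0,\ldots,j_k)),(\tilde\imath_k,\ldots,\tilde\imath_m),(\tilde\jmath_k,\ldots,\tilde\jmath_m)\bigr)$ of the target, the upper part forces $i_k=j_k=u$ and the lower parts start at $\tilde\imath_k=\tilde\jmath_k=u$. Define the concatenations $i^\star_{0:m} := (i_0,\ldots,i_k,\tilde\imath_{k+1},\ldots,\tilde\imath_m)$ and $j^\star_{0:m} := (j_0,\ldots,j_k,\tilde\jmath_{k+1},\ldots,\tilde\jmath_m)$. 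The product $\prod_{q=0}^{m-1} A_{q+1}^{i^\star_{q+1} i^\star_q}$ factorises into the prefix and suffix products, both non-zero by the definitions of $\pathsFrom{i}{\bba_{1:k}}$ and $\pathsFrom{u}{\bba_{k+1:m}}$, so $i^\star_{0:m} \in \pathsFrom{i}{\bba}$ and similarly $j^\star_{0:m} \in \pathsFrom{j}{\bba}$. The conditions $i^\star_k = j^\star_k = u$ and $i^\star_q \neq j^\star_q$ for $q<k$ hold by construction, so $(i^\star_{0:m},j^\star_{0:m}) \in \specCollisionPairsFrom{\bba}{}{k}{u}{i}{j}$ and clearly maps back to the prescribed image.

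No step poses any real technical obstacle; the lemma is a purely set-theoretic splicing fact and the only substantive ingredient is that the non-vanishing of the matrix-product weight along a path factorises at any intermediate index, which is immediate from the definition \eqref{eq:def paths}. I would write the proof in roughly the order above, possibly compressing injectivity and surjectivity into a single paragraph by exhibiting the inverse map explicitly as the concatenation operation described.
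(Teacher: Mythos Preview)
Your proposal is correct and follows essentially the same approach as the paper's proof: verify that $\kappa$ lands in the stated codomain by splitting the path-weight product at index $k$, check injectivity (the paper phrases this contrapositively, you phrase it via reconstruction, but the content is identical), and check surjectivity by concatenating prefix and suffix paths at the shared index $u$. The only cosmetic difference is that you spell out the factorisation of the non-vanishing product a bit more explicitly than the paper does.
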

% PROOF ------------------------------------------------------
\begin{proof}
By the definitions of $\lowerPartFromNoArg{\bba}$, $\specCollisionPairsFromNoArg{\bba}$ and $\upperPartFromNoArg{\bba}$, for any $(i_{0:m},j_{0:m}) \in \specCollisionPairsFrom{\bba}{}{k}{u}{i}{j}$, where $i,j,u\in[N]$ such that $i\neq j$ and $k\in[m]$
\begin{align*}
& \left(((i_0,\ldots,i_{k}),(j_0,\ldots,j_{k})),(i_{k},\ldots,i_{m}),(j_{k},\ldots,j_{m})\right) \\
&\qquad \in \upperPartFrom{\bba}{}{k}{u}{i}{j} \times \lowerPartFrom{k}{u}{\bba} \times \lowerPartFrom{k}{u}{\bba}.
\end{align*}
%\begin{eqnarray*}
%(\kappa^{(1)}(\bar{i},\bar{j}),\kappa^{(2)}(\bar{i},\bar{j}),\kappa^{(3)}(\bar{i},\bar{j})) &\in& \upperPartFrom{\bba}{}{k}{n}{i}{j} \times \lowerPartFrom{k}{n}{\bba} \times \lowerPartFrom{k}{n}{\bba}
%\end{eqnarray*}
If $(i_{0:m},j_{0:m}) \neq (i'_{0:m},j'_{0:m}) \in \specCollisionPairsFrom{\bba}{}{k}{u}{i}{j}$ then $\kappa(i_{0:m},j_{0:m}) \neq \kappa(i'_{0:m},j'_{0:m})$, from which we conclude that $\kappa$ is an injection. To see that $\kappa$ is a surjection, take any
\begin{align*}
&(((i_0,\ldots,i_{k}),(j_0,\ldots,j_{k})),(i'_k,\ldots,i'_{m}),(j'_k,\ldots,j'_{m})) \\
&\qquad \in \upperPartFrom{\bba}{}{k}{u}{i}{j} \times \lowerPartFrom{k}{u}{\bba} \times \lowerPartFrom{k}{u}{\bba}.
\end{align*}
Then $i_0 = i$, $j_0 = j$, and by the definitions of $\lowerPartFromNoArg{\bba}$ and $\upperPartFromNoArg{\bba}$, $i_{k} = j_{k} = i'_{k} = j'_{k} = u$, for all $0 \leq p < k$, $i_{p} \neq j_{p}$, and $(i_0,\ldots,i_k,i'_{k+1}\ldots,i'_{m})\in \pathsFrom{i}{\bba}$, $(j_0,\ldots,j_k,j'_{k+1}\ldots,j'_{m})\in \pathsFrom{j}{\bba}$. From these observations we conclude by the definition of $\specCollisionPairsFrom{\bba}{}{k}{u}{i}{j}$ that
\begin{equation*}
((i_0,\ldots,i_k,i'_{k+1}\ldots,i'_{m}),(j_0,\ldots,j_k,j'_{k+1}\ldots,j'_{m})) \in \specCollisionPairsFrom{\bba}{}{k}{u}{i}{j},
\end{equation*}
and hence $\kappa$ is a surjection.
\end{proof}
% --------------------------------------------------------------------
%  CARDINALITY OF COLLISION SETS
% --------------------------------------------------------------------
By using the bijectivity result, \lemmaref{eq:bijectivity h}, we can find an expression for the cardinalities of the sets $\specCollisionPairsFrom{\bba}{m}{k}{u}{i}{j}$ in terms of the cardinalities of the sets $\lowerPartFrom{k}{u}{\bba}{}$ as defined in the statement of \propref{prop:tensor product formulation}. This is established by the following result.
\begin{lemma}
\label{lem:collision cardinality}
\label{lem:noncolliding cardinality}
If $\bba = \bba^{(N,m)}$ satisfies \assref{ass:A_k extra} for some $N,m\geq 1$, then for all $k\in[m]$ and $i,j,u \in [N]$ such that $i\neq j$
\begin{equation*}%\label{eq:small collision set cardinality}
\Big|\specCollisionPairsFrom{\bba}{m}{k}{u}{i}{j}\Big| =
\Big|\lowerPartFrom{k}{u}{\bba}{}\Big|^2\ind{}\Big(u \in \primeParentSet{k}{i}{\bba}{}\Big)\ind{}\Big(j \in \collisionStartSetSh{k}{i}{\bba}\Big).
\end{equation*}
%and
%\begin{equation}\label{eq:first sum form}
%\card{\specNonCollisionPairsFrom{\bba}{m}{i}{j}}
%= N^2  - \sum_{k=1}^m\sum_{n = 1}^{N}\card{\specCollisionPairsFrom{\bba}{m}{k}{n}{i}{j}}.
%\end{equation}
\end{lemma}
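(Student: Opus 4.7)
The natural route is to combine the bijection established in Lemma~\ref{eq:bijectivity h} with an identification of when the ``upper'' path-pair set $\upperPartFrom{\bba}{}{k}{u}{i}{j}$ is non-empty. Lemma~\ref{eq:bijectivity h} immediately gives
$$
\big|\specCollisionPairsFrom{\bba}{m}{k}{u}{i}{j}\big| = \big|\upperPartFrom{\bba}{}{k}{u}{i}{j}\big|\cdot\big|\lowerPartFrom{k}{u}{\bba}\big|^2,
$$
so the whole problem reduces to proving
$$
\big|\upperPartFrom{\bba}{}{k}{u}{i}{j}\big| = \ind{}\!\big(u\in\primeParentSet{k}{i}{\bba}{}\big)\,\ind{}\!\big(j\in\collisionStartSetSh{k}{i}{\bba}\big).
$$

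First I would show that $\big|\upperPartFrom{\bba}{}{k}{u}{i}{j}\big|\in\{0,1\}$. Any element of this set is a pair of length-$k$ paths, one from $i$ to $u$ and one from $j$ to $u$, and Assumption~\ref{ass:A_k extra}\ref{it:unique paths} says at most one directed path exists between any two vertices, so each component is uniquely determined if it exists. It therefore suffices to characterise exactly when $\upperPartFrom{\bba}{}{k}{u}{i}{j}$ is non-empty.

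For necessity, suppose the set contains a pair $((i_0,\ldots,i_k),(j_0,\ldots,j_k))$. The path from $i$ to $u$ witnesses $u\in\primeParentSet{k}{i}{\bba}{}$, and the path from $j$ to $u$, together with Lemma~\ref{lem:general properties of path sets}\ref{it:sym com property}--\ref{it:equal prime parent sets}, yields $j\in\primeParentSet{k}{j}{\bba}{}=\primeParentSet{k}{u}{\bba}{}=\primeParentSet{k}{i}{\bba}{}$. To upgrade this to $j\in\collisionStartSetSh{k}{i}{\bba}$, I would argue by contradiction: if $j\in\primeParentSet{k-1}{i}{\bba}{}$, the collision--now--or--never statement (Lemma~\ref{lem:general properties of path sets}\ref{it:collision now or never}) applied at level $k-1$ forces either $i_{k-1}=j_{k-1}$ (contradicting the defining non-collision condition $i_q\neq j_q$ for $q<k$) or $i_q\neq j_q$ for all $q\geq k-1$ (contradicting $i_k=j_k=u$).

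For sufficiency, assume $u\in\primeParentSet{k}{i}{\bba}{}$ and $j\in\collisionStartSetSh{k}{i}{\bba}$. By Lemma~\ref{lem:general properties of path sets}\ref{it:equal prime parent sets} the sets $\primeParentSet{k}{i}{\bba}{}$ and $\primeParentSet{k}{j}{\bba}{}$ coincide (since $j\in\primeParentSet{k}{i}{\bba}{}$), hence $u\in\primeParentSet{k}{j}{\bba}{}$ as well, so unique length-$k$ paths $(i_0,\ldots,i_k)$ from $i$ to $u$ and $(j_0,\ldots,j_k)$ from $j$ to $u$ exist by Assumption~\ref{ass:A_k extra}\ref{it:unique paths}. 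I still need to verify the non-collision condition $i_q\neq j_q$ for $0\le q<k$. Suppose instead $i_q=j_q$ for some such $q$; then the truncated sub-paths exhibit $i_q$ as a common $q$-step ancestor of both $i$ and $j$, and applying Lemma~\ref{lem:general properties of path sets}\ref{it:equal prime parent sets} together with \ref{it:positive diagonals} gives $j\in\primeParentSet{q}{i}{\bba}{}\subset\primeParentSet{k-1}{i}{\bba}{}$, contradicting $j\in\collisionStartSetSh{k}{i}{\bba}$. The main obstacle is keeping track of the symmetric/commutative graph structure to justify the interchange of endpoints in these ancestor-set manipulations, but once the structural lemma is invoked consistently, the indicator factorisation drops out without further computation.
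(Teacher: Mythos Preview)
Your proposal is correct and follows essentially the same route as the paper: both use the bijection of Lemma~\ref{eq:bijectivity h} to factor out $\big|\lowerPartFrom{k}{u}{\bba}\big|^2$, then invoke the structural facts in Lemma~\ref{lem:general properties of path sets} and Assumption~\ref{ass:A_k extra}\ref{it:unique paths} to show $\big|\upperPartFrom{\bba}{}{k}{u}{i}{j}\big|$ equals the required indicator. The only cosmetic differences are that the paper first characterises when $\specCollisionPairsFrom{\bba}{m}{k}{u}{i}{j}$ is empty and applies the bijection at the end, and in the sufficiency step it verifies $i_{k-1}\neq j_{k-1}$ directly and then propagates to all $q<k$ via the unique-paths assumption, whereas you handle all $q<k$ simultaneously via the inclusion $\primeParentSet{q}{i}{\bba}{}\subset\primeParentSet{k-1}{i}{\bba}{}$; one small technical point is that Lemma~\ref{lem:general properties of path sets}\ref{it:collision now or never} is stated for full-length paths in $\paths{\bba}{}$, so when you invoke it in the necessity argument you should first extend your length-$k$ paths to length $m$ (which is harmless by Lemma~\ref{lem:general properties of path sets}\ref{it:positive diagonals}).
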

\begin{proof}
First we prove the part $\specCollisionPairsFrom{\bba}{m}{k}{u}{i}{j} = \emptyset$ if $(u,j) \notin \primeParentSet{k}{i}{\bba}{}\times\collisionStartSetSh{k}{i}{\bba}$. If $u\notin \primeParentSet{k}{i}{\bba}{}$, then by \lemmaref{lem:general properties of path sets}\ref{it:sym com property} $i\notin \primeParentSet{k}{u}{\bba}{}$ and hence $\specCollisionPairsFrom{\bba}{m}{k}{u}{i}{j} = \emptyset$. Next, if $j \in \primeParentSet{k-1}{i}{\bba}{4}$, then by \lemmaref{lem:general properties of path sets}\ref{it:collision now or never}, for all $(i'_0,\ldots,i'_m)\in\pathsFrom{i}{\bba}$, $(j'_0,\ldots,j'_m)\in\pathsFrom{j}{\bba}$, either $i'_{k-1} = j'_{k-1}$ or $i'_{k} \neq j'_{k}$ and hence $\specCollisionPairsFrom{\bba}{m}{k}{u}{i}{j}=\emptyset$. For $j \notin \primeParentSet{k}{i}{\bba}{4}$, suppose that $\specCollisionPairsFrom{\bba}{m}{k}{u}{i}{j}\neq\emptyset$. In this case, $i,j \in \primeParentSet{k}{u}{\bba}{4}$ and by \lemmaref{lem:general properties of path sets}\ref{it:sym com property} $u \in \primeParentSet{k}{i}{\bba}{4}\cap\primeParentSet{k}{j}{\bba}{4}$ and hence by \lemmaref{lem:general properties of path sets}\ref{it:equal prime parent sets} $j \in \primeParentSet{k}{i}{\bba}{4}$, which concludes the proof for $(u,j) \notin \primeParentSet{k}{i}{\bba}{}\times\collisionStartSetSh{k}{i}{\bba}$.

Next we prove that $\specCollisionPairsFrom{\bba}{m}{k}{u}{i}{j} \neq \emptyset$, if $(u,j) \in \primeParentSet{k}{i}{\bba}{}\times\collisionStartSetSh{k}{i}{\bba}$. Take $(u,j) \in \primeParentSet{k}{i}{\bba}{}\times\collisionStartSetSh{k}{i}{\bba}$. Because $u\in\primeParentSet{k}{i}{\bba}{}$, then by \lemmaref{lem:general properties of path sets}\ref{it:equal prime parent sets}, $\primeParentSet{k}{i}{\bba}{} = \primeParentSet{k}{u}{\bba}{}$, and because $j \in \primeParentSet{k}{i}{\bba}{} = \primeParentSet{k}{u}{\bba}{}$, then by \lemmaref{lem:general properties of path sets}\ref{it:sym com property}, %$\primeParentSet{k}{i}{\bba}{} = \primeParentSet{k}{j}{\bba}{}$, and because  we have by \lemmaref{lem:general properties of path sets}\ref{it:sym com property},
$i,j \in \primeParentSet{k}{u}{\bba}{}$ from which we conclude that there exists $(i'_0,\ldots,i'_m)\in\pathsFrom{i}{\bba}$ and $(j'_0,\ldots,j'_m) \in \pathsFrom{j}{\bba}$ such that $i'_k = j'_k = u$ and $i'_0 =i$ and $j'_0 =j$. Suppose then that % for $(i',j')$ we have
$i'_{k-1} = j'_{k-1}$. This would imply that $i,j \in \primeParentSet{k-1}{i'_{k-1}}{\bba}{4}$ and, by \lemmaref{lem:general properties of path sets}\ref{it:equal prime parent sets}, $\primeParentSet{k-1}{i}{\bba}{4} = \primeParentSet{k-1}{j}{\bba}{4} = \primeParentSet{k-1}{i'_{k-1}}{\bba}{4}$ and hence $j \in \primeParentSet{k-1}{i}{\bba}{4}$ which is a contradiction implying that $i'_{k-1} \neq j'_{k-1}$. By \assref{ass:A_k extra}\ref{it:unique paths} we then deduce that $i'_q \neq j'_q$ for all $q < k$ and hence $((i'_0,\ldots,i'_m),(j'_0,\ldots,j'_m))\in \specCollisionPairsFrom{\bba}{m}{k}{u}{i}{j}$ which can therefore not be empty.

Finally, by \lemmaref{eq:bijectivity h}, for nonempty $\specCollisionPairsFrom{\bba}{m}{k}{u}{i}{j}$ we have
\begin{equation*}
\card{\specCollisionPairsFrom{\bba}{m}{k}{u}{i}{j}} = \card{\upperPartFrom{\bba}{m}{k}{u}{i}{j}}\card{\lowerPartFrom{k}{u}{\bba}}\card{ \lowerPartFrom{k}{u}{\bba}},
\end{equation*}
and by \assref{ass:A_k extra}\ref{it:unique paths} we have $\card{\upperPartFrom{\bba}{m}{k}{u}{i}{j}}=1$ which concludes the proof.% of \eqnref{eq:small collision set cardinality}.
%
%It remains to prove \eqnref{eq:first sum form}. First observe that by definition \eqnref{eq:def specnoncollisionpairsfrom} and \lemmaref{lem:partition}
%\begin{eqnarray}\label{eq:setminus form}
%\specNonCollisionPairsFrom{\bba}{m}{i}{j} = \left(\pathsFrom{i}{\bba}\times\pathsFrom{j}{\bba}\right) \setminus \bigcup_{k=1}^m\bigcup_{n=1}^{N}\specCollisionPairsFrom{\bba}{m}{k}{n}{i}{j}.
%\end{eqnarray}
\end{proof}

% --------------------------------------------------------------------
%  BIJECTIVITY OF G
% --------------------------------------------------------------------
By \lemmaref{lem:collision cardinality} we observe that in order to have explicit expressions for the cardinalities of $\specCollisionPairsFrom{\bba}{m}{k}{u}{i}{j}$, it suffices to have expressions for the cardinalities of the sets $\lowerPartFrom{k}{u}{\bba}$. In order to evaluate these cardinalities, we follow the principle mentioned earlier of constructing appropriate bijections to sets with known cardinalities, according to the following result.
\begin{lemma}\label{lem:bijectivity g}
Suppose that $\bba = \bba^{(N,m)}$ satisfies \assref{ass:A_k extra} for some $N,m\geq 1$. For all $i\in [N]$ and $k\in [m]$, let $r_k \defeq \big|\parentSet{k}{i}{\bba}{}\big|$ and let
$\phi^i_k:\parentSet{k}{i}{\bba}{} \to [r_k]$ be arbitrary bijections. Then for any $u\in [N]$ and $k\in [m]$, the mapping $\gamma: \lowerPartFrom{k}{u}{\bba} \to  \{u\} \times [r_{k+1}]\times\cdots\times[r_m]$, defined as $\gamma: (i_{0},\ldots,i_{m-k}) \mapsto (c_0,\ldots,c_{m-k})$, where $c_0 = i_0$ and for all $0 \leq p < m-k$, $c_{p+1} = \phi_{p+k+1}^{i_p}(i_{p+1})$, is a bijection.
\end{lemma}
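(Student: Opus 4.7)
The plan is to verify that $\gamma$ is a bijection by explicitly constructing its inverse via the inverse bijections $(\phi_k^i)^{-1}$. First I would check that $\gamma$ is well-defined with the stated codomain: if $(i_0,\ldots,i_{m-k}) \in \lowerPartFrom{k}{u}{\bba} = \pathsFrom{u}{\bba_{k+1:m}}$, then by \eqref{eq:paths from} and \eqref{eq:def paths} one has $i_0 = u$ and $A_{p+k+1}^{i_{p+1} i_p} \neq 0$ for each $0 \leq p < m-k$. By \eqref{eq:parent columns} and \assref{ass:A_k extra}\ref{it:symmetry}, this means $i_{p+1} \in \parentSet{p+k+1}{i_p}{\bba}{}$, so $c_{p+1} = \phi_{p+k+1}^{i_p}(i_{p+1})$ is defined and lies in $[r_{p+k+1}]$, where $r_{p+k+1} = |\parentSet{p+k+1}{i_p}{\bba}{}|$ is independent of $i_p$ by \assref{ass:A_k extra}\ref{it:cardinality invariance}. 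Thus $\gamma$ maps into $\{u\} \times [r_{k+1}] \times \cdots \times [r_m]$.

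Next I would define a candidate inverse $\gamma^{-1}$ by induction. Given any $(c_0,\ldots,c_{m-k}) \in \{u\} \times [r_{k+1}] \times \cdots \times [r_m]$, set $i_0 \defeq c_0 = u$, and recursively $i_{p+1} \defeq (\phi_{p+k+1}^{i_p})^{-1}(c_{p+1})$ for $0 \leq p < m-k$. Since $\phi_{p+k+1}^{i_p}$ is a bijection from $\parentSet{p+k+1}{i_p}{\bba}{}$ to $[r_{p+k+1}]$, each $i_{p+1}$ is well-defined and belongs to $\parentSet{p+k+1}{i_p}{\bba}{}$, so $A_{p+k+1}^{i_p i_{p+1}} \neq 0$, and by \assref{ass:A_k extra}\ref{it:symmetry} also $A_{p+k+1}^{i_{p+1} i_p} \neq 0$. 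Hence $(i_0,\ldots,i_{m-k}) \in \pathsFrom{u}{\bba_{k+1:m}} = \lowerPartFrom{k}{u}{\bba}$, and by construction $\gamma(i_0,\ldots,i_{m-k}) = (c_0,\ldots,c_{m-k})$, so $\gamma$ is surjective.

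For injectivity, if $\gamma(i_0,\ldots,i_{m-k}) = \gamma(i'_0,\ldots,i'_{m-k})$, then $i_0 = c_0 = i'_0$, and assuming inductively that $i_p = i'_p$, the equality $\phi_{p+k+1}^{i_p}(i_{p+1}) = c_{p+1} = \phi_{p+k+1}^{i'_p}(i'_{p+1}) = \phi_{p+k+1}^{i_p}(i'_{p+1})$ forces $i_{p+1} = i'_{p+1}$ by injectivity of $\phi_{p+k+1}^{i_p}$. The case $k = m$ is degenerate: $\lowerPartFrom{m}{u}{\bba} = \{u\}$ and the codomain has an empty product of factors beyond $\{u\}$, so $\gamma$ reduces to the identity on a singleton.

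There is no real obstacle here — the argument is purely combinatorial unwinding, and the only ingredient from \assref{ass:A_k extra} beyond symmetry is part \ref{it:cardinality invariance}, which guarantees that the size $r_k$ of the target factor does not depend on which $i_p$ indexes the bijection $\phi_{p+k+1}^{i_p}$, so that the codomain $\{u\} \times [r_{k+1}] \times \cdots \times [r_m]$ is a single well-defined set.
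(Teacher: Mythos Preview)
Your proposal is correct and follows essentially the same route as the paper: verify the codomain using \assref{ass:A_k extra}\ref{it:symmetry}, construct the inverse recursively via $(\phi_{p+k+1}^{i_p})^{-1}$ for surjectivity, and use injectivity of the $\phi$'s for injectivity of $\gamma$. The only cosmetic difference is that the paper argues injectivity by picking an index $q$ where $i_q=i'_q$ but $i_{q+1}\neq i'_{q+1}$ and showing $c_{q+1}\neq c'_{q+1}$, whereas your forward induction is slightly cleaner; your explicit mention of \assref{ass:A_k extra}\ref{it:cardinality invariance} (to make the codomain well-defined) and of the degenerate case $k=m$ are both helpful clarifications that the paper leaves implicit.
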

% PROOF ------------------------------------------------------
\begin{proof}
From the definition of $\lowerPartFromNoArg{\bba}$ and \assref{ass:A_k extra}\ref{it:symmetry}, it follows that for given $(i_0,\ldots,i_{m-k}) \in \lowerPartFrom{k}{u}{\bba}$, one has $i_{p+1} \in \parentSet{p+k+1}{i_{p}}{\bba}{m}$ for all $0 \leq p < m-k$. It then follows that $c_{p+1} = \phi_{p+k+1}^{i_p}(i_{p+1}) \in [r_{p+k+1}]$ for all $0 \leq p < m-k$ and thus $\gamma(i_0,\ldots,i_{m-k}) \in \{u\} \times [r_{k+1}]\times\cdots\times[r_m]$.

For $((i_0,\ldots,i_{m-k}),(i'_0,\ldots,i'_{m-k}))\in\lowerPartFrom{k}{u}{\bba}^2$ such that $(i_0,\ldots,i_{m-k})\neq(i'_0,\ldots,i'_{m-k})$, one can take $q = \max(p \in \{0,\ldots,m-k\}: i_p=i'_p)$ for which $i_{q+1}\neq i'_{q+1}$. By the bijectivity of $\phi_{q+k+1}^{i_q}$, one has $\phi_{q+k+1}^{i_q}(i_{q+1}) \neq \phi_{q+k+1}^{i_q}(i'_{q+1}) = \phi_{q+k+1}^{i'_q}(i'_{q+1})$. From this it follows that $\gamma(i_0,\ldots,i_{m-k}) \neq \gamma(i'_0,\ldots,i'_{m-k})$ proving that $\gamma$ is injection.

For given $0\leq  p < m-k$, $c \in [r_{p+k+1}]$ and $i \in [N]$, one has $(\phi_{p+k+1}^{i})^{-1}(c) \in \parentSet{p+k+1}{i}{\bba}{}$ and hence if for any given $(c_0,\ldots,c_{m-k}) \in \{u\} \times [r_{k+1}]\times\cdots\times[r_m]$, $(i_0,\ldots,i_{m-k})$ is defined recursively as $i_0 = c_0$ and $i_{p+1} = (\phi_{p+k+1}^{i_p})^{-1}(c_{p+1})$ for all $0 \leq p < m-k$, then $(i_0,\ldots,i_{m-k}) \in \lowerPartFrom{k}{n}{\bba}$ and $\gamma(i_0,\ldots,i_{m-k}) = (c_0,\ldots,c_{m-k})$, which completes the proof.
\end{proof}

\begin{remark}\label{rem:invariance}
By \lemmaref{lem:bijectivity g}, the primary implication of \assref{ass:A_k extra}\ref{it:cardinality invariance} becomes clear. Effectively it implies that that the cardinalities of $\lowerPartFrom{k}{u}{\bba}$ are independent of $u$ and, as a corollary of \lemmaref{lem:bijectivity g}, we have for any $u \in [N]$ and $(i_{k+1},\ldots,i_{m}) \in [N]^{m-k}$,
\begin{equation*}
\Big|\lowerPartFrom{k}{u}{\bba}\Big| = \prod_{q=k+1}^m \Big|\parentSet{q}{i_{q}}{\bba}{}\Big|,
\end{equation*}
which is simple to evaluate given the explicit definition of $\bba$.
\end{remark}

We have now all the ingredients to prove \propref{prop:tensor product formulation}.

% --------------------------------------------------------------------
%  PROOF OF THE SECTIONS MAIN RESULT
% --------------------------------------------------------------------
\begin{proof}[Proof of \propref{prop:tensor product formulation}]
Throughout the proof we will use the notations $i_{p:q} \defeq (i_p,\ldots,i_q) \in [N]^{q-p+1}$ and $j_{p:q} \defeq (j_p,\ldots,j_q)\in [N]^{q-p+1}$, for all $0\leq p\leq q\leq m$. First note that by \propref{prop:tens_prod_formula} %and \eqnref{eq:connection between martingale and tensor product}
%\begin{equation}
%\arraycolsep=1.4pt
\begin{align}
\label{eq:decomposable sum}
& \displaystyle\frac{m}{N}\E\Bigg[\Bigg(\sum_{\varrho\in [Nm]} X_\varrho^{(N,m)}\Bigg)^2\Bigg|\F_0^{(N,m)}\Bigg] =
%=\E\left[\left.\left(\frac{1}{N}\sum_i V_m ^i\overline{\varphi}_{N} (\xi_m^i)\right)^2\right|\F_0^{(N,m)}\right]
\\
&\displaystyle\sum_{\left(i_{0},j_{0},...,i_{m},j_{m}\right)}\frac{1}{N^{2}}\Bigg(\prod_{k=0}^{m-1}A_{k+1}^{i_{k+1}i_{k}}A_{k+1}^{j_{k+1}j_{k}}\Bigg)
g(\xi_{0}^{i_{0}})g(\xi_{0}^{j_{0}})\ccomp{1}{m}{i}{j}\big(\cvarphi_{N}^{\otimes2}\big)\big(\xi_{0}^{i_{0}},\xi_{0}^{j_{0}}\big).\nonumber
%%\end{array}
\end{align}
By \assref{ass:A_k extra}\ref{it:unique paths}, there exists at most one sequence $(i'_0,\ldots,i'_m)$ in $\paths{\bba}{}$ for which $(i_0,i_m) = (i'_0,i'_m)$. Therefore, by \assref{ass:A_k extra}\ref{it:symmetry} and \assref{ass:A_k}, we have
\begin{equation*}
\prod_{k=0}^{m-1} A_{k+1}^{i'_{k+1}i'_{k}} =\sum_{(i_1,\ldots,i_{m-1})}\prod_{k=0}^{m-1} A_{k+1}^{i_{k+1}i_{k}} = \left(\prod_{k=0}^{m-1} A_{k+1}\right)^{i_mi_0} = \frac{1}{N},
\end{equation*}
and hence $\prod_{k=0}^{m-1}A_{k+1}^{i_{k+1}i_{k}}A_{k+1}^{j_{k+1}j_{k}} = N^{-2}\ind{}(((i_0,\ldots,i_m),(j_0,\ldots,j_m))	\in	\paths{\bba}{}^2)$,
%\begin{equation}
%\prod_{k=0}^{m-1}A_{k+1}^{i_{k+1}i_{k}}A_{k+1}^{j_{k+1}j_{k}} =
%\begin{cases}
%N^{-2} 	&((i_0,\ldots,i_m),(j_0,\ldots,j_m))	\in			\paths{\bba}{}^2,\\
%0 			&((i_0,\ldots,i_m),(j_0,\ldots,j_m))	\notin	\paths{\bba}{}^2,
%\end{cases}
%\end{equation}
and from \eqnref{eq:decomposable sum} we then have
\begin{align}
&\frac{m}{N}\E\Bigg[\Bigg(\sum_{\varrho\in [Nm]} X_\varrho^{(N,m)}\Bigg)^2\Bigg|\F_0^{(N,m)}\Bigg]\nonumber\\
&= \frac{1}{N^4}\sum_{\left(i,j\right)\in \paths{\bba}{m}^2}g(\xi_{0}^{i_{0}})g(\xi_{0}^{j_{0}})\left(\ccomp{1}{m}{i}{j}\left(\cvarphi_{N}^{\otimes2}\right)\right)\left(\xi_{0}^{i_{0}},\xi_{0}^{j_{0}}\right).\label{eq:decomposable sum 2}
\end{align}
%Let us write $i = (i_0,\ldots,i_m)$ and $j = (j_0,\ldots,j_m)$ for brevity. The set $\paths{\bba}{m}^2$ can be decomposed in three disjoint sets
%\begin{eqnarray*}
%\clsnSetn{1}{m}{\bba} &\defeq& \big\{(i,j) \in \paths{\bba}{m}^2: i_0 = j_0 \big\} \\
%\clsnSetn{2}{m}{\bba} &\defeq& \textstyle{\bigcup_{q=1}^m} \left\{(i,j) \in \paths{\bba}{m}^2: i_0 \neq j_0,~ i_{q}=j_{q} \right\} \\
%\clsnSetn{3}{m}{\bba} &\defeq& \textstyle{\bigcap_{q=1}^m} \left\{(i,j) \in \paths{\bba}{m}^2: i_0 \neq j_0,~ i_{q}\neq j_{q} \right\}
%\end{eqnarray*}
%and
%

By \lemmaref{lem:partition}, \assref{ass:A_k extra}\ref{it:cardinality invariance} and \lemmaref{lem:collision cardinality} (see also \remref{rem:invariance}),
\begin{equation}\label{eq:collision cardinality sum 2}
\Bigg|\bigcup_{k\in[m]}\bigcup_{u\in[N]} \specCollisionPairsFrom{\bba}{m}{k}{u}{i_0}{j_0}\Bigg| = \sum_{k=1}^m\sum_{u = 1}^{N}\card{\specCollisionPairsFrom{\bba}{m}{k}{u}{i_0}{j_0}} = \specCollisionPairs{i_0}{j_0}{\bba}{}.%\sum_{k=1}^m \Big|\lowerPartFrom{k}{n_0}{\bba}\Big|^2\ind{}\Big(j \in \collisionStartSetSh{k}{i}{\bba}\Big)\Big|\primeParentSet{k}{i}{\bba}{}\Big|.
\end{equation}
By \assref{ass:A_k}, for all $u\in[N]$, $\primeParentSet{m}{u}{\bba}{} = [N]$, and therefore for all $i\in[N]$, there exists a sequence $(i_0,\ldots,i_m)\in\pathsFrom{i}{\bba}$ such that $i_m = u$. On the other hand, by \assref{ass:A_k extra}\ref{it:unique paths} there exist at most one such sequence from which we conclude that $\big|\lowerPartFrom{0}{i}{\bba}\big| = N$. Therefore, by the definition of $\lowerPartFromNoArg{\bba}$,
\begin{equation}\label{eq:L set cardinality}
\big|\pathsFrom{i}{\bba}\times\pathsFrom{j}{\bba}\big| = \big|\lowerPartFrom{0}{i}{\bba}\big|\big|\lowerPartFrom{0}{j}{\bba}\big| = N^2.
\end{equation}

Using the fact that by \lemmaref{lem:general properties of path sets}\ref{it:complement decomposition} for all $i_{0} \neq j_{0}$, $\sum_{k=1}^m \ind{}(j_{0} \in \collisionStartSetSh{k}{i_{0}}{\bba}) = 1$, we then have by \eqnref{eq:L set cardinality}, \lemmaref{lem:partition}, \lemmaref{lem:noncolliding cardinality} and \eqnref{eq:collision cardinality sum 2} that
\begin{align}
&\Bigg|\big(\pathsFrom{i_{0}}{\bba}\times\pathsFrom{j_{0}}{\bba}\big) \setminus \bigcup_{k\in[m]}\bigcup_{u\in[N]}\specCollisionPairsFrom{\bba}{m}{k}{u}{i_{0}}{j_{0}}\Bigg| \nonumber\\
%&=& N^2  - \sum_{k=1}^m\sum_{n = 1}^{N}\card{\specCollisionPairsFrom{\bba}{m}{k}{n}{i}{j}} \nonumber \\
&= N^2\sum_{k=1}^m \ind{}\big(j_{0} \in \collisionStartSetSh{k}{i_{0}}{\bba}\big)  - \sum_{k=1}^m\sum_{u = 1}^{N}\card{\specCollisionPairsFrom{\bba}{m}{k}{u}{i_{0}}{j_{0}}}\nonumber\\
&= \specNonCollisionPairsFrom{\bba}{m}{i_{0}}{j_{0}}.\label{eq:collision cardinality sum 3}
%&=& \sum_{k=1}^m \Big(N^2 - \Big|\lowerPartFrom{k}{n}{\bba}\Big|^2\Big|{\primeParentSet{k}{i}{}{}}\Big|\Big)\ind{}\Big(j \in \collisionStartSetSh{k}{i}{\bba}\Big)
\end{align}
%from which the claim follows by \eqnref{eq:collision cardinality sum 2}.
By \eqnref{eq:collision operator} and \eqnref{eq:collision set definitions}
\begin{align*}
&g(\xi_{0}^{i_{0}})g(\xi_{0}^{j_{0}})\big(\ccomp{1}{m}{i}{j}\big(\cvarphi_{N}^{\otimes2}\big)\big)\big(\xi_{0}^{i_{0}},\xi_{0}^{j_{0}}\big) \\
&=
\begin{cases}
g^{2}(\bxi{i_0}{0}{}{})\cvarphi^2_{N}(\bxi{i_0}{0}{}{}), & (i_{0:m},j_{0:m})\in\clsnSetn{1}{m}{\bba}, \rule[-2mm]{0pt}{3mm}\\
g(\bxi{i_0}{0}{}{})\cvarphi^2_{N}(\bxi{i_0}{0}{}{})g(\bxi{j_0}{0}{}{}), & (i_{0:m},j_{0:m})\in\clsnSetn{2}{m}{\bba},  \rule[-2mm]{0pt}{3mm}\\
g(\bxi{i_0}{0}{}{})\cvarphi_{N}(\bxi{i_0}{0}{}{})g(\bxi{j_0}{0}{}{})\cvarphi_{N}(\bxi{j_0}{0}{}{}), & (i_{0:m},j_{0:m})\in\clsnSetn{3}{m}{\bba}.\rule[-2mm]{0pt}{3mm}
\end{cases}
\end{align*}
For the set $\clsnSetn{1}{m}{\bba}$ we have the disjoint decomposition $$\clsnSetn{1}{m}{\bba} = \textstyle\bigcup_{u=1}^{N} \pathsFrom{u}{\bba}\times \pathsFrom{u}{\bba}.$$
By \lemmaref{lem:partition}, \eqnref{eq:L set cardinality}, \eqnref{eq:collision cardinality sum 2} and \eqnref{eq:collision cardinality sum 3} we have
\begin{align*}
\sum_{(i_{0:m},j_{0:m}) \in\clsnSetn{1}{m}{\bba}}g^{2}(\bxi{i_0}{0}{}{})\cvarphi^2_{N}(\bxi{i_0}{0}{}{})
&= N^2\sum_{i_0}g^{2}(\bxi{i_0}{0}{}{})\cvarphi^2_{N}(\bxi{i_0}{0}{}{}),\\
\sum_{(i_{0:m},j_{0:m}) \in\clsnSetn{2}{m}{\bba}}g(\bxi{i_0}{0}{}{})\cvarphi^2_{N}(\bxi{i_0}{0}{}{})g(\bxi{j_0}{0}{}{})
&= \sum_{i_0}\sum_{j_0\neq i_0}g(\bxi{i_0}{0}{}{})\cvarphi^2_{N}(\bxi{i_0}{0}{}{})g(\bxi{j_0}{0}{}{})\specCollisionPairs{i_0}{j_0}{\bba}{4}
\end{align*}
and
\begin{align*}
&\sum_{(i_{0:m},j_{0:m}) \in\clsnSetn{3}{m}{\bba}}g(\bxi{i_0}{0}{}{})\cvarphi_{N}(\bxi{i_0}{0}{}{})g(\bxi{j_0}{0}{}{})\cvarphi_{N}(\bxi{j_0}{0}{}{}) \\
&= \sum_{i_0}\sum_{j_0\neq i_0}g(\bxi{i_0}{0}{}{})\cvarphi_{N}(\bxi{i_0}{0}{}{})g(\bxi{j_0}{0}{}{})\cvarphi_{N}(\bxi{j_0}{0}{}{})\specNonCollisionPairsFrom{\bba}{m}{i_0}{j_0}. %\label{eq:sum part 3}
\end{align*}
The proof is completed by substituting the last three equations into \eqnref{eq:decomposable sum 2}.
\end{proof}

%!TEX root = suppmaterial.tex
\section{Proofs for \secref{sec:lln and clt fixed radix}}
\label{sec:proofs for fixed radix supp}

In this section, we essentially focus on establishing the condition \eqnref{eq:douc_cond_var} of \theref{thm:Douc and Moulines} for the radix-$r$ algorithm. Because of the lengthy analysis, this task is divided into the three subsequent sections. In \secref{sec:conditional independece structure fixed radix} we establish that the specific choice of matrices $\bba^{(r^m,m)} = \radixMatrices{r}{m}$ associated with the radix-$r$ algorithm enables us to construct partitions, call them $\radixSamplePartition{r}{m}{\ka}$, such that for any given $d \in [m]$, the triple $\big(\radixMatrices{r}{m},\radixSamplePartition{r}{m}{\ka},d\big)$ satisfies all the required conditions, namely Assumptions \ref{ass:partition and cond indep} and \ref{ass:A_k extra}, that we need to establish \eqnref{eq:douc_cond_var}. The task then becomes two fold due to the structure of the proof of \theref{thm:butterfly_clt_fixed_radix} where the sum in \eqnref{eq:douc_cond_var} is decomposed into two parts. For the first part, in \secref{sub:collision_analysis_fixed_radix}, the limit is shown to be exactly as desired and in \secref{sec:independece analysis fixed radix} the remainder part of the decomposition is shown to vanish by further analysis of the conditional independence structure of the radix-$r$ algorithm.

% --------------------------------------------------------------------
% --------------------------------------------------------------------
%  CONDITIONAL INDEPENDENCE STRUCTURE OF THE FIXED RADIX ALGORITHM
% --------------------------------------------------------------------
% --------------------------------------------------------------------
\subsection{Conditional independence structure of the radix-\texorpdfstring{$r$}{r} algorithm}
\label{sec:conditional independece structure fixed radix}
      
%In order to prove \propref{prop:var_conv_fixed_radix} we need to be able to bound the absolute moments for certain subsets of the sample population. Sufficient bounds are obtained by \propref{prop:generalized martingale decomposition} provided that the algorithm satisfies the conditions of \assref{ass:partition and cond indep}. To check that these conditions are met, we define the sample paritions for the radix-$r$ algorithm as follows. 
%for which we have

% ---------------------------------------------------------------------------------
%  FIXED RADIX SATISIES PARTITION AND CONDITIONAL INDEPENDENCE ASSUMPTION
%  AND THE EXTRA CONDITIONS ON THE MATRICES
% ---------------------------------------------------------------------------------
%The main result of this section is:
\begin{proposition}
\label{prop:fixed radix satisfies assumptions}
%\label{lem:A_k extra is satisfied}
The matrices $\radixMatrices{r}{m}$ satisfy \assref{ass:A_k extra} for all $r\geq 2$ and $m\geq 1$. Moreover, define for all $r\geq 2$, $m\geq 1$ and $\ka\in[m]$
\begin{equation}\label{eq:def radix partitions}
\arraycolsep=1.4pt
\begin{array}{rcl}
\radixSamplePartition{r}{m}{\ka} &\defeq& \big\{\radixSamplePartitionElement{r^m}{m}{\ka}{u}:u \in [r^{m-\ka+1}]\big\},\\[.2cm]
\radixSamplePartitionElement{r^m}{m}{\ka}{u} &\defeq&  \big\{u+(q-1)r^{m-\ka+1}:q\in [r^{\ka-1}]\big\},\quad u\in[r^{m-\ka+1}].
\end{array}
\end{equation}
%\begin{equation}\label{eq:def radix partitions}
%\arraycolsep=1.4pt
%\begin{array}{rcl}
%\radixSamplePartition{r^m}{m}{\ka} &\defeq& \Big\{\radixSamplePartitionElement{r^m}{m}{\ka}{u}:u \in [r^{m-\ka+1}]\Big\},\\[.3cm]
%\radixSamplePartitionElement{r^m}{m}{\ka}{u} &\defeq&  \Big\{u+(q-1)r^{m-\ka+1}:q\in [r^{\ka-1}]\Big\},\quad u\in[r^{m-\ka+1}].
%\end{array}
%\end{equation}
Then the triple $(\radixMatrices{r}{m},\radixSamplePartition{r}{m}{\ka},d)$ satisfies \assref{ass:partition and cond indep} for all $r\geq 2$, $m\geq 1$ and $\ka\in[m]$.
\end{proposition}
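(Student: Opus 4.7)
The plan is to verify the five clauses of \assref{ass:A_k extra} for $\radixMatrices{r}{m}$, then the three clauses of \assref{ass:partition and cond indep} for the triple $(\radixMatrices{r}{m},\radixSamplePartition{r}{m}{\ka},d)$. The guiding intuition is that $A_k = I_{r^{m-k}} \otimes \ones_{1/r} \otimes I_{r^{k-1}}$ acts by averaging over the $(k-1)$st base-$r$ digit of $i-1$: by \lemmaref{lem:modular congruence realtion fixed}, $A_k^{ij}>0$ precisely when $i-1$ and $j-1$ coincide in every base-$r$ digit other than position $k-1$. Throughout, I will identify $[N]=[r^m]$ with $m$-digit base-$r$ strings via $i\mapsto i-1$.

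The symmetry, idempotence, and equal-cardinality clauses of \assref{ass:A_k extra} (parts \ref{it:symmetry}, \ref{it:idempotence}, \ref{it:cardinality invariance}) are already contained in \lemmaref{lem:modular congruence realtion fixed}: symmetry follows from the symmetry of each Kronecker factor, idempotence was proven directly using the mixed product property, and each row of $A_k$ has exactly $r$ nonzero entries (all equal to $1/r$), so $|\parentSet{k}{i}{\bba}{}|=r$ for every $i$. For commutativity \ref{it:commutativity}, assume $p<q$ and realign both matrices as five-fold Kronecker products
\begin{align*}
A_p &= I_{r^{m-q}} \otimes I_r \otimes I_{r^{q-p-1}} \otimes \ones_{1/r} \otimes I_{r^{p-1}}, \\
A_q &= I_{r^{m-q}} \otimes \ones_{1/r} \otimes I_{r^{q-p-1}} \otimes I_r \otimes I_{r^{p-1}};
\end{align*}
the mixed product property \eqnref{eq:mixed product property} together with the identities $I_r\ones_{1/r}=\ones_{1/r}I_r=\ones_{1/r}$ then yields $A_pA_q=A_qA_p=I_{r^{m-q}} \otimes \ones_{1/r} \otimes I_{r^{q-p-1}} \otimes \ones_{1/r} \otimes I_{r^{p-1}}$. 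For unique paths \ref{it:unique paths}, note that a path $(i_p,\ldots,i_q)\in\pathsFrom{i_p}{\bba_{p+1:q}}$ requires $A_k^{i_k,i_{k-1}}>0$ for $k=p+1,\ldots,q$, hence $i_k-1$ and $i_{k-1}-1$ differ only in digit $k-1$; consequently an intermediate $i_k$ must agree in base-$r$ digits with $i_p$ on positions $k,\ldots,q-1$, with $i_q$ on positions $p,\ldots,k-1$, and with both on all other positions, which determines $i_k$ uniquely from the endpoints $(i_p,i_q)$.

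Turning to \assref{ass:partition and cond indep} with $\calI=\radixSamplePartition{r}{m}{\ka}$, part \ref{it:is partition} is immediate from \eqref{eq:def radix partitions}: each block has cardinality $r^{\ka-1}$ and $|\calI|=r^{m-\ka+1}$, so $N/|\calI|=r^{\ka-1}\geq 2^{\ka-1}\geq \ka$ by a trivial induction on $\ka$. For part \ref{it:equivalence classes}, commutativity plus the iterative argument used in \lemmaref{lem:modular congruence realtion fixed} gives $\prod_{q=0}^{\ka-1} A_{m-q}=\ones_{1/r^{\ka}}\otimes I_{r^{m-\ka}}$, whose $(j,i)$-entry by \eqnref{eq:elementwise Kronecker formula} equals $r^{-\ka}$ when $(j-1)\bmod r^{m-\ka}=(i-1)\bmod r^{m-\ka}$ and zero otherwise; in particular it depends only on the bottom $m-\ka$ digits of $j-1$. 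Indices in the same block $\calI_u$ share their bottom $m-\ka+1$ digits (which are those of $u-1$), and hence share their bottom $m-\ka$ digits, giving the required equality.

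The main obstacle is part \ref{it:cond indep}: conditional independence of $\xi^{i}_{\mathrm{out}}$ and $\xi^{j}_{\mathrm{out}}$ given $\xi_0,\ldots,\xi_{m-\ka}$ when $(i,j)\in\calI_u\times\calI_v$ with $u\neq v$. The plan is a deterministic genealogy argument combined with the one-step conditional independence built into \algrefmy{alg:augmented_resampling}. Writing each resampling step as $\xi_k^\ell=\xi_{k-1}^{I_k^\ell}$ with $(I_k^\ell)_{\ell\in[N]}$ conditionally independent given $\xi_0,\ldots,\xi_{k-1}$, define backward ancestors $a_m(i):=i$ and $a_{k-1}(i):=I_k^{a_k(i)}$, so that $\xi^{i}_{\mathrm{out}}$ is determined by $\xi_{m-\ka}^{a_{m-\ka}(i)}$ together with the random choices $\{I_k^{a_k(i)}:k=m-\ka+1,\ldots,m\}$. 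Since each $A_k$ only changes digit $k-1$, the bottom $k$ base-$r$ digits of $a_k(i)-1$ are exactly those of $i-1$; for $i\in\calI_u$ the bottom $m-\ka+1$ of these digits are the digits of $u-1$, and similarly for $j$ with $v$. Because $u-1\neq v-1$ disagree in some position in $\{0,1,\ldots,m-\ka\}$, one checks case by case that $a_k(i)\neq a_k(j)$ deterministically for every $k\in\{m-\ka+1,\ldots,m\}$, so the label sets $\{(k,a_k(i))\}$ and $\{(k,a_k(j))\}$ of random choices used by the two chains are disjoint. A backward induction from $k=m$ down to $k=m-\ka+1$, repeatedly applying the one-step conditional independence of $(\xi_k^\ell)_\ell$ and the tower property, then shows that the joint conditional law of $(\xi^{i}_{\mathrm{out}},\xi^{j}_{\mathrm{out}})$ given $\xi_0,\ldots,\xi_{m-\ka}$ factorizes, completing the verification.
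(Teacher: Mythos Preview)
Your proposal is correct and follows essentially the same strategy as the paper: verify the clauses of \assref{ass:A_k extra} via the Kronecker structure of \lemmaref{lem:modular congruence realtion fixed}, and establish \assref{ass:partition and cond indep}\ref{it:cond indep} by showing that the ancestral index sets of $\xi_m^i$ and $\xi_m^j$ at each level $k>m-d$ are deterministically disjoint, then propagating conditional independence backward.

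Two presentational differences are worth noting. For unique paths (\assref{ass:A_k extra}\ref{it:unique paths}) you give a direct constructive argument: since $A_k$ modifies only digit $k-1$, every intermediate $i_k$ has its base-$r$ digits forced by the endpoints. The paper instead argues by contradiction via \lemmaref{lem:nonzero elements}, showing that distinct parents of a common vertex have disjoint $\primeParentSet{k-1}{\cdot}{\bba}{}$ sets; your route is shorter and more transparent. For \assref{ass:partition and cond indep}\ref{it:cond indep}, the paper packages the argument into Lemmas~\ref{lem:explicit analysis regarding partitions}--\ref{lem:final conditional independence} (disjointness of the sets $\tailProdSet{m}{i}{k}{\bba}$ plus a generic propagation lemma), while you sketch the genealogy directly; the paper's route buys reusable infrastructure, yours is more self-contained. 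One point to tighten: your phrase ``label sets $\{(k,a_k(i))\}$ are disjoint'' refers to random labels, whereas what actually drives independence is that the \emph{deterministic} range of $a_k(i)$, namely $\{\ell:(\ell-1)\bmod r^k=(i-1)\bmod r^k\}$, is disjoint from that of $a_k(j)$ for every $k>m-d$, so that $\xi_m^i$ and $\xi_m^j$ are measurable functions of disjoint (hence conditionally independent) blocks of the $(I_k^\ell)$ given $\xi_0,\ldots,\xi_{m-d}$.
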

The proof is divided into several technical lemmata that we will prove first. The proof of \propref{prop:fixed radix satisfies assumptions} itself is postponed to the end of this section.

% --------------------------------------------------------------------
%  MATRIX PRODUCT FORMULA
% --------------------------------------------------------------------
\begin{lemma}\label{lem:reverse product formula}
Fix $m\geq 1$, $r\geq 2$ and $\bba = \radixMatrices{r}{m}$. Then for all $k\in [m]$
\begin{enumerate}[itemsep=4pt, topsep=5pt, partopsep=0pt,label={{(\roman*)}}]
\item \label{it:tail product} $\prod_{q=0}^{k-1}A_{m-q} = \ones_{1/r^{k}}\otimes I_{r^{m-k}}$,
\item \label{it:front product} $\prod_{q=1}^{k}A_{q} = I_{r^{m-k}}\otimes \ones_{1/r^{k}}$.
\end{enumerate}
\end{lemma}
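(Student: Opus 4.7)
The plan is to prove both identities by induction on $k$, using the mixed product property \eqref{eq:mixed product property} of the Kronecker product as the only algebraic tool. Part (ii) was essentially already established inside the proof of Lemma~\ref{lem:modular congruence realtion fixed}: the identity $\prod_{q=1}^{k}A_q = I_{r^{m-k}}\otimes \ones_{1/r^{k}}$ is derived there for every $k\in[m]$ by a straightforward induction whose step rewrites $\prod_{q=1}^{k-1}A_q$ via the induction hypothesis, groups Kronecker factors so as to match those of $A_k = I_{r^{m-k}}\otimes \ones_{1/r}\otimes I_{r^{k-1}}$, and then collapses the product using \eqref{eq:mixed product property}. I would simply refer back to that computation rather than repeat it.

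For part (i), I would carry out the dual induction on $k$. The base case $k=1$ is immediate from the defining formula \eqref{eq:def fixed radix matrices}, since $A_m = I_{r^{0}}\otimes \ones_{1/r}\otimes I_{r^{m-1}} = \ones_{1/r}\otimes I_{r^{m-1}}$. For the inductive step, assume the claim holds at some $k<m$, and consider
\begin{equation*}
\prod_{q=0}^{k}A_{m-q} = \Bigg(\prod_{q=0}^{k-1}A_{m-q}\Bigg) A_{m-k}.
\end{equation*}
After rewriting the induction hypothesis as $\ones_{1/r^{k}}\otimes I_r \otimes I_{r^{m-k-1}}$ so that its three Kronecker factors have the same sizes as those of $A_{m-k} = I_{r^{k}}\otimes \ones_{1/r}\otimes I_{r^{m-k-1}}$, two applications of \eqref{eq:mixed product property} collapse the product to $(\ones_{1/r^{k}}I_{r^{k}})\otimes (I_r\ones_{1/r})\otimes (I_{r^{m-k-1}}I_{r^{m-k-1}}) = \ones_{1/r^{k}}\otimes \ones_{1/r}\otimes I_{r^{m-k-1}} = \ones_{1/r^{k+1}}\otimes I_{r^{m-(k+1)}}$, closing the induction.

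There is no real obstacle. The only care needed is the bookkeeping of Kronecker factor sizes before invoking \eqref{eq:mixed product property}; once that is done the identities emerge by direct cancellation of identity and rank-one factors, and the two parts are exact mirror images of one another (the product of the $A_q$ unfolds from the right in (ii) and from the left in (i)).
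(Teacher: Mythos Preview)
Your proposal is correct and matches the paper's own proof essentially line for line: the paper also refers part (ii) back to the computation inside the proof of Lemma~\ref{lem:modular congruence realtion fixed}, and proves part (i) by the same induction on $k$ using the mixed product property \eqref{eq:mixed product property}, with only cosmetic differences in how the Kronecker factors are grouped before collapsing.
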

\begin{proof}
For both cases, the proof is by induction and the case $k = 1$ is obvious by \eqnref{eq:def fixed radix matrices}. To check \ref{it:tail product}, assume then that $\prod_{q=0}^{k-2}A_{m-q} = \ones_{1/r^{k-1}}\otimes I_{r^{m-(k-1)}}$ for some $k>1$. Then by \eqnref{eq:def fixed radix matrices} and the associativity and the mixed product property \eqnref{eq:mixed product property} we have
\begin{align*}
\textstyle\prod_{q=0}^{k-1}A_{m-q} 
&= \textstyle\big(\prod_{q=0}^{k-2}A_{m-q}\big)A_{m-k+1}\\
&= \big(\ones_{1/r^{k-1}}\otimes I_{r^{m-k+1}}\big)\big(I_{r^{k-1}}\otimes \ones_{1/r}\otimes I_{r^{m-k}}\big)\\
&= \big(\ones_{1/r^{k-1}}I_{r^{k-1}}\big)\otimes\big(I_{r^{m-k+1}} \big( \ones_{1/r}\otimes I_{r^{m-k}}\big)\big)\\
&= \ones_{1/r^{k-1}} \otimes \ones_{1/r}\otimes I_{r^{m-k}}\\
&= \ones_{1/r^k} \otimes I_{r^{m-k}},
\end{align*}
concluding the proof of \ref{it:tail product}. The part \ref{it:front product} follows from the proof of \lemmaref{lem:modular congruence realtion fixed}.
\end{proof}
We introduce the following additional set notation for all $N,m\geq 1$, $\bba=\bba^{(N,m)}$, $k\in[m]$ and $i\in[N]$
\begin{equation*}
\tailProdSet{m}{i}{k}{\bba}\defeq \Big\{j\in[N]:\Big(\textstyle\prod_{q=0}^{k-1} A_{m-q}\Big)^{ij} \neq 0 \Big\}.
\end{equation*}
By \eqnref{eq:parent columns} and \eqnref{eq:prime parents}, these sets admit the following special cases for all $i \in [N]$
\begin{equation}
\tailProdSet{m}{i}{1}{\bba} = \parentSet{m}{i}{\bba}{},\qquad \tailProdSet{m}{i}{m}{\bba} = \primeParentSet{m}{i}{\bba}{}.
\label{eq:part prod spec cases}
\end{equation}
% --------------------------------------------------------------------
%  EXPLICIT NONZERO ELEMENTS (FIXED RADIX)
% --------------------------------------------------------------------
\begin{lemma}\label{lem:nonzero elements}
Fix $m\geq 1$, $r\geq 2$ and $\bba = \radixMatrices{r}{m}$. Then for all $k\in[m]$ and $i\in[r^m]$
\begin{equation}\label{eq:explicit ancestors}
\parentSet{k}{i}{\bba}{m} = 
\left\{\big((i-1) \bmod r^{k-1}\big) + (q-1)r^{k-1} + r^k\floor{\frac{i-1}{r^k}}+1 : q\in[r]\right\}, 
\end{equation}
and
\begin{align}
\tailProdSet{m}{i}{k}{\bba}
&= %\bigcup_{c\in[r^k]} 
\bigg\{\big((i-1)\bmod r^{m-k}\big) + (q-1)r^{m-k} + 1: q\in[r^k]\bigg\},\label{eq:tailProdSet fixed}\\
\primeParentSet{k}{i}{\bba}{m} 
&= \left\{r^k\floor{\frac{i-1}{r^k}}+q: q \in [r^k]\right\}.\label{eq:origin set}
\end{align}
Moreover, if $u_1,u_2 \in \parentSet{k}{i}{\bba}{m}$ and $u_1 \neq u_2$, then $\primeParentSet{k-1}{u_1}{\bba}{m}\cap \primeParentSet{k-1}{u_2}{\bba}{m} = \emptyset$.
\end{lemma}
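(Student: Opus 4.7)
The plan is to derive each of the three explicit formulas as a direct consequence of the matrix identities already at hand, and then deduce the disjointness statement from the formulas themselves.

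First I would handle $\parentSet{k}{i}{\bba}{m}$ by appealing to \lemmaref{lem:modular congruence realtion fixed}, which says $A_k^{ij}>0$ iff $i\stackrel{(k,r)}{\sim}j$, i.e.\ $\lfloor(i-1)/r^k\rfloor=\lfloor(j-1)/r^k\rfloor$ and $(i-1)\bmod r^{k-1}=(j-1)\bmod r^{k-1}$. Writing $j-1$ in mixed radix with blocks of sizes $r^{k-1}$, $r$, $r^{m-k}$, the first condition pins down the top block and the second pins down the bottom block, leaving precisely $r$ free choices for the middle digit, parameterised by $q\in[r]$ exactly as in \eqref{eq:explicit ancestors}.

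Next, for $\tailProdSet{m}{i}{k}{\bba}$ I would use \lemmaref{lem:reverse product formula}\ref{it:tail product}, giving $\prod_{q=0}^{k-1}A_{m-q}=\ones_{1/r^{k}}\otimes I_{r^{m-k}}$, and apply the element-wise Kronecker formula \eqref{eq:elementwise Kronecker formula}: the $\ones_{1/r^k}$ factor contributes a strictly positive number, so the $(i,j)$-entry is nonzero iff $(i-1)\bmod r^{m-k}=(j-1)\bmod r^{m-k}$, which yields \eqref{eq:tailProdSet fixed}. For $\primeParentSet{k}{i}{\bba}{m}=\{j:(\prod_{q=0}^{k-1}A_{k-q})^{ij}>0\}$ I would first invoke the commutativity shown in the proof of \lemmaref{lem:modular congruence realtion fixed} (the matrices $A_1,\ldots,A_k$ commute since they act on disjoint Kronecker factors) to rewrite the product as $\prod_{q=1}^k A_q$, then apply \lemmaref{lem:reverse product formula}\ref{it:front product} to get $I_{r^{m-k}}\otimes\ones_{1/r^k}$, and read off via \eqref{eq:elementwise Kronecker formula} that the nonzero entries correspond exactly to $\lfloor(i-1)/r^k\rfloor=\lfloor(j-1)/r^k\rfloor$, giving \eqref{eq:origin set}.

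For the disjointness claim, take $u_1,u_2\in\parentSet{k}{i}{\bba}{m}$ with $u_1\neq u_2$. Using the parametrisation \eqref{eq:explicit ancestors}, write $u_\ell-1=((i-1)\bmod r^{k-1})+(q_\ell-1)r^{k-1}+r^k\lfloor(i-1)/r^k\rfloor$ with $q_\ell\in[r]$, where $q_1\neq q_2$. Since $((i-1)\bmod r^{k-1})/r^{k-1}<1$, a direct calculation gives $\lfloor(u_\ell-1)/r^{k-1}\rfloor=(q_\ell-1)+r\lfloor(i-1)/r^k\rfloor$, so the two values differ. By \eqref{eq:origin set} applied at level $k-1$, the sets $\primeParentSet{k-1}{u_1}{\bba}{m}$ and $\primeParentSet{k-1}{u_2}{\bba}{m}$ are determined by these distinct values of $\lfloor(u_\ell-1)/r^{k-1}\rfloor$ and are therefore disjoint.

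The whole proof is essentially bookkeeping with mixed-radix digits; the only mild subtlety is the commutativity needed to justify re-ordering the product in the definition of $\primeParentSet{k}{i}{\bba}{m}$ before applying \lemmaref{lem:reverse product formula}\ref{it:front product}.
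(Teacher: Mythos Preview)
Your proposal is correct and follows essentially the same route as the paper: both arguments extract the three explicit sets from the Kronecker-product identities in \lemmaref{lem:reverse product formula} together with the element-wise formula \eqref{eq:elementwise Kronecker formula}, and both deduce the disjointness claim by computing $\lfloor(u_\ell-1)/r^{k-1}\rfloor$ from the parametrisation \eqref{eq:explicit ancestors}. One small caveat: commutativity of the $A_q$ is not actually established in the proof of \lemmaref{lem:modular congruence realtion fixed}; you can either verify it directly via the mixed product property \eqref{eq:mixed product property}, or---more simply---note that since each $A_q$ is symmetric and $\prod_{q=1}^k A_q$ is symmetric by \lemmaref{lem:reverse product formula}\ref{it:front product}, taking transposes gives $A_k\cdots A_1=(A_1\cdots A_k)^T=A_1\cdots A_k$, which is all you need.
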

% PROOF ------------------------------------------------------
\begin{proof}
We start with the element-wise definition of the Kronecker product. For any $N_1\times N_2$ matrix $A$, $M_1 \times M_2$ matrix $B$ and $0\leq \alpha < N_1M_1 $ and $0\leq \beta < N_2M_2$, we have
\begin{equation}\label{eq:kronecker inverse definition}
\left(A \otimes B\right)^{\alpha+1,\beta+1} = A^{\floor{\frac{\alpha}{M_1}}+1,\floor{\frac{\beta}{M_2}}+1}B^{(\alpha \bmod M_1)+1, (\beta \bmod M_2)+1}. 
\end{equation}
By the definition in \eqnref{eq:def fixed radix matrices}, the associativity of the Kronecker product, and two applications of \eqnref{eq:kronecker inverse definition}, we have for all $0 \leq \alpha < r^{m}$ and $0 \leq \beta < r^{m}$
\begin{align*}
A_k^{\alpha+1, \beta+1} &= I_{r^{m-k}}^{\floor{\frac{\alpha}{r^k}}+1,\floor{\frac{\beta}{r^k}}+1} \ones_{1/r}^{\big(\big\lfloor\frac{\alpha}{r^{k-1}}\big\rfloor \bmod r\big)+1,\big(\big\lfloor\frac{\beta}{r^{k-1}}\big\rfloor \bmod r\big)+1}\\
&\quad \times~I_{r^{k-1}}^{(\alpha \bmod r^{k-1})+1,(\beta \bmod r^{k-1})+1},
\end{align*}
where also the facts that $\lfloor { \lfloor {\alpha}/{r^{k-1}} \rfloor }/{r} \rfloor =  \lfloor\alpha/{r^{k}}\rfloor$ and $\lfloor { \lfloor {\beta}/{r^{k-1}} \rfloor }/{r} \rfloor =  \lfloor\beta/{r^{k}}\rfloor$ have been used. From this, by considering only the diagonal elements of the identity matrices, we have readily
\begin{align*}
&\parentSet{k}{\alpha+1}{\bba}{} = \\
&\left\{i \in [r^m] :\bigg\lfloor\frac{\alpha}{r^k}\bigg\rfloor = \bigg\lfloor\frac{i-1}{r^k}\bigg\rfloor,~ \big(\alpha \bmod r^{k-1}\big) = \big((i-1) \bmod r^{k-1}\big)\right\}.
\end{align*}
To prove the '$\supset$' part of the equation \eqnref{eq:explicit ancestors}, suppose that 
\begin{equation}
\beta = \big(\alpha \bmod r^{k-1}\big) + (q-1)r^{k-1} + r^k\big\lfloor{\alpha}/{r^k}\big\rfloor, \qquad q \in [r].
\label{eq:choice of beta}
\end{equation}
It is then simple to check by substituting the $\beta$ specified by \eqnref{eq:choice of beta} that $\lfloor{\beta}/{r^k}\rfloor = \lfloor{\alpha}/{r^k}\rfloor$ and 
\begin{equation*}
%\bigg\lfloor{\frac{\beta}{r^k}}\bigg\rfloor &=& \bigg\lfloor{\frac{\alpha}{r^k}}\bigg\rfloor\\
\big(\beta \bmod r^{k-1}\big) = \beta - \bigg\lfloor{\frac{\beta}{r^{k-1}}}\bigg\rfloor r^{k-1} = \big(\alpha \bmod r^{k-1}\big). 
\end{equation*}
%$c \in [r]$ and $\beta =  \alpha \bmod r^{k-1} + (c-1)r^{k-1} + r^k\floor{{\alpha}/{r^k}}$, in which case, $\alpha \bmod r^{k-1} + (c-1)r^{k-1} < r^k$ and hence
%\begin{eqnarray*}
%\bigg\lfloor{\frac{\beta}{r^k}}\bigg\rfloor 
%= \bigg\lfloor\frac{\alpha \bmod r^{k-1} + (c-1)r^{k-1} +  r^k\floor{\alpha/r^k}}{r^k}\bigg\rfloor 
%%= \floor{\frac{\alpha \bmod r^{k-1} + (c-1)r^{k-1}}{r^k} + \floor{\frac{\alpha}{r^k}}} 
%= \bigg\lfloor{\frac{\alpha}{r^k}}\bigg\rfloor.
%\end{eqnarray*}
%Moreover,
%\begin{eqnarray*}
%\beta \bmod r^{k-1} 
%&=& \alpha \bmod r^{k-1} + (c-1)r^{k-1} + r^k\floor{\frac{\alpha}{r^k}} - \floor{\frac{\alpha \bmod r^{k-1} + (c-1)r^{k-1} + r^k\floor{{\alpha}/{r^k}}}{r^{k-1}}}r^{k-1}\\
%&=& \alpha \bmod r^{k-1} + (c-1)r^{k-1} + r^k\floor{\frac{\alpha}{r^k}} - \left( (c-1) + r\floor{\frac{\alpha}{r^k}}\right)r^{k-1} \\
%&=& \alpha \bmod r^{k-1}. 
%\end{eqnarray*}
To prove the converse inclusion, suppose that $\lfloor{\alpha}/{r^k}\rfloor = \lfloor{\beta}/{r^k}\rfloor$ and $\big(\alpha \bmod r^{k-1}\big) = \big(\beta \bmod r^{k-1}\big)$. % = \beta - \lfloor\beta/r^{k-1}\rfloor r^{k-1}$. 
Then one can check that
\begin{equation*}
\beta 
%&=& \alpha \bmod r^{k-1} + \floor{\frac{\beta}{r^{k-1}}}r^{k-1} + r^{k}\floor{\frac{\alpha}{r^{k}}} - r^{k}\floor{\frac{\beta}{r^{k}}} \\
%&=& \alpha \bmod r^{k-1} + r^{k}\floor{\frac{\alpha}{r^{k}}} + r^{k-1}\left(\floor{\frac{\beta}{r^{k-1}}}  - r\floor{\frac{\beta}{r^{k}}}\right) \\
= \big(\alpha \bmod r^{k-1}\big) + r^{k}\floor{\frac{\alpha}{r^{k}}} + r^{k-1}\left(\floor{\frac{\beta}{r^{k-1}}}\bmod r\right), 
\end{equation*}
and since $\big(\lfloor{\beta}/{r^{k-1}}\rfloor\bmod r\big) + 1\in [r]$, the claim follows.

To prove \eqnref{eq:tailProdSet fixed} we have by \eqnref{eq:kronecker inverse definition} and  \lemmaref{lem:reverse product formula} for all $0 \leq \alpha,\beta < r^m$
\begin{align*}
\textstyle\left(\prod_{p=0}^{k-1}A_{m-p}\right)^{\alpha+1,\beta+1} &= \left(\ones_{1/r^k}\right)^{\floor{\frac{\alpha}{r^{m-k}}}+1,\floor{\frac{\beta}{r^{m-k}}}+1} \\
&\quad\times~\left(I_{r^{m-k}}\right)^{(\alpha \bmod r^{m-k}) + 1,(\beta \bmod r^{m-k}) +1},
\end{align*}
from which we have readily that $$\tailProdSet{m}{\alpha+1}{k}{\bba} = \big\{i \in [r^m]:\big(\alpha \bmod r^{m-k}\big) = \big((i-1) \bmod r^{m-k}\big)\big\}.$$ Take $i \in \tailProdSet{m}{\alpha+1}{k}{\bba}$, for which $i = \big(\alpha \bmod r^{m-k}\big) + \floor{{(i-1)}/{r^{m-k}}} r^{m-k} + 1$
and since $i\in[r^m]$, we have $\floor{(i-1)/{r^{m-k}}}+1 \in [r^k]$ and therefore '$\subset$' holds for \eqnref{eq:tailProdSet fixed}. To prove the converse inclusion, suppose that $i = \big(\alpha \bmod r^{m-k}\big) + (q-1)r^{m-k} + 1$, where $q \in [r^k]$. Then, by the substitution of this particular choice of $i$ one can check that $\big((i-1) \bmod r^{m-k}\big) = \big(\alpha \bmod r^{m-k}\big)$.
%\begin{equation*}
%(i-1) \bmod r^{m-k} 
%%&=& \left(\alpha \bmod r^{m-k} + (c-1)r^{m-k}\right) \bmod r^{m-k} \\
%%= \alpha \bmod r^{m-k} + (c-1)r^{m-k} - \floor{\frac{\alpha \bmod r^{m-k} + (c-1)r^{m-k}}{r^{m-k}}}r^{m-k} 
%= \alpha \bmod r^{m-k}.
%\end{equation*}
The equation \eqnref{eq:origin set} follows analogously by \eqnref{eq:kronecker inverse definition} and \lemmaref{lem:reverse product formula}\ref{it:front product}.

To check the empty intersection, by \eqnref{eq:explicit ancestors} and the assumption that $u_1 \neq u_2$, we have for $\ell\in\{1,2\}$
\begin{equation}
u_\ell = \big((i-1) \bmod r^{k-1}\big) + q_\ell r^{k-1} + r^k\floor{\frac{i-1}{r^k}}+1, \label{eq:sas1}
%&& m_1 = (i-1) \bmod r^{k-1} + c_1r^{k-1} + r^k\floor{\frac{i-1}{r^k}}+1 \label{eq:sas1}\\
%&& m_2 = (i-1) \bmod r^{k-1} + c_2r^{k-1} + r^k\floor{\frac{i-1}{r^k}}+1 \label{eq:sas2}
\end{equation}
where $0\leq q_1, q_2 < r$ and $q_1 \neq q_2$. Without loss of generality, we can assume $q_1 < q_2$ and by \eqnref{eq:origin set}
it suffices to show that 
\begin{equation*}
r^{k-1}\floor{\frac{u_2-1}{r^{k-1}}}+1 - r^{k-1}\floor{\frac{u_1-1}{r^{k-1}}+1} > 0,
\end{equation*}
which follows from elementary calculations using \eqnref{eq:sas1}.
\end{proof}

%\begin{lemma}\label{lem:explicit analysis regarding partitions}
%Fix $m\geq 1$, $r\geq 2$, $\ka\in[m]$, $\bba = \radixMatrices{r}{m}$, $\samplePartition{r^m}{m}{\ka} = \radixSamplePartition{r^m}{m}{\ka}$, $u, u_1,u_2\in [r^{m-\ka+1}]$.% and $\ell\in[\ka]$.
%\begin{enumerate}[itemsep=4pt, topsep=5pt, partopsep=0pt,label={{(\roman*)}}]
%\item \label{it:partition property 1 fixed} If $u_1\neq u_2$, $d>1$ and $(i,j)\in \radixSamplePartitionElement{r^m}{m}{\ka}{u_1} \times \radixSamplePartitionElement{r^m}{m}{\ka}{u_2}$, then \[\tailProdSet{m}{i}{\ka-1}{\bba}\cap \tailProdSet{m}{j}{\ka-1}{\bba} = \emptyset.\]
%\item \label{it:partition property 2 fixed} If $i,j \in \radixSamplePartitionElement{r^m}{m}{\ka}{u}$, then $\tailProdSet{m}{i}{\ka}{\bba} = \tailProdSet{m}{j}{\ka}{\bba}$.
%\end{enumerate}
%\end{lemma}
\begin{lemma}\label{lem:explicit analysis regarding partitions}
Fix $m\geq 1$, $r\geq 2$, $\ka\in[m]$, $\bba = \radixMatrices{r}{m}$, 
and 
let $\radixSamplePartitionElement{r^m}{m}{\ka}{u}$ be as in \eqnref{eq:def radix partitions} for all $u\in[r^{m-\ka+1}]$.
%let $\radixSamplePartitionElement{r^m}{m}{\ka}{u} \defeq \{u+(q-1)r^{m-\ka+1}:q\in [r^{\ka-1}]\}$, for all $u\in[r^{m-\ka+1}]$.
%
%$\samplePartition{r^m}{m}{\ka} = \radixSamplePartition{r^m}{m}{\ka}$, $u, u_1,u_2\in [r^{m-\ka+1}]$.% and $\ell\in[\ka]$.
\begin{enumerate}[itemsep=4pt, topsep=5pt, partopsep=0pt,label={{(\roman*)}}]
\item \label{it:partition property 1 fixed} If $u_{1},u_{2} \in [r^{m-\ka+1}]$, $u_1\neq u_2$, $d>1$ and $(i,j)\in \radixSamplePartitionElement{r^m}{m}{\ka}{u_1} \times \radixSamplePartitionElement{r^m}{m}{\ka}{u_2}$, then \[\tailProdSet{m}{i}{\ka-1}{\bba}\cap \tailProdSet{m}{j}{\ka-1}{\bba} = \emptyset.\]
\item \label{it:partition property 2 fixed} If $u \in [r^{m-\ka+1}]$ and $i,j \in \radixSamplePartitionElement{r^m}{m}{\ka}{u}$, then $\tailProdSet{m}{i}{\ka}{\bba} = \tailProdSet{m}{j}{\ka}{\bba}$.
\end{enumerate}
\end{lemma}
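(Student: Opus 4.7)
The proof plan is to reduce both claims to the explicit description of $\tailProdSet{m}{i}{k}{\bba}$ given by equation \eqref{eq:tailProdSet fixed} in \lemmaref{lem:nonzero elements}, namely
\[
\tailProdSet{m}{i}{k}{\bba} = \big\{\,((i-1)\bmod r^{m-k}) + (q-1)r^{m-k} + 1 : q\in[r^k]\,\big\}.
\]
This shows that $\tailProdSet{m}{i}{k}{\bba}$ depends on $i$ only through the residue $(i-1)\bmod r^{m-k}$, and more importantly, every element of this set has the \emph{same} residue $(i-1)\bmod r^{m-k}$ modulo $r^{m-k}$. Both parts of the lemma are then obtained by tracking how the partition blocks $\radixSamplePartitionElement{r^m}{m}{\ka}{u}$ interact with these residues.

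For part \ref{it:partition property 1 fixed}, I would observe that any $i \in \radixSamplePartitionElement{r^m}{m}{\ka}{u_1}$ has the form $i = u_1 + (q_i-1)r^{m-\ka+1}$ for some $q_i \in [r^{\ka-1}]$, so $(i-1)\bmod r^{m-\ka+1} = u_1-1$ because $u_1 \in [r^{m-\ka+1}]$. Likewise $(j-1)\bmod r^{m-\ka+1} = u_2-1$. Taking $k = \ka-1$ in the formula above, every element of $\tailProdSet{m}{i}{\ka-1}{\bba}$ reduces to $u_1-1$ modulo $r^{m-\ka+1}$, while every element of $\tailProdSet{m}{j}{\ka-1}{\bba}$ reduces to $u_2-1$. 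Since $u_1 \neq u_2$ and both lie in $[r^{m-\ka+1}]$, the residues differ, so the two sets are disjoint.

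For part \ref{it:partition property 2 fixed}, write $i = u + (q_i-1)r^{m-\ka+1}$ and $j = u + (q_j-1)r^{m-\ka+1}$ with $q_i,q_j \in [r^{\ka-1}]$. Since $r^{m-\ka+1}$ is divisible by $r^{m-\ka}$, the extra term $(q_i-1)r^{m-\ka+1}$ does not affect the residue modulo $r^{m-\ka}$, hence $(i-1)\bmod r^{m-\ka} = (u-1)\bmod r^{m-\ka} = (j-1)\bmod r^{m-\ka}$. Applying the formula above with $k=\ka$ then gives $\tailProdSet{m}{i}{\ka}{\bba} = \tailProdSet{m}{j}{\ka}{\bba}$, since both sets are determined entirely by this common residue.

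Neither step should present a genuine obstacle: the work has essentially been done in \lemmaref{lem:nonzero elements}, and the present lemma is a direct arithmetic consequence, the only delicate point being the careful bookkeeping between the shift $r^{m-\ka+1}$ (the block spacing in $\radixSamplePartitionElement{r^m}{m}{\ka}{u}$) and the modulus $r^{m-k}$ (with $k = \ka-1$ or $\ka$) that appears in the description of $\tailProdSet{m}{i}{k}{\bba}$.
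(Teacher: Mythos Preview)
Your proposal is correct and follows essentially the same approach as the paper: both parts reduce to the explicit formula \eqref{eq:tailProdSet fixed} from \lemmaref{lem:nonzero elements} and then track residues modulo $r^{m-\ka+1}$ (for part~\ref{it:partition property 1 fixed}) and $r^{m-\ka}$ (for part~\ref{it:partition property 2 fixed}). Your phrasing of part~\ref{it:partition property 1 fixed} in terms of distinct residues is in fact slightly cleaner than the paper's proof by contradiction, but the underlying argument is identical.
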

\begin{proof}
To prove \ref{it:partition property 1 fixed} we have by \eqnref{eq:def radix partitions} $i = u_1 + (q_1-1)r^{m-\ka+1}$ and $j = u_2 + (q_2-1)r^{m-\ka+1}$ for some $q_1,q_2\in [r^{\ka-1}]$, from which it follows that $\big((i-1)\bmod r^{m-\ka+1}\big) = u_1-1$ and $\big((j-1)\bmod r^{m-\ka+1}\big) = u_2-1$. Now, suppose that $\tailProdSet{m}{i}{\ka-1}{\bba}\cap \tailProdSet{m}{j}{\ka-1}{\bba}\neq \emptyset$. Then, by \eqnref{eq:tailProdSet fixed} one must have
\begin{align*}
q'_1 - q'_2 
&= \frac{1}{r^{m-\ka+1}}\left(\big((i-1)\bmod r^{m-\ka+1}\big) - \big((j-1)\bmod r^{m-\ka+1}\big)\right)\\
&= \frac{1}{r^{m-\ka+1}}\left(u_1 - u_2\right),
\end{align*}
for some $q'_1,q'_2 \in [r^{\ka-1}]$. Since $u_1,u_2\in[r^{m-\ka+1}]$ and $u_1\neq u_2$, $(u_1 - u_2)r^{-m+\ka-1} \in (-1,1)\setminus \{0\}$ while $q'_1 - q'_2\in\integers$, which is a contradiction proving \ref{it:partition property 1 fixed}.

To prove \ref{it:partition property 2 fixed}, we observe that if $i \in \radixSamplePartitionElement{r^m}{m}{\ka}{u}$, then by \eqnref{eq:def radix partitions} $i = u + (q-1)r^{m-\ka+1}$ where $q \in [r^{\ka-1}]$ and thus
\begin{align*}
\big((i-1)\bmod r^{m-\ka} \big)
&= u - 1 + (q-1)r^{m-\ka+1} 
- \floor{\frac{u-1}{r^{m-\ka}} + (q-1)r}r^{m-\ka} \\
&= u - 1 - \floor{\frac{u-1}{r^{m-\ka}}}r^{m-\ka}, 
\end{align*}
%where the second equality follows by the assumption that $\ka-\ell\geq0$. 
Since, the same can be repeated for $j \in \radixSamplePartitionElement{r^m}{m}{\ka}{u}$, we have $\big((i-1)\bmod r^{m-\ka}\big) = \big((j-1)\bmod r^{m-\ka}\big)$
%\begin{equation*}
%(i-1)\bmod r^{m-k+1} = (j-1)\bmod r^{m-k+1}
%\end{equation*}  
which, by \eqnref{eq:tailProdSet fixed} is sufficient for \ref{it:partition property 2 fixed} to hold.
\end{proof}

% --------------------------------------------------------------------------------------
% Propagation of conditional independence 
% --------------------------------------------------------------------------------------
\begin{lemma}%[Propagation of conditional independence.]
\label{lem:cond indep propagation}
Fix $N,m\geq 1$, $\bba = \bba^{(N,m)}$, $\calG \subset \calF$ and $k,k'\in[m]$. If $\bba$ satisfies \assref{ass:A_k} and for some $i,j\in[N]$
\begin{equation}\label{eq:codnitional independence criteria}
\arraycolsep=1.4pt
\begin{array}{ll}
\parentSet{k}{i}{\bba}{} \cap \parentSet{k'}{j}{\bba}{} &= \emptyset,\\[.1cm]
 \parentSet{k}{i}{\bba}{} \times \parentSet{k'}{j}{\bba}{} &\subset \ciSetGen{k-1}{k'-1}{\calG}{N},
\end{array}
\end{equation} 
then $(i,j) \in \ciSetGen{k}{k'}{\calG}{N}$.
\end{lemma}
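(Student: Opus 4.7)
The plan is to exploit the Markov structure of augmented resampling together with the one-step conditional independence property established below \eqref{eq:P_in_terms_of_V_proofs}. Without loss of generality, take $k \leq k'$. The goal is to show that for all bounded measurable $f, g \in \boundMeas{\ss}$,
\[
\E[f(\xi^i_k)\, g(\xi^j_{k'})\,|\, \calG] \;=\; \E[f(\xi^i_k)\,|\, \calG]\cdot \E[g(\xi^j_{k'})\,|\, \calG],
\]
which is equivalent to the claimed conditional independence $(i,j)\in \ciSetGen{k}{k'}{\calG}{N}$.

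First I would proceed by iterated conditioning through $\sigma(\xi_0,\ldots,\xi_{k'-1})$ and then $\sigma(\xi_0,\ldots,\xi_{k-1})$, using at each step the transition formula $\E[f(\xi^i_k)\,|\, \sigma(\xi_0,\ldots,\xi_{k-1})] = (V^i_k)^{-1}\sum_u A_k^{iu} V^u_{k-1} f(\xi^u_{k-1})$ from \eqref{eq:P_in_terms_of_V_proofs}, and the analogous expression for $\xi^j_{k'}$. This expresses $\E[f(\xi^i_k)\, g(\xi^j_{k'})\,|\, \calG]$ as the $\calG$-conditional expectation of a product of two terms, one a measurable function of $(\xi^u_{k-1}: u \in \parentSet{k}{i}{\bba}{})$ and the other of $(\xi^v_{k'-1}: v \in \parentSet{k'}{j}{\bba}{})$ (the $V$ factors being $\sigma(\xi_0)$-measurable by \lemmaref{lem:facts_about_Vs}). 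The disjointness hypothesis $\parentSet{k}{i}{\bba}{}\cap\parentSet{k'}{j}{\bba}{}=\emptyset$ ensures the two functions involve no shared particle index at the parent level, and the hypothesis $\parentSet{k}{i}{\bba}{}\times\parentSet{k'}{j}{\bba}{}\subset \ciSetGen{k-1}{k'-1}{\calG}{N}$ then allows the product to factorize under the $\calG$-conditional expectation. Reversing the conditioning gives the desired identity.

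The main obstacle will be the bookkeeping of dependence through the weight variables $V^u_{k-1}$: each such weight depends measurably on $g(\xi_0^w)$ for $w$ ranging over the possibly larger set $\primeParentSet{k-1}{u}{\bba}{}$. One has to verify that, when $u$ is restricted to $\parentSet{k}{i}{\bba}{}$ and $v$ to $\parentSet{k'}{j}{\bba}{}$, the resulting functional forms on the two sides remain conditionally independent given $\calG$ under the stated pairwise hypothesis. Once that measurability-level factorization is secured the remainder is a routine application of the tower property and the definition of conditional independence.
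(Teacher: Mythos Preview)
Your approach is in the same spirit as the paper's but the paper's version is tidier. Rather than integrating out one level at a time via \eqref{eq:P_in_terms_of_V_proofs}, the paper uses the explicit parent-index representation $\xi^i_k=\xi^{I^i_k}_{k-1}$ from the proof of \lemmaref{lem:conditional marginal} (the $I^i_k$ being conditionally independent given $\xi_0$), and partitions the joint event $\{\xi^i_k\in S^i,\,\xi^j_{k'}\in S^j\}$ by the law of total probability over the values $(\ell_i,\ell_j)$ of $(I^i_k,I^j_{k'})$. Each summand then involves only a \emph{single} pair $(\xi^{\ell_i}_{k-1},\xi^{\ell_j}_{k'-1})$, so the pairwise hypothesis factors it directly --- no iterated conditioning is needed and the cases $k=k'$ and $k\neq k'$ are handled uniformly. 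By contrast your two-stage conditioning through $\sigma(\xi_0,\ldots,\xi_{k'-1})$ and then $\sigma(\xi_0,\ldots,\xi_{k-1})$ is awkward when $k<k'$: after the first step you hold $f(\xi^i_k)$ times a function of the parents $(\xi^v_{k'-1})_{v\in\parentSet{k'}{j}{\bba}{}}$, and conditioning further down to level $k-1$ does not reduce the latter to anything the hypothesis addresses --- the hypothesis speaks about $\xi^u_{k-1}$ versus $\xi^v_{k'-1}$, not about $\xi^i_k$ versus $\xi^v_{k'-1}$.

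Your ``main obstacle'' is a red herring: the weights $V^u_{k-1}$ are $\sigma(\xi_0)$-measurable by \lemmaref{lem:facts_about_Vs}\ref{it:measurability of V}, and in every application of the lemma one has $\sigma(\xi_0)\subset\calG$, so they simply pull out of the $\calG$-conditional expectation. There is nothing to untangle about which $\xi_0^w$ enter each weight; what you should actually be checking is that the bilinear expansion of the product of weighted sums reduces the factorization to the pairwise hypothesis, and that is exactly what the paper's index-by-index decomposition makes transparent.
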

\begin{proof}
By \assref{ass:A_k} we can use \eqnref{eq:rigorus formulation of xi}, \eqnref{eq:codnitional independence criteria} and the law of total probability, for all $i,j\in[N]$ and $S^i,S^j \in \sss$
\begin{align*}
& \P\big(\xi^{i}_k \in S^i,~ \xi^{j}_{k'} \in S^j \big| \calG\big)\\
&= \displaystyle\sum_{\ell_i\in\parentSet{k}{i}{\bba}{}}\sum_{\ell_j\in\parentSet{k'}{j}{\bba}{}} \P\big(I^{i}_k = \ell_i,~ \xi^{\ell_i}_{k-1} \in S^i,~ I^j_{k'} = \ell_j,~ \xi^{\ell_j}_{k'-1} \in S^j \big| \calG\big)\\ 
&= \displaystyle\sum_{\ell_i\in\parentSet{k}{i}{\bba}{}} \P\big(I^{i}_k = \ell_i,~ \xi^{\ell_i}_{k-1} \in S^i \big| \calG\big)\sum_{\ell_j\in\parentSet{k'}{j}{\bba}{}} \P\big(I^j_{k'} = \ell_j,~ \xi^{\ell_j}_{k'-1} \in S^j \big| \calG\big)\\ 
&= \P\big(\xi^{i}_k \in S^i\,\big|\, \calG\big)\P\big( \xi^{j}_{k'} \in S^j \,\big|\, \calG\big),	
\end{align*}
concluding the proof.
\end{proof}
To prove \ref{it:cond indep} of \assref{ass:partition and cond indep} we need the following conditional independence result.
\begin{lemma}\label{lem:final conditional independence}
Fix $N,m\geq 1$ and $\bba = \bba^{(N,m)}$. If $\bba$ satisfies \assref{ass:A_k extra} and $\tailProdSet{m}{i}{\ka-1}{\bba}\cap\tailProdSet{m}{j}{\ka-1}{\bba} = \emptyset$ for some $i,j \in [N]$ and $1<\ka\leq m$, then $\xi^{i}_m\independent \xi^j_m\big| \xi_0,\ldots,\xi_{m-\ka}$.
%\begin{equation*}
%\left.\xi^{i}_m\independent \xi^j_m\mids \xi_0,\ldots,\xi_{m-k}\right.
%\end{equation*}
\end{lemma}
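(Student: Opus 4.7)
The plan is to propagate, from row $m-\ka+1$ upward to row $m$, disjointness of ancestor column sets together with pairwise conditional independence, invoking \lemmaref{lem:cond indep propagation} at each level. Write $k^{\ast}\defeq m-\ka+1$; the set $\tailProdSet{m}{\cdot}{m-k}{\bba}$ records the row-$k$ column indices with a directed path into $\xi_m^{\cdot}$, so that $\tailProdSet{m}{i}{m-k^{\ast}}{\bba}=\tailProdSet{m}{i}{\ka-1}{\bba}$ (the set appearing in the hypothesis) and $\tailProdSet{m}{i}{0}{\bba}=\{i\}$. By forward induction on $k\in\{k^{\ast},\ldots,m\}$ I will establish (a) $\tailProdSet{m}{i}{m-k}{\bba}\cap\tailProdSet{m}{j}{m-k}{\bba}=\emptyset$ and (b) $\xi_k^u\independent \xi_k^v \mid \xi_0,\ldots,\xi_{k^{\ast}-1}$ for every $(u,v)$ in the product of these two ancestor sets. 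The $k=m$ case then delivers the lemma, since $\tailProdSet{m}{i}{0}{\bba}=\{i\}$ and $\tailProdSet{m}{j}{0}{\bba}=\{j\}$.

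At the base $k=k^{\ast}$, condition (a) is exactly the hypothesis, and (b) follows at once from the one-step conditional independence property \eqref{eq:P_in_terms_of_V_proofs} of augmented resampling, since pairs in the stated product have distinct coordinates by (a). For the inductive step from $k$ to $k+1$, the pivotal combinatorial fact is that expanding $\bigl(\prod_{q=0}^{m-k-1}A_{m-q}\bigr)^{iw}$ shows $\parentSet{k+1}{w}{\bba}{}\subset\tailProdSet{m}{i}{m-k}{\bba}$ whenever $w\in\tailProdSet{m}{i}{m-k-1}{\bba}$, and similarly with $j$. Consequently, if some $w$ lay in $\tailProdSet{m}{i}{m-k-1}{\bba}\cap\tailProdSet{m}{j}{m-k-1}{\bba}$, then its entire row-$k$ parent set would be contained in the empty intersection at level $k$; but \lemmaref{lem:general properties of path sets}\ref{it:positive diagonals} gives $w\in\parentSet{k+1}{w}{\bba}{}$, a contradiction, yielding (a) at level $k+1$. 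For (b) at level $k+1$, the same parent-set inclusions together with (a) and (b) at level $k$ supply exactly the two premises of \lemmaref{lem:cond indep propagation}---disjointness of parent sets and pairwise conditional independence of the parents---applied with $\calG=\sigma(\xi_0,\ldots,\xi_{k^{\ast}-1})$, yielding $\xi_{k+1}^{u'}\independent\xi_{k+1}^{v'}\mid\calG$ for every $(u',v')$ in the corresponding product of ancestor sets.

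No serious obstacle is anticipated. The argument is essentially a clean interplay between \lemmaref{lem:cond indep propagation} (the conditional-independence lifting device, whose hypotheses on parent sets arise verbatim at each layer) and \lemmaref{lem:general properties of path sets}\ref{it:positive diagonals} (which, via the self-loop $A_{k+1}^{ww}>0$, keeps ancestor-set disjointness stable as we ascend the graph); everything else is elementary bookkeeping with the tail-product index sets $\tailProdSet{m}{\cdot}{\cdot}{\bba}$.
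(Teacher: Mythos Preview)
Your proposal is correct and follows essentially the same approach as the paper: propagate disjointness of ancestor sets together with pairwise conditional independence level by level, invoking \lemmaref{lem:cond indep propagation} at each step. The paper parameterizes the induction by the tail-product depth and runs it backward from $k=\ka-2$ to $k=0$, whereas you parameterize by the row index and run forward from $k^{\ast}$ to $m$; these are the same induction in different coordinates. The one small technical difference is that the paper proves propagation of disjointness via the algebraic expansion \eqref{eq:propagation of separation} (using idempotence and commutativity), while you obtain it more directly from the parent-set inclusion $\parentSet{k+1}{w}{\bba}{}\subset\tailProdSet{m}{i}{m-k}{\bba}$ together with $w\in\parentSet{k+1}{w}{\bba}{}$ from \lemmaref{lem:general properties of path sets}\ref{it:positive diagonals}; your route is slightly cleaner here. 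One cosmetic point: the paper defines $\tailProdSet{m}{i}{k}{\bba}$ only for $k\in[m]$, so your terminal identification $\tailProdSet{m}{i}{0}{\bba}=\{i\}$ relies on the (standard) empty-product convention; if you prefer to avoid this, end the induction at $k=m-1$ and make one final explicit application of \lemmaref{lem:cond indep propagation}, exactly as the paper does.
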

\begin{proof}
In the case $\ka=2$, by \eqnref{eq:part prod spec cases}, $\tailProdSet{m}{i}{\ka-1}{\bba}\cap \tailProdSet{m}{j}{\ka-1}{\bba} = \parentSet{m}{i}{\bba}{}\cap \parentSet{m}{j}{\bba}{} = \emptyset$ and because by the one step conditional independence we also have 
\begin{eqnarray*}
\parentSet{m}{i}{\bba}{}\times \parentSet{m}{j}{\bba}{} \subset \ciSet{m-1}{\xi_0,\ldots,\xi_{m-2}}{N},
\end{eqnarray*}
the claim holds by \lemmaref{lem:cond indep propagation} for $\ka=2$. 

In order to prove the claim for $2<\ka\leq m$, we first show that if for any $k' \in [m-2]$ and $\calG \subset \calF$, one has
\begin{equation}
\begin{array}{rl}
&\tailProdSet{m}{i}{k+1}{\bba}\cap \tailProdSet{m}{j}{k+1}{\bba} = \emptyset,\\[.1cm]
&\tailProdSet{m}{i}{k+1}{\bba}\times \tailProdSet{m}{j}{k+1}{\bba} \subset \ciSet{m-k-1}{\calG}{N},\label{it:ass1}
\end{array}
\end{equation}
where $k=k'$, then \eqnref{it:ass1} is also true for $k = k'-1$.
%\begin{equation}
%\tailProdSet{m}{i}{n}{\bba}\cap \tailProdSet{m}{j}{n}{\bba} = \emptyset,\qquad 
%\tailProdSet{m}{i}{n}{\bba}\times \tailProdSet{m}{j}{n}{\bba} \subset \ciSet{m-n}{\calG}{N}.\label{it:res1}
%\end{equation}
To do this, we first observe that by \eqnref{it:ass1} and \assref{ass:A_k extra}\ref{it:commutativity}
\begin{align}
\sum_{\ell}  A_{m-k:m}^{i\ell}A_{m-k:m}^{j \ell} 
%&=& {\sum_{\substack{(u,v) \in \tailProdSet{m}{i}{n}{\bba}\times\tailProdSet{m}{j}{n}{\bba}\\ u\neq v }}\Bigg(\prod_{q=0}^{n-1} A_{m-q}\Bigg)^{iu}\Bigg(\prod_{q=0}^{n-1} A_{m-q}\Bigg)^{j v}\sum_{\ell} A_{m-n}^{u\ell}A_{m-n}^{v\ell}}\nonumber \\
&= {\sum_{\substack{(u,v) \in \tailProdSet{m}{i}{k}{\bba}\times\tailProdSet{m}{j}{k}{\bba}\\ u\neq v }}A_{m-k+1:m}^{iu}A_{m-k+1:m}^{j v}\sum_{\ell} A_{m-k}^{u\ell}A_{m-k}^{v\ell}}\nonumber \\
&\quad +~{
\sum_{u}A_{m-k+1:m}^{iu}A_{m-k+1:m}^{j u}\sum_{\ell} A_{m-k}^{u\ell}A_{m-k}^{u\ell}}=0.\label{eq:propagation of separation}
\end{align}
In the second sum of the decomposition, by \assref{ass:A_k}, $\sum_{\ell} A_{m-k}^{u\ell}A_{m-k}^{u\ell}>0$, and from this we conclude that $\sum_{\ell}\textstyle\left(\prod_{q=0}^{k-1} A_{m-q}\right)^{i\ell}\textstyle\left(\prod_{q=0}^{k-1} A_{m-q}\right)^{j\ell} = 0$, which is equivalent to $\tailProdSet{m}{i}{k}{\bba} \cap \tailProdSet{m}{j}{k}{\bba} = \emptyset$, proving the first part of \eqnref{it:ass1} for $k = k'-1$.
To prove the second part, we show that for all $(p,q) \in \tailProdSet{m}{i}{k}{\bba} \times \tailProdSet{m}{j}{k}{\bba}$
\begin{equation}
\begin{array}{rl}
&\parentSet{m-k}{p}{\bba}{} \cap \parentSet{m-k}{q}{\bba}{} = \emptyset, \\[.1cm]
&\parentSet{m-k}{p}{\bba}{}\times\parentSet{m-k}{q}{\bba}{} \subset \ciSet{m-k-1}{\calG}{N}.
\label{eq:part 1 for cond indep}
\end{array}
\end{equation}
To see this, we observe that in the first sum of the decomposition \eqnref{eq:propagation of separation}, $A_{m-k+1:m}^{iu}>0$ and $A_{m-k+1:m}^{j v}>0$, and hence by the non-negativity of the matrices $\aMatrices{N}{m}$, one must also have $\sum_{\ell} A_{m-k}^{u\ell}A_{m-k}^{v\ell}=0$ which is equivalent to $\parentSet{m-k}{u}{\bba}{} \cap \parentSet{m-k}{v}{\bba}{} = \emptyset$. This establishes the first part of \eqnref{eq:part 1 for cond indep}. To prove the second part of \eqnref{eq:part 1 for cond indep},
%\begin{equation}
%\parentSet{m-n}{u}{\bba}{} \cap \parentSet{m-n}{v}{\bba}{} = \emptyset.
%\label{eq:part 1 for cond indep}
%\end{equation}
one can check that by definitions $\tailProdSet{m}{i}{k+1}{\bba} = \bigcup_{\ell\in \tailProdSet{m}{i}{k}{\bba}}  \parentSet{m-k}{\ell}{\bba}{}$ and hence by \eqnref{it:ass1}, also the conditional independence in \eqnref{eq:part 1 for cond indep} holds for all $(p,q) \in \tailProdSet{m}{i}{k}{\bba} \times \tailProdSet{m}{j}{k}{\bba}$.
%\begin{equation}
%\parentSet{m-n}{p}{\bba}{}\times\parentSet{m-n}{q}{\bba}{} \subset \ciSet{m-n-1}{\calG}{N}.
%\label{eq:part 2}
%\end{equation}
%
Finally the conditional independence in \eqnref{it:ass1} for $k = k'-1$ follows by \eqnref{eq:part 1 for cond indep} and \lemmaref{lem:cond indep propagation}.

%In the case $2<k\leq m$, we have by the one step conditional independence
%\begin{equation*}
%\tailProdSet{m}{i}{k-1}{\bba}\times \tailProdSet{m}{j}{k-1}{\bba} \subset \ciSet{m-k+1}{\xi_{0},\xi_{m-k}}{N}.
%\end{equation*}

By assumption, $\tailProdSet{}{i}{\ka-1}{\bba}\cap\tailProdSet{}{j}{\ka-1}{\bba}=\emptyset$ and by the one step conditional independence \eqnref{it:ass1} holds for $k=\ka-2$ and $\calG = \sigma(\xi_0,\ldots,\xi_{m-\ka})$. By \eqnref{eq:part prod spec cases}, $\tailProdSet{}{i}{1}{\bba}=\parentSet{m}{i}{\bba}{}$ and $\tailProdSet{}{j}{1}{\bba} =\parentSet{m}{j}{\bba}{}$. Thus by the backward induction enabled by \eqnref{it:ass1} we have,  %$\tailProdSet{}{i}{1}{\bba}\cap\tailProdSet{}{j}{1}{\bba} = \parentSet{m}{i}{\bba}{}\cap\parentSet{m}{j}{\bba}{}=\emptyset$ and
%
%Now if $2<k\leq m$ then $k-2 \in [m-2]$ and we can use \eqnref{it:res1} inductively to get $\parentSet{m}{i}{\bba}{}\cap \parentSet{m}{j}{\bba}{} = \emptyset$ and
\begin{eqnarray*}
\parentSet{m}{i}{\bba}{}\cap\parentSet{m}{j}{\bba}{}&=&\emptyset,\\
\parentSet{m}{i}{\bba}{}\times \parentSet{m}{j}{\bba}{} &\subset& \ciSet{m-1}{\xi_{0},\ldots,\xi_{m-\ka}}{N},
\end{eqnarray*}
from which the claim then follows by \lemmaref{lem:cond indep propagation}.
\end{proof}

% --------------------------------------------------------------------
%  PROOF OF THE MAIN RESULT
% --------------------------------------------------------------------
\begin{proof}[Proof of \propref{prop:fixed radix satisfies assumptions}]
First we prove that $\bba = \radixMatrices{r}{m}$ satisfies \assref{ass:A_k extra}. 
\assref{ass:A_k} and parts \ref{it:symmetry}, \ref{it:idempotence} of \assref{ass:A_k extra} follow from the proof of \lemmaref{lem:modular congruence realtion fixed}. \assref{ass:A_k extra}\ref{it:commutativity} can be checked by using the mixed product property \eqnref{eq:mixed product property}. \assref{ass:A_k extra}\ref{it:cardinality invariance} follows from \eqnref{eq:explicit ancestors}.

%The double stochastic property
%follows immediately from \eqnref{eq:def fixed radix matrices} and \eqnref{eq:explicit ancestors}. The fact that $\prod_{k=1}^m A_k=\mathbf{1}_{1/N}$ follows from \lemmaref{lem:reverse product formula}\ref{it:front product}. \assref{ass:A_k extra}\ref{it:symmetry} follows trivially by the properties of Kronecker product and parts \ref{it:commutativity} and \ref{it:idempotence} of \assref{ass:A_k extra} can be checked by elementary operations using the mixed product property of the Kronecker product. \assref{ass:A_k extra}\ref{it:cardinality invariance} follows from \eqnref{eq:explicit ancestors}.

To prove the only non-trivial condition \assref{ass:A_k extra}\ref{it:unique paths}, we prove that if there are $(i_0,\ldots,i_m)$ and $(j_0,\ldots,j_m)$ in  $\paths{\bba}{}$ such that for some $p\in[m]$, one has $i_p = j_p$ and $i_{p-1} \neq j_{p-1}$ then $i_q \neq j_q$ for all $q < p$. To do this, suppose that $i_q = j_q$ for some $q<p\in[m]$. From the definition \eqnref{eq:prime parents} it follows that for any $k\in [m]$, $\primeParentSet{k-1}{i_{k-1}}{\bba}{m} \subset \primeParentSet{k}{i_k}{\bba}{m}$. Therefore $\primeParentSet{p-1}{i_{p-1}}{\bba}{m} \supset \primeParentSet{q}{i_q}{\bba}{m} = \primeParentSet{q}{j_q}{\bba}{m} \subset \primeParentSet{p-1}{j_{p-1}}{\bba}{m}$, which is a contradiction with $\primeParentSet{p-1}{i_{p-1}}{\bba}{m} \cap \primeParentSet{p-1}{j_{p-1}}{\bba}{m} = \emptyset$, which we know by \lemmaref{lem:nonzero elements} since $i_{p-1},j_{p-1}\in\parentSet{p}{i_p}{\bba}{}$ and $i_{p-1} \neq j_{p-1}$.

It remains to prove that $(\bba,\radixSamplePartition{r}{m}{\ka},d)$ satisfies \assref{ass:partition and cond indep}. \assref{ass:partition and cond indep}\ref{it:is partition} follows from \eqnref{eq:def radix partitions}, and \assref{ass:partition and cond indep}\ref{it:equivalence classes} follows from \lemmaref{lem:explicit analysis regarding partitions}\ref{it:partition property 2 fixed} since $\big(\prod_{k=0}^{\ka-1}A_{m-k}\big)^{ij} \in \{0,r^{-\ka}\}$. To verify \assref{ass:partition and cond indep}\ref{it:cond indep}, we observe first that for $\ka=1$, the claim follows trivially by the one step conditional independence. For $1<\ka\leq m$ we observe that by \lemmaref{lem:explicit analysis regarding partitions}\ref{it:partition property 1 fixed}, if $(i,j) \in \radixSamplePartitionElement{r^m}{m}{\ka}{u_1}\times \radixSamplePartitionElement{r^m}{m}{\ka}{u_2}$ where $(u_1,u_2)\in [r^{m-\ka+1}]^2$ such that $u_1 \neq u_2$,  then $\tailProdSet{m}{i}{\ka-1}{\bba} \cap \tailProdSet{m}{j}{\ka-1}{\bba} = \emptyset$, and the claim thus follows from \lemmaref{lem:final conditional independence}.
\end{proof}

% --------------------------------------------------------------------
% --------------------------------------------------------------------
%  CONVERGENCE OF THE CONDITIONAL SECOND MOMENT
% --------------------------------------------------------------------
% --------------------------------------------------------------------
\subsection{Convergence of the conditional variance}
\label{sub:collision_analysis_fixed_radix}

The main result of this section is the following proposition whose proof is postponed to the end of this section.
\begin{proposition}\label{prop:var_conv_fixed_radix}
Under the hypotheses of \theref{thm:butterfly_clt_fixed_radix},
\begin{equation*}
\E \Bigg[\Bigg(\sum_{\varrho=1}^{r^m m}X_\varrho^{(r^m,m)}\Bigg)^2\Bigg|\F_0^{(r^m,m)}\Bigg]
\inprob{m\rightarrow\infty}\left(1-\frac{1}{r}\right)\mu\Bigg(g\bigg(\varphi-\frac{\mu(g\varphi)}{\mu(g)}\bigg)^2\Bigg)\mu(g).
\end{equation*}
%where
%\begin{equation*}
%\overline{\varphi}(x):=\varphi(x)-\frac{\mu(g\varphi)}{\mu(g)}.
%\end{equation*}
\end{proposition}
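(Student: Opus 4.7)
The plan is to apply Proposition~\ref{prop:tensor product formulation} with $\bba = \radixMatrices{r}{m}$ (valid by Proposition~\ref{prop:fixed radix satisfies assumptions}) and multiply the resulting identity by $N/m$. This produces a decomposition $T_1 + T_2 + T_3$ into a diagonal contribution, a collision contribution involving $D^{(i,j)}$, and a non-collision contribution involving $P^{(i,j)}$. The diagonal term satisfies $|T_1| \leq \|g\|_\infty^2\osc{\varphi}^2/m$ and hence vanishes.

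Lemma~\ref{lem:nonzero elements} together with Lemma~\ref{lem:bijectivity g} gives $|\primeParentSet{k}{i}{\bba}{}| = r^k$, $|\collisionStartSetSh{k}{i}{\bba}| = r^{k-1}(r-1)$, and $|\lowerPartFrom{k}{u}{\bba}| = r^{m-k}$, so that $D^{(i,j)} = \sum_{k=1}^m r^{2m-k}\ind{j \in \collisionStartSetSh{k}{i}{\bba}}$ and $P^{(i,j)} = N^2\sum_{k=1}^m(1-r^{-k})\ind{j \in \collisionStartSetSh{k}{i}{\bba}}$. After rearrangement,
\[
T_2 = \frac{1-r^{-1}}{m}\sum_{k=1}^m\frac{1}{N}\sum_i g(\xi_0^i)\cvarphi_N^2(\xi_0^i)\cdot\frac{1}{|\collisionStartSetSh{k}{i}{\bba}|}\sum_{j\in\collisionStartSetSh{k}{i}{\bba}}g(\xi_0^j).
\]
Crucially, Lemma~\ref{lem:nonzero elements} identifies $\primeParentSet{k}{i}{\bba}{}$ with the block of $\radixSamplePartition{r}{m}{m-k+1}$ containing $i$, so $\collisionStartSetSh{k}{i}{\bba}$ is a disjoint union of blocks from $\radixSamplePartition{r}{m}{m-k+2}$ to which the hypothesis~\eqnref{eq:required conv in prob} directly applies. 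Invoking it twice (with $\varphi \mapsto g$ and $\ka = m-k+1$ to drive the inner block average to $\mu(g)$, and with $\ka = 1$ to drive the outer empirical average to $\mu(g(\varphi - \mu(g\varphi)/\mu(g))^2)$), a Ces\`aro argument in $k$ yields the claimed limit for $T_2$, with contributions from bounded $k$ absorbed by the $1/m$ prefactor.

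The main obstacle is $T_3$, which I expect to handle via a telescoping argument. Writing $S_k(B) := \sum_{j \in B} g(\xi_0^j)\cvarphi_N(\xi_0^j)$ for blocks $B \in \radixSamplePartition{r}{m}{m-k+1}$, the nested block structure (each level-$k$ block partitions into $r$ level-$(k-1)$ sub-blocks) implies
\[
\sum_i g(\xi_0^i)\cvarphi_N(\xi_0^i)\sum_{j \in \collisionStartSetSh{k}{i}{\bba}}g(\xi_0^j)\cvarphi_N(\xi_0^j) = a_k - a_{k-1},
\]
where $a_k := \sum_{B \in \radixSamplePartition{r}{m}{m-k+1}} S_k(B)^2$, and $a_m = (\sum_j g(\xi_0^j)\cvarphi_N(\xi_0^j))^2 = 0$ by centering of $\cvarphi_N$. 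Abel summation then recasts $T_3$ as a geometric-weighted linear combination of $a_0, \ldots, a_{m-1}$ divided by $m$; applying \eqnref{eq:required conv in prob} to $g(\varphi - \mu(g\varphi)/\mu(g))$ (a mean-zero function under $\mu$) gives $\E[a_k] = O(N)$ uniformly in $k$ after replacing $\cvarphi_N$ by its deterministic limit, whence $\E[|T_3|] = O(1/m)$ and $T_3 \to 0$ in $L^1$. The final technical step, replacing $\cvarphi_N$ by $\varphi - \mu(g\varphi)/\mu(g)$ inside both $T_2$ and $T_3$, is handled by the $L^2$ convergence of the weighted empirical mean, itself a consequence of \eqnref{eq:required conv in prob} with $\ka = 1$.
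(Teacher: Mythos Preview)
Your treatment of $T_1$ and $T_2$ is essentially what the paper does (the paper packages the Ces\`aro argument for $T_2$ as a separate lemma, \lemmaref{eq:convergence of weigted subsample integrals}). The difference lies in $T_3$. The paper does not Abel-sum: it simply splits the weight $(1-r^{-k})$ into $1$ and $-r^{-k}$. The ``$1$'' part collapses via $\bigcup_k\collisionStartSetSh{k}{i}{\bba}=[N]\setminus\{i\}$ (\lemmaref{lem:general properties of path sets}\ref{it:complement decomposition}) and the identity $\sum_i g(\xi_0^i)\cvarphi_N(\xi_0^i)=0$ to $-a_0/(mN)$, which is bounded deterministically by $\|g\|_\infty^2\osc{\varphi}^2/m$ and hence vanishes with no recourse to the hypothesis. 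The ``$r^{-k}$'' part then has exactly the structure of $T_2$ and is dispatched by the same lemma, converging to $(1-r^{-1})\mu(g\cvarphi)^2=0$. Your Abel summation is algebraically equivalent but forces you to control $\sum_k r^{-k}a_k$ via the hypothesis, which is more work.

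Two concrete issues with your $T_3$ argument as written. First, the identification is off: $\primeParentSet{k}{i}{\bba}{}$ is a \emph{consecutive} interval of length $r^k$ (\lemmaref{lem:nonzero elements}), whereas the elements of $\radixSamplePartition{r}{m}{m-k+1}$ are arithmetic progressions of length $r^{m-k}$ with step $r^k$. The hypothesis~\eqnref{eq:required conv in prob} is itself stated over consecutive blocks, so it \emph{does} apply to $\primeParentSet{k}{i}{\bba}{}$, but with $\ka=m-k+1$ and not through the partition you name. Second, the claim $\E[a_k]=O(N)$ uniformly in $k$ is false: applying~\eqnref{eq:required conv in prob} with $\ka=m-k+1$ to $\psi=g(\varphi-\mu(g\varphi)/\mu(g))$ yields $\E[\tilde a_k]\le b(\psi)^2\big((k-1)r^k+N\big)$, which is of order $mN$ when $k$ is near $m$. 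Fortunately the geometric weight rescues you: $\frac{1}{mN}\sum_{k}r^{-k}\E[\tilde a_k]\le \frac{b^2}{mN}\big(\sum_k(k-1)+N\sum_k r^{-k}\big)=O(m/N)+O(1/m)\to 0$, so the conclusion survives once you correct the intermediate bound.
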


\newcommand{\I}[2]{\calI^{#1}_{#2}}
\newcommand{\ffi}{\varphi(\xi^i_0)}
\newcommand{\ffj}{\varphi'(\xi^j_0)}
\newcommand{\mfi}{\mu(\varphi)}
\newcommand{\mfj}{\mu(\varphi')}
\newcommand{\ffisq}{\varphi^2(\xi^i_0)}

In order to prove \propref{prop:var_conv_fixed_radix} we need the following auxiliary result which is the main application of the block-wise absolute second moment bound hypothesis in \theref{thm:butterfly_clt_fixed_radix}.

\begin{lemma}%[Main subsample convergence result]
\label{eq:convergence of weigted subsample integrals}
Under the hypotheses of \theref{thm:butterfly_clt_fixed_radix},
%Suppose that $\aMatrices{N}{m} = \radixMatrices{r}{m}$. If for all $\varphi \in \boundMeas{\ss}$ there exists $C(\varphi,p)\in\real$ such that for all $c+1 \in [r^{k-1}]$ and $k \in [m]$
%\begin{equation}\label{eq:required conv in prob}
%\E\left[\abs{\frac{1}{r^{m-k+1}}\sum_{j=cr^{m-k+1}+1}^{(c+1)r^{m-k+1}} \varphi(\xi^j_0) - \mu(\varphi)}^p\right]^{\frac{1}{p}} 
%\leq C(\varphi,p)\sqrt{\frac{m-k}{r^m} + \frac{1}{r^{m-k+1}}}
%\end{equation}
%then 
for all $\varphi,\varphi'\in\boundMeas{\ss}$
\begin{equation*}
e_m(\varphi,\varphi') \defeq \frac{1}{m}\sum_{k=1}^m\frac{1}{r^m}\sum_{i}\frac{1}{\card{\collisionStartSetSh{k}{i}{\bba(m)}}}\sum_{j\in \collisionStartSetSh{k}{i}{\bba(m)}}
%\ind{}(j\in \collisionStartSetSh{k}{i}{\bba(m)})
\ffi\ffj - \mfi\mfj \inprob{m\to\infty} 0,
\end{equation*}
where $\bba(m)=\radixMatrices{r}{m}$.
\end{lemma}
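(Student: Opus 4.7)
The plan is to reduce this two-function quantity to the block-averaging hypothesis \eqnref{eq:required conv in prob}, using \lemmaref{lem:nonzero elements} to identify the collision-start sets $\collisionStartSetSh{k}{i}{\bba(m)}$ with complements of nested prime-parent blocks, and then to use a partial-telescoping structure to exploit the $1/m$ prefactor.

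First, by \lemmaref{lem:nonzero elements} the sets $\primeParentSet{k}{i}{\bba(m)}{}$ partition $[r^m]$ into $r^{m-k}$ blocks $B_k^{(u)}$ of size $r^k$, and $\primeParentSet{k-1}{i}{\bba(m)}{}$ is one of the $r$ equally-sized sub-blocks of $\primeParentSet{k}{i}{\bba(m)}{}$; hence $\card{\collisionStartSetSh{k}{i}{\bba(m)}}=(r-1)r^{k-1}$. Introducing block averages $\bar A_k^{(u)}=r^{-k}\sum_{j\in B_k^{(u)}}\varphi'(\xi^j_0)$, $\bar C_k^{(u)}=r^{-k}\sum_{i\in B_k^{(u)}}\varphi(\xi^i_0)$, and
$T_k:=r^{-(m-k)}\sum_{u\in[r^{m-k}]}\bar A_k^{(u)}\bar C_k^{(u)}$ for $k\geq 1$, $T_0:=r^{-m}\sum_i \varphi(\xi^i_0)\varphi'(\xi^i_0)$, a direct reorganization of the double sum (split each inner sum as $B_k^{(u_i)}$ minus $B_{k-1}^{(u_i,v_i)}$, then regroup over $(u,v)$) yields the identity
\[
\frac{1}{r^m}\sum_i\varphi(\xi^i_0)\,\frac{1}{\card{\collisionStartSetSh{k}{i}{\bba(m)}}}\sum_{j\in\collisionStartSetSh{k}{i}{\bba(m)}}\varphi'(\xi^j_0) \;=\; \frac{r}{r-1}T_k - \frac{1}{r-1}T_{k-1}.
\]

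Setting $\epsilon_k:=T_k - \mu(\varphi)\mu(\varphi')$ and using $\tfrac{r}{r-1}-\tfrac{1}{r-1}=1$, averaging the identity over $k\in[m]$ and partially telescoping gives
\[
e_m(\varphi,\varphi') \;=\; \frac{1}{m}\sum_{k=1}^{m-1}\epsilon_k + \frac{r\epsilon_m - \epsilon_0}{m(r-1)}.
\]
The boundary term is $O(1/m)$ since $|\epsilon_0|,|\epsilon_m|$ are bounded in terms of $\infnorm{\varphi}$, $\infnorm{\varphi'}$ and $\mu$. For the main sum I would decompose $\bar A_k^{(u)} = \mu(\varphi')+a_k^{(u)}$, $\bar C_k^{(u)} = \mu(\varphi)+c_k^{(u)}$, so that $\epsilon_k$ splits into a cross term $r^{-(m-k)}\sum_u a_k^{(u)}c_k^{(u)}$ plus two linear terms. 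Applying \eqnref{eq:required conv in prob} with $\ka=m-k+1$ bounds $\|a_k^{(u)}\|_{L^2},\|c_k^{(u)}\|_{L^2}$ by $O\!\bigl(\sqrt{(k-1)/r^m+1/r^k}\bigr)$; Cauchy--Schwarz then controls the cross term in $L^1$ by $O((k-1)/r^m+1/r^k)$, while the linear terms collapse under the block sum to the full-sample errors $r^{-m}\sum_i\varphi(\xi^i_0)-\mu(\varphi)$ and its $\varphi'$-analogue, controlled at rate $O(\sqrt{m/r^m})$ by the hypothesis with $\ka=1$.

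Summing the bounds, $\E|e_m(\varphi,\varphi')|$ is of order $m^{-1}\sum_{k=1}^{m-1}\bigl((k-1)/r^m+1/r^k\bigr)+\sqrt{m/r^m}+1/m$, which is $O(1/m)+O(\sqrt{m/r^m})$, hence vanishes as $m\to\infty$, giving convergence in probability. The only non-routine step is the combinatorial reorganization that identifies the inner collision-start sum with $\tfrac{r}{r-1}T_k-\tfrac{1}{r-1}T_{k-1}$: this is the essential structural input, because the resulting telescoping is precisely what supplies the extra $1/m$ factor needed to turn the slow $1/r^k$ decay of individual block errors into a vanishing overall rate.
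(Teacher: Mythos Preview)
Your proof is correct and takes a genuinely different route from the paper's.

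The paper decomposes $e_m=A_m+B_m$, where $B_m=\big(r^{-m}\sum_i\varphi(\xi^i_0)\big)\mu(\varphi')-\mu(\varphi)\mu(\varphi')$ is handled directly by the full-sample hypothesis. For $A_m$ the paper pulls out $\|\varphi\|_\infty$, applies Jensen and Cauchy--Schwarz to reduce to $L^2$ norms of the inner collision-start averages, reverses the summation order in $k$, and then decomposes each $\collisionStartSetSh{m-k+1}{i}{\bba(m)}$ explicitly as a union of $r-1$ contiguous blocks of size $r^{m-k}$ to which \eqnref{eq:required conv in prob} is applied via Minkowski. This leaves a sum of square roots $\tfrac{1}{m}\sum_{k=1}^{m-1}\sqrt{(m-k-1)/r^m+1/r^{m-k}}$, which needs a further Cauchy--Schwarz to show vanishing.

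Your approach is more algebraic: the key identity rewriting the inner sum as $\tfrac{r}{r-1}T_k-\tfrac{1}{r-1}T_{k-1}$ is exact and exposes a partial telescope that reduces $e_m$ to the Ces\`aro average $\tfrac{1}{m}\sum_{k=1}^{m-1}\epsilon_k$ plus an $O(1/m)$ boundary. You then treat $\varphi$ and $\varphi'$ symmetrically, and the cross term $r^{-(m-k)}\sum_u a_k^{(u)}c_k^{(u)}$ is bounded directly in $L^1$ by the product of the $L^2$ block errors, giving $O((k-1)/r^m+1/r^k)$ \emph{without} a residual square root; the subsequent summation is then immediate. What you gain is a cleaner rate and a symmetric treatment of the two test functions; what the paper's argument buys is that it never needs the algebraic identity---it works directly with the $\collisionStartSetSh{k}{i}{\bba(m)}$ sets and their explicit block decomposition from \lemmaref{lem:nonzero elements}.
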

\begin{proof}
By defining 
\begin{align*}
A_m &\defeq \frac{1}{m}\sum_{k=1}^m\frac{1}{r^m}\sum_{i}\ffi\frac{1}{\Big|\collisionStartSetSh{k}{i}{\bba(m)}\Big|}\sum_{j\in \collisionStartSetSh{k}{i}{\bba(m)}}
%\ind{}\Big(j\in \collisionStartSetSh{k}{i}{\bba(m)}\Big)
\Big(\ffj-\mfj\Big),\\
B_m &\defeq \Bigg(\frac{1}{r^m}\sum_{i}\ffi\Bigg)\mfj - \mfi\mfj,
\end{align*}
we have the decomposition $\abs{e_m(\varphi,\varphi')} = \abs{A_m + B_m} \leq \abs{A_m} + \abs{B_m}$.
%\begin{equation*}
%\abs{e^m_\bba(\varphi,\varphi')} = \abs{A_m + B_m} \leq \abs{A_m} + \abs{B_m}.
%\end{equation*}
%\begin{equation*}
%\abs{\frac{1}{m}\sum_{k=1}^m\frac{1}{r^m}\sum_{i}\frac{1}{\card{\calI^i_k}}\sum_{j}\ind{}(j\in \I{i}{k})\ffi\ffj - \mfi\mfj} = \abs{A_m + B_m} \leq \abs{A_m} + \abs{B_m}
%\end{equation*}
From the hypotheses of \theref{thm:butterfly_clt_fixed_radix} it follows that if we set $\ka=1$ and $q=1$ in \eqnref{eq:required conv in prob}, then for all $m\geq 1$
\begin{equation}\label{eq:lp convergence}
\E\Bigg[\Bigg|\frac{1}{r^{m}}\sum_{j} \varphi(\xi^j_0) - \mu(\varphi)\Bigg|^{2}\Bigg]^{\frac{1}{2}} 
\leq b(\varphi)\sqrt{\frac{m}{r^m}},
\end{equation}
implying that $\abs{B_m}$ converges to zero in probability as $m\to\infty$. To prove the same for $\abs{A_m}$ we apply triangle inequality, Cauchy-Schwartz inequality and Jensen's inequality, yielding
\begin{align}
& \E\left[\abs{A_m}\right]\nonumber\\
&\leq \frac{1}{m}\sum_{k=1}^m\E\left[\sqrt{\Bigg|\frac{1}{r^m}\sum_{i}\ffi\frac{1}{\big|\collisionStartSetSh{k}{i}{\bba(m)}\big|}\sum_{j\in \collisionStartSetSh{k}{i}{\bba(m)}}
\left(\ffj-\mfj\right)\Bigg|^2}\right]\nonumber\\
&\leq\frac{\infnorm{\varphi}}{m}\sum_{k=1}^m \E\left[\sqrt{\frac{1}{r^m}\sum_i\Bigg(\frac{1}{\big|\collisionStartSetSh{k}{i}{\bba(m)}\big|}\sum_{j\in \collisionStartSetSh{k}{i}{\bba(m)}}
\left(\ffj-\mfj\right)\Bigg)^2}\right]\nonumber\\
&\leq \frac{\norm{\varphi}_\infty}{m}\sum_{k=1}^{m} \sqrt{\frac{1}{r^m}\sum_{i}\E\left[\Bigg(\frac{1}{\big|\collisionStartSetSh{k}{i}{\bba(m)}\big|}\sum_{j\in\collisionStartSetSh{k}{i}{\bba(m)}}
\left(\ffj-\mfj\right)\Bigg)^2\right]}.\label{eq:intermediate upperbound 1}
%&\leq& \frac{\norm{\varphi}_\infty}{m}\sum_{k=1}^{m} \sqrt{\frac{1}{r^m}\sum_{i}\E\left[\Bigg(\frac{1}{\big|\collisionStartSetSh{m-k+1}{i}{\bba(m)}\big|}\sum_{j\in\collisionStartSetSh{m-k+1}{i}{\bba(m)}}
%\left(\ffj-\mfj\right)\Bigg)^2\right]}\label{eq:intermediate upperbound 1}
\end{align}
By reversing the summation order in the last sum, we need to consider the sets $\collisionStartSetSh{m-k+1}{i}{\bba(m)}$. Using \eqnref{eq:origin set} one can check that 
\begin{align}
&\collisionStartSetSh{m-k+1}{i}{\bba(m)} = \{j + (q(p)-1)r^{m-k}: p \in [r]\setminus \{p^{\ast}\},~j \in [r^{m-k}]\} \label{eq:decomposition of I}%\\
%&= \bigcup_{p \in [r]\setminus \{p(m,k,i)\}} \radixBlocks{r}{m}{k+1}{q(m,k,i,p)}, 
\end{align}
where
\begin{equation*}
p^{\ast} \defeq \bigg(\floor{\frac{(i-1)}{r^{m-k}}}\bmod r\bigg) + 1, \quad q(p) \defeq r\floor{\frac{(i-1)}{r^{m-k+1}}} + p
\end{equation*}
%\begin{eqnarray*}
%q^{\ast}_i \defeq \floor{\frac{i-1}{r^{m-k}}}\bmod r + 1,\qquad c_{k,q} \defeq r\floor{\frac{i-1}{r^{m-k+1}}} + q-1,
%\end{eqnarray*}
from which we readily have
\begin{equation}\label{eq:card collisionStartSetShort}
\big|\collisionStartSetSh{m-k+1}{i}{\bba(m)}\big| = (r-1)r^{m-k}.
\end{equation}
Note that $p^{\ast}$ and $q(p)$ both depend on $m$, $k$ and $i$ but in the following we will consider these quantities for fixed $m$, $k$ and $i$ only.

Because $\big|\collisionStartSetSh{1}{i}{\bba(m)}\big| = r-1$ we have 
\begin{equation}\label{eq:case k is m}
\frac{1}{m} \sqrt{\frac{1}{r^m}\sum_i\E\left[\Bigg(\frac{1}{\big|\collisionStartSetSh{1}{i}{\bba(m)}\big|}\sum_{j\in \collisionStartSetSh{1}{i}{\bba(m)}}
%\ind{}\Big(j\in \collisionStartSetSh{1}{i}{\bba(m)}\Big)
\left(\ffj-\mfj\right)\Bigg)^2\right]} 
\leq 2\frac{\infnorm{\varphi'}}{m},
\end{equation}
for the term $k=m$ in the sum in \eqnref{eq:intermediate upperbound 1}. 

For all $m\geq 1$, $k \in [m-1]$, $i\in [r^m]$ and $p \in [r]$, we have $k+1\in [m]$ and $q(p)\in[r^{(k+1)-1}]$. 
%Since next we will consider $q(m,k,i,p)$ and $p(m,k,i)$ for fixed $m$, $k$ and $i$ only, we write $q^\ast(p) = q(m,k,i,p)$ and $p^\ast = p(m,k,i)$ for brevity. 
Hence by \eqnref{eq:decomposition of I}, \eqnref{eq:card collisionStartSetShort}, Minkowski's inequality and \eqnref{eq:required conv in prob} that
\begin{align*}
&\E\left[\Bigg(\frac{1}{\big|\collisionStartSetSh{m-k+1}{i}{\bba(m)}\big|}\sum_{j\in \collisionStartSetSh{m-k+1}{i}{\bba(m)}}
%\ind{}\Big(j\in \collisionStartSetSh{m-k+1}{i}{\bba(m)}\Big)
\left(\ffj-\mfj\right)\Bigg)^2\right]^{\frac{1}{2}} \\
%&\qquad=\E\left[\left(\frac{1}{r-1}\sum_{q\in[r]\setminus q^\ast_i}\frac{1}{r^{m-k}}\sum_{j = c_{k,q}r^{m-k}+1}^{(c_{k,q}+1)r^{m-k}}\left(\ffj-\mfj\right)\right)^2\right]^{\frac{1}{2}} \\
&\leq
\frac{1}{r-1}\sum_{p\in[r]\setminus \{p^{\ast}\}}\E\left[\Bigg(\frac{1}{r^{m-(k+1)+1}}\sum_{j \in [r^{m-(k+1)+1}]}
%= c_{k,q}r^{m-k}+1}^{(c_{k,q}+1)r^{m-k}}
\left(\varphi'(\xi^{J(j)}_0)-\mu(\varphi)\right)\Bigg)^2\right]^{\frac{1}{2}}\\
%\left(\ffj-\mfj\right)\Bigg)^2\right]^{\frac{1}{2}}\\
&\leq b(\varphi')\sqrt{\frac{m-k-1}{r^m} + \frac{1}{r^{m-k}}}.
\end{align*}
where $J(j) = j + (q(p) - 1 )r^{m-(k+1)-1}$.
By substituting this and \eqnref{eq:case k is m} into \eqnref{eq:intermediate upperbound 1} we have for all $m\geq 1$
\begin{eqnarray*}
\E\left[\abs{A_m}\right] \leq 2\frac{\norm{\varphi}^2_\infty}{m} + 
b(\varphi')\norm{\varphi}_\infty\frac{1}{m}\sum_{k=1}^{m-1} \sqrt{\frac{m-k-1}{r^m} + \frac{1}{r^{m-k}}},
\end{eqnarray*}
and by Cauchy-Schwartz inequality we have
\begin{align*}
&\left(\frac{1}{m}\sum_{k=1}^{m-1} \sqrt{\frac{m-k-1}{r^m} + \frac{1}{r^{m-k}}}\right)^2\\
%&=& \sqrt{\left(\frac{1}{m}\sum_{k=1}^m \sqrt{\left(\frac{m-k}{r^m} + \frac{1}{r^{m-k}}\right)}\right)^2}\\
&\leq {\frac{1}{m}\sum_{k=1}^{m-1} \left(\frac{m-k-1}{r^m} + \frac{1}{r^{m-k}}\right)}\\
&\leq{\frac{(m-1)(m-2)}{mr^m} + \frac{1}{m}\left(\frac{1-r^{-m}}{1-r^{-1}}-1\right)}\surely{m\to\infty} 0,
\end{align*}
implying that $\E\left[\abs{A_m}\right]$ converges to zero as $m\to\infty$ which concludes the proof.
\end{proof}
% --------------------------------------------------------------------
%  PROOF OF THE MAIN RESULT
% --------------------------------------------------------------------
\begin{proof}[Proof of \propref{prop:var_conv_fixed_radix}]
Because $r\geq 2$ is assumed fixed, let us write $\bba(m) = \radixMatrices{r}{m}$. First we observe that by \eqnref{eq:origin set}, \eqnref{eq:explicit ancestors} and \lemmaref{lem:bijectivity g} we have for all $k\in [m]$, $i,u_0\in[r^m]$
\begin{equation*}
\big|\primeParentSet{k}{i}{\bba(m)}{}\big| = r^k,\qquad \big|\parentSet{k}{i}{\bba(m)}{}\big| = r,\qquad \big|\lowerPartFrom{k}{u_0}{\bba(m)}\big| = r^{m-k},
\end{equation*}
and by \propref{prop:fixed radix satisfies assumptions} we can apply \propref{prop:tensor product formulation} and by substitution we have
\begin{align}\label{eq:fixed radix conditional second moment}
%\arraycolsep=1.4pt
%\begin{array}{l}
&\E\left[\left(\displaystyle\sum_{\varrho=1}^{mr^m} X^{(r^m,m)}_\varrho\right)^2 \mids \calF^{(r^m,m)}_0\right] 
= \dfrac{r^m}{m}\dfrac{1}{r^{2m}}\displaystyle\sum_{i} g^{2}(\bxi{i}{0}{}{})\cvarphi^2_{r^m}(\bxi{i_0}{0}{}{}) \\
&\quad+~ \dfrac{r^m}{m}\dfrac{1}{r^{2m}}\displaystyle\sum_{k=1}^m\sum_{i}\sum_{j\neq i}g(\bxi{i}{0}{}{})\cvarphi^2_{r^m}(\bxi{i}{0}{}{})g(\bxi{j}{0}{}{})\ind{}\Big(j\in \collisionStartSetSh{k}{i}{\bba(m)}\Big)r^{-k}\nonumber\\
&\quad+~ \dfrac{r^m}{m}\dfrac{1}{r^{2m}}\displaystyle\sum_{k=1}^m\sum_{i}\sum_{j\neq i}g(\bxi{i}{0}{}{})\cvarphi_{r^{m}}(\bxi{i}{0}{}{})g(\bxi{j}{0}{}{})\cvarphi_{r^{m}}(\bxi{j}{0}{}{})\ind{}\Big(j\in \collisionStartSetSh{k}{i}{\bba(m)}\Big)(1-r^{-k}).\nonumber
%\end{array}
\end{align}
The three nested sums on the r.h.s.~will each be considered separately. For the first sum, we have by \eqnref{eq:lp convergence} and the continuous mapping theorem
\begin{equation}\label{eq:first nested sum}
\frac{r^m}{m}\frac{1}{r^{2m}}\sum_{i} g^{2}(\bxi{i}{0}{}{})\cvarphi^2_{r^m}(\bxi{i}{0}{}{})\inprob{m\to\infty} 0.
\end{equation}
%For the second term we see that
%\begin{eqnarray}\label{eq:second part limit statement}
%{\frac{r^m}{m}\frac{1}{r^{2m}}\sum_{k=1}^m\sum_{i}\sum_{j\neq i}g(\bxi{i}{0}{}{})\cvarphi^2_{r^m}(\bxi{i}{0}{}{})g(\bxi{j}{0}{}{})\ind{}(j \in \collisionStartSetSh{k}{i}{\bba(m)})r^{-k} - \left(1-\frac{1}{r}\right)\mu(g\cvarphi^2)\mu(g)}\\%\inprob{m\to\infty}0.
%\end{eqnarray}
For the second sum we see by normalizing the nested sums and by using \eqnref{eq:card collisionStartSetShort} that
\begin{align}
&\frac{r^m}{m}\frac{1}{r^{2m}}\sum_{k=1}^m\sum_{i}\sum_{j\neq i}g(\bxi{i}{0}{}{})\cvarphi^2_{r^m}(\bxi{i}{0}{}{})g(\bxi{j}{0}{}{})\ind{}\Big(j \in \collisionStartSetSh{k}{i}{\bba(m)}\Big)r^{-k} \nonumber\\
&=\left(1-\frac{1}{r}\right)\frac{1}{m}\sum_{k=1}^m\frac{1}{r^m}\sum_{i} \frac{1}{\card{\collisionStartSetSh{k}{i}{\bba(m)}}}\sum_{j\in\collisionStartSetSh{k}{i}{\bba(m)}}g(\bxi{i}{0}{}{})\cvarphi^2_{r^m}(\bxi{i}{0}{}{})g(\bxi{j}{0}{}{})\nonumber \\%\ind{}\Big(j \in \collisionStartSetSh{k}{i}{\bba(m)}\Big) \nonumber \\
& \inprob{m\to\infty} \left(1-\frac{1}{r}\right)\mu(g\cvarphi^2)\mu(g),\label{eq:second part limit statement}
\end{align}
where the convergence follows from \lemmaref{eq:convergence of weigted subsample integrals} and several applications of \eqnref{eq:lp convergence} and continuous mapping theorem.

For the third sum we define
\begin{align*}
A_m &\defeq \frac{r^m}{m}\frac{1}{r^{2m}}\sum_{k=1}^m\sum_{i}\sum_{j}g(\bxi{i}{0}{}{})\cvarphi_{r^m}(\bxi{i}{0}{}{})g(\bxi{j}{0}{}{})\cvarphi_{r^m}(\bxi{j}{0}{}{})\ind{}\Big(j \in \collisionStartSetSh{k}{i}{\bba(m)}\Big),\\
B_m &\defeq \frac{r^m}{m}\frac{1}{r^{2m}}\sum_{k=1}^m\sum_{i}\sum_{j}g(\bxi{i}{0}{}{})\cvarphi_{r^m}(\bxi{i}{0}{}{})g(\bxi{j}{0}{}{})\cvarphi_{r^m}(\bxi{j}{0}{}{})\ind{}\Big(j \in \collisionStartSetSh{k}{i}{\bba(m)}\Big)r^{-k},
\end{align*}
and show that
\begin{eqnarray}
&&\frac{r^m}{m}\frac{1}{r^{2m}}\sum_{k=1}^m\sum_{i}\sum_{j\neq i}g(\bxi{i}{0}{}{})\cvarphi_{r^m}(\bxi{i}{0}{}{})g(\bxi{j}{0}{}{})\cvarphi_{r^m}(\bxi{j}{0}{}{})\ind{}\Big(j \in \collisionStartSetSh{k}{i}{\bba(m)}\Big)(1-r^{-k}) \nonumber\\
&&\qquad = A_m + B_m \inprob{m\to\infty} 0.\label{eq:third nested sum}
\end{eqnarray}
To do this, for $A_m$ we use \lemmaref{lem:general properties of path sets}\ref{it:complement decomposition} by which
\begin{align}
A_m 
&= \frac{r^{m}}{m}\frac{1}{r^{2m}}\sum_{i}\sum_{j\neq i}g(\bxi{i}{0}{}{})\cvarphi_{r^m}(\bxi{i}{0}{}{})g(\bxi{j}{0}{}{})\cvarphi_{r^m}(\bxi{j}{0}{}{})\nonumber\\
&= \frac{r^{m}}{m}\frac{1}{r^{2m}}\Bigg(\sum_{i}g(\bxi{i}{0}{}{})\cvarphi_{r^m}(\bxi{i}{0}{}{})\Bigg(\sum_{j}g(\bxi{j}{0}{}{})\cvarphi_{r^m}(\bxi{j}{0}{}{}) - g(\bxi{i}{0}{}{})\cvarphi_{r^m}(\bxi{i}{0}{}{})\Bigg)\Bigg)\nonumber\\
%&=&
%\frac{r^{m}}{m}\frac{1}{r^{2m}}\Bigg(\Bigg(\sum_{i}g(\bxi{i}{0}{}{})\cvarphi_{r^m}(\bxi{i}{0}{}{})\Bigg)^2 - \sum_{j}g^2(\bxi{j}{0}{}{})\cvarphi_{r^m}^2(\bxi{j}{0}{}{}) \Bigg)\nonumber\\
&=
\frac{r^{m}}{m}\left(\frac{1}{r^m}\sum_{i}g(\bxi{i}{0}{}{})\cvarphi_{r^m}(\bxi{i}{0}{}{})\right)^2 - 
\frac{1}{m}\frac{1}{r^m}\sum_{j}g^2(\bxi{j}{0}{}{})\cvarphi_{r^m}^2(\bxi{j}{0}{}{}).\label{eq:tmp final form}
\end{align}
Because for the first term in \eqnref{eq:tmp final form} we have ${r^{-m}}\sum_{i}g(\xi_0^i)\cvarphi_{r^m}(\xi_0^i) = 0$
%\begin{equation*}
%\frac{1}{r^m}\sum_{i=1}^{r^m}g(\xi_0^i)\cvarphi_{r^m}(\xi_0^i) = 0,
%\end{equation*}
and for the second term we have ${r^{-m}}\sum_{i}g^2_{0}(\xi_0^i)\cvarphi^2_{r^m}(\xi_0^i)\leq \infnorm{g}^2\osc{\varphi}^2$.
%\begin{equation*}
%\left|\frac{1}{r^m}\sum_{i=1}^{r^m}g^2_{0}(\xi_0^i)\cvarphi^2_{r^m}(\xi_0^i)\right|\leq \infnorm{g}^2\osc{\varphi}^2
%\end{equation*}
Hence we see that $\abs{A_m}$ converges to zero in probability as $m\to\infty$. For $B_m$ we have similarly as for \eqnref{eq:second part limit statement} that
\begin{equation*}
B_m \inprob{m\to\infty} \left(1-\frac{1}{r}\right)\mu(g\cvarphi)\mu(g\cvarphi) = 0.
\end{equation*}
The proof is completed by combining \eqnref{eq:first nested sum}, \eqnref{eq:second part limit statement}, \eqnref{eq:third nested sum} and \eqnref{eq:fixed radix conditional second moment}.
\end{proof}

\subsection{Approximation of the conditional variance and independence analysis}\label{sub:independence_analysis_fixed_radix}
\label{sec:independece analysis fixed radix}

The main result of this section is the following proposition, which is the last remaining part in completing the proof of \theref{thm:butterfly_clt_fixed_radix}.
\begin{proposition}\label{prop:var_vanish_fixed_radix}
Under the hypotheses of \theref{thm:butterfly_clt_fixed_radix},
\begin{equation*}
\sum_{\varrho=1}^{r^m m}\bigg(\E \bigg[ \Big(X_\varrho^{(r^m,m)}\Big)^2 \bigg| \F_{\varrho-1}^{(r^m,m)} \bigg]- \E \bigg[ \Big(X_\varrho^{(r^m,m)}\Big)^2\bigg|\F_0^{(r^m,m)}\bigg]\bigg)\inprob{m\rightarrow\infty} 0.
\end{equation*}
%\begin{equation*}
%\sum_{q\in[r^m m]}Z_q^{(r^m,m,\varphi)}\inprob{m\rightarrow\infty} 0,
%\end{equation*}
%where
%\begin{equation*}
%Z_q^{(r^m,m,\varphi)}:=\E \left[ \left.  \left(X_q^{(r^m,m,\varphi)}\right)^2 \right| \F_{q-1}^{(r^m,m)} \right]- \E \left[ \left.\left(X_q^{(r^m,m,\varphi)}\right)^2\right|\F_0^{(r^m,m)}\right].
%\end{equation*}
\end{proposition}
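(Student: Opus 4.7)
The plan is to establish convergence in $L^2$, i.e., $\E[(\sum_\varrho Z_\varrho^{(r^m,m)})^2] \to 0$. Writing $Y_\varrho := \E[(X_\varrho^{(r^m,m)})^2\,|\,\F_{\varrho-1}^{(r^m,m)}]$, so that $Z_\varrho^{(r^m,m)} = Y_\varrho - \E[Y_\varrho | \F_0^{(r^m,m)}]$, a direct expansion of $\E[Z_\varrho Z_{\varrho'}]$ together with a double application of the tower property gives the identity
\begin{equation*}
\E\Bigl[\Bigl(\textstyle\sum_\varrho Z_\varrho^{(r^m,m)}\Bigr)^2\Bigr] \;=\; \E\bigl[\mathrm{Var}\bigl(\textstyle\sum_\varrho Y_\varrho\,\bigl|\,\F_0^{(r^m,m)}\bigr)\bigr],
\end{equation*}
reducing the problem to showing that the quadratic variation $\sum_\varrho Y_\varrho$ concentrates around its $\F_0$-conditional mean in $L^2$.

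Next, I would derive an explicit form. From \eqref{eq:P_in_terms_of_V_proofs} (and \eqref{eq:explicit X} specialised to $\ka=1$, $|\calI|=N$, $J = \mathrm{Id}$),
\begin{equation*}
Y_\varrho \;=\; \frac{1}{2Nm}\sum_{j,k} A_q^{ij} A_q^{ik} V_{q-1}^j V_{q-1}^k \bigl(\cvarphi_N(\xi_{q-1}^j) - \cvarphi_N(\xi_{q-1}^k)\bigr)^2,
\end{equation*}
with $q = s_N(\varrho)$, $i = p_N(\varrho)$, and the analogous $q=m$ formula at the boundary $\varrho > N^\ast$. By symmetry and idempotence of each $A_q$ (Proposition~\ref{prop:fixed radix satisfies assumptions}), summing over $i\in[N]$ collapses $\sum_i A_q^{ij}A_q^{ik} = A_q^{jk}$, so that $\sum_\varrho Y_\varrho = \sum_{q=1}^m Q_q$ where
\begin{equation*}
Q_q \;:=\; \frac{1}{2Nm}\sum_{j,k} A_q^{jk} V_{q-1}^j V_{q-1}^k \bigl(\cvarphi_N(\xi_{q-1}^j) - \cvarphi_N(\xi_{q-1}^k)\bigr)^2.
\end{equation*}

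To bound $\E[\mathrm{Var}(\sum_q Q_q | \F_0)]$, I would compute $\E[(\sum_q Q_q)^2 | \F_0]$ through a four-point calculation in the spirit of Proposition~\ref{prop:tens_prod_formula}. Lemma~\ref{lem:conditional marginal} describes the $\F_0$-conditional law of each $\xi_{q-1}^j$ as a weighted mixture over $\primeParentSet{q-1}{j}{\bba}{}$, indexed by paths in $\paths{\bba}{}$, so the fourth-moment expansion is naturally indexed by 4-tuples of butterfly paths. The companion $(\E[\sum_q Q_q | \F_0])^2$ is furnished by the two-point analysis of Proposition~\ref{prop:tensor product formulation}; subtracting the two produces $\mathrm{Var}(\sum_q Q_q | \F_0)$ as a sum restricted to those 4-tuples whose paths exhibit a genuine four-way collision, i.e., do not factorise into two independent pairwise collisions. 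I would then count the surviving 4-tuples combinatorially by partitioning them according to collision level and type, extending Lemma~\ref{lem:partition} and using the tree-like property (Assumption~\ref{ass:A_k extra}\ref{it:unique paths}) together with Lemmas~\ref{eq:bijectivity h} and~\ref{lem:bijectivity g}. Each contributing term carries the prefactor $(Nm)^{-2}$, bounded weights $V \leq \infnorm{g}$, and a centred fourth-moment factor bounded by $\osc{\varphi}^4$. Taking $\F_0$-expectation and invoking the block-wise absolute-moment hypothesis \eqref{eq:required conv in prob} (in the manner of Lemma~\ref{eq:convergence of weigted subsample integrals}) to handle averages of $g(\xi_0^\cdot)$ and $\cvarphi_N(\xi_0^\cdot)$ over the initial particles should yield a bound of order $O(m^{-1})$.

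The hard part will be the four-fold collision analysis. The two-fold analogue of Propositions~\ref{prop:tens_prod_formula}--\ref{prop:tensor product formulation} suffices for the conditional expectation of the quadratic variation, but the conditional variance requires identifying which 4-tuples of paths produce genuinely quartic interactions, as opposed to two independent pairwise collisions which already cancel against $(\E[\sum_q Q_q | \F_0])^2$. Making this cancellation explicit in the four-fold Kronecker expansion of $\prod_{k=0}^{m-1} A_{k+1}^{i_{k+1}i_k} A_{k+1}^{j_{k+1}j_k} A_{k+1}^{i'_{k+1}i'_k} A_{k+1}^{j'_{k+1}j'_k}$, and verifying that the count of surviving 4-tuples is $o(N^2m)$ after all prefactors are accounted for, is the key technical step, but it is a natural extension of the combinatorial methodology already developed for the two-fold case in Sections~\ref{sub:collision_analysis_fixed_radix} and \ref{sec:proofs for collision analysis}.
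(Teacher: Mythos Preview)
Your approach is a genuinely different route from the paper's, and the paper's is considerably simpler. You aggregate $\sum_i Y_{(q-1)N+i}$ into level-$q$ quadratic forms $Q_q$ and then propose a four-fold collision analysis to bound $\mathrm{Var}(\sum_q Q_q \mid \F_0)$. The paper instead stays at the disaggregated level and exploits the \emph{localized} dependence of each $Z_\varrho$: since $Y_\varrho$ is a function only of $\xi_0$ and $\{\xi_{q-1}^j : A_q^{ij}>0\}$ (the parents of the single vertex $\xi_q^i$), and $\E[Z_\varrho\mid\F_0]=0$, one has $\E[Z_\varrho Z_{\varrho'}]=0$ whenever the ancestral subgraphs carrying $Z_\varrho$ and $Z_{\varrho'}$ lie in disjoint butterfly blocks $\vertexset{\bba}{}(k,w)$. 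Lemmata~\ref{lem:conditional independence in graph}--\ref{lem:independent pair count fixed} count such conditionally independent pairs, leaving at most $a_m$ surviving cross-terms, each bounded crudely by $|Z_\varrho Z_{\varrho'}| \le 4(r^m m)^{-2}\infnorm{g}^4\osc{\varphi}^4$; one then checks $a_m/(r^m m)^2 \to 0$ by elementary series manipulations. No four-point calculation, no use of hypothesis~\eqref{eq:required conv in prob}, and no averaging of centred moments is needed.

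By aggregating first you lose exactly the locality that makes this shortcut available: $Q_q$ depends on \emph{all} of $\xi_{q-1}$, so $\mathrm{Cov}(Q_q, Q_{q'}\mid\F_0)$ will not vanish for any $q\neq q'$, and you are forced into the four-fold expansion. Your plan is not wrong in principle, but the step you flag as ``the hard part'' --- identifying which 4-tuples of paths survive the subtraction of $(\E[\sum_q Q_q\mid\F_0])^2$ and verifying their count is $o(N^2 m)$ --- is left entirely as an outline, with the conclusion stated as ``should yield'' rather than demonstrated. Given the length of the two-fold analysis in \secref{sec:proofs for collision analysis}, carrying this out rigorously would be substantially more work than the paper's argument, for no apparent gain.
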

\begin{proof}
We take $Z_\varrho^{(r^m,m)}$ to be as defined in \eqnref{eq:def Z} of the proof of \theref{thm:butterfly_clt_fixed_radix}. By Markov's inequality, for any $\epsilon>0$
\begin{align}
&\P\Bigg(\bigg|\sum_{\varrho\in[r^m m]}Z_\varrho^{(r^m,m)}\bigg|\geq \epsilon\Bigg) \leq \dfrac{1}{\epsilon^2}\E\Bigg[\bigg(\sum_{\varrho\in[r^m m]}Z_\varrho^{(r^m,m)}\bigg)^2\Bigg]\nonumber\\
&=\dfrac{1}{\epsilon^2}\sum_{\varrho=1}^{mr^m}\E\bigg[\left(Z_\varrho^{(r^m,m)}\right)^2\bigg]\label{eq:fixed_radix_P_of_Z} 
+\dfrac{1}{\epsilon^2}\sum_{\varrho=1}^{mr^m}\sum_{\varrho'\neq \varrho}\E\bigg[Z_\varrho^{(r^m,m)} Z_{\varrho'}^{(r^m,m)}\bigg].
\end{align}
By \propref{prop:generalized martingale decomposition}, $\big|X_\varrho^{(r^{m},m)}\big|\leq (r^m m)^{-1/2}\infnorm{g}\osc{\varphi}$. Therefore
for any $\varrho,\varrho'\in[r^m m]$,
\begin{equation}
\left|Z_\varrho^{(r^m,m)} Z_{\varrho'}^{(r^m,m)}\right| \leq 4  (r^m m)^{-2} \infnorm{g}^4 \osc{\varphi}^4.\label{eq:Z_times_Z_bounded}
\end{equation}
Therefore the first term on the r.h.s.~of \eqref{eq:fixed_radix_P_of_Z} converges to zero as $m\rightarrow  \infty$. It remains to establish the convergence of the second term. This is not equally straightforward as the number of cross terms is of order $(r^m m)^2$ and therefore the reasoning applied to the first term does not work without additional delicacy. The key step, which we shall take next, is to establish that a suitably large proportion of the terms $\E\left[Z_\varrho^{(r^m,m)} Z_{\varrho'}^{(r^m,m)}\right]$ are in fact zero.

To proceed, we observe that if $\varrho,\varrho'\in[r^m m]$ are such that $Z_\varrho^{(r^m,m)}$ and $Z_{\varrho'}^{(r^m,m)}$ are conditionally independent given $\F_0^{(r^m,m)}$, then by the tower property and \propref{prop:generalized martingale decomposition}\ref{eq:zero expectations}, $\E\left[Z_\varrho^{(r^m,m)} Z_{\varrho'}^{(r^m,m)}\right]=0$.
%\begin{eqnarray*}
%\E\left[\E\left[\left.Z_q^{(r^m,m,\varphi)} Z_p^{(r^m,m,\varphi)}\right|\F_0^{(r^m,m)}\right]\right] &=&\E\left[\E\left[\left.Z_q^{(r^m,m,\varphi)}\right|\F_0^{(r^m,m)}\right]\E\left[\left.Z_p^{(r^m,m,\varphi)}\right|\F_0^{(r^m,m)}\right]\right] =0,
%\end{eqnarray*}
%where the last equality follows from the fact that $\E\big[Z_q^{(r^m,m,\varphi)}\big|\F_0^{(r^m,m)}\big]=0$, $\P$-almost surely, which can be confirmed by the tower property of conditional expectations and \propref{prop:generalized martingale decomposition}\ref{eq:zero expectations}.

There are altogether $m^2r^{2m}-m r^m$ pairs $\big(Z_\varrho^{(r^m,m)}, Z_{\varrho'}^{(r^m,m)}\big)$ with $\varrho\neq \varrho'$, and by \lemmaref{lem:independent pair count fixed}, there are at most
\begin{equation}\label{eq:ipair bound}
a_m = m^2 r^{2m}-m r^m - r(r-1)\sum_{i=0}^{m-2} (i+1)^2 r^{m+i}
\end{equation}
pairs which are not conditionally independent given $\F_0^{(r^m,m)}$. Therefore in order to establish that the second term on the r.h.s.~of \eqref{eq:fixed_radix_P_of_Z} converges to zero as $m\rightarrow\infty$, it is enough to once again apply $\eqref{eq:Z_times_Z_bounded}$, and check that
\begin{equation}
\lim_{m\rightarrow 0} \frac{a_m}{m^2 r^{2m}}= 0.
\label{eq:ipair limit}
\end{equation}
By shifting the summation index, reversing the summation order and expanding the square expression, we have
\begin{eqnarray}\label{eq:ipair expansion}
r\sum_{i=0}^{m-2} (i+1)^2 r^{m+i} &=& r^{2m}\sum_{i=1}^{m-1}(m-i)^2r^{-i}\nonumber \\
&=& r^{2m}\Bigg(m^2\sum_{i=1}^{m-1}\frac{1}{r^i} - 2m\sum_{i=1}^{m-1}\frac{i}{r^i} + \sum_{i=1}^{m-1}\frac{i^2}{r^i}\Bigg),
\end{eqnarray}
where each of the three sums converges to a finite value as $m\to\infty$. By elementary calculations one can then check that \eqnref{eq:ipair limit} follows by combining \eqnref{eq:ipair bound} and \eqnref{eq:ipair expansion}.
\end{proof}

%\myemph{***Independence results to go in here***}

%For all $k\in[m]$ and $c \in [r^{m-k}]$ we define\footnote{Reference to the vertex set definition.} 
%\begin{eqnarray}
%\vertexset{\bba}{}(k,c)
%\defeq \big\{\xi^i_q \in \vertexset{\bba}{}: 0 < q \leq k,~(c-1)r^{k}< i \leq cr^{k}\big\} \label{eq:sub vertices}
%%\edgeset{\bba}{}(k,p)
%%&\defeq& \{(\xi_{q-1}^i,\xi_{q}^j) \in \edgeset{\bba}{}: \xi_{q-1}^i \in \vertexset{\bba}{}(k,p),~\xi_{q}^j\in \vertexset{\bba}{}(k,p)\}
%\end{eqnarray} 
%
% !TEX root = ButterflySampling.tex
\begin{figure}
\begin{center}
\tikzset{
    vertex/.style = {
    	draw,
	    circle,
      fill      = white,
      outer sep = 2pt,
      inner sep = 2pt,
    }
}

\tikzset{
    mainvertex/.style = {
    	draw,
		circle,
        fill = black,
        outer sep = 2pt,
        inner sep = 2pt,
    }
}

\tikzset{
    demovertex/.style = {
    	draw,
		circle,
        fill = gray,
        outer sep = 2pt,
        inner sep = 2pt,
    }
}

\tikzset{
    highlightnode/.style = {
    	draw,
			circle,
      fill = black,
      outer sep = 2pt,
      inner sep = 2pt,
    }
}

\tikzset{
    demoedge/.style = {
		-stealthnew,
		shorten <=.15cm, 
		shorten >=.15cm,
		arrowhead=2mm,
		line width=2pt    
		}
}

\tikzset{
    basicedge/.style = {
		-stealthnew,
		color = black,
		shorten <=.15cm, 
		shorten >=.15cm,
		arrowhead=1mm,
		line width=.5pt    
		}
}

\begin{tikzpicture}

\def \vstep {-.75cm}
\def \hstep {.75cm}
\def \r {2}
\def \m {3} % margin in angles, depends on the radius
\def \N {8}
\def \X {4}
\def \Y {2}

\node[draw,minimum height=4*-\vstep, minimum width=7.9*\hstep, %
  fill = black!10!,
  rounded corners=3pt] at (4.5*\hstep,1.5*\vstep) {}; %
\node[draw,minimum height=4*-\vstep, minimum width=7.9*\hstep, %
  fill = black!10!,
  rounded corners=3pt] at (12.5*\hstep,1.5*\vstep) {}; %
  
\node[draw,minimum height=2.75*-\vstep, minimum width=3.7*\hstep, %
  fill = black!20!,
  rounded corners=2pt] at (2.5*\hstep,1*\vstep) {}; %
\node[draw,minimum height=2.75*-\vstep, minimum width=3.7*\hstep, %
  fill = black!20!,
  rounded corners=2pt] at (6.5*\hstep,1*\vstep) {}; %
\node[draw,minimum height=2.75*-\vstep, minimum width=3.7*\hstep, %
  fill = black!20!,
  rounded corners=2pt, line width=2pt] at (10.5*\hstep,1*\vstep) {}; %
\node[draw,minimum height=2.75*-\vstep, minimum width=3.7*\hstep, %
  fill = black!20!,
  rounded corners=2pt] at (14.5*\hstep,1*\vstep) {}; %

\node[draw,minimum height=1.5*-\vstep, minimum width=1.5*\hstep, %
  fill = black!30!,
  rounded corners=1pt] at (1.5*\hstep,.5*\vstep) {}; %
\node[draw,minimum height=1.5*-\vstep, minimum width=1.5*\hstep, %
  fill = black!30!,
  rounded corners=1pt] at (3.5*\hstep,.5*\vstep) {}; %
\node[draw,minimum height=1.5*-\vstep, minimum width=1.5*\hstep, %
  fill = black!30!,
  rounded corners=1pt] at (5.5*\hstep,.5*\vstep) {}; %
\node[draw,minimum height=1.5*-\vstep, minimum width=1.5*\hstep, %
  fill = black!30!,
  rounded corners=1pt] at (7.5*\hstep,.5*\vstep) {}; %
\node[draw,minimum height=1.5*-\vstep, minimum width=1.5*\hstep, %
  fill = black!30!,
  rounded corners=1pt] at (9.5*\hstep,.5*\vstep) {}; %
\node[draw,minimum height=1.5*-\vstep, minimum width=1.5*\hstep, %
  fill = black!30!,
  rounded corners=1pt] at (11.5*\hstep,.5*\vstep) {}; %
\node[draw,minimum height=1.5*-\vstep, minimum width=1.5*\hstep, %
  fill = black!30!,
  rounded corners=1pt] at (13.5*\hstep,.5*\vstep) {}; %
\node[draw,minimum height=1.5*-\vstep, minimum width=1.5*\hstep, %
  fill = black!30!,
  rounded corners=1pt] at (15.5*\hstep,.5*\vstep) {}; %

% Draw vertices
\foreach \i in {0,...,4}
{
	\foreach \j in {1,...,16}
	{
  		\node[vertex] (\i\j) at (\j*\hstep,\i*\vstep) {};
	}
}	

% Column labels
\foreach \i in {0}
{
	\foreach \j in {1,...,16}
	{
  		\node[circle] (\i\j) at (\j*\hstep,-1*\vstep) {$\pgfmathprintnumber{\j}$};
	}
}	

% Row labels
\foreach \i in {0,...,4}
{
	\foreach \j in {0}
	{
  		\node[circle] (\i\j) at (\j*\hstep,\i*\vstep) {$\pgfmathprintnumber{\i}$};
	}
}	

% Draw edges
\foreach \i in {0,...,3}
{
	\foreach \j in {1,...,16}
	{	
		\foreach \k in {0,...,1}
		{
			\draw[basicedge]
			let 
				\n1 = {int(mod((\j-1),\r^(\i))+\k*\r^(\i)+\r^(\i+1)*int(floor((\j-1)/\r^(\i+1))))+1}, 
				\n2 = {int(\i+1)} 
			in 
				(\j*\hstep,\i*\vstep) -- (\n1*\hstep,\n2*\vstep);
		}
	}
}	

% Highlight a specific set
%\node[highlightnode] at ( 9*\hstep,0*\vstep) {};
%\node[highlightnode] at (10*\hstep,0*\vstep) {};
%\node[highlightnode] at (11*\hstep,0*\vstep) {};
%\node[highlightnode] at (12*\hstep,0*\vstep) {};
%\node[highlightnode] at ( 9*\hstep,1*\vstep) {};
%\node[highlightnode] at (10*\hstep,1*\vstep) {};
%\node[highlightnode] at (11*\hstep,1*\vstep) {};
%\node[highlightnode] at (12*\hstep,1*\vstep) {};
%\node[highlightnode] at ( 9*\hstep,2*\vstep) {};
%\node[highlightnode] at (10*\hstep,2*\vstep) {};
%\node[highlightnode] at (11*\hstep,2*\vstep) {};
%\node[highlightnode] at (12*\hstep,2*\vstep) {};

\end{tikzpicture}
\end{center}
\caption{All the subsets $\vertexset{\bba}{}(k,w)$, where $\bba = \radixMatrices{2}{4}$, $k\in[4]$ and $w \in [r^{4-k}]$ depicted by rectangles, and the set $\vertexset{\bba}{}(2,3)$ is highlighted by the rectangle with thick border.}
\label{fig:subgraphs}
\end{figure}
Before stating the next result, it is worth recalling the graph theoretical interpretation of the conditional independence structure of the augmented resampling algorithm defined in \eqnref{eq:def vertex set} and \eqnref{eq:def edge set} in \secref{sec:conditional variance and collision analysis main}. The following result establishes the conditional independence of specific subsets of vertices of the graph $\graph{\bba}{}$, where $\bba = \radixMatrices{r}{m}$. For all $r\geq 2$, $m\geq 1$, $k \in [m]$ and $w \in [r^{m-k}]$ these subsets are defined as
\begin{equation}
\vertexset{\bba}{}(k,w)
\defeq \big\{\xi^i_q: 0 \leq q \leq k,~(w-1)r^{k}< i \leq wr^{k}\big\}, \label{eq:sub vertices}
\end{equation} 
where $\bba = \radixMatrices{r}{m}$.  See \figref{fig:subgraphs} for an illustrations of these sets.
\begin{lemma}\label{lem:conditional independence in graph}
Fix $m\geq 1$, $r\geq 2$, $\bba = \radixMatrices{r}{m}$, $k\in [m]$, $w_{1},w_{2} \in [r^{m-k}]$, such that $w_{1}\neq w_{2}$ and $u_1,u_2\in[r^m]$ and $0\leq q_1,q_2\leq m$ such that $(\xi^{u_1}_{q_1},\xi^{u_2}_{q_2}) \in \vertexset{\bba}{}(k,w_{1})\times\vertexset{\bba}{}(k,w_{2})$. Then $\xi^{u_1}_{q_1} \independent \xi^{u_2}_{q_2} \,|\,\xi_0$.
\end{lemma}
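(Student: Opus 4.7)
The plan is to show that, conditional on $\xi_0$, the random variable $\xi^{u_1}_{q_1}$ is a measurable function of one collection of ``sampling indices'' and $\xi^{u_2}_{q_2}$ is a function of a disjoint collection, and that these collections are conditionally independent given $\xi_0$. The key combinatorial input is that, as soon as $w_1 \neq w_2$, the two sub-butterflies on $\vertexset{\bba}{}(k,w_1)$ and $\vertexset{\bba}{}(k,w_2)$ involve entirely disjoint columns at every level $0,\ldots,k$.

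First, define the column block $B_{w} \defeq \{(w-1)r^k+1,\ldots,wr^k\}$. The first step is a column-preservation claim: for every $q\in[k]$ and every $i\in B_w$,
\[
\parentSet{q}{i}{\bba}{} \subset B_w.
\]
This is a direct consequence of \eqnref{eq:explicit ancestors} in \lemmaref{lem:nonzero elements}: each element of $\parentSet{q}{i}{\bba}{}$ has the form $((i-1)\bmod r^{q-1})+(p-1)r^{q-1}+r^q\lfloor(i-1)/r^q\rfloor+1$ with $p\in[r]$, so it lies in a block of $r^q$ consecutive integers starting at $r^q\lfloor(i-1)/r^q\rfloor+1$; because $q\leq k$, the quantity $r^q\lfloor(i-1)/r^q\rfloor$ is a multiple of $r^q$ in $[(w-1)r^k,\,wr^k-r^q]$, so the whole block sits inside $B_w$. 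Iterating this from level $q_1$ down to level $1$, every $q$-step ancestor of $\xi^{u_1}_{q_1}$ with $q\leq q_1\leq k$ and $u_1\in B_{w_1}$ has column index in $B_{w_1}$, and similarly on the $w_2$ side.

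Next, appeal to the distributional representation \eqnref{eq:rigorus formulation of xi} from the proof of \lemmaref{lem:conditional marginal}: the entire output of augmented resampling is driven by $\xi_0$ together with the family $\{I^i_q:i\in[N],\,1\leq q\leq m\}$, which is, by construction, conditionally independent given $\xi_0$. Each sampling step $\xi^i_q = \xi^{I^i_q}_{q-1}$ only consults the index $I^i_q$ and the values $\{\xi^j_{q-1}:j\in\parentSet{q}{i}{\bba}{}\}$. By the column-preservation claim, an easy induction on $q_1$ (handling the trivial case $q_1=0$ separately) shows that $\xi^{u_1}_{q_1}$ is measurable with respect to $\sigma(\xi_0) \vee \sigma(\mathcal{R}_1)$, where
\[
\mathcal{R}_1 \defeq \bigl\{I^i_q : 1\leq q\leq q_1,\; i\in B_{w_1}\bigr\},
\]
and analogously $\xi^{u_2}_{q_2}$ is measurable with respect to $\sigma(\xi_0)\vee\sigma(\mathcal{R}_2)$ with $\mathcal{R}_2$ defined using $B_{w_2}$.

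Finally, since $w_1\neq w_2$ implies $B_{w_1}\cap B_{w_2}=\emptyset$, the index sets $\mathcal{R}_1$ and $\mathcal{R}_2$ are disjoint, hence conditionally independent given $\xi_0$ by the one-step conditional independence of $(I^i_q)_{i,q}$. This yields the conditional independence $\xi^{u_1}_{q_1}\independent \xi^{u_2}_{q_2}\mid\xi_0$. The only slightly delicate point is the column-preservation step at level $q=k$, but modular arithmetic via \eqnref{eq:explicit ancestors} handles this cleanly; everything downstream is routine bookkeeping.
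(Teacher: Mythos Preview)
Your proof is correct and takes a genuinely different route from the paper. Both arguments hinge on the same combinatorial fact---that the ancestors of $\xi^{u_1}_{q_1}$ and $\xi^{u_2}_{q_2}$ lie in disjoint column blocks---but the mechanisms for deducing conditional independence differ. The paper phrases things in graphical-model language: it defines the graph-theoretic ancestor sets $\vertexset{\bba}{}^{1}$ and $\vertexset{\bba}{}^{2}$, shows $\vertexset{\bba}{}^{1}\cap\vertexset{\bba}{}^{2}=\emptyset$ via the $\primeParentSet{k}{i}{\bba}{}$ sets (essentially your column-preservation claim viewed through \eqnref{eq:origin set} and \lemmaref{lem:general properties of path sets}\ref{it:subset property}), observes these sets are ancestral in the DAG $\graph{\bba}{}$, and then invokes Lauritzen's Corollary~3.23 on factorizations over directed graphs. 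You instead work directly with the sampling-index representation $\xi^i_q=\xi^{I^i_q}_{q-1}$ from \eqnref{eq:rigorus formulation of xi}, reducing the problem to disjointness of the index families $\mathcal{R}_1,\mathcal{R}_2$ and the stated joint conditional independence of all $I^i_q$ given $\xi_0$. Your approach is more self-contained---it avoids the external reference to Lauritzen---and arguably more transparent about where the randomness actually lives. The paper's version, on the other hand, makes the connection to standard Bayesian-network separation criteria explicit, which may be preferable if one later wants to extend the argument to more elaborate graph structures. One minor terminological point: what you cite as ``one-step conditional independence of $(I^i_q)_{i,q}$'' is really the \emph{joint} conditional independence of the whole family given $\xi_0$ (as asserted in the proof of \lemmaref{lem:conditional marginal}); the paper reserves the phrase ``one step conditional independence'' for the $\xi$-variables at a fixed level.
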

\begin{proof}
For notational purposes, let us assume an ordering of the elements of $\vertexset{\bba}{}$ such that $\xi^{i_1}_{k_1} \leq \xi^{i_2}_{k_2}$ if and only if $k_1 \leq k_2$ or $k_1 = k_2$ and $i_1\leq i_2$. Then for any subset $V \subset \vertexset{\bba}{2}$ of size $p\in [(m+1)r^m]$ we define $x_V \defeq (x^{i_1}_{k_1},\ldots,x^{i_p}_{k_p})$ where $(i_\ell)_{\ell=1}^p$ and $(k_\ell)_{\ell=1}^p$ are such that $V=\{\xi^{i_\ell}_{k_\ell}:1 \leq \ell \leq p\}$ and $\xi^{i_\ell}_{k_\ell} \leq \xi^{i_{\ell+1}}_{k_{\ell+1}}$. We also use $\pa{V} \defeq \{\xi^j_q : q=k-1,~~j\in\parentSet{k}{i}{\bba}{},~\xi^{i}_k\in V\}$ to denote the graph theoretical parents of the elements of $V$. %\bigcup_{\xi \in V}\{\xi^j_{k-1} \in \vertexset{\bba}{}: \xi = \xi^i_k,~j\in\parentSet{k}{i}{\bba}{}\}$
%and, for brevity, we write $\sum_{x_V}$ instead of $\sum_{x^{i_1}_{k_1}\in\Xi}\cdots\sum_{x^{i_p}_{k_p}\in\Xi}$, where $\Xi \defeq \{\xi_0^i: i \in [r^m]\}$. 
Moreover we define, for $i\in\{1,2\}$, $\vertexset{\bba}{}^{i} \defeq \{\xi^j_p: A_{p+1:q_i}^{u_ij}\neq 0
,~0\leq p<q_i\}\cup\{\xi^{u_i}_{q_i}\}$.
%, and $\vertexset{\bba}{}^{3} \defeq \vertexset{\bba}{}\setminus (\vertexset{\bba}{}^{1} \cup \vertexset{\bba}{}^{2})$. 
In simple terms, $\vertexset{\bba}{}^{1}$ and $\vertexset{\bba}{}^{2}$ are the ancestor sets of $\xi^{u_{1}}_{q_{1}}$ and $\xi^{u_{2}}_{q_{2}}$, respectively.%, and $\vertexset{\bba}{}^{3}$ is the set of all other vertices.

First we show that $\vertexset{\bba}{}^{1} \cap \vertexset{\bba}{}^{2}=\emptyset$. To do this, assume that there exists $\xi^{u^\ast}_{q^\ast} \in \vertexset{\bba}{}^{1} \cap \vertexset{\bba}{}^{2}$, where $q^\ast \leq \min(q_1,q_2)$ and $u^\ast \in [r^m]$. By \lemmaref{lem:general properties of path sets}\ref{it:positive diagonals} there exists $(i_0,\ldots,i_{q_1}) \in \paths{\bba_{1:q_1}}{}$ and $(j_0,\ldots,j_{q_2}) \in \paths{\bba_{1:q_2}}{}$ such that $i_0 = j_0 = u^\ast$, $i_{q_1} = u_1$ and $j_{q_2} = u_2$. Hence $u^\ast \in \primeParentSet{q_1}{u_1}{\bba}{} \cap \primeParentSet{q_2}{u_2}{\bba}{}$. On the other hand, by \lemmaref{lem:nonzero elements}, $\primeParentSet{k}{u_1}{\bba}{} = \{(w_{1}-1)r^{k}+1,\ldots,w_{1}r^{k}\}$ and $\primeParentSet{k}{u_2}{\bba}{} = \{(w_{2}-1)r^{k}+1,\ldots,w_{2}r^{k}\}$ and since $q_{1},q_{2}\leq k$, by \lemmaref{lem:general properties of path sets}\ref{it:subset property}, $\primeParentSet{q_1}{u_1}{\bba}{}\subset\primeParentSet{k}{u_1}{\bba}{}$ and $\primeParentSet{q_2}{u_2}{\bba}{}\subset\primeParentSet{k}{u_2}{\bba}{}$. Since $w_{1} \neq w_{2}$, $\primeParentSet{k}{u_1}{\bba}{} \cap \primeParentSet{k}{u_2}{\bba}{} = \emptyset$ and thus $\primeParentSet{q_1}{u_1}{\bba}{} \cap \primeParentSet{q_2}{u_2}{\bba}{} = \emptyset$, which is a contradiction proving that $\vertexset{\bba}{}^{1} \cap \vertexset{\bba}{}^{2}=\emptyset$.

The conditional distribution of $(\xi_k)_{0\leq k \leq m}$ given $\xi_\mathrm{in}$ factorizes according to the graph $\graph{\bba}{}$ with conditional densities 
\begin{equation*}
\phi_{\xi^i_k}(x_{\{\xi^i_k\}\cup \pa{\xi^i_k}}) = 
\begin{cases}
\dfrac{1}{V^i_k}\sum_{j\in\parentSet{k}{i}{\bba}{}}A_k^{ij}V^j_{k-1}\delta_{x^{j}_{k-1}}(x^i_k),& (k,i)\in[m]\times[N]\\
\delta_{\xi_{0}^i}(x^{i}_{0}), & k=0,~i\in[N].
\end{cases}
\end{equation*}
By definition, the sets $\vertexset{\bba}{}^{1}$ and $\vertexset{\bba}{}^{2}$ are ancestral (see, e.g.~\cite{lauritzen96}) and having established that $\vertexset{\bba}{}^{1} \cap \vertexset{\bba}{}^{2}=\emptyset$, we can apply \cite[Corollary 3.23]{lauritzen96} to yield the claimed conditional independence.
\end{proof}

\begin{lemma}\label{lem:independent pair count fixed}
Fix $m > 1$, $r\geq 2$ and $\bba = \radixMatrices{r}{m}$. Then for all $1<k\leq m$, $w \in [r^{m-k}]$
\begin{equation*}%\label{eq:fixed indep pair count}
\big|\ipair{\bba}(k,w)\big| \geq r(r-1)\sum_{i=0}^{k-2}(i+1)^2r^{k+i},
\end{equation*}
where for all $r\geq 2$, $m\geq 1$, $k \in [m]$ and $w \in [r^{m-k}]$ 
$$\ipair{\bba}(k,w) \defeq \left\{(\xi^{i_1}_{k_1},\xi^{i_2}_{k_2}) \in \vertexset{\bba}{}(k,w)^2 : k_1,k_2 \in [m],~\xi^{i_1}_{k_1}\independent \xi^{i_2}_{k_2}\,\big|\, \calF^{(r^m,m)}_0\right\}.$$
%\begin{equation*}
%\ipair{\bba}(k,c) \defeq \left\{(\xi^{i_1}_{k_1},\xi^{i_2}_{k_2}) \in \vertexset{\bba}{}(k,c)^2 : k_1,k_2 \in [m],~i_1,i_2\in[r^m],~\xi^{i_1}_{k_1}\independent \xi^{i_2}_{k_2}\big| \calF^{(r^m,m)}_0\right\}.
%\end{equation*}
%In particular,
%\begin{equation*}
%\card{\left\{(\xi^{i_1}_{k_1},\xi^{i_2}_{k_2}) \in \vertexset{\bba}{}^2 : k_1,k_2 \in [m],~i_1,i_2\in[r^m],~\xi^{i_1}_{k_1}\independent \xi^{i_2}_{k_2}\big| \calF^{(r^m,m)}_0\right\}}\geq r(r-1)\sum_{i=0}^{m-2}(i+1)^2r^{m+i}.
%\end{equation*}
\end{lemma}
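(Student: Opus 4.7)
The plan is to exhibit, for each admissible value of a certain parameter $\kappa$, a family of pairs to which Lemma \ref{lem:conditional independence in graph} directly applies, and then sum the cardinalities of these families. Throughout, identify each column index $i\in\{(w-1)r^k+1,\ldots,wr^k\}$ via its base-$r$ offset from $(w-1)r^k$, and for two such indices $i_1,i_2$ write $\kappa^\ast(i_1,i_2)$ for the smallest integer $\kappa^\ast\ge 0$ with $\lfloor(i_1-1)/r^{\kappa^\ast}\rfloor=\lfloor(i_2-1)/r^{\kappa^\ast}\rfloor$; equivalently, the smallest level at which $i_1$ and $i_2$ share a block of $r^{\kappa^\ast}$ consecutive columns. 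Since both indices lie in $\vertexset{\bba}{}(k,w)$, we have $0\le\kappa^\ast(i_1,i_2)\le k$.

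First, fix $\kappa\in\{2,\ldots,k\}$ and consider any ordered pair $(\xi^{i_1}_{k_1},\xi^{i_2}_{k_2})\in\vertexset{\bba}{}(k,w)^2$ with $k_1,k_2\in\{1,\ldots,\kappa-1\}$ and $\kappa^\ast(i_1,i_2)=\kappa$. Setting $k^\ast\defeq\max(k_1,k_2)$, we have $1\le k^\ast\le\kappa-1<\kappa^\ast(i_1,i_2)$, so $i_1$ and $i_2$ lie in distinct sub-blocks of size $r^{k^\ast}$; writing $w_\ell\defeq\lceil i_\ell/r^{k^\ast}\rceil$, we have $w_1\ne w_2$ and $(\xi^{i_1}_{k_1},\xi^{i_2}_{k_2})\in\vertexset{\bba}{}(k^\ast,w_1)\times\vertexset{\bba}{}(k^\ast,w_2)$. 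Lemma \ref{lem:conditional independence in graph} thus gives $\xi^{i_1}_{k_1}\independent\xi^{i_2}_{k_2}\mid\xi_0$, and since $\calF^{(r^m,m)}_0=\sigma(\xi_0)$ by \eqref{eq:piecewise sigma fields} and $k_1,k_2\in[m]$, such a pair belongs to $\ipair{\bba}(k,w)$.

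Second, count these pairs for each fixed $\kappa$. The index pairs $(i_1,i_2)$ with $\kappa^\ast(i_1,i_2)=\kappa$ are obtained by choosing one of the $r^{k-\kappa}$ sub-blocks of $\vertexset{\bba}{}(k,w)$ of size $r^\kappa$ at level $\kappa$, then choosing $i_1$ in any of its $r$ sub-blocks of size $r^{\kappa-1}$ and $i_2$ in a different one: this yields
\[
r^{k-\kappa}\cdot r\cdot r^{\kappa-1}\cdot(r-1)\cdot r^{\kappa-1}=r(r-1)\,r^{k+\kappa-2}
\]
ordered pairs. The choices of $(k_1,k_2)\in\{1,\ldots,\kappa-1\}^2$ contribute the factor $(\kappa-1)^2$, so the number of pairs in $\ipair{\bba}(k,w)$ of this form is $(\kappa-1)^2\,r(r-1)\,r^{k+\kappa-2}$.

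Finally, summing over the disjoint values $\kappa\in\{2,\ldots,k\}$ (these are disjoint because each pair $(i_1,i_2)$ has a unique $\kappa^\ast$) and substituting $i=\kappa-2$ gives
\[
\bigl|\ipair{\bba}(k,w)\bigr|\;\ge\;\sum_{\kappa=2}^{k}(\kappa-1)^2\,r(r-1)\,r^{k+\kappa-2}\;=\;r(r-1)\sum_{i=0}^{k-2}(i+1)^2\,r^{k+i},
\]
as claimed. The main issue to verify carefully is the hypothesis of Lemma \ref{lem:conditional independence in graph}, namely that $k^\ast=\max(k_1,k_2)$ is a valid level at which both vertices sit in opposite sub-blocks of the nested block partition; no delicate combinatorics beyond the base-$r$ structure of the column indices is required.
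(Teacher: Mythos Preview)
Your proof is correct and, although you organise the count directly by the ``splitting level'' $\kappa^\ast(i_1,i_2)$ rather than by induction on $k$ as the paper does, the two arguments identify and count exactly the same family of pairs and both rest on Lemma~\ref{lem:conditional independence in graph}. Unrolling the paper's recursion $f(k+1)=rf(k)+k^2r(r-1)r^{2k}$ shows that the contribution added at each inductive step is precisely your $\kappa$-th summand, so the difference is purely presentational.
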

\begin{proof}
The proof is by induction over $k>1$. First we observe that for any $1<k\leq m$ and $w \in [r^{m-k}]$, there are $r$ subsets $\vertexset{\bba}{}(k-1,w_{k-1})\subset \vertexset{\bba}{}(k,w)$, where $wr - (r-1)\leq w_{k-1}\leq wr$.

By \lemmaref{lem:conditional independence in graph}, if $(\xi,\xi') \in \vertexset{\bba}{}(1,w_{1})\times \vertexset{\bba}{}(1,w'_{1})$ where $wr - (r-1)\leq w_{1},w'_{1}\leq wr$ and $w_1 \neq w'_1$, then $(\xi,\xi') \in \ipair{\bba}(2,w)$. By \eqnref{eq:sub vertices} one can check that $\card{\vertexset{\bba}{}(k,w)\setminus\{\xi^i_0:i\in[r^m]\}} = kr^k$ and hence $$\card{\vertexset{\bba}{}(1,w_{1})\setminus\{\xi^i_0:i\in[r^m]\}} =\card{\vertexset{\bba}{}(1,w'_{1})\setminus\{\xi^i_0:i\in[r^m]\}} = r.$$ Therefore the first element of the pair $(\xi,\xi')$ can be chosen among the $r$ elements of $r$ sets and the second element from the $r$ elements of the remaining $r-1$ sets implying that, when $k=2$, for all $w \in [r^{m-2}]$
\begin{equation*}
\card{\ipair{\bba}(2,w)} \geq r(r-1)r^{2} + \sum_{i=wr-(r-1)}^{wr}\card{\ipair{\bba}(1,i)} \geq r(r-1)r^{2},
\end{equation*}
where the second inequality follows from the simplifying observation that for all $w \in [r^{m-2}]$ and $wr - (r-1)\leq i \leq wr$, one trivially has $\card{\ipair{\bba}(1,i)}\geq 0$. This completes the proof for $k=2$.

Let us then assume that the claim holds for some $2\leq k < m$. Therefore each of the $r$ subsets $\vertexset{\bba}{}(k,w_{k})$ of $\vertexset{\bba}{}(k+1,w)$, where $w\in[r^{m-(k+1)}]$ and $wr - (r-1)\leq w_{k}\leq wr$, admits at least
\begin{equation}\label{eq:old pairs}
a_1 = r(r-1)\sum_{i=0}^{k-2}(i+1)^2r^{k+i},
\end{equation}
pairs of vertices that are conditionally independent given $\calF^{(r^m,m)}_0$. By applying \lemmaref{lem:conditional independence in graph} again, similarly as above, and by observing that there are
\begin{equation*}
a_2 = r(kr^k)(r-1)(kr^k)
\end{equation*}
pairs of vertices $(\xi,\xi')\in \vertexset{\bba}{}(k,w_{k})\times \vertexset{\bba}{}(k,w'_{k})$ where $wr-(r-1)\leq w_{k},w'_{k}\leq wr$ and $w_{k} \neq w'_{k}$. From this together with \eqnref{eq:old pairs} we conclude that
\begin{align*}
\big|\ipair{\bba}(k+1,w)\big| &\geq r(kr^k)(r-1)(kr^k) + r \cdot r(r-1)\sum_{i=0}^{k-2}(i+1)^2r^{k+i} \\
&= r(r-1)\sum_{i=0}^{(k+1)-2}(i+1)^2r^{k+1+i},
\end{align*}
completing the proof.
\end{proof}

\section{Proofs for \secref{sec:lln and clt mixed radix}}

In this section we undertake the task of establishing the condition \eqnref{eq:douc_cond_var} of \theref{thm:Douc and Moulines} for the mixed radix-$r$ algorithm. Because the proof of 
\theref{thm:butterfly_clt_mixed_radix} is similar to that of \theref{thm:butterfly_clt_fixed_radix}, also the structure of this section is analogous to \mbox{\secref{sec:proofs for fixed radix supp}}.

%The structure of the section is analogous to that of \secref{sec:proofs for fixed radix supp}: the fact that the mixed radix-$r$ algorithm satisfies required assumptions is established in \secref{sec:conditional independece structure mixed radix}, and the asymptotic conditional variance of the associated martingale and the conditional independence analysis are given in Sections \ref{sec:convergence of the conditional variance mixed} and \ref{sec:independence analysis mixed}, respectively.

% --------------------------------------------------------------------------
% --------------------------------------------------------------------------
%  CONDITIONAL INDEPENDENCE STRUCTURE OF THE MIXED RADIX ALGORITHM
% --------------------------------------------------------------------------
% --------------------------------------------------------------------------
\subsection{Conditional independence structure of the mixed radix-\texorpdfstring{$r$}{r} algorithm}
\label{sec:conditional independece structure mixed radix}

%Given the explicit definition \eqnref{eq:mradix matrix definition} of the matrices $\aMatricesSh{N}{2}$ deployed in the mixed radix-$r$ algorithm, we can establish that they have the required properties according to the following result.
% --------------------------------------------------------------------------
%  MIXED RADIX SATISFY THE PARTITION AND CONDITIONAL INDEPENDENCE ASSUMPTION
% --------------------------------------------------------------------------
\begin{proposition}\label{prop:mixed radix satisfies assumptions}
The matrices $\mradixMatrices{r}{c}$ satisfy \assref{ass:A_k extra} for all $r \geq 2$ and $c\geq 1$. Moreover, define for all $r \geq 2$, $c\geq 1$ and $\ka\in\{1,2\}$
%\begin{equation}
%\arraycolsep=1.4pt
%\begin{array}{rcl}
%\mradixSamplePartition{rc}{c}{\ka} &\defeq& \Big\{\radixSamplePartitionElement{rc}{c}{\ka}{u}:u \in [cr^{2-\ka}]\Big\},\\[0.3cm]
%\radixSamplePartitionElement{rc}{c}{\ka}{u} &\defeq& \Big\{u+(q-1)cr^{2-\ka}:q\in [r^{\ka-1}]\Big\},\quad u \in [cr^{2-\ka}].\label{eq:def mradix partitions}
%\end{array}
%\end{equation}
\begin{equation}
\arraycolsep=1.4pt
\begin{array}{rcl}
\mradixSamplePartition{r}{c}{\ka} &\defeq& \big\{\radixSamplePartitionElement{rc}{c}{\ka}{u}:u \in [cr^{2-\ka}]\big\},\\[0.2cm]
\radixSamplePartitionElement{rc}{c}{\ka}{u} &\defeq& \big\{u+(q-1)cr^{2-\ka}:q\in [r^{\ka-1}]\big\},\quad u \in [cr^{2-\ka}].\label{eq:def mradix partitions}
\end{array}
\end{equation}
Then the triple $(\mradixMatrices{r}{c},\mradixSamplePartition{r}{c}{\ka},d)$ satisfies \assref{ass:partition and cond indep} for all $r\geq 2$, $c\geq 1$ and $\ka\in\{1,2\}$.
\end{proposition}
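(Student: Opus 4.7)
The plan is to follow the proof template of \propref{prop:fixed radix satisfies assumptions}, noting that most steps become essentially one-line verifications because the mixed radix scheme has the constant depth $m=2$. The statement decomposes into two halves: the matrix-algebraic conditions of \assref{ass:A_k extra}, and the joint partition/conditional-independence conditions of \assref{ass:partition and cond indep}.

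For the first half, \assref{ass:A_k}, symmetry \assref{ass:A_k extra}\ref{it:symmetry} and idempotence \ref{it:idempotence} are inherited from the lemma immediately preceding the proposition. For \ref{it:commutativity} a one-line calculation via the mixed product property \eqnref{eq:mixed product property} gives $A_1 A_2 = (I_r\ones_{1/r})\otimes(\ones_{1/c} I_c) = \ones_{1/r}\otimes\ones_{1/c} = A_2 A_1$. For \ref{it:cardinality invariance}, the element-wise Kronecker formula \eqnref{eq:elementwise Kronecker formula} shows that every row of $A_k$ has exactly $r^{k-1}c^{2-k}$ non-zero entries, independent of the row index. The only substantive check is the unique-paths condition \ref{it:unique paths}, which for $m=2$ amounts to showing that for any $i_0,i_2\in[rc]$ at most one $i_1$ satisfies $A_1^{i_1 i_0}\neq 0$ and $A_2^{i_2 i_1}\neq 0$; using the congruence characterization \eqnref{eq:mixed_congruence_rel}, these two constraints independently fix $\lfloor(i_1-1)/c\rfloor$ and $(i_1-1)\bmod c$, so $i_1$ is uniquely determined by Euclidean division.

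For the second half, property \ref{it:is partition} is immediate from the definition \eqnref{eq:def mradix partitions}: one gets $cr^{2-\ka}$ blocks of common size $r^{\ka-1}\geq \ka$, since $r\geq 2$ and $\ka\in\{1,2\}$. Property \ref{it:equivalence classes} is trivial in both cases: for $\ka=1$ the blocks are singletons, while for $\ka=m=2$ one has $\prod_{q=0}^{1}A_{2-q}=A_2 A_1=\ones_{1/(rc)}$ by \assref{ass:A_k}, so every entry is constant across rows.

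The main obstacle, just as in the radix-$r$ case, is the conditional independence \ref{it:cond indep}. For $\ka=1$ it follows directly from the one-step conditional independence of augmented resampling. The substantive case is $\ka=2$: if $i\in\radixSamplePartitionElement{rc}{c}{2}{u_1}$ and $j\in\radixSamplePartitionElement{rc}{c}{2}{u_2}$ with $u_1\neq u_2$, then \eqnref{eq:def mradix partitions} gives $(i-1)\bmod c=u_1-1\neq u_2-1=(j-1)\bmod c$. The congruence characterization of the non-zero pattern of $A_2$ then yields $\parentSet{2}{i}{\bba}{}\cap \parentSet{2}{j}{\bba}{}=\emptyset$. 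Combining this disjointness with the one-step conditional independence of $(\xi^\ell_1)_\ell$ given $\xi_0$ verifies both hypotheses of \lemmaref{lem:cond indep propagation}, yielding $\xi^i_\mathrm{out}\independent \xi^j_\mathrm{out}\mid \xi_0$. Because $m=2$, no backward induction on tail-product sets is needed, so the analogue of \lemmaref{lem:final conditional independence} is not required here.
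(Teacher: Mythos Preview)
Your proof is correct and follows essentially the same route as the paper's. The one notable difference is in the verification of \assref{ass:A_k extra}\ref{it:unique paths}: you give a direct Euclidean-division argument exploiting $m=2$ (the two congruence constraints from $A_1$ and $A_2$ independently pin down $\lfloor(i_1-1)/c\rfloor$ and $(i_1-1)\bmod c$), whereas the paper reuses the contradiction-via-disjoint-$\primeParentSet{k-1}{\cdot}{\bba}{}$ argument from the radix-$r$ case, invoking \lemmaref{lem:nonzero elements mixed}. Your version is slightly more elementary and self-contained for $m=2$; the paper's is more uniform with the radix-$r$ treatment. Everything else---the handling of \ref{it:is partition}, \ref{it:equivalence classes}, and the $\ka=2$ case of \ref{it:cond indep} via disjoint $\parentSet{2}{\cdot}{\bba}{}$ and \lemmaref{lem:cond indep propagation}---matches the paper's proof.
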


% --------------------------------------------------------------------------
%  EXPLICIT PROPERTIES
% --------------------------------------------------------------------------
Before the proof of \propref{prop:mixed radix satisfies assumptions}, we state the following technical result establishing explicit expression for the sets needed in the collision analysis in the case of the mixed radix-$r$ algorithm.
\begin{lemma}\label{lem:nonzero elements mixed}
Fix $r\geq 2$, $c\geq 1$ and $\bba = \mradixMatrices{r}{c}$. For all $i \in [rc]$
\begin{eqnarray}
\parentSet{1}{i}{\bba}{4} &=& \left\{c\floor{\frac{i-1}{c}}+q: q \in [c]\right\}, \label{eq:explicit parents mixed radix 1}\\
\parentSet{2}{i}{\bba}{} &=& \bigg\{\big((i-1)\bmod c\big) + (q-1)c + 1 : q\in [r]\bigg\},\label{eq:explicit parents mixed radix}
\end{eqnarray}
and $\primeParentSet{1}{i}{\bba}{4} = \parentSet{1}{i}{\bba}{4}$, $\primeParentSet{2}{i}{\bba}{4} = [rc]$.
\end{lemma}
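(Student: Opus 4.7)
The plan is to obtain each identity by direct computation from the explicit Kronecker product forms of $A_1$ and $A_2$, exactly as was done in the radix-$r$ case (\lemmaref{lem:nonzero elements}). Since the proof of the previous lemma (the mixed radix-$r$ analogue of \lemmaref{lem:modular congruence realtion fixed}) already supplies the element-wise formula
\begin{equation*}
A_k^{ij} = I_{r^{2-k}}^{\lfloor \frac{i-1}{r^{k-1}c}\rfloor+1,\lfloor\frac{j-1}{r^{k-1}c}\rfloor+1}\, \ones_{1/(r^{k-1}c^{2-k})}^{\star,\star}\, I_{c^{k-1}}^{((i-1)\bmod c^{k-1})+1,((j-1)\bmod c^{k-1})+1},
\end{equation*}
I would specialize to $k=1$ and $k=2$ and simply read off which column indices $j$ make the diagonal indicators match.

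For $\parentSet{1}{i}{\bba}{}$, setting $k=1$ kills the right-hand identity matrix ($c^0=1$) and the Kronecker expression reduces to a factor $I_r^{\lfloor(i-1)/c\rfloor+1,\lfloor(j-1)/c\rfloor+1}$ times a strictly positive entry of $\ones_{1/c}$. Hence $A_1^{ij}>0$ iff $\lfloor(i-1)/c\rfloor = \lfloor(j-1)/c\rfloor$, and writing $j = c\lfloor(i-1)/c\rfloor + q$ with $q\in[c]$ gives \eqref{eq:explicit parents mixed radix 1}. For $\parentSet{2}{i}{\bba}{}$, setting $k=2$ kills the left identity and leaves $I_c^{((i-1)\bmod c)+1,((j-1)\bmod c)+1}$ times a strictly positive entry of $\ones_{1/r}$, so $A_2^{ij}>0$ iff $(i-1)\bmod c = (j-1)\bmod c$; parametrizing such $j$ as $((i-1)\bmod c) + (q-1)c + 1$ with $q\in[r]$ yields \eqref{eq:explicit parents mixed radix}.

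For the prime parent sets, $\primeParentSet{1}{i}{\bba}{} = \parentSet{1}{i}{\bba}{}$ is immediate from the definition \eqref{eq:prime parents} since $\prod_{q=0}^{0}A_{1-q}=A_1$. For $\primeParentSet{2}{i}{\bba}{}$ I would compute $\prod_{q=0}^{1}A_{2-q} = A_2 A_1$ using the mixed product property \eqref{eq:mixed product property}:
\begin{equation*}
A_2 A_1 = (\ones_{1/r}\otimes I_c)(I_r\otimes \ones_{1/c}) = (\ones_{1/r}I_r)\otimes (I_c\ones_{1/c}) = \ones_{1/r}\otimes \ones_{1/c} = \ones_{1/(rc)}.
\end{equation*}
Since every entry of $\ones_{1/(rc)}$ is strictly positive, $\primeParentSet{2}{i}{\bba}{} = [rc]$.

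There is no real obstacle here; the proof is a short bookkeeping exercise in the Kronecker product element-wise formula together with one application of the mixed product property. The only thing to be mildly careful about is tracking the arithmetic $\lfloor(i-1)/r^{k-1}c\rfloor$ and $(i-1)\bmod c^{k-1}$ when substituting $k=1,2$, which collapse to trivial expressions because of the powers of $1$ involved.
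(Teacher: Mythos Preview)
Your proposal is correct and follows essentially the same approach as the paper: both use the element-wise Kronecker formula to read off $\parentSet{1}{i}{\bba}{}$ and $\parentSet{2}{i}{\bba}{}$, note that $\primeParentSet{1}{i}{\bba}{}=\parentSet{1}{i}{\bba}{}$ by definition, and obtain $\primeParentSet{2}{i}{\bba}{}=[rc]$ from the product $A_2A_1=\ones_{1/(rc)}$ (the paper cites $A_1A_2=\ones_{1/(rc)}$, which is equivalent by commutativity/symmetry). The only cosmetic difference is that you compute $A_2A_1$ directly via the mixed product property while the paper invokes the already-established \assref{ass:A_k}.
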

\begin{proof}
By the element-wise definition \eqnref{eq:kronecker inverse definition} of the Kronecker product and \eqnref{eq:mradix matrix definition} it follows similarly as in the proof of \lemmaref{lem:nonzero elements} that
% one has for all $0\leq \alpha, \beta < rc$
%\begin{equation*}
%A_1^{\alpha+1, \beta+1} = (I_r)^{\floor{\frac{\alpha}{c}}+1,\floor{\frac{\beta}{c}}+1}(\ones_{1/c})^{\alpha \bmod c+1,\beta \bmod c+1}
%\end{equation*}
%from which it readily follows that 
$\parentSet{1}{\alpha+1}{\bba}{4} = \{j \in [rc]:\floor{{\alpha}/c} = \floor{(j-1)/c}\}$. From this \eqnref{eq:explicit parents mixed radix 1} follows by elementary calculation. Equation \eqnref{eq:explicit parents mixed radix} can be verified exactly as in the proof of \eqnref{eq:tailProdSet fixed} in \lemmaref{lem:nonzero elements}. The identity $\primeParentSet{1}{i}{\bba}{4} = \parentSet{1}{i}{\bba}{4}$ follows immediately by definition and finally the claim $\primeParentSet{2}{i}{\bba}{4} = [rc]$ holds because $A_1A_2 = \ones_{1/rc}$.
\end{proof}

\begin{proof}[Proof of \propref{prop:mixed radix satisfies assumptions}]
To prove that $\bba = \mradixMatrices{r}{c}$ satisfies \assref{ass:A_k extra} we observe, as in the proof of \propref{prop:fixed radix satisfies assumptions}, that the only non-trivial property  is \assref{ass:A_k extra}\ref{it:unique paths}, which follows similarly as in the proof of \propref{prop:fixed radix satisfies assumptions} by using \lemmaref{lem:nonzero elements mixed}.

\assref{ass:partition and cond indep}\ref{it:is partition} can be checked with elementary calculation using \eqnref{eq:mradix matrix definition} and \eqnref{eq:def mradix partitions}. \assref{ass:partition and cond indep}\ref{it:equivalence classes} is verified simply by noting that for $\ka = 1$, $u\in[rc]$, $\big|\radixSamplePartitionElement{rc}{c}{\ka}{u}\big| = 1$, and for $\ka=2$ one has $\prod_{k=0}^{\ka-1}A_{m-k} = \ones_{1/rc}$. 

To check \assref{ass:partition and cond indep}\ref{it:cond indep} we first note that for $\ka=1$ the claim follows from the one step conditional independence. For $\ka=2$, one can check using \eqnref{eq:explicit parents mixed radix} of \lemmaref{lem:nonzero elements mixed} analogously to the proof of \lemmaref{lem:explicit analysis regarding partitions}\ref{it:partition property 1 fixed}, that for all $(i,j) \in \radixSamplePartitionElement{rc}{c}{2}{u_1}\times \radixSamplePartitionElement{rc}{c}{2}{u_2}$ where $(u_1,u_2)\in [c]^2$ such that $u_1 \neq u_2$, one has $\parentSet{2}{i}{\bba}{} \cap \parentSet{2}{j}{\bba}{} = \emptyset$. By the one step conditional independence, we have $\parentSet{2}{i}{\bba}{} \times \parentSet{2}{j}{\bba}{} \subset \ciSet{1}{\xi_0}{rc}$ and hence \assref{ass:partition and cond indep}\ref{it:cond indep} follows from \lemmaref{lem:cond indep propagation}.
\end{proof}

\subsection{Convergence of the conditional variance}
\label{sec:convergence of the conditional variance mixed}

The main result of this section is the following proposition. 
\begin{proposition}\label{prop:var_conv_mixed_radix}
Under the hypotheses of \theref{thm:butterfly_clt_mixed_radix}
\begin{equation*}
\E\Bigg[\bigg(\sum_{\varrho =1}^{2rc} \mrbfX{\varrho}{rc}{2}{\varphi}\bigg)^2\Bigg|\mrbfF{0}{rc}{2}\Bigg] \inprob{c\to\infty} \left(1-\frac{1}{2r}\right)\mu\Bigg(g\bigg(\varphi-\frac{\mu(g\varphi)}{\mu(\varphi)}\bigg)^2\Bigg)\mu(g).
\end{equation*}
\end{proposition}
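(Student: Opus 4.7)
The proof will follow the same architecture as that of Proposition~\ref{prop:var_conv_fixed_radix}, but the combinatorics and asymptotics are genuinely different because $m=2$ is now fixed and $N=rc$. My plan is as follows.

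First I would invoke Proposition~\ref{prop:mixed radix satisfies assumptions} to verify that $\bba=\mradixMatrices{r}{c}$ satisfies Assumption~\ref{ass:A_k extra}, and then apply Propositions~\ref{prop:tens_prod_formula} and~\ref{prop:tensor product formulation}. Using Lemma~\ref{lem:nonzero elements mixed} and Lemma~\ref{lem:bijectivity g} (cf.~Remark~\ref{rem:invariance}), the relevant cardinalities for the mixed radix case are $|\primeParentSet{1}{i}{\bba}{}|=c$, $|\primeParentSet{2}{i}{\bba}{}|=rc$, $|\collisionStartSetSh{1}{i}{\bba}|=c-1$, $|\collisionStartSetSh{2}{i}{\bba}|=c(r-1)$, $|\lowerPartFrom{1}{u_0}{\bba}|=r$ and $|\lowerPartFrom{2}{u_0}{\bba}|=1$. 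Substituting these and multiplying both sides of Proposition~\ref{prop:tensor product formulation} by $N/m=rc/2$ yields the decomposition $\E\big[\big(\sum_\varrho \mrbfX{\varrho}{rc}{2}{\varphi}\big)^2 \big| \mrbfF{0}{rc}{2}\big]=T_1(c)+T_2(c)+T_3(c)$, where $T_1$ is the diagonal contribution $(2rc)^{-1}\sum_i g^2(\xi_0^i)\cvarphi_{rc}^2(\xi_0^i)$, and $T_2, T_3$ are the collision and non-collision off-diagonal sums, in which the coefficient $\specCollisionPairs{i}{j}{\bba}{}$ takes only the two values $r^2c$ (when $j\in\collisionStartSetSh{1}{i}{\bba}$) and $rc$ (when $j\in\collisionStartSetSh{2}{i}{\bba}$).

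Next I would handle the three pieces in turn. For $T_1(c)$, the hypothesis~\eqref{eq:required conv in prob mixed} with $\ka=1$ gives the full-sample convergence $(rc)^{-1}\sum_i g^2(\xi_0^i)\cvarphi_{rc}^2(\xi_0^i)\inprob{c\to\infty}\mu(g^2\tilde\varphi^2)$, where $\tilde\varphi:=\varphi-\mu(g\varphi)/\mu(g)$; hence $T_1(c)\inprob{}\tfrac{1}{2}\mu(g^2\tilde\varphi^2)$. For $T_2(c)$, the key step is a mixed-radix analog of Lemma~\ref{eq:convergence of weigted subsample integrals}: since the sets $\primeParentSet{1}{i}{\bba}{}$ coincide with the blocks indexed by the partition $\mradixSamplePartition{r}{c}{2}$ in~\eqref{eq:def mradix partitions}, the hypothesis~\eqref{eq:required conv in prob mixed} with $\ka=2$ and Minkowski's inequality together give the block-wise convergence $|\collisionStartSetSh{k}{i}{\bba}|^{-1}\sum_{j\in\collisionStartSetSh{k}{i}{\bba}}g(\xi_0^j)\to\mu(g)$ in $L^2$ uniformly enough that the product with the outer sum $\sum_i g(\xi_0^i)\cvarphi_{rc}^2(\xi_0^i)$ (invoking the full-sample convergence once more) gives
\[
T_2(c)\inprob{c\to\infty}\mu(g)\mu(g\tilde\varphi^2)\cdot\lim_{c\to\infty}\frac{r^2c(c-1)+rc^2(r-1)}{2(rc)^{2}}=\Bigl(1-\frac{1}{2r}\Bigr)\mu(g\tilde\varphi^2)\mu(g).
\]

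For the non-collision term $T_3(c)$, I would exploit the crucial identity $\sum_j g(\xi_0^j)\cvarphi_{rc}(\xi_0^j)=0$ (by definition of $\cvarphi_{rc}$). Writing $\specNonCollisionPairsFrom{\bba}{2}{i}{j}=N^2\sum_k\mathbb{I}(j\in\collisionStartSetSh{k}{i}{\bba})-\sum_k|\lowerPartFrom{k}{u_0}{\bba}|^2|\primeParentSet{k}{i}{\bba}{}|\mathbb{I}(j\in\collisionStartSetSh{k}{i}{\bba})$ and using Lemma~\ref{lem:general properties of path sets}\ref{it:complement decomposition}, the first portion telescopes through the zero-sum identity to yield exactly $-T_1(c)$, while the second portion involves the subsample sums $\sum_{j\in\collisionStartSetSh{k}{i}{\bba}}g(\xi_0^j)\cvarphi_{rc}(\xi_0^j)$, which by the same block-wise convergence argument tend to $|\collisionStartSetSh{k}{i}{\bba}|\mu(g\tilde\varphi)=0$, so this portion vanishes in probability. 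Therefore $T_3(c)\inprob{}-\tfrac{1}{2}\mu(g^2\tilde\varphi^2)$, exactly cancelling $T_1(c)$, and the total converges to the claimed limit $(1-1/(2r))\mu(g\tilde\varphi^2)\mu(g)$.

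The main obstacle will be the mixed-radix analog of Lemma~\ref{eq:convergence of weigted subsample integrals}, namely showing that the block-wise averages over $\collisionStartSetSh{1}{i}{\bba}$ and $\collisionStartSetSh{2}{i}{\bba}$ converge sufficiently uniformly in $i$ so that weighted outer sums against $g(\xi_0^i)\cvarphi_{rc}^2(\xi_0^i)$ or $g(\xi_0^i)\cvarphi_{rc}(\xi_0^i)$ also converge. Fortunately, because $m=2$ is fixed there is no averaging over $k$ as in the fixed radix case, so only two subsample families need to be controlled, and both are direct instances of the sub-populations covered by~\eqref{eq:required conv in prob mixed} (either single blocks of size $c$ minus one point, or unions of $r-1$ such blocks); the remaining bookkeeping is then a Minkowski/Cauchy--Schwartz argument of the same flavour as in Lemma~\ref{eq:convergence of weigted subsample integrals}.
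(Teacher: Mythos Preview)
Your proposal is correct and follows essentially the same approach as the paper's proof: the paper likewise invokes Proposition~\ref{prop:mixed radix satisfies assumptions} and Proposition~\ref{prop:tensor product formulation} with the same cardinalities, obtains the same three-term decomposition, and handles the collision and non-collision terms by the same $A_c+B_c$ / $A'_c-B'_c$ split you describe (the paper's Lemma~\ref{lem:subsample convergence for mixed radix} is precisely your ``mixed-radix analog of Lemma~\ref{eq:convergence of weigted subsample integrals}''). The only cosmetic difference is that the paper keeps the $k=1$ and $k=2$ contributions separate as $A_c$ and $B_c$ rather than combining them into a single limiting coefficient as you do.
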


%\begin{eqnarray}
%\frac{m}{r^m}
%\E\left[\left(\sum_{q=1}^{mr^m} X^{(r^m,m,\varphi)}_q\right)^2 \mids \calF^{(r^m,m)}_0\right] 
%&=& \frac{1}{r^{2m}}\sum_{i} g^{2}(\bxi{i}{0}{}{})\cvarphi^2_{r^m}(\bxi{i_0}{0}{}{}) \nonumber\\
%&+& \frac{1}{r^{2m}}\sum_{k=1}^m\sum_{i}\sum_{j\neq i}g(\bxi{i}{0}{}{})\cvarphi^2_{r^m}(\bxi{i}{0}{}{})g(\bxi{j}{0}{}{})\ind{}(j\in \collisionStartSetSh{i}{k}{\bba})r^{-k}\nonumber\\
%&+& \frac{1}{r^{2m}}\sum_{k=1}^m\sum_{i}\sum_{j\neq i}g(\bxi{i}{0}{}{})\cvarphi_{r^{m}}(\bxi{i}{0}{}{})g(\bxi{j}{0}{}{})\cvarphi_{r^{m}}(\bxi{j}{0}{}{})\ind{}(j\in \collisionStartSetSh{i}{k}{\bba})(1-r^{-k})\label{eq:fixed radix conditional second moment}
%\end{eqnarray}

We have the following result, which serves a purpose analogous to \lemmaref{eq:convergence of weigted subsample integrals} in the case of radix-$r$ algorithm, although it is somewhat different by nature.
%In order to establish the limit of \eqnref{eq:fixed radix conditional second moment} as $c\to\infty$ we need the following result.
\begin{lemma}%[Analog of \lemmaref{eq:convergence of weigted subsample integrals} for mixed radix]
\label{lem:subsample convergence for mixed radix}
Under the hypotheses of \theref{thm:butterfly_clt_mixed_radix}, if for all $i\in [rc]$, $q_i \in [r^{k-1}]$, for some $k \in \{1,2\}$, then for all $\varphi,\varphi'\in\boundMeas{\ss}$ 
\begin{equation*}
\frac{1}{rc}\sum_{i}\frac{r^{k-1}}{rc}
\sum_{j\in [cr^{2-k}]}\varphi(\xi^i)\varphi'(\xi^{J(j)})
%\sum_{j=\ell_irc/r^{k-1}+1}^{(\ell_i+1)rc/r^{k-1}}\varphi(\xi^i)\varphi'(\xi^j)
- \mu(\varphi)\mu(\varphi')\inprob{c\to\infty} 0.
\end{equation*}
where $J(j) = j + (q_{i}-1)cr^{2-k}$ for all $j \in [cr^{2-k}]$.
\end{lemma}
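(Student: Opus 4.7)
The plan is to split the double sum into two easily-controlled pieces. Writing $S_i \defeq \frac{r^{k-1}}{rc}\sum_{j\in[cr^{2-k}]}\varphi'(\xi^{J(j)}_0)$ (with $J$ depending on $i$ through $q_i$), I would use the decomposition
\[
\frac{1}{rc}\sum_{i}\varphi(\xi^i_0)S_i - \mu(\varphi)\mu(\varphi') = A_c + B_c,
\]
where
\[
A_c \defeq \frac{1}{rc}\sum_{i}\varphi(\xi^i_0)\bigl(S_i - \mu(\varphi')\bigr),\qquad B_c \defeq \mu(\varphi')\Bigl(\frac{1}{rc}\sum_i \varphi(\xi^i_0) - \mu(\varphi)\Bigr).
\]

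The term $B_c$ is immediate: applying the hypothesis \eqref{eq:required conv in prob mixed} of \theref{thm:butterfly_clt_mixed_radix} with $\ka=1$ and $q=1$ gives $\frac{1}{rc}\sum_i \varphi(\xi^i_0) \to \mu(\varphi)$ in $L^2$, hence $B_c \inprob{c\to\infty} 0$.

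For $A_c$, the key observation is that the bound supplied by the hypothesis is uniform in $q \in [r^{\ka-1}]$, so it holds for $S_i$ regardless of which $q_i\in[r^{k-1}]$ is used in defining $J$. By the triangle inequality and Jensen's inequality,
\[
\E[|A_c|] \leq \|\varphi\|_\infty \cdot \frac{1}{rc}\sum_i \E[|S_i - \mu(\varphi')|^2]^{1/2} \leq \|\varphi\|_\infty\, b(\varphi')\sqrt{\frac{2-k}{rc} + \frac{r^{k-1}}{rc}},
\]
where the final inequality is exactly \eqref{eq:required conv in prob mixed} applied with $\ka = k$ and $q = q_i$. For $k=1$ the right-hand side is $\|\varphi\|_\infty b(\varphi')\sqrt{2/(rc)}$ and for $k=2$ it is $\|\varphi\|_\infty b(\varphi')\sqrt{1/c}$; in either case the bound vanishes as $c\to\infty$, so $A_c \inprob{c\to\infty} 0$ by Markov's inequality. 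Combining the two convergences concludes the proof.

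I do not anticipate any obstacle: the whole argument is a direct exploitation of the $L^2$-type hypothesis \eqref{eq:required conv in prob mixed}, whose uniformity in $q$ lets us handle the index-dependent mapping $J$ without further combinatorial work.
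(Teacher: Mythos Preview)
Your proposal is correct and follows essentially the same approach as the paper: the same $A_c + B_c$ decomposition, the same use of \eqref{eq:required conv in prob mixed} with $\ka=1$, $q=1$ for $B_c$, and the same uniform-in-$q$ application of \eqref{eq:required conv in prob mixed} with $\ka=k$ for $A_c$. The only cosmetic difference is that the paper bounds $\E[|A_c|]$ via $\E[|A_c|]\leq \E[A_c^2]^{1/2}$ followed by Cauchy--Schwarz on the sum, whereas you apply the triangle inequality termwise first; both routes yield the identical final bound.
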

\begin{proof}
By defining
\begin{align*}
A_c &\defeq \frac{1}{rc}\sum_i \varphi(\xi^i) \frac{r^{k-1}}{rc}\sum_{j\in [cr^{2-k}]}(\varphi'(\xi^{J(j)}) - \mu(\varphi')),\\
B_c &\defeq \frac{1}{rc}\sum_i \varphi(\xi^i)\mu(\varphi') - \mu(\varphi)\mu(\varphi'),
\end{align*}
we have by the triangle inequality
\begin{equation*}
\abs{\frac{1}{rc}\sum_{i}\frac{r^{k-1}}{rc}\sum_{j\in [cr^{2-k}]}\varphi(\xi^i)\varphi'(\xi^{J(j)}) - \mu(\varphi)\mu(\varphi')} = \abs{A_c + B_c} \leq \abs{A_c} + \abs{B_c}.
\end{equation*}
By the hypotheses of \theref{thm:butterfly_clt_mixed_radix}, by setting $\ka=1$ and $q=1$ in \eqnref{eq:required conv in prob mixed} yields for all $c\geq 1$
\begin{equation}\label{eq:lp convregence mixed}
\E\left[\abs{\frac{1}{rc}\sum_i \varphi(\xi^i) - \mu(\varphi)}^{2}\right]^{\frac{1}{2}} \leq b(\varphi)\sqrt{\frac{2}{rc}},
\end{equation}
from which we deduce that $\abs{B_c}$ converges to zero in probability as $c\to\infty$. It remains to show the same for $\abs{A_c}$. By Jensen's inequality, Cauchy-Schwartz inequality and \eqnref{eq:required conv in prob mixed} we have
\begin{eqnarray*}
\E\left[\abs{A_c}\right] 
&\leq&
\sqrt{\E\left[\bigg(\frac{1}{rc}\sum_i \varphi(\xi^i) \frac{r^{k-1}}{rc}\sum_{j\in [cr^{2-k}]}(\varphi'(\xi^{J(j)}) - \mu(\varphi'))\bigg)^2\right]}\\
%&\leq&
%\sqrt{\E\left[\left(\frac{1}{rc}\sum_i \varphi^2(\xi^i)\right) \Bigg(\frac{1}{rc}\sum_i \bigg(\frac{r^{k-1}}{rc}\sum_{j=\ell_irc/r^{k-1}+1}^{(\ell_i+1)rc/r^{k-1}}(\varphi'(\xi^j) - \mu(\varphi'))\bigg)^2\Bigg)\right]}\\
&\leq&
\norm{\varphi}_\infty\sqrt{\frac{1}{rc}\sum_i\E\left[\Bigg(\frac{r^{k-1}}{rc}\sum_{j\in [cr^{2-k}]}(\varphi'(\xi^{J(j)}) - \mu(\varphi'))\Bigg)^2\right]}\\
&\leq&
\norm{\varphi}_\infty b(\varphi)\sqrt{\frac{2-k}{rc} + \frac{r^{k-1}}{rc}}\inprob{c\to\infty} 0,
\end{eqnarray*}
completing the proof.
\end{proof}

\begin{proof}[Proof of \propref{prop:var_conv_mixed_radix}]
Because $r\geq 2$ is assumed fixed, let us write $\bba(c)=\mradixMatrices{r}{c}$. By \propref{prop:mixed radix satisfies assumptions} we can apply \lemmaref{lem:bijectivity g} and on the other hand we can also use \lemmaref{lem:nonzero elements mixed} yielding for all $i,u_0\in[rc]$
%Having established that the mixed radix-$r$ algorithm satisfies the conditions that enable us to apply results in {\color{red}XXX} we first observe that by \lemmaref{lem:nonzero elements mixed}
%and \lemmaref{lem:bijectivity g} we have
\begin{equation}\label{eq:essential cardinalities mixed}
\big|\primeParentSet{k}{i}{\bba(c)}{}\big| = cr^{k-1},\quad \big|\parentSet{k}{i}{\bba(c)}{}\big| = c^{2-k}r^{k-1},\quad \big|\lowerPartFrom{k}{u_0}{\bba(c)}\big| = r^{2-k},
\end{equation}
and therefore by substitution in \propref{prop:tensor product formulation} 
\begin{align}\label{eq:mixed radix conditional second moment}
%\arraycolsep=1.4pt
%\begin{array}{rcl}
&\E\left[\left(\displaystyle\sum_{\varrho=1}^{2rc} X^{(rc,2)}_\varrho\right)^2 \mids \calF^{(rc,2)}_0\right]
= \dfrac{rc}{2}\dfrac{1}{(rc)^2}\displaystyle\sum_{i} g^{2}(\bxi{i}{0}{}{})\cvarphi^2_{rc}(\bxi{i_0}{0}{}{}) \\
&\quad+~ \dfrac{rc}{2}\dfrac{1}{(rc)^3}\displaystyle\sum_{k=1}^m\sum_{i}\sum_{j\neq i}g(\bxi{i}{0}{}{})\cvarphi^2_{rc}(\bxi{i}{0}{}{})g(\bxi{j}{0}{}{})\ind{}\Big(j\in \collisionStartSetSh{k}{i}{\bba(c)}\Big)r^{2-k}\nonumber\\
&\quad+~ \dfrac{rc}{2}\dfrac{1}{(rc)^3}\displaystyle\sum_{k=1}^m\sum_{i}\sum_{j\neq i}g(\bxi{i}{0}{}{})\cvarphi_{rc}(\bxi{i}{0}{}{})g(\bxi{j}{0}{}{})\cvarphi_{rc}(\bxi{j}{0}{}{})\ind{}\Big(j\in \collisionStartSetSh{k}{i}{\bba(c)}\Big)(rc-r^{2-k}).\nonumber
%\end{array}
\end{align}
%
% PART I
To obtain the limit of \eqnref{eq:mixed radix conditional second moment} we first observe that by \eqnref{eq:lp convregence mixed}, \eqnref{eq:varphi_bar_defn}, and the continuous mapping theorem
%\begin{equation*}
%\frac{1}{rc}\sum_i g^2(\xi^i_0)\cvarphi_{rc}^2(\xi^i_0) \inprob{c\to\infty} \mu(g^2\cvarphi^2),
%\end{equation*}
%from which we have for the first sum in \eqnref{eq:mixed radix conditional second moment}
\begin{equation}\label{eq:mixed part 1}
\dfrac{rc}{2}\frac{1}{(rc)^2}\sum_i g^2(\xi^i_0)\cvarphi_{rc}^2(\xi^i_0) \inprob{c\to\infty} \frac{1}{2}\mu(g^2\cvarphi^2).
\end{equation}
%
% PART II
For the second sum in \eqnref{eq:mixed radix conditional second moment} we define
\begin{align*}
A_c &\defeq \frac{1}{rc}\sum_{i}g(\bxi{i}{0}{}{})\cvarphi^2_{rc}(\bxi{i}{0}{}{})\frac{r}{rc}\sum_{j}g(\bxi{j}{0}{}{})\ind{}\Big(j\in\collisionStartSetSh{1}{i}{\bba(c)}\Big),\\
B_c &\defeq
\frac{1}{rc}\sum_{i}g(\bxi{i}{0}{}{})\cvarphi^2_{rc}(\bxi{i}{0}{}{})\frac{1}{rc}\sum_{j}g(\bxi{j}{0}{}{})\ind{}\Big(j\in\collisionStartSetSh{2}{i}{\bba(c)}\Big),
\end{align*}
in which case the second sum is equal to $(A_{c}+B_{c})/2$. By \lemmaref{lem:nonzero elements mixed} %and \eqnref{eq:simple mixed radix block}
 we have 
\begin{equation*}
\collisionStartSetSh{1}{i}{\bba(c)} = \{j + (q_{i}-1)c: j \in [c]\}\setminus \{i\}
%= \left\{c\lfloor{{(i-1)}/{c}}\rfloor+1,\ldots,c\lfloor{{(i-1/)}{c}}\rfloor+c\right\}\setminus \{i\} 
%= \mradixBlocks{r}{c}{2}{q_{i}+1}\setminus \{i\},
\end{equation*}
where $q_{i} \defeq \floor{(i-1)/c}+1$. Since $q_{i} \in [r]$ we can use \lemmaref{lem:subsample convergence for mixed radix} with $k=2$, the continuous mapping theorem, and the fact that
\begin{equation*}
\abs{\frac{1}{rc}\sum_{i}g^2(\bxi{i}{0}{}{})\cvarphi^2_{rc}(\bxi{i}{0}{}{})} \leq \norm{g}_\infty^2 \osc{\varphi}^2,
\end{equation*} 
and we have
\begin{align}
A_c 
&= \frac{1}{rc}\sum_{i}g(\bxi{i}{0}{}{})\cvarphi^2_{rc}(\bxi{i}{0}{}{})\left(\frac{r}{rc}\sum_{j\in [c]}g(\bxi{J(j)}{0}{}{}) - \frac{r}{rc}g(\bxi{i}{0}{}{})\right)\nonumber\\
&=
\frac{1}{rc}\sum_{i}g(\bxi{i}{0}{}{})\cvarphi^2_{rc}(\bxi{i}{0}{}{})\frac{r}{rc}\sum_{j\in[c]}g(\bxi{J(j)}{0}{}{}) - \frac{r}{rc}\frac{1}{rc}\sum_{i}g^2(\bxi{i}{0}{}{})\cvarphi^2_{rc}(\bxi{i}{0}{}{}) \nonumber\\
&\inprob{c\to\infty} \mu(g\cvarphi^2)\mu(g).\label{eq:part 2a}
\end{align}
where $J(j) = j + (q_{i}-1)c$ for all $j \in [c]$. Using \lemmaref{lem:nonzero elements mixed}, it can be checked that
\begin{equation*}
\collisionStartSetSh{2}{i}{\bba(c)} %&=& [rc]\setminus \primeParentSet{1}{i}{\bba(c)}{} \\ %= \bigcup_{q+1 \in [r]\setminus \{q^\ast_i+1\}}\bigcup_{\ell \in [c]} \left\{\ell + qc\right\} 
= \{j + (q-1)c \in [rc]:q \in [r]\setminus \{q_i\},~j \in [c]\}
%= \bigcup_{q+1 \in [r],~q\neq q_{i}} \mradixBlocks{r}{c}{2}{q+1}.
\end{equation*}
Then, by \lemmaref{lem:subsample convergence for mixed radix} with $k=2$, \eqnref{eq:lp convregence mixed}, and the continuous mapping theorem
\begin{align}
B_c 
&= \frac{1}{rc}\sum_{i}g(\bxi{i}{0}{}{})\cvarphi^2_{rc}(\bxi{i}{0}{}{})\frac{1}{rc}\sum_{\substack{q \in [r]\\q\neq q_i}}\sum_{j \in [c]}g(\bxi{J_q(j)}{0}{}{})\nonumber\\
&= \Bigg(\frac{1}{rc}\sum_{i}g(\bxi{i}{0}{}{})\cvarphi^2_{rc}(\bxi{i}{0}{}{})\Bigg)\Bigg(\frac{1}{rc}\sum_{j}g(\bxi{j}{0}{}{})\Bigg) \nonumber\\
&\quad-~ \frac{1}{r}\frac{1}{rc}\sum_{i}g(\bxi{i}{0}{}{})\cvarphi^2_{rc}(\bxi{i}{0}{}{})\frac{1}{c}\sum_{j \in [c]}g(\bxi{J_{q_{i}}(j)}{0}{}{})\nonumber\\
& \inprob{c\to\infty}\left(1-\frac{1}{r}\right)\mu(g\cvarphi^2)\mu(g),\label{eq:part 2b}
%&=& 
%\left(1-\frac{1}{r}\right)\frac{1}{r-1}\sum_{q\in [r]\setminus \{q^\ast\}}\frac{1}{N}\sum_{i_0}g(\bxi{i_0}{0}{}{})\cvarphi^2_{N}(\bxi{i_0}{0}{}{})\frac{r}{N}\sum_{j}\ind{}\left(j - (q-1)\frac{N}{r}\in\left[\frac{N}{r}\right]\right)g(\bxi{j}{0}{}{}) \nonumber \\
%&& \inprob{N\to\infty}\left(1-\frac{1}{r}\right)\mu(g\cvarphi^2)\mu(g),\label{eq:part 2b}
\end{align}
where $J_q(j) = j + (q-1)c$ for all $j \in  [c]$ and $q \in [r]$. By combining \eqnref{eq:part 2a} and \eqnref{eq:part 2b} we have 
\begin{align}\label{eq:mixed radix part 2}
& \dfrac{rc}{2}\dfrac{1}{(rc)^3}\displaystyle\sum_{k=1}^m\sum_{i}\sum_{j\neq i}g(\bxi{i}{0}{}{})\cvarphi^2_{rc}(\bxi{i}{0}{}{})g(\bxi{j}{0}{}{})\ind{}\Big(j\in \collisionStartSetSh{k}{i}{\bba(c)}\Big)r^{2-k} \nonumber \\
& =\frac{1}{2}\left(A_c + B_c\right) \inprob{c\to\infty} \left(1-\frac{1}{2r}\right)\mu(g\cvarphi^2)\mu(g).
\end{align}

%
% PART III
To conclude the proof we define
\begin{align*}
A'_c &\defeq \frac{1}{rc}\sum_{k=1}^m\sum_{i}\sum_{j}g(\bxi{i}{0}{}{})\cvarphi_{rc}(\bxi{i}{0}{}{})g(\bxi{j}{0}{}{})\cvarphi_{rc}(\bxi{j}{0}{}{})\ind{}\Big(j\in \collisionStartSetSh{k}{i}{\bba(c)}\Big),\\
B'_c &\defeq
\frac{1}{(rc)^2}\sum_{k=1}^m\sum_{i}\sum_{j}g(\bxi{i}{0}{}{})\cvarphi_{rc}(\bxi{i}{0}{}{})g(\bxi{j}{0}{}{})\cvarphi_{rc}(\bxi{j}{0}{}{})\ind{}\Big(j\in \collisionStartSetSh{k}{i}{\bba(c)}\Big)r^{2-k},
\end{align*}
in which case the third sum in \eqnref{eq:mixed radix conditional second moment} equals $(A'_{c}+B'_{c})/2$. 
By \lemmaref{lem:general properties of path sets}\ref{it:complement decomposition}, and the fact that ${(rc)^{-1}}\sum_{i}g(\bxi{i}{0}{}{})\cvarphi_{rc}(\bxi{i}{0}{}{}) = 0$,
%\begin{equation}
%\frac{1}{rc}\sum_{i}g(\bxi{i}{0}{}{})\cvarphi_{rc}(\bxi{i}{0}{}{}) = 0,
%\end{equation}
for $A'_c$ we have
\begin{align}
A'_c 
&= \frac{1}{rc}\sum_{i}\sum_{j\neq i}g(\bxi{i}{0}{}{})\cvarphi_{rc}(\bxi{i}{0}{}{})g(\bxi{j}{0}{}{})\cvarphi_{rc}(\bxi{j}{0}{}{})\nonumber\\
&= rc\left(\frac{1}{rc}\sum_{i}g(\bxi{i}{0}{}{})\cvarphi_{rc}(\bxi{i}{0}{}{})\right)^2 - \frac{1}{rc}\sum_{i}g^2(\bxi{i}{0}{}{})\cvarphi_{rc}^2(\bxi{i}{0}{}{}) \nonumber \\
&\inprob{c\to\infty} -\mu(g^2\cvarphi^2),\label{eq:interm lim 2}
\end{align}
where the second equality follows similarly as in \eqnref{eq:tmp final form} and the convergence follows from \eqnref{eq:lp convregence mixed} together with the continuous mapping theorem. By arguments identical to those used in proving \eqnref{eq:mixed radix part 2} we see that $B'_n$ converges in probability to $\left(2 - {r^{-1}}\right)\mu(g\cvarphi)^2 = 0$ and combining this with \eqnref{eq:interm lim 2} gives
\begin{align}%\label{eq:mixed radix part 3}
&\dfrac{rc}{2}\dfrac{1}{(rc)^3}\displaystyle\sum_{k=1}^m\sum_{i}\sum_{j\neq i}g(\bxi{i}{0}{}{})\cvarphi_{rc}(\bxi{i}{0}{}{})g(\bxi{j}{0}{}{})\cvarphi_{rc}(\bxi{j}{0}{}{})\ind{}\Big(j\in \collisionStartSetSh{k}{i}{\bba(c)}\Big)(rc-r^{2-k}) \nonumber \\
& \qquad = \frac{1}{2}\left(A'_c - B'_c\right) \inprob{c\to \infty} -\frac{1}{2}\mu(g^2\cvarphi^2).\nonumber
\end{align}
The proof is completed by combining this limit and the limits in \eqnref{eq:mixed part 1} and \eqnref{eq:mixed radix part 2} with \eqnref{eq:mixed radix conditional second moment}.
\end{proof}

% --------------------------------------------------------------------------
% --------------------------------------------------------------------------
%  INDEPENDENCE ANALYSIS
% --------------------------------------------------------------------------
% --------------------------------------------------------------------------
\subsection{Approximation of the conditional variance and independence analysis}
\label{sec:independence analysis mixed}

The main result of this section is the following proposition, which is the last remaining part in completing the proof of \theref{thm:butterfly_clt_mixed_radix}.
\begin{proposition}\label{prop:var_vanish_mixed_radix}
Under the hypotheses of \theref{thm:butterfly_clt_mixed_radix},
\begin{equation*}
\sum_{\varrho\in[2rc]}\left(\E \left[ \left.  \left(\mrbfX{\varrho}{rc}{2}{\varphi} \right)^2\right| \mrbfF{\varrho-1}{rc}{2} \right]- \E \left[ \left.\left(\mrbfX{\varrho}{rc}{2}{\varphi}\right)^2\right|\mrbfF{0}{rc}{2}\right]\right)\inprob{c\to\infty} 0.
\end{equation*}
%\begin{equation*}
%\sum_{q\in[2rc]}\left(\E \left[ \left.  \left(\mrbfX{q}{rc}{2}{\varphi} \right)^2\right| \mrbfF{q-1}{rc}{2} \right]- \E \left[ \left.\left(\mrbfX{q}{rc}{2}{\varphi}\right)^2\right|\mrbfF{0}{rc}{2}\right]\right)\inprob{c\to\infty} 0.
%\end{equation*}
%\begin{equation*}
%\sum_{q\in[2rc]}Z^{(rc,2,\varphi)}_q\inprob{c\to\infty} 0.
%\end{equation*}
%where $Z^{(rc,2,\varphi)}_q$ is as defined in \eqnref{eq:def Z}.
%\begin{equation*}
%Z^{(rc,2,\varphi)}_q:=\E \left[ \left.  \left(\mrbfX{q}{rc}{2}{\varphi} \right)^2\right| \mrbfF{q-1}{rc}{2} \right]- \E \left[ \left.\left(\mrbfX{q}{rc}{2}{\varphi}\right)^2\right|\mrbfF{0}{rc}{2}\right].
%\end{equation*}
\end{proposition}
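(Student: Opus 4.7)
My plan is to mirror the strategy used in \propref{prop:var_vanish_fixed_radix}, but to exploit the crucial simplification that $m=2$ is constant for the mixed radix-$r$ scheme, so that the graph-theoretic counting carried out in \lemmaref{lem:independent pair count fixed} is avoided. Writing $Z_\varrho := \E[(\mrbfX{\varrho}{rc}{2}{\varphi})^2\mid\mrbfF{\varrho-1}{rc}{2}] - \E[(\mrbfX{\varrho}{rc}{2}{\varphi})^2\mid\mrbfF{0}{rc}{2}]$, the first key observation I would make is that $Z_\varrho = 0$ for every $\varrho \in [rc]$. Indeed, for such $\varrho$ the scalar $\mrbfX{\varrho}{rc}{2}{\varphi}$ is, up to $\sigma(\xi_0)$-measurable factors, a function of the single random variable $\xi_1^{\varrho}$; the one step conditional independence then gives $\xi_1^{\varrho}\independent(\xi_1^1,\ldots,\xi_1^{\varrho-1})\mid\xi_0$, so both conditional expectations defining $Z_\varrho$ coincide. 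The problem therefore reduces to controlling the partial sum over $\varrho\in\{rc+1,\ldots,2rc\}$.

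Next I would apply Markov's inequality to $\E[(\sum_{\varrho>rc}Z_\varrho)^2]$ and split the result into diagonal and cross-term parts. The uniform bound $|\mrbfX{\varrho}{rc}{2}{\varphi}|\leq(2rc)^{-1/2}\|g\|_\infty\osc{\varphi}$ from \propref{prop:generalized martingale decomposition}\ref{eq:boundedness of X}, applied with $d=1$ and $|\calI|=rc$, yields the worst-case bound $|Z_\varrho Z_{\varrho'}|\leq\|g\|_\infty^4\osc{\varphi}^4/(rc)^2$. The diagonal part is a sum of at most $rc$ terms of size $O(1/(rc)^2)$ and is therefore $O(1/(rc))$, vanishing as $c\to\infty$. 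For a cross term $\E[Z_\varrho Z_{\varrho'}]$ with $\varrho\neq\varrho'$, the standard tower argument, using $\E[Z_\varrho\mid\mrbfF{0}{rc}{2}]=0$, reduces the term to zero whenever $Z_\varrho$ and $Z_{\varrho'}$ are conditionally independent given $\mrbfF{0}{rc}{2}$.

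The only remaining step is then to identify and count the conditionally dependent pairs, and this is where the mixed radix structure pays off. For $\varrho>rc$ with $i=\varrho-rc$, the one step conditional independence of $(\xi_2^j)_{j}$ given $(\xi_0,\xi_1)$ shows that $\E[(\mrbfX{\varrho}{rc}{2}{\varphi})^2\mid\mrbfF{\varrho-1}{rc}{2}]=\E[(\mrbfX{\varrho}{rc}{2}{\varphi})^2\mid\xi_0,\xi_1]$, and this quantity depends on $\xi_1$ only through the subfamily $\{\xi_1^j:j\in\parentSet{2}{i}{\bba}{}\}$. Since the $\xi_1^j$'s are themselves conditionally independent given $\xi_0$, disjointness $\parentSet{2}{i}{\bba}{}\cap\parentSet{2}{i'}{\bba}{}=\emptyset$ already forces $Z_\varrho\independent Z_{\varrho'}\mid\mrbfF{0}{rc}{2}$. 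By \lemmaref{lem:nonzero elements mixed} the set $\parentSet{2}{i}{\bba}{}$ depends only on the residue $(i-1)\bmod c$, which partitions $[rc]$ into $c$ classes of size $r$; the number of ordered ``bad'' pairs is therefore $cr(r-1)$, and the total cross-term contribution is bounded by $cr(r-1)\cdot\|g\|_\infty^4\osc{\varphi}^4/(rc)^2=(r-1)\|g\|_\infty^4\osc{\varphi}^4/(rc)\to 0$. I anticipate no serious obstacle: the delicate combinatorics that dominated the fixed radix argument collapse here because only a single resampling step (the second one) contributes non-trivial cross terms, and the overlap structure of $\parentSet{2}{\cdot}{\bba}{}$ is almost trivially explicit.
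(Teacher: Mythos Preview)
Your argument is correct and, in fact, tidier than the paper's. The paper does not make your opening observation that $Z_\varrho=0$ for all $\varrho\in[rc]$; instead it keeps all $2rc(2rc-1)$ cross pairs in play and appeals to a separate counting lemma (\lemmaref{lem:indep pair count for mixed radix}) that bounds from below the number of conditionally independent pairs $(\xi^{i_1}_{k_1},\xi^{i_2}_{k_2})$, across all four combinations $k_1,k_2\in\{1,2\}$, arriving at at most $3r^2c-rc$ ``bad'' pairs. Your route bypasses three of those four cases outright, and for the remaining stage-2/stage-2 case you argue directly that $Z_\varrho$ is a $\sigma(\xi_0,\{\xi_1^j:j\in\parentSet{2}{i}{\bba}{}\})$-measurable function, so disjointness of the parent sets $\parentSet{2}{\cdot}{\bba}{}$ (which you read off from \lemmaref{lem:nonzero elements mixed}) immediately yields conditional independence of the $Z$'s themselves. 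This gives the sharper bad-pair count $cr(r-1)$ versus the paper's $3r^2c-rc$; both are $O(c)$ and both suffice. What your approach buys is a cleaner link between the combinatorics and the probabilistic conclusion---you never have to pass from independence of $\xi$'s to independence of $Z$'s, which in the paper's version is left somewhat implicit. The paper's approach, on the other hand, parallels the fixed-radix proof more closely and reuses the same lemma template, which has some expository value.
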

\begin{proof}[Proof of \propref{prop:var_vanish_mixed_radix}]
Recall the definition of $Z^{(rc,2)}_\varrho$ in \eqnref{eq:def Z}. By Markov's inequality we have the same decomposition \eqnref{eq:fixed_radix_P_of_Z} as in the case of the radix-$r$ algorithm. By \eqnref{eq:boundedness of X}, $\big|\mrbfX{\varrho}{rc}{2}{\varphi}\big|\leq (2rc)^{-1/2}\infnorm{g}\osc{\varphi}$, hence
for any $\varrho,\varrho'\in[2rc]$,
\begin{equation}
\left|Z^{(rc,2)}_\varrho Z^{(rc,2)}_{\varrho'}\right| \leq \frac{1}{(rc)^2}\infnorm{g}^4 \osc{\varphi}^4,\label{eq:Z_times_Z_bounded mixed}
\end{equation}
and the first term on the r.h.s.~of \eqref{eq:fixed_radix_P_of_Z} converges to zero as $c\to\infty$. It remains to establish the convergence of the second term in a manner similar to that in the proof of \propref{prop:var_vanish_fixed_radix}.

There are altogether $2rc(2rc - 1)$ pairs $\big(Z^{(rc,2)}_{\varrho},Z^{(rc,2)}_{\varrho'}\big)$ with $\varrho\neq \varrho'$, and thus by \lemmaref{lem:indep pair count for mixed radix}, there are at most
\begin{equation*}
a_c = 2rc(2rc - 1) - rc(rc-1) - 3rc(rc-r) = 3r^2c - rc
\end{equation*}
pairs which are not conditionally independent given $\mrbfF{0}{rc}{2}$. Therefore it is enough to apply \eqnref{eq:Z_times_Z_bounded mixed}, and check that
$\lim_{c\rightarrow 0} a_c/(rc)^2= 0$, which is trivial.
\end{proof}

\begin{lemma}\label{lem:indep pair count for mixed radix}
Fix $r\geq 2$, $c\geq 1$ and $\bba = \mradixMatrices{r}{c}$. Then 
\begin{equation*}
\card{\ipair{\bba}} \geq rc(rc-1) + 3rc(rc-r),
\end{equation*}
where $\ipair{\bba} \defeq \left\{(\xi^{i_1}_{k_1},\xi^{i_2}_{k_2})  : k_1,k_2 \in \{1,2\},~i_1,i_2\in[rc],~\xi^{i_1}_{k_1}\independent \xi^{i_2}_{k_2}\,\big|\, \mrbfF{0}{rc}{2}\right\}$.
%\begin{equation}
%\ipair{\bba} \defeq \left\{(\xi^{i_1}_{k_1},\xi^{i_2}_{k_2}) \in \vertexset{\bba}{}^2 : k_1,k_2 \in [2],~i_1,i_2\in[rc],~\left.\xi^{i_1}_{k_1}\independent \xi^{i_2}_{k_2}\mids \mrbfF{0}{rc}{2}\right.\right\}
%\end{equation}
\end{lemma}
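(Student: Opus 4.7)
The plan is to partition $\ipair{\bba}$ by the levels $k_1, k_2 \in \{1,2\}$ of the two vertices in each pair, count the conditionally independent pairs in each of the three resulting cases (level-$1\times$level-$1$, level-$2\times$level-$2$, and the two mixed orderings), and sum. The three contributions will turn out to be $rc(rc-1)$, $rc(rc-r)$, and $2rc(rc-r)$ respectively, summing to the claimed bound.

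For the level-$1\times$level-$1$ case, the one-step conditional independence built into Algorithm \ref{alg:augmented_resampling} immediately gives $\xi^{i_1}_1 \independent \xi^{i_2}_1 \mid \sigma(\xi_0)$ whenever $i_1 \neq i_2$, yielding $rc(rc-1)$ ordered pairs. For the level-$2\times$level-$2$ case I will apply \lemmaref{lem:cond indep propagation} with $k = k' = 2$ and $\calG = \sigma(\xi_0)$. By \lemmaref{lem:nonzero elements mixed}, $\parentSet{2}{i}{\bba}{} = \{((i-1)\bmod c) + (q-1)c + 1 : q \in [r]\}$, so these parent sets are disjoint precisely when $(i_1-1)\bmod c \neq (i_2-1)\bmod c$. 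Disjointness supplies the first hypothesis of \lemmaref{lem:cond indep propagation} directly, while the second follows because disjointness forces $u\neq v$ for all $(u,v) \in \parentSet{2}{i_1}{\bba}{} \times \parentSet{2}{i_2}{\bba}{}$ and hence $\xi^u_1 \independent \xi^v_1 \mid \sigma(\xi_0)$ by one-step conditional independence. For each $i_1$ there are exactly $r$ indices of $[rc]$ sharing its residue modulo $c$, leaving $rc - r$ admissible values of $i_2$ and contributing $rc(rc-r)$ pairs.

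The mixed case is the main obstacle: \lemmaref{lem:cond indep propagation} does not apply cleanly here, because $\parentSet{1}{i_1}{\bba}{}$ indexes level-$0$ vertices while $\parentSet{2}{i_2}{\bba}{}$ indexes level-$1$ vertices, and as purely numerical subsets of $[rc]$ they always intersect (the latter meets every block of $c$ consecutive integers). I will argue directly. Fix $i_1, i_2$ with $i_1 \notin \parentSet{2}{i_2}{\bba}{}$. Writing $\xi^i_k = \xi^{I^i_k}_{k-1}$ as in \eqref{eq:rigorus formulation of xi}, by \lemmaref{lem:facts_about_Vs}\ref{it:measurability of V} all the weights, and hence the conditional laws of the selection variables $I^{\cdot}_{\cdot}$, are $\sigma(\xi_0)$-measurable, so given $\sigma(\xi_0)$ the family $\{I^j_k : j \in [rc],\, k \in \{1,2\}\}$ consists of mutually independent draws. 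Expanding $\P(\xi^{i_1}_1 \in S_1,\,\xi^{i_2}_2 \in S_2 \mid \sigma(\xi_0))$ by summing over $I^{i_1}_1$, $I^{i_2}_2$ and $\{I^\ell_1 : \ell \in \parentSet{2}{i_2}{\bba}{}\}$ in the same manner as the proof of \lemmaref{lem:cond indep propagation}, the assumption $i_1 \notin \parentSet{2}{i_2}{\bba}{}$ ensures that $I^{i_1}_1$ does not appear in the second tuple, so the joint factorizes and $\xi^{i_1}_1 \independent \xi^{i_2}_2 \mid \sigma(\xi_0)$. For each $i_2$ there are $|\parentSet{2}{i_2}{\bba}{}| = r$ forbidden values of $i_1$, leaving $rc - r$ good choices, and the same count holds for the swapped ordering, contributing $2rc(rc-r)$ pairs. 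Summing the three contributions gives $rc(rc-1) + rc(rc-r) + 2rc(rc-r) = rc(rc-1) + 3rc(rc-r)$, as required.
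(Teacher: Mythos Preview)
Your proof is correct and follows essentially the same approach as the paper's own proof: the paper likewise splits into the four disjoint subsets corresponding to the level combinations $(1,1)$, $(1,2)$, $(2,1)$, $(2,2)$, uses one-step conditional independence for the first, a direct law-of-total-probability calculation (via the selection indices) for the two mixed cases, and \lemmaref{lem:cond indep propagation} for the last, obtaining the same counts $rc(rc-1)$ and $3\cdot rc(rc-r)$. Your observation that \lemmaref{lem:cond indep propagation} cannot be applied off-the-shelf to the mixed-level case, and your direct argument there, mirror exactly what the paper does.
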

\begin{proof}
By the one step conditional independence 
\begin{equation*}
A \defeq \left\{(\xi^{i_1}_{k_1},\xi^{i_2}_{k_2})  : k_1=k_2=1,~i_1,i_2\in[rc],~i_1 \neq i_2\right\} \subset \ipair{\bba},
\end{equation*}
and readily $\card{A} = rc(rc-1)$. For the set
\begin{equation*}
B \defeq \left\{(\xi^{i}_{1},\xi^{j}_{2})  : i,j \in [rc],~i \notin \parentSet{2}{j}{\bba}{}\right\},
\end{equation*}
we also have $B \subset \ipair{\bba}$, since by the one step conditional independence, for all $i,j\in[rc]$ such that $i\notin \parentSet{2}{j}{\bba}{}$ and for all $S_{1}, S_{2} \in \sss$ we have by \eqnref{eq:rigorus formulation of xi} 
\begin{align*}
\P\big(\xi^i_1 \in S_1,~ \xi^j_2 \in S_2\big| \xi_0\big) 
%= \sum_{\ell \in \parentSet{2}{j}{3}{4}} \P\left(\xi^i_1 \in S_1 \wedge I^{j}_k = \ell\wedge \xi^\ell_1 \in S_2\mids \xi_0\right)\\
&= \sum_{\ell \in \parentSet{2}{j}{\bba}{4}} \P\left(\xi^i_1 \in S_1\mids \xi_0\right)\P\big(I^{j}_2 = \ell,~ \xi^\ell_1 \in S_2\,\big|\, \xi_0\big)\\
&= \P\left(\xi^i_1 \in S_1\mids \xi_0\right)\P\big(\xi^j_2 \in S_2\big| \xi_0\big).
\end{align*}
Also, because by \eqnref{eq:essential cardinalities mixed}, $\big|\parentSet{2}{j}{\bba}{}\big| = r$, one has $\card{B} = rc(rc-r)$. Similarly we have
\begin{equation*}
C \defeq \left\{(\xi^{i}_{2},\xi^{j}_{1}) : i,j \in [rc],~j \notin 
\parentSet{2}{i}{\bba}{}\right\} \subset \ipair{\bba},
\end{equation*}
and $\card{C} = rc(rc-r)$. Moreover, by \lemmaref{lem:cond indep propagation} and the one step conditional independence, we also have
\begin{equation*}
D \defeq \left\{(\xi^{i}_{2},\xi^{j}_{2}) : \parentSet{2}{i}{\bba}{} \cap \parentSet{2}{j}{\bba}{} = \emptyset \right\} \subset \ipair{\bba},
\end{equation*}
and by \propref{prop:mixed radix satisfies assumptions} one can check similarly as in the proof of \lemmaref{lem:general properties of path sets}\ref{it:equal prime parent sets} that $\parentSet{2}{i}{\bba}{} \cap \parentSet{2}{j}{\bba}{} = \emptyset$ if and only if $j \notin \parentSet{2}{i}{\bba}{}$, and hence $\card{D} = rc(rc-r)$. Finally the claim follows by observing that $A \cap B \cap C \cap D = \emptyset$, and hence $\card{\ipair{\bba}} = \card{A}+\card{B}+\card{C}+\card{D}$.
\end{proof}

%\subsubsection{Proof of the CLT}
%\input{fixed_radix_butterfly_CLT.tex}

%!TEX root = suppmaterial.tex
% !TEX root = suppmaterial.tex
% --------------------------------------------------------------------
% --------------------------------------------------------------------
%  SUBSAMPLE MOMENT CONVERGENCE
% --------------------------------------------------------------------
% --------------------------------------------------------------------
\section{Proofs for \secref{sec:convergence of particle filters}}
\label{sec:subsample convergence supp}

%In this section we prove the convergence in mean of order $p$ for a particle filter deploying augmented resampling not only for the entire sample population of size $N$ but certain sub-populations. Although the convergence of sub-populations may be of interest as such, the main motivation for the following analysis is that sub-population convergence turns out to play in important role in our chosen strategy of proving the CLT for the augmented resmpling particle filter.

Our next objective is to prove \propref{prop:subsample absolute moment convergence}. The first step is a generalization of \eqnref{eq:aug_res_L_p_bound}.
% --------------------------------------------------------------------
%  SINGLE STEP CONVERGENCE
% --------------------------------------------------------------------
%\begin{lemma}\label{lem:absolute moment bound}
%For all $\varphi\in\boundMeas{\ss}$ and $p>1$ there exists $b_p \in \real$, depending only on $p$, such that if $\big(\bba^{(N,m)},\calI^{(N,m,\ka)})$ satisfies \assref{ass:partition and cond indep} for some $N,m\geq 1$ and $\ka\in[m]$, and $\np{m}{\ka}\defeq \big|\calI^{(N,m,\ka)}\big|$, then for all $J \in \partitionMap{\calI^{(N,m,\ka)}}$
%\begin{align*}
%&\E\Bigg[\Bigg|\Bigg(\frac{1}{N}\sum_{i}g(\xi_{\mathrm{in}}^i)\Bigg)\Bigg(\frac{1}{\np{m}{\ka}}\sum_{i=1}^{\np{m}{\ka}}\varphi(\xi_{\mathrm{out}}^{J(i)})\Bigg)-\frac{1}{N}\sum_{i}g(\xi_{\mathrm{in}}^i)\varphi(\xi_{\mathrm{in}}^i)\Bigg|^p\Bigg]^{\frac{1}{p}} \nonumber \\
%&\quad\leq \scale{m}{\ka}b_p\norm{g}_\infty\osc{\varphi}.
%\end{align*}
%\end{lemma}
\begin{lemma}\label{lem:absolute moment bound}
For all $\varphi\in\boundMeas{\ss}$ and $p>1$ there exists $b_p \in \real$, depending only on $p$, such that if $\big(\bba^{(N,m)},\calI,\ka)$ satisfies \assref{ass:partition and cond indep} for some $N,m\geq 1$ and $\ka\in[m]$, then for all $J \in \partitionMap{\calI}$
\begin{align*}
&\E\Bigg[\Bigg|\Bigg(\frac{1}{N}\sum_{i}g(\xi_{\mathrm{in}}^i)\Bigg)\Bigg(\frac{1}{\card{\calI}}\sum_{i=1}^{\card{\calI}}\varphi(\xi_{\mathrm{out}}^{J(i)})\Bigg)-\frac{1}{N}\sum_{i}g(\xi_{\mathrm{in}}^i)\varphi(\xi_{\mathrm{in}}^i)\Bigg|^p\Bigg]^{\frac{1}{p}} \nonumber \\
&\quad\leq \scale{m}{\ka}b_p\norm{g}_\infty\osc{\varphi}.
\end{align*}
\end{lemma}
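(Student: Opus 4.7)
The plan is to reduce the statement to an $L^p$ bound for a martingale whose increments are uniformly bounded, and then apply the Burkholder--Davis--Gundy (BDG) inequality. Since $(\bba^{(N,m)},\calI,\ka)$ satisfies \assref{ass:partition and cond indep}, \propref{prop:generalized martingale decomposition} applies. In particular, parts \ref{eq:measurability}--\ref{eq:zero expectations} show that $\big(X_\varrho^{(N,m)}\big)_{\varrho\in[(m-\ka)N+\card{\calI}]}$ is a martingale difference sequence with respect to $\big(\calF_\varrho^{(N,m)}\big)$, and the identity in \eqref{it:another martingale representation} yields
\[
\Bigg(\frac{1}{N}\sum_i g(\xi_{\mathrm{in}}^i)\Bigg)\Bigg(\frac{1}{\card{\calI}}\sum_{i=1}^{\card{\calI}}\varphi(\xi_{\mathrm{out}}^{J(i)})\Bigg) - \frac{1}{N}\sum_i g(\xi_{\mathrm{in}}^i)\varphi(\xi_{\mathrm{in}}^i) = \frac{1}{S_{N,m,\ka}}\sum_{\varrho=1}^{(m-\ka)N+\card{\calI}} X_\varrho^{(N,m)}.
\]

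Next I would apply the BDG inequality, which yields a constant $b_p$ (depending only on $p$) such that
\[
\E\Bigg[\Bigg|\sum_{\varrho} X_\varrho^{(N,m)}\Bigg|^{p}\Bigg]\leq b_p\,\E\Bigg[\Bigg(\sum_\varrho \big(X_\varrho^{(N,m)}\big)^2\Bigg)^{p/2}\Bigg].
\]
The key observation is that the scaling factor $S_{N,m,\ka}$ defined in \eqref{eq:def scale factor} is chosen precisely so that the deterministic bound \eqref{eq:boundedness of X} makes the quadratic variation bounded by the clean constant $\|g\|_\infty^2 \osc{\varphi}^2$. Indeed, splitting the sum according to the two cases in \eqref{eq:boundedness of X} (namely $\varrho\leq (m-\ka)N$ with $(m-\ka)N$ terms of size $S_{N,m,\ka}\|g\|_\infty\osc{\varphi}/N$, and $\varrho>(m-\ka)N$ with $\card{\calI}$ terms of size $S_{N,m,\ka}\|g\|_\infty\osc{\varphi}/\card{\calI}$),
\[
\sum_\varrho \big(X_\varrho^{(N,m)}\big)^2 \leq S_{N,m,\ka}^2\bigg(\frac{m-\ka}{N} + \frac{1}{\card{\calI}}\bigg)\|g\|_\infty^2\osc{\varphi}^2 = \|g\|_\infty^2\osc{\varphi}^2,
\]
by the very definition of $S_{N,m,\ka}$.

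Combining the last two displays yields $\E\big[|\sum_\varrho X_\varrho^{(N,m)}|^p\big]^{1/p} \leq b_p^{1/p}\|g\|_\infty\osc{\varphi}$, and dividing through by $S_{N,m,\ka}$ produces the claimed bound with $S_{N,m,\ka}^{-1} = \scale{m}{\ka}$ on the right-hand side (after relabelling $b_p^{1/p}$ as $b_p$). No step is substantially obstructive here: the work has already been done in \propref{prop:generalized martingale decomposition}, and the only substantive observation is that the scaling $S_{N,m,\ka}$ matches exactly the deterministic quadratic-variation bound, so the BDG constant is the only $p$-dependent quantity that survives.
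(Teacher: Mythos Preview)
Your proposal is correct and takes essentially the same approach as the paper. The paper's proof simply says the result follows from \propref{prop:generalized martingale decomposition} ``similarly as in the proof of \propref{prop:intro_to_aug_resampling}'', and that proof is precisely the BDG argument you outline: represent the error as $S_{N,m,\ka}^{-1}\sum_\varrho X_\varrho^{(N,m)}$, apply BDG, and use the pointwise bounds \eqref{eq:boundedness of X} to cap the quadratic variation by $\|g\|_\infty^2\osc{\varphi}^2$.
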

\begin{proof}
Follows from \propref{prop:generalized martingale decomposition} similarly as in the proof of \propref{prop:intro_to_aug_resampling}.
\end{proof}
To prove \propref{prop:subsample absolute moment convergence}, we first establish a bound for the mean of order $p$ for the initialization of the filter. We then proceed to establish similar bounds inductively for the subsequent resampling and mutation steps. This strategy is embodied in the following three lemmata.
% --------------------------------------------------------------
\begin{lemma}[Initialization]\label{lem:initialization moment bound}
Fix $N\geq 1$. For all $\varphi \in \boundMeas{\ss}$ and $p>1$, there exists $b_0(p) \in \real$, depending only on $p$, such that
\begin{equation*}
\E\left[\abs{\frac{1}{N}\sum_{i} \varphi(\zeta^{i}_0) - \pi_0(\varphi)}^p\right]^{\frac{1}{p}} \leq
b_0(p)\sqrt{\frac{1}{N}}\osc{\varphi}.
\end{equation*}
\end{lemma}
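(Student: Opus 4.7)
The plan is to exploit that the random variables $(\zeta_0^i)_{i\in[N]}$ are, by construction of Algorithm \ref{alg:pf}, i.i.d.\ draws from $\pi_0$. Setting $X_i \defeq \varphi(\zeta_0^i) - \pi_0(\varphi)$, the $X_i$ are i.i.d., centred, and satisfy the deterministic bound $|X_i|\leq \osc{\varphi}$ (since centring by any constant can only decrease the sup norm, and $\pi_0(\varphi)$ lies in the closed interval $[\inf\varphi,\sup\varphi]$).

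First I would note that the partial sums $S_k \defeq \sum_{i=1}^k X_i$, $k=0,\ldots,N$, form a martingale with respect to the natural filtration $\sigma(\zeta_0^1,\ldots,\zeta_0^k)$, with bounded increments. This places the problem in exactly the same framework already invoked in the proof of \propref{prop:intro_to_aug_resampling}, so the Burkholder--Davis--Gundy inequality applies and yields a constant $b_p'$, depending only on $p$, such that
\begin{equation*}
\E\!\left[|S_N|^{p}\right]\;\leq\; b_p'\,\E\!\left[\Big(\textstyle\sum_{i=1}^N X_i^2\Big)^{p/2}\right]\;\leq\; b_p'\, N^{p/2}\,\osc{\varphi}^{p}.
\end{equation*}
Dividing by $N^p$ and taking $p$-th roots gives the claim with $b_0(p) = (b_p')^{1/p}$.

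This is essentially routine; there is no real obstacle since the initial particles are genuinely i.i.d.\ and no augmented resampling structure is involved at this step. The only mild subtlety is the choice of the constant in the deterministic bound on $|X_i|$: using $\osc{\varphi}$ rather than $2\|\varphi\|_\infty$ requires the elementary observation that $|\varphi(x) - \pi_0(\varphi)| = |\int (\varphi(x) - \varphi(y))\,\pi_0(dy)| \leq \osc{\varphi}$. With this in hand, the BDG step above goes through with constants that depend only on $p$, as required.
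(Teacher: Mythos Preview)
Your proposal is correct and takes essentially the same approach as the paper, which simply states that the claim follows from Burkholder's inequality since $(\zeta_0^i)_{i\in[N]}\iidsim\pi_0$. Your write-up just fills in the details the paper omits.
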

\begin{proof}
Because $\left\{\zeta_0^{i}\right\}_{i\in[N]}\iidsim\pi_0$ the claim follows straightforwardly by Burkholder's inequality.
\end{proof}
% --------------------------------------------------------------
%\begin{lemma}[Resampling]\label{lem:resampling moment bound}
%Let $n\geq 0$ and $p>1$ be fixed. If the pair $(\bba^{(N,m)},\calI^{(N,m,\ka)})$, satisfies \assref{ass:partition and cond indep} for some $N,m\geq 1$ and $\ka\in[m]$, $\np{m}{\ka}\defeq \big|\calI^{(N,m,\ka)}\big|$ and for all
%%Suppose that $n\geq 0$, $p>1$ and that $\bba^{(N,m)}$ and $\calI^{(N,m,k)} =(\samplePartitionElement{N}{m}{k}{n})_{n\in[\np{m}{k}]}$ satisfy \assref{ass:partition and cond indep} for some $N,m\geq 1$ and $k\in[m]$, and for all
%$\varphi \in \boundMeas{\ss}$ there exists $b_n(\varphi,p) \in \real$ such that
%\begin{equation}\label{eq:absolute moment induction assumption}
%\E\left[\abs{\frac{1}{N}\sum_{i} \varphi(\zeta_n^i) - \pi_n(\varphi)}^p\right]^{\frac{1}{p}} \leq b_n(\varphi,p) \sqrt{\frac{m}{N}},
%\end{equation}
%then for all $\varphi \in \boundMeas{\ss}$ there exists $\hat{b}_n(\varphi,p)\in\real$
%such that for all $J \in \partitionMap{\calI^{(N,m,\ka)}}$
%\begin{equation*}
%\E\left[\abs{\frac{1}{\np{m}{\ka}}\sum_{i=1}^{\np{m}{\ka}}\varphi(\hat{\zeta}^{J(i)}_n) - \hat{\pi}_n(\varphi)}^p\right]^{\frac{1}{p}} \leq \hat{b}_n(\varphi,p)\scale{m}{\ka}.%\left(\frac{m-k}{N}+\frac{1}{\np{m}{k}}\right)^{\frac{1}{2}}.
%\end{equation*}
%\end{lemma}
\begin{lemma}[Resampling]\label{lem:resampling moment bound}
Let $n\geq 0$ and $p>1$ be fixed. If the triple $(\bba^{(N,m)},\calI,\ka)$, satisfies \assref{ass:partition and cond indep} for some $N,m\geq 1$ and $\ka\in[m]$  and for all
%Suppose that $n\geq 0$, $p>1$ and that $\bba^{(N,m)}$ and $\calI^{(N,m,k)} =(\samplePartitionElement{N}{m}{k}{n})_{n\in[\np{m}{k}]}$ satisfy \assref{ass:partition and cond indep} for some $N,m\geq 1$ and $k\in[m]$, and for all
$\varphi \in \boundMeas{\ss}$ there exists $b_n(\varphi,p) \in \real$ such that
\begin{equation}\label{eq:absolute moment induction assumption}
\E\left[\abs{\frac{1}{N}\sum_{i} \varphi(\zeta_n^i) - \pi_n(\varphi)}^p\right]^{\frac{1}{p}} \leq b_n(\varphi,p) \sqrt{\frac{m}{N}},
\end{equation}
then for all $\varphi \in \boundMeas{\ss}$ there exists $\hat{b}_n(\varphi,p)\in\real$
such that for all $J \in \partitionMap{\calI}$
\begin{equation*}
\E\left[\abs{\frac{1}{\card{\calI}}\sum_{i=1}^{\card{\calI}}\varphi(\hat{\zeta}^{J(i)}_n) - \hat{\pi}_n(\varphi)}^p\right]^{\frac{1}{p}} \leq \hat{b}_n(\varphi,p)\scale{m}{\ka}.%\left(\frac{m-k}{N}+\frac{1}{\np{m}{k}}\right)^{\frac{1}{2}}.
\end{equation*}
\end{lemma}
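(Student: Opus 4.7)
The plan is to combine \lemmaref{lem:absolute moment bound} (which controls the error introduced by augmented resampling) with the induction hypothesis \eqref{eq:absolute moment induction assumption} (which controls the error in the input particle system), via a standard linearization that reduces normalization by the random weight sum to normalization by the deterministic quantity $\pi_n(g_n)$. Writing $\pi_n^N(\varphi) := N^{-1}\sum_i \varphi(\zeta_n^i)$, the starting point is the algebraic decomposition
\begin{align*}
\pi_n(g_n)\Bigg(\frac{1}{\card{\calI}}\sum_{i=1}^{\card{\calI}}\varphi(\hat{\zeta}_n^{J(i)}) - \hat{\pi}_n(\varphi)\Bigg)
&= \bigl(\pi_n(g_n) - \pi_n^N(g_n)\bigr)\Bigg(\frac{1}{\card{\calI}}\sum_{i=1}^{\card{\calI}}\varphi(\hat{\zeta}_n^{J(i)}) - \hat{\pi}_n(\varphi)\Bigg) \\
&\quad + \Bigg(\pi_n^N(g_n)\frac{1}{\card{\calI}}\sum_{i=1}^{\card{\calI}}\varphi(\hat{\zeta}_n^{J(i)}) - \pi_n^N(g_n\varphi)\Bigg) \\
&\quad + \pi_n^N\bigl(g_n(\varphi - \hat{\pi}_n(\varphi))\bigr),
\end{align*}
where in the last term the identity $\pi_n(g_n\varphi) = \hat{\pi}_n(\varphi)\pi_n(g_n)$ has been used so that $\pi_n(g_n(\varphi - \hat{\pi}_n(\varphi))) = 0$.

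I would then bound each of the three terms in $L^p$. The second term is precisely the quantity estimated by \lemmaref{lem:absolute moment bound} (applied to the input $(\zeta_n^i)_{i\in[N]}$, function $g_n$ and the triple $(\bba^{(N,m)},\calI,d)$, which satisfies \assref{ass:partition and cond indep} by assumption), yielding an $L^p$ bound of $b_p\norm{g_n}_\infty\osc{\varphi}\scale{m}{d}$. The third term is controlled by applying the induction hypothesis \eqref{eq:absolute moment induction assumption} to the bounded function $g_n(\varphi - \hat{\pi}_n(\varphi)) \in \boundMeas{\ss}$, giving an $L^p$ bound of $b_n\bigl(g_n(\varphi - \hat{\pi}_n(\varphi)),p\bigr)\sqrt{m/N}$. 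The first term is a product of two factors; the second factor is bounded almost surely by $\osc{\varphi}$, and the first factor satisfies \eqref{eq:absolute moment induction assumption} with test function $g_n\in\boundMeas{\ss}$, so H\"older's inequality (or just $L^\infty\cdot L^p$) gives an $L^p$ bound of $\osc{\varphi}\,b_n(g_n,p)\sqrt{m/N}$.

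The one point that needs to be checked to conclude is that $\sqrt{m/N}$ can be absorbed into $\scale{m}{d}$. This follows from \assref{ass:partition and cond indep}\ref{it:is partition}, which gives $\card{\calI}\leq N/d$, hence
\begin{equation*}
\frac{m-d}{N} + \frac{1}{\card{\calI}} \geq \frac{m-d}{N} + \frac{d}{N} = \frac{m}{N}.
\end{equation*}
Combining the three $L^p$ bounds by Minkowski's inequality and dividing through by the strictly positive constant $\pi_n(g_n)$ (which is positive under \assref{ass:g_n_bounded_and_positive}) yields the claim with
\begin{equation*}
\hat{b}_n(\varphi,p) := \frac{1}{\pi_n(g_n)}\Bigl(b_p\norm{g_n}_\infty\osc{\varphi} + \osc{\varphi}\,b_n(g_n,p) + b_n\bigl(g_n(\varphi - \hat{\pi}_n(\varphi)),p\bigr)\Bigr),
\end{equation*}
which depends only on $n$, $p$ and $\varphi$ as required. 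I do not anticipate a substantive obstacle; the only mildly delicate point is arranging the decomposition so that normalization is by a deterministic constant rather than by the random $\pi_n^N(g_n)$, which would force additional care about integrability of $1/\pi_n^N(g_n)$.
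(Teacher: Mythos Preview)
Your proposal is correct and follows essentially the same strategy as the paper: a three-term decomposition that replaces the random normalizer $\pi_n^N(g_n)$ by the deterministic $\pi_n(g_n)$, then an application of \lemmaref{lem:absolute moment bound} for the resampling term and the induction hypothesis \eqref{eq:absolute moment induction assumption} for the remaining terms, together with the key inequality $m/N \leq (m-d)/N + 1/\card{\calI}$ from \assref{ass:partition and cond indep}\ref{it:is partition}.

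The only difference is in the algebraic arrangement of the decomposition. The paper writes the error as $A+B+C$ where $B = \pi_n^N(g_n\varphi)/\pi_n^N(g_n) - \hat\pi_n(\varphi)$ still carries the random denominator and is then further split (Crisan--Doucet style) into two pieces, each controlled by \eqref{eq:absolute moment induction assumption}; the paper's $C$ uses the \emph{random} centering $\cvarphi_N$. Your decomposition centres by the deterministic $\hat\pi_n(\varphi)$ throughout and thereby avoids that secondary split: your third term $\pi_n^N(g_n(\varphi-\hat\pi_n(\varphi)))$ is handled by a single invocation of the hypothesis. The net effect is that you use two applications of \eqref{eq:absolute moment induction assumption} where the paper uses three, which is marginally cleaner but leads to the same bound and the same dependence of $\hat b_n(\varphi,p)$ on $n,p,\varphi$ only.
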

% --------------------------------------------------------------
\begin{proof}
For brevity of notations, let us write $g_n^i\defeq g_n(\zeta^i_n)$ and $\varphi^i_n\defeq \varphi(\zeta^i_n)$. Define
\begin{equation*}
\overline\varphi_{N}(x):=\varphi(x)-\frac{\sum_{i}g_n^i\varphi^i_n}{\sum_{i} g_n^i},
\end{equation*}
and
\begin{align*}
A &\defeq \frac{1}{\pi_n(g_n)}\Bigg(\Bigg(\frac{1}{N}\sum_{i}g_n^i\Bigg)\Bigg(\frac{1}{\card{\calI}}\sum_{i=1}^{\card{\calI}}\varphi(\hat{\zeta}^{J(i)}_n)\Bigg) - \frac{1}{N}\sum_{i}g_n^i\varphi^i_n \Bigg),\\
B &\defeq \frac{\sum_{i}g_n^i\varphi^i_n}{\sum_{i}g_n^i} - \frac{\pi_n(g_n\varphi)}{\pi_n(g_n)},\\
C &\defeq \frac{1}{\pi_n(g_n)}\Bigg(\frac{1}{\card{\calI}}\sum_{i=1}^{\card{\calI}}\cvarphi_N(\hat{\zeta}^{J(i)}_n)\Bigg)\Bigg(\pi_n(g_n) - \frac{1}{N}\sum_{i}g_n^i\Bigg),
\end{align*}
for which we have the decomposition
\begin{eqnarray}\label{eq:abc decompo}
\frac{1}{\card{\calI}}\sum_{i=1}^{\card{\calI}}\varphi(\hat{\zeta}^{J(i)}_n) - \frac{\pi_n(g_n\varphi)}{\pi_n(g_n)} &=& A + B + C.
\end{eqnarray}
By \lemmaref{lem:absolute moment bound}
\begin{equation}\label{eq:part Am}
\E\left[\abs{A}^p\right]^{\frac{1}{p}} \leq \scaleb{m}{\ka}\frac{1}{\pi_n(g_n)}b_p\norm{g_n}_\infty\osc{\varphi}.
\end{equation}
For $B$ we then have, similarly as e.g.~in \cite[proof of Lemma 4]{crisan_et_doucet02}, by Minkowski's inequality and \eqnref{eq:absolute moment induction assumption}
\begin{align}
\E\left[\abs{B}^p\right]^{\frac{1}{p}}
%&=& \E\left[\abs{\frac{\sum_{i}g^i_n\varphi^i_n}{\sum_{i}g^i_n} - \frac{\frac{1}{N}\sum_{i}g^i_n\varphi^i_n}{\pi_n(g_n)} + \frac{\frac{1}{N}\sum_{i}g^i_n\varphi^i_n}{\pi_n(g_n)} - \frac{\pi_n(g_n\varphi)}{\pi_n(g_n)}}^p\right]^{\frac{1}{p}} \nonumber \\
%&\leq&
%\E\left[\abs{\frac{\left(\pi_n(g_n) - \frac{1}{N}\sum_{i}g^i_n\right)\sum_{i}g^i_n\varphi^i_n}{\pi_n(g_n)\sum_{i}g^i_n} }^p\right]^{\frac{1}{p}} + \frac{1}{\pi_n(g_n)}\E\left[\abs{\frac{1}{N}\sum_{i}g^i_n\varphi^i_n - \pi_n(g_n\varphi)}^p\right]^{\frac{1}{p}} \nonumber \\
&\leq \frac{\norm{\varphi}_\infty}{\pi_n(g_n)}\E\left[\abs{\pi_n(g_n) - \frac{1}{N}\sum_{i}g^i_n}^p\right]^{\frac{1}{p}} \nonumber \\
&\quad +~\frac{1}{\pi_n(g_n)}\E\left[\abs{ \frac{1}{N}\sum_{i}g^i_n\varphi^i_n - \pi_n(g_n\varphi)}^p\right]^{\frac{1}{p}}\nonumber \\
&\leq \frac{1}{\pi_n(g_n)}\left(\norm{\varphi}_\infty b_n(g_n,p) + b_n(g_n\varphi,p)\right)\sqrt{\frac{m-\ka}{N} + \frac{1}{\card{\calI}}},\label{eq:part Bm}
\end{align}
where we have also used the fact that by \assref{ass:partition and cond indep}\ref{it:is partition} $N/\card{\calI}\geq \ka$ and hence ${m}/{N} \leq (m-\ka)/{N} + {1}/{\card{\calI}}$.
For $C$ we have
\begin{eqnarray}
\E\left[\abs{C}^p\right]^{\frac{1}{p}} &\leq& \frac{\osc{\varphi}}{\pi_n(g_n)}\E\left[\abs{\pi_n(g_n) - \frac{1}{N}\sum_{i}g^i_n}^p\right]^{\frac{1}{p}} \nonumber \\
&\leq&
\frac{\osc{\varphi}}{\pi_n(g_n)}b_n(g_n,p)\sqrt{\frac{m-\ka}{N} + \frac{1}{\card{\calI}}}.\label{eq:part Cm}
\end{eqnarray}
Thus by combining \eqnref{eq:abc decompo}, \eqnref{eq:part Am}, \eqnref{eq:part Bm} and \eqnref{eq:part Cm} the claim follows by Minkowski's inequality.% with
\end{proof}
% ------------------------------------------------------------------------------------------
%\begin{lemma}[Mutation]\label{lem:mutation moment bound}
%Fix $N,m\geq 1$, $n\geq 1$, $p>1$, $\ka\in[m]$ and $J \in \partitionMapNoPar{\big(\calI^{(N,m,\ka)}\big)}$, where $\calI^{(N,m,\ka)}$ is some $\np{m}{\ka}$-element partition of $[N]$.
%%=(\samplePartitionElement{N}{m}{k}{n})_{n\in[\np{m}{k}]}$
%%Suppose that $N,m\geq 1$, $n>0$, $p>1$, $k\in[m]$ and $J \in \partitionMapNoPar{\big(\calI^{(N,m,k)}\big)}$, where $\calI^{(N,m,k)}=(\samplePartitionElement{N}{m}{k}{n})_{n\in[\np{m}{k}]}$.
%If for all $\varphi \in \boundMeas{\ss}$ there exists $\hat{b}_n(\varphi,p)\in\real$, such that
%\begin{equation}
%\E\Bigg[\Bigg|\frac{1}{\np{m}{\ka}}\sum_{i=1}^{\np{m}{\ka}}\varphi(\hat{\zeta}^{J(i)}_{n-1}) - \hat{\pi}_{n-1}(\varphi)\Bigg|^p\Bigg]^{\frac{1}{p}} \leq \hat{b}_n(\varphi,p)\sqrt{\frac{m-\ka}{N}+\frac{1}{\np{m}{\ka}}},\label{eq:moment convergence after resample}
%\end{equation}
%then for all $\varphi \in \boundMeas{\ss}$ there exists $b_n(\varphi,p) \in \real$ such that
%\begin{equation*}
%\E\Bigg[\Bigg|\frac{1}{\np{m}{\ka}}\sum_{i=1}^{\np{m}{\ka}} \varphi(\zeta^{J(i)}_n) - \pi_n(\varphi)\Bigg|^p\Bigg]^{\frac{1}{p}} \leq b_n(\varphi,p)\sqrt{\frac{m-\ka}{N} + \frac{1}{\np{m}{\ka}}}.
%\end{equation*}
%\end{lemma}
\begin{lemma}[Mutation]\label{lem:mutation moment bound}
Fix $N,m\geq 1$, $n\geq 1$, $p>1$, $\ka\in[m]$ and $J \in \partitionMapNoPar{\big(\calI\big)}$, where $\calI$ is a partition of $[N]$.
%=(\samplePartitionElement{N}{m}{k}{n})_{n\in[\np{m}{k}]}$
%Suppose that $N,m\geq 1$, $n>0$, $p>1$, $k\in[m]$ and $J \in \partitionMapNoPar{\big(\calI^{(N,m,k)}\big)}$, where $\calI^{(N,m,k)}=(\samplePartitionElement{N}{m}{k}{n})_{n\in[\np{m}{k}]}$.
If for all $\varphi \in \boundMeas{\ss}$ there exists $\hat{b}_n(\varphi,p)\in\real$, such that
\begin{equation}
\E\Bigg[\Bigg|\frac{1}{\card{\calI}}\sum_{i=1}^{\card{\calI}}\varphi(\hat{\zeta}^{J(i)}_{n-1}) - \hat{\pi}_{n-1}(\varphi)\Bigg|^p\Bigg]^{\frac{1}{p}} \leq \hat{b}_n(\varphi,p)\sqrt{\frac{m-\ka}{N}+\frac{1}{\card{\calI}}},\label{eq:moment convergence after resample}
\end{equation}
then for all $\varphi \in \boundMeas{\ss}$ there exists $b_n(\varphi,p) \in \real$ such that
\begin{equation*}
\E\Bigg[\Bigg|\frac{1}{\card{\calI}}\sum_{i=1}^{\card{\calI}} \varphi(\zeta^{J(i)}_n) - \pi_n(\varphi)\Bigg|^p\Bigg]^{\frac{1}{p}} \leq b_n(\varphi,p)\sqrt{\frac{m-\ka}{N} + \frac{1}{\card{\calI}}}.
\end{equation*}
\end{lemma}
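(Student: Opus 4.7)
The plan is to decompose the error by introducing $f(\varphi)$ as an intermediate quantity, exploiting the fact that $\pi_n(\varphi)=\hat{\pi}_{n-1}(f(\varphi))$ since $\pi_n = \hat{\pi}_{n-1} f$. Specifically, I would write
\begin{align*}
\frac{1}{\card{\calI}}\sum_{i=1}^{\card{\calI}} \varphi(\zeta^{J(i)}_n) - \pi_n(\varphi) &= \underbrace{\frac{1}{\card{\calI}}\sum_{i=1}^{\card{\calI}}\bigl[\varphi(\zeta^{J(i)}_n) - f(\varphi)(\hat{\zeta}^{J(i)}_{n-1})\bigr]}_{=:A} \\
&\quad + \underbrace{\frac{1}{\card{\calI}}\sum_{i=1}^{\card{\calI}} f(\varphi)(\hat{\zeta}^{J(i)}_{n-1}) - \hat{\pi}_{n-1}(f(\varphi))}_{=:B},
\end{align*}
and bound $\E[|A|^p]^{1/p}$ and $\E[|B|^p]^{1/p}$ separately, finishing with Minkowski's inequality.

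The term $B$ is handled by directly applying the inductive hypothesis \eqref{eq:moment convergence after resample} to the bounded measurable function $f(\varphi)$ (which lies in $\boundMeas{\ss}$ since $f$ is a probability kernel and $\infnorm{f(\varphi)}\leq\infnorm{\varphi}$), yielding a bound of $\hat{b}_n(f(\varphi),p)\sqrt{(m-\ka)/N+1/\card{\calI}}$.

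For the term $A$, I would condition on $\hat{\zeta}_{n-1}=(\hat{\zeta}^i_{n-1})_{i\in[N]}$. Because in Algorithm \ref{alg:pf} the mutations $\zeta_n^i\sim f(\hat{\zeta}_{n-1}^i,\cdot)$ are performed independently across $i$, the random variables $\varphi(\zeta^{J(i)}_n) - f(\varphi)(\hat{\zeta}^{J(i)}_{n-1})$, $i=1,\ldots,\card{\calI}$, are, conditionally on $\hat{\zeta}_{n-1}$, independent, mean-zero, and uniformly bounded by $\osc{\varphi}$. The Burkholder-Davis-Gundy (or Marcinkiewicz-Zygmund) inequality then gives a deterministic conditional bound
\begin{equation*}
\E\Bigl[|A|^p\bigm|\hat{\zeta}_{n-1}\Bigr]^{1/p}\leq \tilde{b}_p\,\osc{\varphi}\sqrt{\frac{1}{\card{\calI}}},
\end{equation*}
where $\tilde{b}_p$ depends only on $p$. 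Taking unconditional expectations preserves the bound, and since $\sqrt{1/\card{\calI}}\leq\sqrt{(m-\ka)/N+1/\card{\calI}}$, this term fits within the required scaling.

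The only subtle point to keep in mind is that the inductive hypothesis \eqref{eq:moment convergence after resample} is stated for the specific partition map $J\in\partitionMap{\calI}$ appearing in the conclusion, and the handling of $A$ uses only the unconditional independence of the mutations, so no modification of $J$ is needed. Combining the two bounds via Minkowski yields the result with $b_n(\varphi,p) \defeq \hat{b}_n(f(\varphi),p) + \tilde{b}_p\,\osc{\varphi}$. I do not expect any step here to be a real obstacle; the argument is essentially standard once the $\pi_n = \hat{\pi}_{n-1}f$ identity is used to set up the right decomposition.
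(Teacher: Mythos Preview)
Your proposal is correct and essentially identical to the paper's proof: the same $A+B$ decomposition via $\pi_n(\varphi)=\hat{\pi}_{n-1}(f(\varphi))$, the same Burkholder-type bound on $A$ (the paper phrases it as a martingale w.r.t.\ the filtration $\mathcal{A}_j:=\sigma(\hat{\zeta}_{n-1},\zeta_n^{J(1)},\ldots,\zeta_n^{J(j)})$ rather than conditioning, but this is the same argument), and the same application of the hypothesis to $f(\varphi)$ for $B$, combined by Minkowski.
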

\begin{proof}
By defining
\begin{align*}
A &\defeq \frac{1}{\card{\calI}}\sum_{i=1}^{\card{\calI}}\varphi(\zeta^{J(i)}_n)-f(\varphi)(\hat{\zeta}_{n-1}^{J(i)}),\\
B &\defeq \frac{1}{\card{\calI}}\sum_{i=1}^{\card{\calI}}f(\varphi)(\hat{\zeta}_{n-1}^{J(i)}) - \hat{\pi}_{n-1}(f(\varphi)),
\end{align*}
we have the decomposition
\begin{equation}
\frac{1}{\card{\calI}}\sum_{i=1}^{\card{\calI}} \varphi(\zeta^{J(i)}_n) - \pi_n(\varphi) = A + B. \label{eq:ab decompo}
\end{equation}
With the sequence $X_j:=1/\card{\calI}\sum_{i=1}^j\varphi(\zeta^{J(i)}_n)-f(\varphi)(\hat{\zeta}_{n-1}^{J(i)})$ and $\sigma$-algebras $\mathcal{A}_j:=\sigma(\hat{\zeta}_{n-1},\zeta_n^{J(1)},\ldots,\zeta_n^{J(j)})$, the sequence $(X_j,\mathcal{A}_j)_{j\in[\card{\calI}]}$ is a martingale
%
%The sequence $(\np{m}{k})^{-1}\sum_{i=1}^n\varphi(\zeta^{J(i)}_n)-f(\varphi)(\hat{\zeta}_{n-1}^{J(i)})$ w.r.t.~$n$ is a martingale
and by Burkholder's inequality
\begin{equation*}
\E\left[\abs{A}^p\right]^{\frac{1}{p}}
\leq
b_p\osc{\varphi} \sqrt{\frac{1}{\card{\calI}}} \leq b_p\osc{\varphi} \sqrt{\frac{m-\ka}{N}+\frac{1}{\card{\calI}}}.\nonumber
\end{equation*}
For $B$ we have by \eqnref{eq:moment convergence after resample}
\begin{eqnarray*}
\E\left[\abs{B}^p\right]^{\frac{1}{p}}
\leq \hat{b}_{n}(f(\varphi),p)\sqrt{\frac{m-\ka}{N}+\frac{1}{\card{\calI}}},
\end{eqnarray*}
and the claim follows from \eqnref{eq:ab decompo} by Minkowski's inequality.%which completes the proof with $C_n(\varphi,p) = c_p^{\frac{1}{p}}\osc{\varphi} + \hat{C}_{n}(\varphi,p)$.
\end{proof}
%
%\propref{prop:subsample absolute moment convergence} then follows from the Lemmata \ref{lem:initialization moment bound}, \ref{lem:resampling moment bound}, \ref{lem:mutation moment bound}.

%\begin{proof}[Proof of \propref{prop:subsample absolute moment convergence}]
%Follows from Lemmata \ref{lem:initialization moment bound}, \ref{lem:resampling moment bound}, \ref{lem:mutation moment bound}.
%\end{proof}

%\section{Proofs for \secref{sec:fixed radix particle filter}}
%\label{sec:proofs for radix particle filters}

The proofs of Theorems \ref{thm:radix-r} and \ref{thm:mixed radix-r} are composed of a number of lemmata. We start with the initialization of the particle filter, which is common to both Theorems. Results specific to each of the two butterfly resampling schemes then follow in Sections \ref{sec:proofs for fixed particle filters} and \ref{sec:proofs for mixed particle filters}

%In this section we give the proofs of the Lemmata \ref{lem:pf_radix-r_res_0}, \ref{lem:pf_radix-r_mut} and \ref{lem:pf_radix-r_res} that together with \lemmaref{lem:pf_init} complete the proof of \theref{thm:mixed radix-r}. The proofs follow the lines of the proofs in \cite{smc:the:C04}.
%By the definition in \algrefmy{alg:pf},
% ---------------------------------------------------------------------------
%  INTIAL ITERATION: MUTATION
% ---------------------------------------------------------------------------
\begin{lemma}\label{lem:pf_init}
For all $\varphi \in \boundMeas{\ss}$,
\begin{eqnarray}
\frac{1}{N}\sum_{i}\varphi(\zeta_{0}^{i})-\pi_{0}(\varphi) &\almostsurely{N\to\infty}{\P}&0,\label{lem:lemma 1 slln}\\
\sqrt{N}\left(\frac{1}{N}\sum_{i}\varphi(\zeta_{0}^{i})-\pi_{0}(\varphi)\right)&\indist{N\to\infty}& \normal{0}{\sigma^2_0(\varphi)}.\label{eq:initial clt}
\end{eqnarray}
\end{lemma}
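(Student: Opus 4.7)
The plan is to observe that under Algorithm \ref{alg:pf}, the random variables $(\zeta_0^i)_{i\in[N]}$ are i.i.d.\ draws from $\pi_0$, so both statements reduce to classical limit theorems applied to the bounded i.i.d.\ sequence $(\varphi(\zeta_0^i))_{i\in[N]}$.

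For the strong law \eqref{lem:lemma 1 slln}, I would simply invoke Kolmogorov's SLLN: since $\varphi \in \boundMeas{\ss}$, the summands $\varphi(\zeta_0^i)$ are i.i.d.\ and integrable with mean $\pi_0(\varphi)$, giving a.s.\ convergence. Alternatively, and more in keeping with the style of the rest of the paper, one can combine \lemmaref{lem:initialization moment bound} applied with any $p>2$ (e.g.\ $p=4$) with the Borel--Cantelli lemma: the bound
\begin{equation*}
\E\left[\left|\tfrac{1}{N}\sum_i \varphi(\zeta_0^i)-\pi_0(\varphi)\right|^p\right] \leq b_0(p)^p\,\osc{\varphi}^p\,N^{-p/2}
\end{equation*}
is summable in $N$ once $p>2$, delivering a.s.\ convergence along all $N\in\N$.

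For the CLT \eqref{eq:initial clt}, the centered summands $\varphi(\zeta_0^i)-\pi_0(\varphi)$ are i.i.d., bounded (hence in $L^2$), with zero mean and variance $\pi_0((\varphi-\pi_0(\varphi))^2)=\sigma_0^2(\varphi)$, so the Lindeberg--L\'evy CLT applies verbatim to yield the stated weak limit.

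Since the lemma concerns only the initialization step, which consists of exact i.i.d.\ sampling from $\pi_0$, I do not anticipate any technical obstacle; the entire proof is a direct invocation of classical i.i.d.\ limit theorems, and its purpose is merely to serve as the base case for the inductive arguments that establish Theorems \ref{thm:radix-r} and \ref{thm:mixed radix-r} in the subsequent lemmata.
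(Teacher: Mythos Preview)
Your proposal is correct and matches the paper's own proof, which simply notes that $(\zeta_0^i)_{i\in[N]}\iidsim\pi_0$ and invokes the strong law of large numbers and the central limit theorem for i.i.d.\ random variables. Your added remark about using \lemmaref{lem:initialization moment bound} with Borel--Cantelli is a valid alternative route for the a.s.\ convergence, but the paper does not bother with it here.
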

% PROOF ------------------------------------------------------
\begin{proof}
Because $\left\{ \zeta_{0}^{i}\right\}_{i\in[N]} \iidsim \pi_0$, the claim follows straightforwardly from the strong law of large numbers and central limit theorem for \iid~random variables.
\end{proof}

\subsection{Particle filter deploying the radix-\texorpdfstring{$r$}{r} algorithm}\label{sec:proofs for fixed particle filters}

For the following three Lemmata, we will assume $r\geq 2$ fixed and that for all $m\geq 1$, $(\zeta^{i}_{n},\hat{\zeta}^{i}_{n})_{n\geq 0, i\in[r^m]}$ are the random variables associated with the augmented resampling particle filter deploying matrices $\radixMatrices{r}{m}$.
% ---------------------------------------------------------------------------
%  RESAMPLING AT TIME 0
% ---------------------------------------------------------------------------
\begin{lemma}[Resampling at time $n=0$]
\label{lem:pf_radix-r_res_0}
If for all $\varphi \in \boundMeas{\ss}$,
\begin{align}
\frac{1}{r^m}\sum_{i}\varphi(\zeta_{0}^{i})-\pi_{0}(\varphi) &\almostsurely{m\to\infty}{\P}0,\label{eq:fixed_radix_initial lln_hyp}\\
\sqrt{r^m}\left(\frac{1}{r^m}\sum_{i}\varphi(\zeta_{0}^{i})-\pi_{0}(\varphi)\right)&\indist{m\to\infty} \normal{0}{\frbfVar{0}{\varphi}{r}},\label{eq:fixed_radix_initial clt_hyp}
\end{align}
then for all $\varphi \in \boundMeas{\ss}$,
\begin{align}
\label{eq:fixed_radix_initial lln}
\frac{1}{r^{m}}\sum_{i}\varphi(\hat{\zeta}_{0}^{i})-\hat{\pi}_{0}(\varphi)&\almostsurely{m\to\infty}{\P} 0,\\
\label{eq:fixed_radix_initial_clt}
\sqrt{\frac{r^{m}}{m}}\left(\frac{1}{r^{m}}\sum_{i}\varphi(\hat{\zeta}_{0}^{i})-\hat{\pi}_{0}(\varphi)\right)&\indist{m\to\infty}\normal{0}{\frbfVarHat{0}{\varphi}{r}}.
\end{align}
where $\frbfVarHat{0}{\varphi}{r}$ is as defined in \eqnref{eq:asymptotic variance fixed radix filter 1}.
\end{lemma}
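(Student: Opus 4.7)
View $(\hat{\zeta}_0^i)_{i\in[r^m]}$ as the output of a single call of Algorithm \ref{alg:augmented_resampling} with input $(\zeta_0^i)_{i\in[r^m]}$, weight $g_0$, and matrices $\radixMatrices{r}{m}$. The LLN \eqref{eq:fixed_radix_initial lln} will follow by combining the $L^p$ bound of \propref{prop:intro_to_aug_resampling} with the hypothesis \eqref{eq:fixed_radix_initial lln_hyp}, while the CLT \eqref{eq:fixed_radix_initial_clt} will be extracted from \theref{thm:butterfly_clt_fixed_radix} after an appropriate rescaling and substitution.

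\textbf{LLN.} By \lemmaref{lem:modular congruence realtion fixed}, $\radixMatrices{r}{m}$ satisfies \assref{ass:A_k}, so \propref{prop:intro_to_aug_resampling} supplies, for every $p\geq 1$, an $L^p$-bound of order $\sqrt{m/r^m}$ on the lack-of-bias error $\big(r^{-m}\sum_i g_0(\zeta_0^i)\big)\big(r^{-m}\sum_i\varphi(\hat{\zeta}_0^i)\big) - r^{-m}\sum_i g_0(\zeta_0^i)\varphi(\zeta_0^i)$. Taking $p$ sufficiently large and invoking Borel-Cantelli makes this error vanish almost surely. Applying the hypothesis \eqref{eq:fixed_radix_initial lln_hyp} to $g_0$ and to $g_0\varphi$ (and using $\pi_0(g_0)>0$ to justify division for all large $m$) then yields \eqref{eq:fixed_radix_initial lln}.

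\textbf{CLT.} Since $(\zeta_0^i)\iidsim\pi_0$, any partial average of size $|\calI|=r^{m-\ka+1}$ has $L^2$ deviation from $\pi_0(\varphi)$ of order $\|\varphi\|_\infty|\calI|^{-1/2}$, which is dominated by $b(\varphi)\sqrt{(m-\ka)/r^m+1/|\calI|}$ with $b(\varphi)=\|\varphi\|_\infty$; this verifies the hypothesis \eqref{eq:required conv in prob} of \theref{thm:butterfly_clt_fixed_radix} with $\mu=\pi_0$. The theorem then gives convergence of the conditional characteristic function of $\sum_\varrho X_\varrho^{(r^m,m)}$ given $\F_0^{(r^m,m)}$ to that of $\normal{0}{\sigma^2(\varphi)}$, where $\sigma^2(\varphi)=(1-r^{-1})\pi_0(g_0(\varphi-\hat{\pi}_0(\varphi))^2)\pi_0(g_0)$. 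The martingale representation \eqnref{it:another martingale representation} of \propref{prop:generalized martingale decomposition} rewrites this sum as
\begin{equation*}
\sum_{\varrho} X_\varrho^{(r^m,m)} = \sqrt{r^m/m}\,A_m\,\big(\hat{\pi}_0^N(\varphi) - R_m\big),
\end{equation*}
where $A_m \defeq r^{-m}\sum_i g_0(\zeta_0^i)$ and $R_m \defeq \sum_i g_0(\zeta_0^i)\varphi(\zeta_0^i)/\sum_i g_0(\zeta_0^i)$ are both $\F_0^{(r^m,m)}$-measurable. By the hypothesized $\sqrt{r^m}$-CLT \eqref{eq:fixed_radix_initial clt_hyp} applied to $g_0$ and to $g_0\varphi$, one has $A_m\to\pi_0(g_0)$ almost surely and $\sqrt{r^m/m}\,(R_m-\hat{\pi}_0(\varphi))\inprob{m\to\infty}0$ (the rate $\sqrt{r^m}$ being strictly faster than $\sqrt{r^m/m}$). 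Unconditioning via the tower property and bounded convergence to pass to the marginal characteristic function, followed by Slutsky's theorem, delivers \eqref{eq:fixed_radix_initial_clt} with asymptotic variance $\sigma^2(\varphi)/\pi_0(g_0)^2=\frbfVarHat{0}{\varphi}{r}$ as in \eqref{eq:asymptotic variance fixed radix filter 1}.

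\textbf{Main obstacle.} The delicate point is the bookkeeping at the $\sqrt{r^m/m}$ scale: the replacement $R_m\leftrightarrow\hat{\pi}_0(\varphi)$ and the division by $A_m$ must be handled as $\F_0^{(r^m,m)}$-measurable corrections that are conditionally negligible after rescaling, so that they may be absorbed into the conditional CLT of \theref{thm:butterfly_clt_fixed_radix} before one passes to the unconditional statement via tower-property-plus-bounded-convergence.
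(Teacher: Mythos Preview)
Your proposal is correct and follows essentially the same strategy as the paper: invoke \theref{thm:butterfly_clt_fixed_radix} for the conditional CLT of the martingale sum, then handle the $\F_0^{(r^m,m)}$-measurable corrections (the empirical normalizer and the self-normalized ratio $R_m$) via Slutsky, using that their fluctuations are of order $r^{-m/2}$ and hence negligible at scale $\sqrt{r^m/m}$. The only cosmetic differences are that the paper packages the algebra as an explicit three-term decomposition $A_m+B_m+C_m$ (dividing by the deterministic $\pi_0(g_0)$ rather than your empirical $A_m$, which produces the extra $C_m$ term), and that the paper verifies the sub-population hypothesis \eqref{eq:required conv in prob} by invoking \propref{prop:subsample absolute moment convergence} rather than directly from the i.i.d.\ structure of $(\zeta_0^i)$ as you do---your shortcut is valid at $n=0$ but the paper's route is what is needed at later times.
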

% PROOF ------------------------------------------------------
\begin{proof}%[Proof of \lemmaref{lem:pf_radix-r_res_0}]%\myemph{***in the process of modifying this - check equation refs, then need to trim down the proof for $n\geq1$***}
With
\begin{equation}\label{eq:def centered phi filter}
\cvarphi_{r^m}(x):=\varphi(x)-\frac{\sum_{i}g_0(\zeta_0^i)\varphi(\zeta_0^i)}{\sum_{i} g_0(\zeta_0^i)},
\end{equation}
and the shorthand notations:
\begin{equation}\label{eq:abc decompo fixed radix}
\arraycolsep+1.4pt
\begin{array}{rcl}
A_m&:=&\dfrac{1}{\pi_0(g_0)}\Bigg(\Bigg(\dfrac{1}{r^m}\displaystyle\sum_{i}g_0(\zeta_0^i)\Bigg)\Bigg(\dfrac{1}{r^m}\displaystyle\sum_{i}\varphi(\hat{\zeta}_0^i)\Bigg)-\dfrac{1}{r^m}\displaystyle\sum_{i}g_0(\zeta_0^i)\varphi(\zeta_0^i)\Bigg),\\[.4cm]
B_m&:=&\dfrac{\sum_{i}g_0(\zeta_0^i)\varphi(\zeta_0^i)}{\sum_{i} g_0(\zeta_0^i)}-\displaystyle\frac{\pi_0(g_0\varphi)}{\pi_0(g_0)},\\[.4cm]
C_m&:=&\dfrac{1}{\pi_0(g_0)}\Bigg(\dfrac{1}{r^m}\displaystyle\sum_{i}\overline\varphi_{r^m}(\hat{\zeta}_0^i)\Bigg)\Bigg(\pi_0(g_0)-\dfrac{1}{r^m}\displaystyle\sum_{i}g_0(\zeta_0^i)\Bigg),
\end{array}
\end{equation}
we have
\begin{equation*}
\frac{1}{r^{m}}\sum_{i}\varphi(\hat{\zeta}_{0}^{i})-\hat{\pi}_{0}(\varphi) = A_m+B_m+C_m,
\end{equation*}
because of the fact that $\hat{\pi}_0(\varphi)=\pi_0(g_0\varphi)/\pi_0(g_0)$.
For the law of large numbers, \eqref{eq:fixed_radix_initial lln}, we shall check that the terms $A_m,B_m,C_m$, each converge to zero as $m\to\infty$, $\P$-almost surely. For $A_m$,  note that the random variables $(\zeta_0^i)_{i\in[r^m]}$ are input to the resampling scheme, and $(\hat{\zeta}_0^i)_{i\in[r^m]}$ are the corresponding output, so the desired convergence follows from the identity \eqnref{it:another martingale representation} in \propref{prop:generalized martingale decomposition} and \theref{thm:butterfly_lln_fixed_radix}. For $B_m$ the desired convergence follows from \eqref{eq:fixed_radix_initial lln_hyp}. For $C_m$, it follows from \theref{thm:butterfly_lln_fixed_radix} and \eqref{eq:fixed_radix_initial lln_hyp} that
\begin{equation}
\frac{1}{r^m}\sum_{i}\overline\varphi_{r^m}(\hat{\zeta}_0^i)\almostsurely{m\rightarrow\infty}{\P} 0,\label{eq:varbar_radix_r_to_zero_initial}
\end{equation}
and the desired convergence then holds since $$|C_m|\leq \pi_0(g_0)^{-1} |r^{-m}\sum_{i}\overline\varphi_{r^m}(\hat{\zeta}_0^i)|2\infnorm{g_0}.$$

For the CLT, \eqref{eq:fixed_radix_initial_clt}, first apply  \eqref{eq:fixed_radix_initial clt_hyp} to establish
\begin{equation*}
\sqrt{r^m}\left(\frac{1}{r^m}\sum_{i}g_0(\zeta_{0}^{i})-\pi_{0}(g_0)\right)\indist{m\to\infty} \normal{0}{\frbfVar{0}{g_0}{r}},
\end{equation*}
and combining this fact with \eqref{eq:varbar_radix_r_to_zero_initial} and Slutsky's theorem, we find that $(r^m/m)^{1/2}C_m$ converges to zero in probability.

Noting that
\begin{equation}
B_m=\frac{\sum_{i}g_0(\zeta_0^i)(\varphi(\zeta_0^i)-\hat{\pi}_0(\varphi)) }{\sum_{i}g_0(\zeta_0^i)},
\label{eq:expression for b}
\end{equation}
we have by \eqref{eq:fixed_radix_initial lln_hyp}, \eqref{eq:fixed_radix_initial clt_hyp} and Slutsky's theorem that $(r^m)^{1/2}B_m$ converges in distribution as $m\to\infty$ to a Gaussian random variable, %call it $Z$,
so $(r^m/m)^{1/2}B_m$ converges in probability to zero.

So, by another application of Slutsky's theorem, in order to complete the proof, it suffices to show
\begin{equation}
\sqrt{\frac{r^m}{m}}A_m\indist{m\to\infty}\normal{0}{\frbfVarHat{0}{\varphi}{r}}.\label{eq:fixed_radix_clt_proof_initial}
\end{equation}
By Propositions \ref{prop:fixed radix satisfies assumptions} and \ref{prop:subsample absolute moment convergence}, we can apply \theref{thm:butterfly_clt_fixed_radix} to the test function $\varphi(\,\cdot\,)/\pi_0(g_0)$, yielding
\begin{equation*}
\E\left[\left.\exp(iu(r^m/m)^{1/2}A_m)\right|\zeta_0\right]\inprob{m\rightarrow\infty}\exp(-(u^2/2)\frbfVarHat{0}{\varphi}{r}),
\end{equation*}
and since the modulus of the complex exponential is no greater than $1$, this convergence in fact holds in the $L_1$ sense, and hence, by Levy's continuity theorem, \eqref{eq:fixed_radix_clt_proof_initial} holds.
\end{proof}

% ---------------------------------------------------------------------------
%  MUTATION
% ---------------------------------------------------------------------------
\begin{lemma}[Mutation at time $n\geq1$]
\label{lem:pf_radix-r_mut}
Fix $n\geq 1$. If for all $\varphi \in \boundMeas{\ss}$,
\begin{align}
\frac{1}{r^{m}}\sum_{i}\varphi(\hat{\zeta}_{n-1}^{i})-\hat{\pi}_{n-1}(\varphi)&\almostsurely{m\to\infty}{\P} 0,\label{eq:fixed_radix_mut_lln_hyp}\\
\label{eq:fixed_radix_mut_clt_hyp}
\sqrt{\frac{r^{m}}{m}}\left(\frac{1}{r^{m}}\sum_{i}\varphi(\hat{\zeta}_{n-1}^{i})-\hat{\pi}_{n-1}(\varphi)\right)&\indist{m\to\infty}\normal{0}{\frbfVarHat{n-1}{\varphi}{r}},
\end{align}
then for all $\varphi \in \boundMeas{\ss}$,
\begin{align}
\frac{1}{r^{m}}\sum_{i}\varphi(\zeta_{n}^{i})-\pi_{n}(\varphi)&\almostsurely{m\to\infty}{\P} 0,\label{eq:fixed_radix_mut_lln}\\
\label{eq:fixed_radix_mut_clt}
\sqrt{\frac{r^{m}}{m}}\left(\frac{1}{r^{m}}\sum_{i}\varphi(\zeta_{n}^{i})-\pi_{n}(\varphi)\right)&\indist{m\to\infty}\normal{0}{\frbfVar{n}{\varphi}{r}}.
\end{align}
where $\frbfVarHat{n}{\varphi}{r}$ and $\frbfVarHat{n-1}{\varphi}{r}$ are as defined in \eqnref{eq:asymptotic variance fixed radix filter 1}.
\end{lemma}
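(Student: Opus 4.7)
The plan is to use the standard decomposition splitting the mutation error into a conditional i.i.d. martingale part and a residual involving the previous filter distribution evaluated at $f(\varphi)$. Write
\[
\frac{1}{r^m}\sum_i \varphi(\zeta_n^i) - \pi_n(\varphi) = A_m + B_m,
\]
where
\[
A_m \defeq \frac{1}{r^m}\sum_i \Big(\varphi(\zeta_n^i) - f(\varphi)(\hat\zeta_{n-1}^i)\Big), \quad B_m \defeq \frac{1}{r^m}\sum_i f(\varphi)(\hat\zeta_{n-1}^i) - \hat\pi_{n-1}(f(\varphi)),
\]
using the identity $\pi_n(\varphi) = \hat\pi_{n-1}(f(\varphi))$. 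Since the $\zeta_n^i$ are conditionally independent given $\hat\zeta_{n-1}$, with $\zeta_n^i \sim f(\hat\zeta_{n-1}^i,\cdot)$, the partial sums defining $A_m$ form a bounded martingale difference sequence with respect to the filtration $\sigma(\hat\zeta_{n-1},\zeta_n^1,\ldots,\zeta_n^j)$.

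For the law of large numbers \eqref{eq:fixed_radix_mut_lln}: I would apply Burkholder's inequality with some $p>2$ to obtain $\E[|A_m|^p] \leq C_p r^{-mp/2}\osc{\varphi}^p$, so by Markov's inequality and the Borel--Cantelli lemma $A_m \almostsurelyOneArgSh{m\to\infty} 0$. Meanwhile $B_m$ converges to zero almost surely by applying the hypothesis \eqref{eq:fixed_radix_mut_lln_hyp} to the test function $f(\varphi)\in\boundMeas{\ss}$ (which is valid since $f$ is a Markov kernel, hence $\infnorm{f(\varphi)}\leq \infnorm{\varphi}$).

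For the central limit theorem \eqref{eq:fixed_radix_mut_clt}: applying the hypothesis \eqref{eq:fixed_radix_mut_clt_hyp} with $f(\varphi)$ in place of $\varphi$ yields
\[
\sqrt{\frac{r^m}{m}}\, B_m \indist{m\to\infty} \normal{0}{\frbfVarHat{n-1}{f(\varphi)}{r}} = \normal{0}{\frbfVar{n}{\varphi}{r}},
\]
by the definition in \eqref{eq:asymptotic variance fixed radix filter 1}. It remains to show that the rescaled martingale term is negligible:
\[
\sqrt{\frac{r^m}{m}}\, A_m = \frac{1}{\sqrt{m}}\cdot \sqrt{r^m}\, A_m,
\]
and Burkholder's inequality (or a direct conditional variance calculation, $\E[A_m^2 \mid \hat\zeta_{n-1}] \leq \osc{\varphi}^2/r^m$) gives $\sqrt{r^m}\, A_m = O_{\P}(1)$. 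Hence $\sqrt{r^m/m}\,A_m \inprob{m\to\infty} 0$, and the conclusion follows by Slutsky's theorem.

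There is no substantial obstacle here: the argument is essentially routine once the decomposition is in place. The only mildly delicate point is that the unusual scaling $\sqrt{r^m/m}$ of the radix-$r$ filter is slower than the natural $\sqrt{r^m}$ scale of the mutation fluctuation, which is precisely what forces the sampling noise $A_m$ to contribute nothing to the asymptotic variance $\frbfVar{n}{\varphi}{r}$ (explaining the absence of the $\hat\pi_{n-1}(f((\varphi-f(\varphi))^2))$ term compared with \eqref{eq:asymp_var_bpf}, as already noted in the discussion of Section~\ref{sec:discussion}).
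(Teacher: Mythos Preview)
Your proposal is correct and follows essentially the same approach as the paper: the same $A_m+B_m$ decomposition, the same martingale/Burkholder/Borel--Cantelli argument for $A_m$, and the same application of the hypotheses to $f(\varphi)$ for $B_m$. The only cosmetic difference is that the paper notes $(r^m/m)^{1/2}A_m\to0$ almost surely (the same Burkholder bound with $p>2$ makes $\E[|(r^m/m)^{1/2}A_m|^p]\leq C_p m^{-p/2}$ summable), whereas you record convergence in probability; either suffices for Slutsky.
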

\begin{proof}%[Proof of \lemmaref{lem:pf_radix-r_mut}]
With
\begin{equation*}
A_m:=\frac{1}{r^m}\sum_{i}\varphi(\zeta_n^i)-f(\varphi)(\hat{\zeta}_{n-1}^i),~ B_m:=\frac{1}{r^m}\sum_{i}f(\varphi)(\hat{\zeta}_{n-1}^i)-\hat{\pi}_{n-1}(f(\varphi)),
\end{equation*}
%\begin{align*}
%&A_m:=\frac{1}{r^m}\sum_{i}\varphi(\zeta_n^i)-f(\varphi)(\hat{\zeta}_{n-1}^i)\\
%&B_m:=\frac{1}{r^m}\sum_{i}f(\varphi)(\hat{\zeta}_{n-1}^i)-\hat{\pi}_{n-1}f(\varphi),\\
%\end{align*}
we have
\begin{equation*}
\frac{1}{r^{m}}\sum_{i}\varphi(\zeta_{n}^{i})-\pi_{n}(\varphi)=A_m+B_m.
\end{equation*}
With $X_j:=(r^m)^{-1/2}\sum_{i=1}^j \varphi(\zeta_n^i)-f(\varphi)(\hat{\zeta}_{n-1}^i)$ and $\mathcal{A}_j:=\sigma(\hat{\zeta}_{n-1},\zeta_n^1,\ldots,\zeta_n^j)$, $(X_j,\mathcal{A}_j)_{j\in[r^m]}$ is a martingale, and by application of Burkholder's inequality, Markov's inequality, the fact that $\varphi\in\boundMeas{\ss}$, and Borel-Cantelli, we find that $A_m$ converges to zero as $m\to\infty$, $\P$-almost surely, and $(r^m/m)^{1/2}A_m$ does too. $B_m$ converges to zero almost surely by \eqref{eq:fixed_radix_mut_lln_hyp}, and $(r^m/m)^{1/2}B_m$ converges to a $\normal{0}{\frbfVarHat{n-1}{f(\varphi)}{r}}$ by \eqref{eq:fixed_radix_mut_clt_hyp}.
\end{proof}

% ---------------------------------------------------------------------------
%  RESAMPLING AT TIME n
% ---------------------------------------------------------------------------
\begin{lemma}[Resampling at time $n\geq1$]
\label{lem:pf_radix-r_res}
Fix $n\geq 1$. If for all $\varphi \in \boundMeas{\ss}$,
\begin{align}
\label{eq:fixed_radix_res_lln_hyp}
\frac{1}{r^m}\sum_{i}\varphi(\zeta_{n}^{i})-\pi_{n}(\varphi) &\almostsurely{m\to\infty}{\P} 0,\\
\label{eq:fixed_radix_res_clt_hyp}
\sqrt{\frac{r^m}{m}}\left(\frac{1}{r^m}\sum_{i}\varphi(\zeta_{n}^{i})-\pi_{n}(\varphi)\right)&\indist{m\to\infty} \normal{0}{\frbfVar{n}{\varphi}{r}},
\end{align}
then for all $\varphi \in \boundMeas{\ss}$,
\begin{align}
\frac{1}{r^{m}}\sum_{i}\varphi(\hat{\zeta}_{n}^{i})-\hat{\pi}_{n}(\varphi)&\almostsurely{m\to\infty}{\P} 0,\label{eq:fixed_radix_res_lln}\\
\label{eq:fixed_radix_res_clt}
\sqrt{\frac{r^{m}}{m}}\left(\frac{1}{r^{m}}\sum_{i}\varphi(\hat{\zeta}_{n}^{i})-\hat{\pi}_{n}(\varphi)\right)&\indist{m\to\infty} \normal{0}{\frbfVarHat{n}{\varphi}{r}}.
\end{align}
where $\frbfVar{n}{\varphi}{r}$ and $\frbfVarHat{n}{\varphi}{r}$ are as defined in \eqnref{eq:asymptotic variance fixed radix filter 1}.
\end{lemma}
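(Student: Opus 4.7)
The plan is to reproduce the proof of \lemmaref{lem:pf_radix-r_res_0} with the data $(\zeta_0^i)_{i\in[r^m]}$, $g_0$, $\pi_0$, $\hat{\pi}_0$ replaced throughout by $(\zeta_n^i)_{i\in[r^m]}$, $g_n$, $\pi_n$, $\hat{\pi}_n$. I would first write the additive decomposition analogous to \eqref{eq:abc decompo fixed radix},
\begin{equation*}
\frac{1}{r^m}\sum_i\varphi(\hat{\zeta}_n^i)-\hat{\pi}_n(\varphi)=A_m+B_m+C_m,
\end{equation*}
where $A_m$ is the resampling-error term
\begin{equation*}
A_m:=\frac{1}{\pi_n(g_n)}\left[\left(\frac{1}{r^m}\sum_{i}g_n(\zeta_n^i)\right)\left(\frac{1}{r^m}\sum_{i}\varphi(\hat{\zeta}_n^i)\right)-\frac{1}{r^m}\sum_{i}g_n(\zeta_n^i)\varphi(\zeta_n^i)\right],
\end{equation*}
$B_m:=\bigl(\sum_i g_n(\zeta_n^i)\varphi(\zeta_n^i)\bigr)/\bigl(\sum_i g_n(\zeta_n^i)\bigr)-\hat{\pi}_n(\varphi)$, and $C_m$ is the product remainder defined exactly as in \eqref{eq:abc decompo fixed radix} with $g_n,\zeta_n,\pi_n$ in place of $g_0,\zeta_0,\pi_0$ and with $\cvarphi_{r^m}$ re-centred using the empirical weighted mean over $\zeta_n$.

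For the LLN \eqref{eq:fixed_radix_res_lln}, each of $A_m,B_m,C_m$ should tend to zero in probability: $A_m$ and the first factor of $C_m$ by \theref{thm:butterfly_lln_fixed_radix} combined with the identity \eqref{it:another martingale representation} of \propref{prop:generalized martingale decomposition} (which only require \assref{ass:A_k}, not i.i.d.\ input), and $B_m$ together with the second factor of $C_m$ directly from \eqref{eq:fixed_radix_res_lln_hyp} applied to $g_n$ and $g_n\varphi$. The CLT \eqref{eq:fixed_radix_res_clt} is the substantive step. The target rate is $\sqrt{r^m/m}$, and the key structural difference from \lemmaref{lem:pf_radix-r_res_0} is that at this rate $B_m$ is \emph{not} negligible: applying \eqref{eq:fixed_radix_res_clt_hyp} to $g_n$ and $g_n(\varphi-\hat{\pi}_n(\varphi))$ with Slutsky gives
\begin{equation*}
\sqrt{r^m/m}\,B_m\indist{m\to\infty}\normal{0}{\pi_n(g_n)^{-2}\,\frbfVar{n}{g_n(\varphi-\hat{\pi}_n(\varphi))}{r}},
\end{equation*}
which contributes precisely the second summand of the target variance $\frbfVarHat{n}{\varphi}{r}$ in \eqref{eq:asymptotic variance fixed radix filter 1}. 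For $A_m$ I would invoke \theref{thm:butterfly_clt_fixed_radix} conditionally on $\zeta_n$, with test function $\varphi/\pi_n(g_n)$ and limiting measure $\mu=\pi_n$, obtaining
\begin{equation*}
\E\bigl[\exp(iu\sqrt{r^m/m}\,A_m)\bigm|\zeta_n\bigr]\inprob{m\to\infty}\exp\bigl(-\tfrac{u^2}{2}(1-r^{-1})\hat{\pi}_n((\varphi-\hat{\pi}_n(\varphi))^2)\bigr),
\end{equation*}
then multiplying by $\exp(iu\sqrt{r^m/m}\,B_m)$ (which is $\sigma(\zeta_n)$-measurable), taking unconditional expectation, and using bounded convergence together with the above $B_m$-CLT to conclude that $\sqrt{r^m/m}(A_m+B_m)$ has characteristic function limit $\exp(-(u^2/2)\frbfVarHat{n}{\varphi}{r})$. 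Slutsky then absorbs $\sqrt{r^m/m}\,C_m$, which is $o_P(1)$ because its second factor is $O_P(\sqrt{m/r^m})$ by \eqref{eq:fixed_radix_res_clt_hyp} and its first is $o_P(1)$ by the LLN argument above.

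The main obstacle is verifying the block-wise absolute moment hypothesis \eqref{eq:required conv in prob} of \theref{thm:butterfly_clt_fixed_radix} for the input $(\zeta_n^i)_{i\in[r^m]}$, which is no longer i.i.d.\ once $n\geq 1$. This is exactly what \propref{prop:subsample absolute moment convergence} supplies when applied to the triples $(\radixMatrices{r}{m},\radixSamplePartition{r}{m}{d},d)$ for every $d\in[m]$, whose satisfaction of \assref{ass:partition and cond indep} is established in \propref{prop:fixed radix satisfies assumptions}: it delivers the required block bound at rate $\sqrt{(m-d)/r^m+r^{-(m-d+1)}}$ uniformly in $d$, which is precisely the form required by \theref{thm:butterfly_clt_fixed_radix} with $\mu=\pi_n$. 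Once this hypothesis is in place, the argument above closes the induction step from $\pi_n$ to $\hat{\pi}_n$ and completes the inductive proof of \theref{thm:radix-r}.
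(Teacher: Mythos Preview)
Your proposal is correct and follows essentially the same route as the paper's proof: the same $A_m+B_m+C_m$ decomposition, the same identification that $\sqrt{r^m/m}\,B_m$ is non-negligible and contributes the $\pi_n(g_n)^{-2}\frbfVar{n}{g_n(\varphi-\hat{\pi}_n(\varphi))}{r}$ term, the same conditional characteristic-function argument for $A_m$ via \theref{thm:butterfly_clt_fixed_radix}, and the same appeal to \propref{prop:fixed radix satisfies assumptions} and \propref{prop:subsample absolute moment convergence} to verify its block-wise hypothesis. One small slip: for \eqref{eq:fixed_radix_res_lln} you need almost-sure convergence, not just convergence in probability, but the tools you cite (\theref{thm:butterfly_lln_fixed_radix} and the hypothesis \eqref{eq:fixed_radix_res_lln_hyp}) already deliver this.
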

% PROOF ------------------------------------------------------
\begin{proof}%[Proof of \lemmaref{lem:pf_radix-r_res}]
By defining $\cvarphi_{r^m}$, $A_m$, $B_m$ and $C_m$ as in \eqnref{eq:def centered phi filter} and \eqnref{eq:abc decompo fixed radix} but by replacing $0$ with $n$
we have
\begin{equation*}
\frac{1}{r^{m}}\sum_{i}\varphi(\hat{\zeta}_{n}^{i})-\hat{\pi}_{n}(\varphi) = A_m+B_m+C_m.
\end{equation*}
For the law of large numbers, \eqref{eq:fixed_radix_res_lln}, very similar arguments to those in the proof of \lemmaref{lem:pf_radix-r_res_0} establish that $A_m,B_m,C_m$, each converge to zero as $m\to\infty$, $\P$-almost surely.

The proof of the CLT \eqref{eq:fixed_radix_res_clt}, also uses arguments similar to those in the proof of \lemmaref{lem:pf_radix-r_res_0}, the main difference being that due to the statistically different nature of the input $(\zeta^i_n)_{i\in[r^m]}$, the term $B_m$ does not vanish. From \eqref{eq:fixed_radix_res_lln_hyp}, \eqref{eq:fixed_radix_res_clt_hyp} and \eqref{eq:fixed_radix_res_lln} it follows that $(r^m/m)^{1/2}C_m$ converges to zero in probability. So in order to complete the proof, it suffices to show
\begin{equation}
\sqrt{\frac{r^m}{m}}A_m+\sqrt{\frac{r^m}{m}}B_m\indist{m\to\infty}\normal{0}{\frbfVarHat{n}{\varphi}{r}}.\label{eq:fixed_radix_clt_proof}
\end{equation}
By Propositions \ref{prop:fixed radix satisfies assumptions} and \ref{prop:subsample absolute moment convergence}, we can apply \theref{thm:butterfly_clt_fixed_radix} to the test function $\varphi(\,\cdot\,)/\pi_n(g_n)$,
\begin{equation*}
\E\left[\left.\exp(iu(r^m/m)^{1/2}A_m)\right|\zeta_n\right]\inprob{m\rightarrow\infty}\exp(-(u^2/2)\sigma^2),
\end{equation*}
where $\sigma^2=(1-r^{-1}) \hat{\pi}_n((\varphi-\hat{\pi}_n(\varphi))^2$.

For $B_m$ we have an expression analogous to \eqnref{eq:expression for b} from which we see by
%Noting that
%$$B_m=\frac{\sum_{i}g_n(\zeta_n^i)(\varphi(\zeta_n^i)-\hat{\pi}_n(\varphi)) }{\sum_{i}g_n(\zeta_n^i)},$$
%we have
by \eqref{eq:fixed_radix_res_lln_hyp}, \eqref{eq:fixed_radix_res_clt_hyp} and Slutsky's theorem that $(r^m/m)^{1/2}B_m$ converges in distribution
as $m\to\infty$ to a Gaussian random variable, call it $Z$, with mean zero and variance $\pi_n(g_n)^{-2}\frbfVar{n}{g_n(\varphi - \hat{\pi}_n(\varphi))}{r}$. Then by the continuous mapping theorem, $\exp(iu(r^m/m)^{1/2}B_m)$ converges
in distribution to $\exp(iuZ)$, and by yet another application of Slutsky's theorem,
\begin{align*}
&\E\left[\left.\exp(iu(r^m/m)^{1/2}A_m)\right|\zeta_n\right]\exp(iu(r^m/m)^{1/2}B_m)\\
&\quad\inprob{m\rightarrow\infty}\exp(-(u^2/2)\sigma^2)\exp(iuZ),
\end{align*}
from which \eqref{eq:fixed_radix_clt_proof} follows.
\end{proof}

From the Lemmata \ref{lem:pf_radix-r_res_0}, \ref{lem:pf_radix-r_mut} and \ref{lem:pf_radix-r_res}, together with \lemmaref{lem:pf_init}, \theref{thm:radix-r} follows.

% ------------------------------------------------------
% ------------------------------------------------------
%  MIXED RADIX PARTICLE FILTER
% ------------------------------------------------------
% ------------------------------------------------------
\subsection{Particle filter deploying the mixed radix-\texorpdfstring{$r$}{r} algorithm}
\label{sec:proofs for mixed particle filters}

For the following two Lemmata, we will assume $r\geq 2$ fixed and that for all $c\geq 1$, $(\zeta^{i}_{n},\hat{\zeta}^{i}_{n})_{n\geq 0, i\in[rc]}$ are the random variables associated with the augmented resampling particle filter deploying matrices $\mradixMatrices{r}{c}$.

% ------------------------------------------------------
%  RESAMPLING
% ------------------------------------------------------
\begin{lemma}[Resampling $n\geq 0$]
\label{lem:pf_mixed-radix-r_res_0}
Fix $n\geq 0$. If for all $\varphi \in \boundMeas{\ss}$,
\begin{align}
\label{eq:mixed_radix_initial lln_hyp}
\frac{1}{rc}\sum_{i}\varphi(\zeta_{n}^{i})-\pi_{n}(\varphi) &\almostsurely{c\to\infty}{\P}0,\\
\label{eq:mixed_radix_initial clt_hyp}
\sqrt{rc}\left(\frac{1}{rc}\sum_{i}\varphi(\zeta_{n}^{i})-\pi_{n}(\varphi)\right)&\indist{c\to\infty}\normal{0}{\mrbfVar{n}{\varphi}{r}},
\end{align}
then for all $\varphi \in \boundMeas{\ss}$,
\begin{align}
\label{eq:mixed_radix_initial lln}
\frac{1}{rc}\sum_{i}\varphi(\hat{\zeta}_{n}^{i})-\hat{\pi}_{n}(\varphi)&\almostsurely{c\to\infty}{\P} 0,\\
\label{eq:mixed_radix_initial_clt} \sqrt{rc}\left(\frac{1}{rc}\sum_{i}\varphi(\hat{\zeta}_{n}^{i})-\hat{\pi}_{n}(\varphi)\right)&\indist{c\to\infty}\normal{0}{\mrbfVarHat{n}{\varphi}{r}}.
\end{align}
where $\mrbfVar{n}{\varphi}{r}$ and $\mrbfVarHat{n}{\varphi}{r}$ are as defined in \eqnref{eq:asymptotic variance mixed radix filter}.
\end{lemma}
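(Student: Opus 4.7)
The proof plan is to parallel the argument for the radix-$r$ resampling step (Lemma \ref{lem:pf_radix-r_res}), adapted to the mixed radix-$r$ setting where $m=2$, $N=rc$ and the natural CLT rate coincides with $\sqrt{N}$. First I would introduce the centered test function
\begin{equation*}
\cvarphi_{rc}(x) \defeq \varphi(x) - \frac{\sum_{i} g_n(\zeta_n^i)\varphi(\zeta_n^i)}{\sum_{i} g_n(\zeta_n^i)},
\end{equation*}
and split
\begin{equation*}
\frac{1}{rc}\sum_{i}\varphi(\hat\zeta_n^i) - \hat\pi_n(\varphi) = A_c + B_c + C_c,
\end{equation*}
with $A_c,B_c,C_c$ defined exactly as in \eqref{eq:abc decompo fixed radix}, except with $0$ replaced by $n$ and $r^m$ replaced by $rc$. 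The identity \ref{it:another martingale representation} of Proposition \ref{prop:generalized martingale decomposition}, applied with $(\bba,\calI,\ka)=(\mradixMatrices{r}{c},\{\{u\}:u\in[rc]\},1)$ (valid by Proposition \ref{prop:mixed radix satisfies assumptions}), expresses $\pi_n(g_n) A_c$ as a scaled sum of martingale increments, so Theorem \ref{thm:butterfly_lln_mixed_radix} yields $A_c \to 0$ in probability. The hypothesis \eqref{eq:mixed_radix_initial lln_hyp} delivers $B_c \to 0$ in probability, and combining that with the already-established $(rc)^{-1}\sum_i \cvarphi_{rc}(\hat\zeta_n^i) \to 0$ in probability gives $C_c \to 0$, proving \eqref{eq:mixed_radix_initial lln}.

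For the CLT \eqref{eq:mixed_radix_initial_clt} I would show separately that $\sqrt{rc}\,C_c \inprob{c\to\infty} 0$ (since $\sqrt{rc}(\pi_n(g_n) - (rc)^{-1}\sum g_n(\zeta_n^i))$ is tight by \eqref{eq:mixed_radix_initial clt_hyp} while its companion factor vanishes in probability), so it suffices to analyse $\sqrt{rc}(A_c+B_c)$. The key input is Theorem \ref{thm:butterfly_clt_mixed_radix} applied to the test function $\varphi/\pi_n(g_n)$ with input random variables $(\zeta_n^i)_{i\in[rc]}$. Its hypothesis \eqref{eq:required conv in prob mixed} must be checked by applying Proposition \ref{prop:subsample absolute moment convergence}: for each $\ka\in\{1,2\}$ and each $q\in[r^{\ka-1}]$, the partition $\mradixSamplePartition{r}{c}{\ka}$ from \eqref{eq:def mradix partitions} together with $(\mradixMatrices{r}{c},\mradixSamplePartition{r}{c}{\ka},\ka)$ satisfies Assumption \ref{ass:partition and cond indep} by Proposition \ref{prop:mixed radix satisfies assumptions}, so Proposition \ref{prop:subsample absolute moment convergence} yields the required $L_2$ bound on block averages over the relevant sub-populations of $(\zeta_n^i)$. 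Consequently Theorem \ref{thm:butterfly_clt_mixed_radix} gives
\begin{equation*}
\E\!\left[\exp\!\left(iu\sqrt{rc}\,A_c\right)\,\Big|\,\zeta_n\right] \inprob{c\to\infty} \exp\!\left(-(u^2/2)\,\sigma_A^2\right),
\end{equation*}
where $\sigma_A^2 = (2-r^{-1})\,\hat\pi_n\!\big((\varphi-\hat\pi_n(\varphi))^2\big)$ (observe the factor of $2$ arising because the rate in Theorem \ref{thm:butterfly_clt_mixed_radix} after rescaling by $\sqrt{rc}$ rather than $\sqrt{rc/2}$ doubles the asymptotic variance, matching the definition of $\mrbfVarHat{n}{\varphi}{r}$ in \eqref{eq:asymptotic variance mixed radix filter}).

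Meanwhile, writing $B_c$ as $(\sum_i g_n^i(\varphi_n^i-\hat\pi_n(\varphi)))/\sum_i g_n^i$ and invoking hypotheses \eqref{eq:mixed_radix_initial lln_hyp}--\eqref{eq:mixed_radix_initial clt_hyp} with Slutsky's theorem, $\sqrt{rc}\,B_c$ converges in distribution to a centered Gaussian $Z$ with variance $\pi_n(g_n)^{-2}\mrbfVar{n}{g_n(\varphi-\hat\pi_n(\varphi))}{r}$; note that $Z$ is $\sigma(\zeta_n)$-measurable in the limit. Applying the continuous mapping theorem to $\exp(iu\sqrt{rc}\,B_c)$ and then Slutsky's theorem to the product with the conditional characteristic function for $A_c$, and finally taking unconditional expectation with the aid of Lévy's continuity theorem, yields
\begin{equation*}
\sqrt{rc}\,(A_c + B_c) \indist{c\to\infty} \normal{0}{\mrbfVarHat{n}{\varphi}{r}},
\end{equation*}
by the additive structure of the two independent-in-the-limit contributions and the explicit form of $\mrbfVarHat{n}{\varphi}{r}$ in \eqref{eq:asymptotic variance mixed radix filter}. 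Combined with $\sqrt{rc}\,C_c\to 0$ this gives \eqref{eq:mixed_radix_initial_clt}.

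The main obstacle is bookkeeping around the conditional CLT: one must justify passing from the conditional characteristic function limit for $\sqrt{rc}\,A_c$ (given $\zeta_n$) to the joint unconditional limit for $\sqrt{rc}(A_c+B_c)$, which requires that $B_c$ become asymptotically $\sigma(\zeta_n)$-measurable and that $\sigma_A^2$ be a deterministic constant, both of which are secured by hypothesis \eqref{eq:mixed_radix_initial lln_hyp}. A subsidiary technical point is the careful verification of \eqref{eq:required conv in prob mixed} for the inductively inherited filter populations, which is precisely the role of Proposition \ref{prop:subsample absolute moment convergence}.
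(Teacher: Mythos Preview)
Your proposal is correct and follows essentially the same approach as the paper: the same $A_c+B_c+C_c$ decomposition, the same use of Theorem~\ref{thm:butterfly_lln_mixed_radix} for the LLN, the same verification of the hypotheses of Theorem~\ref{thm:butterfly_clt_mixed_radix} via Propositions~\ref{prop:mixed radix satisfies assumptions} and~\ref{prop:subsample absolute moment convergence}, and the same conditional-characteristic-function Slutsky argument for combining $A_c$ and $B_c$. The only cosmetic difference is that the paper applies Theorem~\ref{thm:butterfly_clt_mixed_radix} to $\sqrt{2}\,\varphi/\pi_n(g_n)$ so that $\sum_\varrho X_\varrho$ equals $\sqrt{rc}\,A_c$ directly, whereas you apply it to $\varphi/\pi_n(g_n)$ and track the factor $2$ afterward; both give $\sigma_A^2=(2-r^{-1})\hat\pi_n((\varphi-\hat\pi_n(\varphi))^2)$.

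One small slip: in the LLN part you write ``in probability'' throughout, but \eqref{eq:mixed_radix_initial lln} asserts almost-sure convergence. The theorems you cite (Theorem~\ref{thm:butterfly_lln_mixed_radix} and hypothesis \eqref{eq:mixed_radix_initial lln_hyp}) in fact deliver a.s.\ convergence, so simply replace ``in probability'' by ``almost surely''. Also, $B_c$ is $\sigma(\zeta_n)$-measurable for every $c$, not merely ``asymptotically''; this exact measurability is what lets you pull $\exp(iu\sqrt{rc}\,B_c)$ outside the conditional expectation.
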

% PROOF ------------------------------------------------------
\begin{proof}[Proof of \lemmaref{lem:pf_mixed-radix-r_res_0}]%\myemph{***in the process of modifying this - check equation refs, then need to trim down the proof for $n\geq1$***}
By defining $\cvarphi_{rc}$, $A_c$, $B_c$ and $C_c$ as in \eqnref{eq:def centered phi filter} and \eqnref{eq:abc decompo fixed radix} but by replacing $0$ with $n$ and $r^m$ with $rc$
%With $\cvarphi_{rc}$ as defined in \eqnref{eq:varphi_bar_defn} and by introducing the shorthand notations:
%\begin{align*}
%A_c &:= \pi_n(g_n)^{-1}\left[\left(\frac{1}{rc}\sum_{i}g_n(\zeta_n^i)\right)\left(\frac{1}{rc}\sum_{i}\varphi(\hat{\zeta}_n^i)\right)-\left(\frac{1}{rc}\sum_{i}g_n(\zeta_n^i)\varphi(\zeta_n^i)\right)\right] \\
%B_c &:= \frac{\sum_{i}g_n(\zeta_n^i)\varphi(\zeta_n^i)}{\sum_{i} g_n(\zeta_n^i)}-\hat{\pi}_{n}(\varphi) \\
%C_c &:= \pi_n(g_n)^{-1}\left(\frac{1}{rc}\sum_{i}\overline\varphi_{rc}(\hat{\zeta}_n^i)\right)\left(\pi_n(g_n)-\frac{1}{rc}\sum_{i}g_n(\zeta_n^i)\right)
%\end{align*}
we have
\begin{equation*}
\frac{1}{rc}\sum_{i}\varphi(\hat{\zeta}_{n}^{i})-\hat{\pi}_{n}(\varphi) = A_c+B_c+C_c.
\end{equation*}
The law of large numbers follows from \eqnref{it:another martingale representation} of \propref{prop:generalized martingale decomposition}, \theref{thm:butterfly_lln_mixed_radix} and \eqref{eq:mixed_radix_initial lln_hyp} analogously to the proof of \lemmaref{lem:pf_radix-r_res_0} so the details are omitted.

To prove \eqnref{eq:mixed_radix_initial_clt} it suffices to show that
%To see that $\sqrt{rc}C_c$ converges to zero in probability as $c\to\infty$ we proceed as in the proof of \lemmaref{lem:pf_radix-r_res_0}, and to complete the proof, it suffices to show
\begin{equation}
\sqrt{rc}A_c+\sqrt{rc}B_c + \sqrt{rc}C_c\indist{c\to\infty}\normal{0}{\mrbfVarHat{n}{\varphi}{r}}.\label{eq:mixed_radix_clt_proof}
\end{equation}
For $\sqrt{rc}B_c$ and $\sqrt{rc}C_c$ we proceed similar to the proofs of \lemmaref{lem:pf_radix-r_res_0} and \lemmaref{lem:pf_radix-r_res}. For $\sqrt{rc}A_c$ we observe that by \propref{prop:mixed radix satisfies assumptions} and \propref{prop:subsample absolute moment convergence} we can apply \theref{thm:butterfly_clt_mixed_radix} to the test function $\sqrt{2}\varphi(\,\cdot\,)/\pi_n(g_n)$, yielding by \eqnref{it:another martingale representation} of \propref{prop:generalized martingale decomposition}
\begin{equation*}
\E\left[\left.\exp(iu\sqrt{rc}A_c)\right|\zeta_n\right]\inprob{c\to\infty}\exp(-(u^2/2)\sigma^2),
\end{equation*}
where $\sigma^2=\left(2-{r^{-1}}\right)\hat{\pi}_n\left((\varphi - \hat{\pi}_n(\varphi))^2\right)$. We then proceed analogously to the proof of \lemmaref{lem:pf_radix-r_res} to establish \eqnref{eq:mixed_radix_clt_proof} completing the proof.
\end{proof}

% ------------------------------------------------------
%  MUTATION
% ------------------------------------------------------
\begin{lemma}[Mutation at time $n\geq1$]
\label{lem:pf_mixed_radix-r_mut}
Fix $n\geq 1$. If for all $\varphi \in \boundMeas{\ss}$,
\begin{align}
\frac{1}{rc}\sum_{i}\varphi(\hat{\zeta}_{n-1}^{i})-\hat{\pi}_{n-1}(\varphi)&\almostsurely{c\to\infty}{\P} 0,\label{eq:mixed_radix_mut_lln_hyp}\\
\label{eq:mixed_radix_mut_clt_hyp} \sqrt{rc}\left(\frac{1}{rc}\sum_{i}\varphi(\hat{\zeta}_{n-1}^{i})-\hat{\pi}_{n-1}(\varphi)\right)&\indist{c\to\infty}\normal{0}{\mrbfVarHat{n-1}{\varphi}{r}},
\end{align}
then for all $\varphi \in \boundMeas{\ss}$,
\begin{align}
\frac{1}{rc}\sum_{i}\varphi(\zeta_{n}^{i})-\pi_{n}(\varphi)&\almostsurely{c\to\infty}{\P} 0,\label{eq:mixed_radix_mut_lln}\\
\label{eq:mixed_radix_mut_clt} \sqrt{rc}\left(\frac{1}{rc}\sum_{i}\varphi(\zeta_{n}^{i})-\pi_{n}(\varphi)\right)&\indist{c\to\infty}\normal{0}{\mrbfVar{n}{\varphi}{r}}.
\end{align}
where $\mrbfVarHat{n-1}{\varphi}{r}$ and $\mrbfVar{n}{\varphi}{r}$ are as defined in \eqnref{eq:asymptotic variance mixed radix filter}.
\end{lemma}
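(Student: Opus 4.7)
The plan is to mimic the structure of \lemmaref{lem:pf_radix-r_mut}, but with the standard $\sqrt{rc}$ scaling. Using $\pi_n(\varphi) = \hat{\pi}_{n-1}(f(\varphi))$, I write the decomposition
\begin{equation*}
\frac{1}{rc}\sum_i \varphi(\zeta_n^i) - \pi_n(\varphi) = A_c + B_c,
\end{equation*}
where
\begin{equation*}
A_c := \frac{1}{rc}\sum_i \bigl(\varphi(\zeta_n^i) - f(\varphi)(\hat{\zeta}_{n-1}^i)\bigr),\quad
B_c := \frac{1}{rc}\sum_i f(\varphi)(\hat{\zeta}_{n-1}^i) - \hat{\pi}_{n-1}(f(\varphi)).
\end{equation*}

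For the law of large numbers \eqref{eq:mixed_radix_mut_lln}, the term $B_c$ converges almost surely to zero by the hypothesis \eqref{eq:mixed_radix_mut_lln_hyp} applied to $f(\varphi)\in \boundMeas{\ss}$. For $A_c$, the partial sums $X_j := (rc)^{-1/2}\sum_{i=1}^j(\varphi(\zeta_n^i) - f(\varphi)(\hat{\zeta}_{n-1}^i))$ form a martingale with respect to $\mathcal{A}_j := \sigma(\hat{\zeta}_{n-1},\zeta_n^1,\ldots,\zeta_n^j)$, since $\zeta_n^i \sim f(\hat{\zeta}_{n-1}^i,\cdot)$ conditionally independently. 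Applying Burkholder's inequality, Markov's inequality and Borel-Cantelli gives $A_c \to 0$ almost surely.

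For the CLT \eqref{eq:mixed_radix_mut_clt}, it suffices to show $\sqrt{rc} A_c + \sqrt{rc} B_c \indistsh{} \normal{0}{\mrbfVar{n}{\varphi}{r}}$. By hypothesis \eqref{eq:mixed_radix_mut_clt_hyp} applied to $f(\varphi)$, $\sqrt{rc} B_c$ converges in distribution to $\normal{0}{\mrbfVarHat{n-1}{f(\varphi)}{r}}$. For $\sqrt{rc} A_c$ I will invoke \theref{thm:Douc and Moulines} conditionally on $\hat{\zeta}_{n-1}$: set $\ell_n = rc$, $\calG_{c,\varrho} = \sigma(\hat{\zeta}_{n-1},\zeta_n^1,\ldots,\zeta_n^\varrho)$, and $U_{c,\varrho} = (rc)^{-1/2}(\varphi(\zeta_n^\varrho) - f(\varphi)(\hat{\zeta}_{n-1}^\varrho))$. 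The martingale and boundedness conditions are immediate; the Lindeberg condition \eqref{eq:douc_cond_neg} follows from $|U_{c,\varrho}| \leq (rc)^{-1/2}\osc{\varphi}$; and the conditional variance condition reduces via the tower property to
\begin{equation*}
\frac{1}{rc}\sum_i f((\varphi - f(\varphi))^2)(\hat{\zeta}_{n-1}^i) \inprob{c\to\infty} \hat{\pi}_{n-1}\bigl(f((\varphi - f(\varphi))^2)\bigr),
\end{equation*}
which follows from \eqref{eq:mixed_radix_mut_lln_hyp} applied to $f((\varphi - f(\varphi))^2) \in \boundMeas{\ss}$. Hence
\begin{equation*}
\E\!\left[\exp(iu\sqrt{rc} A_c)\mid \hat{\zeta}_{n-1}\right] \inprob{c\to\infty} \exp\!\left(-\tfrac{u^2}{2}\hat{\pi}_{n-1}(f((\varphi-f(\varphi))^2))\right).
\end{equation*}

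Combining: as in the final step of \lemmaref{lem:pf_radix-r_res}, apply the continuous mapping theorem to $\sqrt{rc} B_c$ (which is $\hat{\zeta}_{n-1}$-measurable via the hypothesis) and then Slutsky's theorem to the product of characteristic functions. Since $\hat{\zeta}_{n-1}$ determines $\sqrt{rc} B_c$, multiplying the conditional Gaussian limit for $A_c$ by $\exp(iu\sqrt{rc}B_c)$ and taking expectations yields, in the limit, the product of independent Gaussian characteristic functions with variances $\hat{\pi}_{n-1}(f((\varphi - f(\varphi))^2))$ and $\mrbfVarHat{n-1}{f(\varphi)}{r}$, whose sum is exactly $\mrbfVar{n}{\varphi}{r}$ by \eqref{eq:asymptotic variance mixed radix filter}. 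Levy's continuity theorem then gives \eqref{eq:mixed_radix_mut_clt}.

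The main obstacle is the last combination step: although $A_c$ and $B_c$ are not asymptotically independent in any elementary sense, conditioning on $\hat{\zeta}_{n-1}$ makes $B_c$ measurable while rendering $A_c$ conditionally Gaussian in the limit, and this conditional-versus-unconditional interplay is what Slutsky's theorem together with the bounded-modulus continuity of $e^{iu\cdot}$ lets us exploit, exactly as in \lemmaref{lem:pf_radix-r_res}.
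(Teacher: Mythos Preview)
Your proposal is correct and follows essentially the same route as the paper. The paper's own proof simply says the LLN is analogous to \lemmaref{lem:pf_radix-r_mut} and that the CLT follows from the same arguments as \cite[Lemma~A.1]{smc:the:C04}; what you have written is exactly a spelled-out version of that Chopin-type argument, using the $A_c+B_c$ decomposition, handling $A_c$ via the conditional martingale CLT of \theref{thm:Douc and Moulines}, and combining with the $\hat{\zeta}_{n-1}$-measurable term $B_c$ through the same Slutsky/conditioning device used in \lemmaref{lem:pf_radix-r_res}.
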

\begin{proof}[Proof of \lemmaref{lem:pf_mixed_radix-r_mut}]
The proof of \eqnref{eq:mixed_radix_mut_lln} is analogous to that in the proof of \lemmaref{lem:pf_radix-r_mut}, and \eqnref{eq:mixed_radix_mut_clt} follows from same arguments as \cite[Lemma A.1]{smc:the:C04}.
\end{proof}

From Lemmata \ref{lem:pf_mixed-radix-r_res_0} and \ref{lem:pf_mixed_radix-r_mut}, together with \lemmaref{lem:pf_init}, \theref{thm:mixed radix-r} follows.

\bibliography{my}
\bibliographystyle{imsart-nameyear}

\end{document}